\documentclass[aps,prx,reprint]{revtex4-2}

\usepackage{graphicx}
\usepackage{amsfonts,amssymb,amsmath}
\usepackage{amsthm}
\usepackage{hyperref}
\hypersetup{
  pdftitle={Diffusive Speed Limits for U(1)-Covariant Quantum Error Correction},
  pdfauthor={Jianqi Sheng},
  bookmarksnumbered=true,
  bookmarksopen=true,
  bookmarksdepth=3
}

\theoremstyle{plain}
\newtheorem{theorem}{Theorem}
\newtheorem{lemma}{Lemma}
\newtheorem{corollary}{Corollary}
\newtheorem{remark}{Remark}

\begin{document}

\title{Diffusive Speed Limits for \(U(1)\)-Covariant Quantum Error Correction}

\author{Jianqi Sheng}
\affiliation{Department of Physics, City University of Hong Kong, Hong Kong}

\begin{abstract}
Fast preparation of quantum error-correcting codes is essential for scalable
quantum memories, but geometric locality and \(U(1)\) charge conservation
impose an unavoidable transport constraint.  We combine exact
complementary-channel geometry, charge-sector Haar analysis, and a
gate-resolved connected-moment expansion to study one-dimensional covariant
encoders under flagged erasure.  Charge-Haar codes attain the universal
adjacent-charge lower bound up to exponentially small corrections, yielding
an exact \(n^{-1/2}\) extensive-erasure law and a sharp half-erasure
transition.  For local number-conserving brickwork circuits, diffusion of the
logical charge enforces an \(\Omega(n^2)\) encoding-time lower bound; we also
prove an \(O(n^3)\) mixing bound for the classical component and reduce the
remaining full-channel upper bound to a source-restricted low-support
operator-spreading problem.  These results identify diffusion as an
operational limit on symmetry-constrained quantum coding and establish a
route to its exact formation time.
\end{abstract}

\maketitle

\section{Introduction}

Quantum error correction protects stored logical information by encoding it
into redundancy, but the available encoding maps depend on the reference
resources and geometry of the platform.  We work in the reference-free
resource model of strict \(U(1)\) covariance, in which encoders and
recoveries preserve total excitation number.  When the platform is a
one-dimensional array of nearest-neighbor coupled cells, the encoding
circuit is additionally local.  Within these constraints a flagged erasure---a
known-location loss whose erased cells remain inaccessible---is the most
informative single failure: the classical flag identifies the lost cells,
while the lost subsystem can carry which-node information about the logical
state.

Continuous symmetries obstruct exact quantum error correction.  Strict
\(U(1)\) covariance excludes uniform exact correction of every
single-memory erasure (Lemma~\ref{lem:covariant-no-go}), and approximate
covariant correction obeys quantitative accuracy--size
tradeoffs~\cite{Eastin2009,Faist2020}.  Random \(U(1)\)-covariant codes
generated by charge-conserving unitaries achieve average- and worst-case
purified-distance erasure errors \(O(n^{-1})\) in fixed-logical-size
asymptotic regimes~\cite{Kong2022}, and local number-conserving circuits
form approximate unitary designs under conservation
laws~\cite{Hearth2025,Mitsuhashi2025,Li2024Efficient}.  What remains open
is the genuine trace-distance behavior of the stored logical channel under
\emph{extensive} erasure, and the time a geometrically local,
number-conserving circuit needs to reach the ideal endpoint.

This paper studies a benchmark problem with a single logical qubit encoded
in the adjacent charge sectors \(q=(n-1)/2\) and \(q+1\) of \(n\) cells.
The metric is the complementary-channel decoupling error
\(D_E(V)=\frac12\lVert\rho_{RE}-\rho_R\otimes\rho_E\rVert_1\) defined in
Section~\ref{sec:model-A} between the erased subsystem and a reference
maximally entangled with the logical input; exact erasure correction is
equivalent to \(D_E(V)=0\).

Our results are a static optimum and a dynamical cost.  On the static
side, we prove that the total erased charge supplies a universal
trace-distance floor for every encoding whose logical basis states occupy
the adjacent sectors \(q\) and \(q+1\).  Charge-Haar codes attain this floor
up to exponentially small corrections below one-half erasure.  This gives
the exact extensive-erasure law, the \((0,3/8,3/4)\) one-half-erasure
transition, simultaneous worst-pattern concentration, and an
information-theoretic optimality statement rather than only an ensemble
benchmark.  On the dynamical side, the canonical boundary code under the
specified local-Haar brickwork ensemble obeys
\(\mathbb E_{V,E}D_E(V)\geq p/[48\sqrt{2\sqrt{t-1}+3}]\).  Consequently it
cannot reach even a fixed multiplicative factor of the optimal
\(n^{-1/2}\) endpoint before \(\Omega(n^2)\) cycles.  The mechanism
(diffusive charge cloud, \(t^{1/4}\) count fluctuations, exact causal
plateau) is universal within the class of computational-basis product
encodings sharing the same deterministic background
(Appendix~\ref{sec:universal-quadratic}).

The matching full-channel upper bound is open.  Appendix~\ref{sec:classical-mixing}
proves that the classical insertion-tolerance component reaches its
equilibrium \(n^{-1/2}\) scale by \(O(n^3)\) cycles.  What remains is a
centered balanced two-copy fluctuation estimate.  We first express it as a
connected moment-channel covariance with a telescoping unit source-energy
budget.  Lemma~\ref{lem:positive-operator-spreading-decomposition} then gives
a stronger exact reduction: the covariance is a sum of nonnegative
rank-weighted reduced purities of one-copy gate increments after their
future random evolution, and the corresponding Pauli formula retains only
physical strings whose support fits inside the typical erasure set.  After
the first complete cycle every new injection is generated by the rank-one
collision mode and admits a three-square Casimir decomposition
(Lemma~\ref{lem:collision-casimir-extraction}).
Theorem~\ref{thm:strong-rayleigh-late-source} further proves that an
\(O(n^2)\) one-copy burn-in makes the raw Hilbert--Schmidt mass of later
injections exponentially small.  This does not yet imply a task-weighted
bound: Remark~\ref{rem:collision-propagation-obstructions} gives exact local
counterexamples to both a collision-position-only closure and gatewise
monotonicity of the unmodified endpoint form.  The high-value upper-bound
problem is therefore a source-restricted low-support operator-spreading
estimate, not mixing of the full two-copy Liouville space.  Problem~O1
states the remaining target precisely.

\section{Model and complementary-channel metric}
\label{sec:model-A}

Let \(n\) be odd, put \(q=(n-1)/2\), and write \([n]=\{0,1,\ldots,n-1\}\) for the site set.  Let \(\mathcal H_r\) denote the
\(r\)-photon sector of \(n\) cells, and let the logical qubit be encoded by
an isometry \(V^{(0)}:\mathbb C^2\to\mathcal H_q\oplus\mathcal H_{q+1}\)
with codewords \(\left\lvert\psi_0\right\rangle{}\in\mathcal H_q\) and
\(\left\lvert\psi_1\right\rangle{}\in\mathcal H_{q+1}\) (neighboring-charge
construction).  The \emph{canonical boundary initialization} is the product
code
\begin{equation}
\left\lvert\psi_0\right\rangle{}=\left\lvert0\right\rangle{}_0
\left\lvert\chi\right\rangle{},
\qquad
\left\lvert\psi_1\right\rangle{}=\left\lvert1\right\rangle{}_0
\left\lvert\chi\right\rangle{},
\label{eq:canonical-boundary-code}
\end{equation}
where \(\left\lvert\chi\right\rangle{}\) is the contiguous half-filled
computational-basis state with sites \(1,\ldots,q\) occupied.  Strict
covariance means that there exist an additive physical generator
\(\hat N\) and a fixed background charge \(Q_0\) such that
\begin{equation}
e^{-i\alpha\hat N}V
=e^{-i\alpha Q_0}Ve^{-i\alpha\hat n_L},
\label{eq:covariant-encoding}
\end{equation}
for every \(\alpha\), where \(\hat n_L=\left\lvert1\right\rangle{}\!
\left\langle1\right\rvert{}\) is the logical occupancy generator.

For an erasure set \(E\subseteq[n]\), the erasure channel is
\(\mathcal N_E(\rho)=\operatorname{Tr}_E[V\rho V^\dagger]\), and a recovery
\(\mathcal R_E\) is pattern-conditioned, parameter-independent, and
number-conserving.  Let \(R\) be a reference qubit and prepare the maximally
entangled state \(\left\lvert\Phi_{RL}\right\rangle{}=
(\left\lvert00\right\rangle{}+\left\lvert11\right\rangle{})/\sqrt2\).
With \(\rho_{REQ}=(\mathbb I_R\otimes V)\Phi_{RL}(\mathbb I_R\otimes
V^\dagger)\), the decoupling error is
\begin{equation}
D_E(V)
=
\frac12
\left\lVert
\operatorname{Tr}_Q[\rho_{REQ}]
-\rho_R\otimes\operatorname{Tr}_{RQ}[\rho_{REQ}]
\right\rVert{}_1,
\label{eq:decoupling-error-A2}
\end{equation}
where \(Q=[n]\setminus E\).  Exact erasure correction is equivalent to
\(D_E(V)=0\)~\cite{Knill2000,Hayden2007}.

The \emph{charge-Haar codewords} are independent Haar unit vectors
\(\left\lvert\psi_q\right\rangle{}\in\mathcal H_q\),
\(\left\lvert\psi_{q+1}\right\rangle{}\in\mathcal H_{q+1}\).  The
\emph{local-Haar brickwork ensemble} \(\mathcal E_{n,2t}^{\rm locH}\) of
depth \(2t\) is the open-chain brickwork with independent gates
\begin{equation}
\begin{aligned}
G&=1\oplus U_1\oplus e^{i\alpha},\\
U_1&\sim{\rm Haar}[U(2)],\\
\alpha&\sim{\rm Unif}[0,2\pi).
\end{aligned}
\label{eq:local-Haar-gate-measure-A}
\end{equation}
applied on alternating even and odd nearest-neighbor bonds, even matching
first.  All averages below are over circuits and iid Bernoulli-\(p\)
erasure patterns.

\section{Hydrodynamic mechanism of the speed limits}
\label{sec:mechanism}

The quadratic speed limits proved in Appendix~\ref{sec:finite-depth} are
not artifacts of the second-moment method; they reflect a classical
hydrodynamic structure of the encoded charge.  After ensemble averaging,
the one-copy charge profile of the canonical boundary code evolves by
alternating nearest-neighbor averaging, the deterministic limit of the
random-interchange (symmetric simple exclusion) process on the open
path~\cite{Liggett1985,AldousDiaconis1986,Spohn1991}.  The distinguished
logical charge spreads over a cloud of width \(O(\sqrt t)\), while the
erased background count in a diffusive window fluctuates on the
\(O(t^{1/4})\) scale; it is this fluctuation scale, rather than the cloud
width, that forces the flagged count statistic to decay only as
\(\Theta_p(t^{-1/4})\) (Corollary~\ref{cor:flagged-window-count-law}).
The underlying \(n^{-2}\) diffusion scale on a path is the spectral gap of
the interchange process conjectured by Aldous and proved by
Caputo--Liggett--Richthammer~\cite{Caputo2010}.

Two caveats keep this hydrodynamic picture honest.  First, the count
statistic is a classical witness that lower-bounds the quantum decoupling
error; the balanced two-copy observable that controls the full channel is
orthogonal to the embedded one-copy diffusion modes
(Lemma~\ref{lem:one-copy-not-two-copy}), so single-particle spreading alone
cannot prove a full-channel upper bound.  Second, the complete flagged
output is even more rigid than the count: while the distinguished label
remains inside a causally protected fully occupied region, its total
variation from logical zero is exactly \(p\), not decaying
(Theorem~\ref{thm:full-flag-causal-plateau}).  Mixing of the complete
observation begins only when ordinary background holes overlap the
distinguished cloud---a collective transport condition, not a
single-particle one.

The direct trace-distance reduction of
Lemma~\ref{lem:direct-trace-mean-fluctuation} converts this picture into a
precise target.

\begin{theorem}[Direct trace-distance reduction and the remaining upper-bound target]
\label{thm:direct-reduction}
Let \(D=\frac12\mathbb E_V\|\mathcal X(V)\|_1\) be the flagged direct-sum
decoupling error of the canonical boundary code under independent
erasures, with \(\mathcal X(V)\) as in Appendix~\ref{sec:finite-depth},
and define the classical mean and the circuit-level fluctuation
\begin{equation}
M=\frac12\|\overline{\mathcal X}\|_1,\qquad
\Gamma=\frac12\mathbb E_V\|\mathcal X(V)-\overline{\mathcal X}\|_1.
\end{equation}
Then
\begin{equation}
\max\{M,\Gamma/2\}\leq D\leq M+\Gamma.
\label{eq:direct-reduction-main}
\end{equation}
Consequently a full-channel upper bound of the form
\(D=O_p(n^{-1/2})\) requires \emph{both} of the following independent
estimates: (i) a post-burn-in classical insertion-tolerance bound
\(M=\frac12\operatorname{TV}(\widehat P_t^0,\widehat P_t^1)
=O_p(n^{-1/2})\); and (ii) a centered balanced two-copy fluctuation bound
\(\Gamma=O_p(n^{-1/2})\).  Each estimate is necessary up to a factor two,
and together they are sufficient.  Estimate~(i) is proved unconditionally
at \(O(n^3)\) depth in Theorem~\ref{thm:classical-mixing-main}.  Estimate~(ii)
remains open.  Theorem~\ref{thm:full-flag-causal-plateau} shows why the
classical estimate cannot hold in the pre-boundary causal regime, while
Eq.~\eqref{eq:direct-truncated-rank-target} states a task-weighted two-copy
target sufficient for the unresolved fluctuation term.
\end{theorem}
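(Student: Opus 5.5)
The bounds in Eq.~\eqref{eq:direct-reduction-main} follow from the triangle inequality for the trace norm and from convexity of \(\|\cdot\|_1\). I will write \(\overline{\mathcal X}=\mathbb E_V\mathcal X(V)\) and \(\mathcal Y(V)=\mathcal X(V)-\overline{\mathcal X}\), so that \(\mathbb E_V\mathcal Y(V)=0\). The only structural input needed from Appendix~\ref{sec:finite-depth} is that \(\mathcal X(V)\) is a Hermitian operator depending linearly on the circuit state. The flags make it a direct sum over erasure patterns, weighted by the Bernoulli probabilities. Consequently its circuit average \(\overline{\mathcal X}\) is the flagged direct sum built from the averaged (classical) output distributions, and \(\frac12\|\overline{\mathcal X}\|_1=\frac12\operatorname{TV}(\widehat P_t^0,\widehat P_t^1)\). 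I will verify this identification first. It uses only the fact that averaging the local-Haar gates dephases the one-copy state onto computational-basis charge configurations.

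The inequalities then follow in order.
\begin{itemize}
\item \textbf{Upper bound.} Pointwise, \(\|\mathcal X(V)\|_1\le\|\overline{\mathcal X}\|_1+\|\mathcal Y(V)\|_1\). Averaging over \(V\) gives \(D\le M+\Gamma\).
\item \textbf{Lower bound by \(M\).} By convexity (Jensen), \(\|\overline{\mathcal X}\|_1=\|\mathbb E_V\mathcal X(V)\|_1\le\mathbb E_V\|\mathcal X(V)\|_1\). This gives \(M\le D\).
\item \textbf{Lower bound by \(\Gamma/2\).} Pointwise, \(\|\mathcal Y(V)\|_1\le\|\mathcal X(V)\|_1+\|\overline{\mathcal X}\|_1\). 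Averaging gives \(\Gamma\le D+M\le 2D\), using the previous step.
\end{itemize}
Together these give \(\max\{M,\Gamma/2\}\le D\).

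The ``consequently'' clause is a direct reading of these inequalities. If \(D=O_p(n^{-1/2})\), then \(M\le D\) and \(\Gamma\le 2D\), so each of estimates (i) and (ii) is necessary, with constant at most two. Conversely, \(D\le M+\Gamma\) shows that the two estimates together are sufficient. The two estimates are independent in the sense that neither inequality bounds \(\Gamma\) by \(M\) or \(M\) by \(\Gamma\). The status claims are citations:
\begin{itemize}
\item Estimate (i) at \(O(n^3)\) depth is Theorem~\ref{thm:classical-mixing-main}.
\item The failure of (i) in the pre-boundary regime comes from Theorem~\ref{thm:full-flag-causal-plateau}. There the total variation equals \(p\), so \(M\) is bounded below by an \(n\)-independent constant.
\item Sufficiency of Eq.~\eqref{eq:direct-truncated-rank-target} for (ii) would follow by bounding \(\Gamma\) via Cauchy--Schwarz, \(\mathbb E\|\mathcal Y\|_1\le\sqrt{\operatorname{rank}}\,(\mathbb E\|\mathcal Y\|_2^2)^{1/2}\) on the relevant truncated support. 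The second moment is then exactly the centered two-copy covariance.
\end{itemize}

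The main obstacle is not the inequalities but the identification \(M=\frac12\operatorname{TV}(\widehat P_t^0,\widehat P_t^1)\). This requires checking that the off-diagonal (coherence) blocks of \(\mathcal X(V)\) between the sectors \(q\) and \(q+1\) vanish under the average over the phases \(\alpha\) and the \(U(2)\) gates. It also requires checking that the erased-subsystem marginal averages to a diagonal distribution. I would verify both through the gate-level first-moment channel, which by covariance projects onto charge-diagonal operators.
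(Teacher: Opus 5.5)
Your proposal is correct and follows the paper's proof of Lemma~\ref{lem:direct-trace-mean-fluctuation} essentially step for step: Jensen gives \(M\le D\), the two triangle inequalities give \(D\le M+\Gamma\) and \(\Gamma\le 2D\), first-layer dephasing gives \(M=\frac12\operatorname{TV}\), and the dimension-factor Cauchy--Schwarz bound gives \(\Gamma\le\sqrt{\mathcal F_{n,p,\beta}}+2\Pr\{|E|>\beta n\}\). The one link you leave implicit is that Eq.~\eqref{eq:direct-truncated-rank-target} controls the uncentered functional \(\mathfrak r_{n,p}^{\leq\beta}(t)\), not the centered covariance directly. To close the sufficiency claim, add \(\mathcal F_{n,p,\beta}(t)\le\mathfrak r_{n,p}^{\leq\beta}(t)\), which holds because \(\mathbb E\operatorname{Tr}X_E^2=\operatorname{Tr}\overline X_E^2+\mathbb E\|X_E-\overline X_E\|_2^2\), and combine it with the \(O(n^{-1})\) stationary value.
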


\begin{theorem}[Equilibrium flagged-output total variation]
\label{thm:equilibrium-flagged-tv}
Let \(n=2q+1\) be odd and let the occupation profile be fully mixed in the
charge sector: \(\left\lvert S\right\rvert=q\) (logical input 0) or
\(q+1\) (input 1), uniformly at random.  Let \(E\subseteq[n]\) be an
independent Bernoulli-\(p\) erased set and let \(z_E\in\{0,1\}^E\) be the
occupation pattern on \(E\).  Define the equilibrium flagged-output total
variation
\begin{equation}
M_{\rm eq}(n,p)
:=
\frac14\,\mathbb E_E
\sum_{z_E}
\left|
\begin{aligned}
&\Pr\{z_E\mid E,\,\left\lvert S\right\rvert=q\}\\[-2pt]
&\quad-\Pr\{z_E\mid E,\,\left\lvert S\right\rvert=q+1\}
\end{aligned}
\right|.
\label{eq:equilibrium-flagged-tv}
\end{equation}
Then
\begin{equation}
M_{\rm eq}(n,p)
=
\sum_{k=0}^{n}\binom nk p^k(1-p)^{n-k}\,\overline D_{n,k},
\label{eq:equilibrium-flagged-identity}
\end{equation}
where \(\overline D_{n,k}\) is the charge-Haar mean distance of
Eq.~\eqref{eq:charge-haar-mean-distance}.  Consequently, for every fixed
\(0<p<1\),
\begin{equation}
M_{\rm eq}(n,p)
=
\sqrt{\frac{p}{2\pi(1-p)}}\,n^{-1/2}
+o(n^{-1/2}).
\label{eq:equilibrium-flagged-asymptotic}
\end{equation}
\end{theorem}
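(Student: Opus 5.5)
The plan is to condition on the erased set, collapse the sum over patterns \(z_E\) to a sum over the erased charge \(w=|z_E|\), recognize the resulting classical quantity as \(\overline D_{n,k}\), and then read off the asymptotics from a local central limit computation.

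\textbf{Step 1: the identity.} Condition on \(E\) with \(|E|=k\). Because \(S\) is uniform among subsets of size \(q\) or \(q+1\), the law of \(z_E\) is exchangeable on \(E\): a pattern of weight \(w=|z_E|\) has probability
\[
\Pr\{z_E\mid E,|S|=r\}=\binom{n-k}{r-w}\Big/\binom nr ,\qquad r\in\{q,q+1\}.
\]
Grouping the \(\binom kw\) patterns of equal weight, the inner sum in Eq.~\eqref{eq:equilibrium-flagged-tv} becomes
\[
\sum_{w}\bigl|h_{q}(w)-h_{q+1}(w)\bigr| .
\]
Here \(h_r\) is the hypergeometric law of the erased charge when \(r\) charges sit uniformly on \(n\) sites and \(k\) sites are erased. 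This depends on \(E\) only through \(k\). I then identify \(\frac14\sum_w|h_q(w)-h_{q+1}(w)|\) with \(\overline D_{n,k}\) of Eq.~\eqref{eq:charge-haar-mean-distance}. The prefactor \(\frac14\) combines the \(\frac12\) of the trace norm with the \(\frac12\) weight of each logical branch in \(\rho_R\otimes\rho_E\). This is exactly the universal adjacent-charge expression through which the charge-Haar mean distance was written, since only the diagonal erased-charge statistic survives the Haar average. Averaging over \(|E|\sim\mathrm{Bin}(n,p)\) then gives Eq.~\eqref{eq:equilibrium-flagged-identity} directly.

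\textbf{Step 2: local asymptotics of \(\overline D_{n,k}\).} For the asymptotics I would use the likelihood ratio
\[
\frac{h_{q+1}(w)}{h_q(w)}=\frac{(n-k-q+w)(q+1)}{(q+1-w)(n-q)} .
\]
Write \(w=k/2+x\) and \(q\simeq n/2\). Uniformly for \(k=pn+O(\sqrt n\log n)\) and \(|x|\le\sqrt n\log n\), this ratio equals
\[
1+\frac{4x}{n(1-p)}+O\!\Bigl(\frac{\log^2 n}{n}\Bigr).
\]
The Bernstein-type tails of the hypergeometric law make the region \(|x|>\sqrt n\log n\) negligible. Hence
\[
\sum_w|h_{q+1}-h_q|=\frac{4}{n(1-p)}\,\mathbb E_{h_q}|x|+o(n^{-1/2}).
\]
The hypergeometric local CLT gives \(\operatorname{Var}x\simeq np(1-p)/4\), so
\[
\mathbb E|x|=\sqrt{2/\pi}\,\sqrt{np(1-p)}/2\,(1+o(1)).
\]
Combining with the prefactor \(\frac14\) yields
\[
\overline D_{n,k}=\sqrt{\frac{p}{2\pi(1-p)}}\;n^{-1/2}+o(n^{-1/2}),
\]
uniformly over the window of \(k\). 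This should agree with the earlier extensive-erasure law for \(\overline D_{n,k}\) at \(k/n\to p\), which can be cited instead if it is stated with the needed uniformity.

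\textbf{Step 3: binomial averaging and the main obstacle.} Finally I average over \(k\sim\mathrm{Bin}(n,p)\). The map \(p\mapsto\sqrt{p/(1-p)}\) is continuous at fixed \(0<p<1\), and \(k/n\) concentrates at \(p\) with Chernoff tails. The trivial bound \(\overline D_{n,k}\le\frac12\) controls the complement of the window, which has probability \(e^{-\Omega(\log^2 n)}\). This gives Eq.~\eqref{eq:equilibrium-flagged-asymptotic}.

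The main obstacle is making the expansion of \(\mathbb E|h_{q+1}/h_q-1|\) uniform in \(k\) across the window with an \(o(n^{-1/2})\) error. The first-order term has zero mean under \(h_q\), so it is the absolute value that produces the \(\sqrt n\) scale. I would handle this by replacing \(\mathbb E|x|\) with its Gaussian value via the local CLT with Berry--Esseen rate \(O(n^{-1/2})\) for hypergeometrics, and by bounding the quadratic remainder through \(\mathbb E x^2=O(n)\).
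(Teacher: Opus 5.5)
Your proposal is correct and follows essentially the paper's proof. The identity comes from the same grouping of erased patterns by their charge into the two hypergeometric laws; the identification with $\overline D_{n,k}$ also needs $\binom nq=\binom n{q+1}$, which you leave implicit. Your likelihood-ratio expansion is a linearized form of the exact adjacent-charge ratio $\binom{n-k}{q-a}-\binom{n-k}{q+1-a}=\binom{n-k}{q-a}\frac{k-2a}{q-a+1}$, combined with the hypergeometric CLT and binomial concentration that the paper imports from Theorem~\ref{thm:charge-haar-extensive} and Corollary~\ref{cor:charge-haar-iid-extensive}. Doing the computation directly has one small advantage: it visibly covers all fixed $0<p<1$, since the denominator $q-a+1$ stays of order $n$, whereas the cited corollary is formally stated only for $p<1/2$.
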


\begin{proof}
For \(n\) odd, \(\binom n{q+1}=\binom nq=:d\), so the two charge sectors
have equal dimension.  Conditioned on \(\left\lvert E\right\rvert=k\), the
law of \(z_E\) for input \(i\in\{0,1\}\) assigns to a pattern
\(T\subseteq E\) of size \(a\) the probability
\(\binom{n-k}{q-a}/d\) (input 0) or \(\binom{n-k}{q+1-a}/d\) (input 1),
with the binomial coefficients vanishing outside their natural range.
The inner sum in Eq.~\eqref{eq:equilibrium-flagged-tv} therefore equals
\begin{equation}
\frac{1}{d}\sum_{a=0}^{k}\binom ka
\left|
\binom{n-k}{q-a}-\binom{n-k}{q+1-a}
\right|
=4\,\overline D_{n,k},
\label{eq:equilibrium-flagged-inner-sum}
\end{equation}
by Eq.~\eqref{eq:charge-haar-mean-distance}.  Averaging over \(E\) gives
Eq.~\eqref{eq:equilibrium-flagged-identity}.  For the asymptotic, the
uniformity argument in the proof of
Corollary~\ref{cor:charge-haar-iid-extensive} gives
\(\mathbb E_K\overline D_{n,K}
=\sqrt{p/[2\pi(1-p)]}\,n^{-1/2}+o(n^{-1/2})\) for \(K\sim{\rm Bin}(n,p)\),
hence Eq.~\eqref{eq:equilibrium-flagged-identity} proves
Eq.~\eqref{eq:equilibrium-flagged-asymptotic}.
\end{proof}

\begin{remark}[Target value and proved mixing scale of the insertion tolerance]
\label{rem:insertion-tolerance-target}
Theorem~\ref{thm:equilibrium-flagged-tv} gives the stationary value of the
post-burn-in insertion tolerance.  Corollary~\ref{cor:charge-haar-exponential-classicalization}
shows that, for every fixed \(p<1/2\), the full charge-Haar trace-distance
endpoint differs from this classical center only by \(e^{-\Omega(n)}\), and
that the centered rank-weighted covariance has the same exponential
suppression.  Appendix~\ref{sec:classical-mixing}
proves the explicit unconditional estimate
\begin{equation}
0\leq M_t-M_{\rm eq}
\leq
\frac12\sqrt{|\Omega|-1}\,
\exp\!\left[-\frac{2(t-1)}{5n^2}\right],
\qquad t\geq1,
\label{eq:insertion-tolerance-rigorous-mixing}
\end{equation}
where \(|\Omega|=\binom nq(q+1)\).  Hence
\(M_t=O_p(n^{-1/2})\) by \(O(n^3)\) cycles.  This closes the classical
mean \(M\) in Theorem~\ref{thm:direct-reduction}, but does not control the
centered two-copy fluctuation \(\Gamma\).  An \(O(n^2\log n)\) classical
bound would follow from a suitable modified log-Sobolev inequality for the
parallel brickwork quotient, but that sharper estimate is not used in any
unconditional conclusion here.
\end{remark}

\section{Conclusion}

We have studied flagged-erasure correction in the reference-free resource
model of strict \(U(1)\) covariance and one-dimensional geometric locality.
The erased total charge gives a universal trace-distance lower bound for
every adjacent-charge encoder.  Charge-Haar codes saturate this floor up to
exponentially small corrections below one-half erasure, so their exact
\(n^{-1/2}\) law and sharp \((0,3/8,3/4)\) transition describe the
information-theoretic optimum of the strict adjacent-charge model, not only
a random-code benchmark.  For the specified local-Haar brickwork ensemble
and any product encoding whose codewords share a deterministic background
and differ by one extra particle, the ensemble-averaged error is bounded
below by \(c_pt^{-1/4}\).  Reaching even a fixed factor of the optimum
therefore requires \(\Omega(n^2)\) cycles.  The same transport mechanism
yields an exact causal plateau and an operational erased-count witness.

On the upper-bound side, Theorem~\ref{thm:classical-mixing-main} proves that
the classical insertion-tolerance component reaches its equilibrium
\(O(n^{-1/2})\) value by \(O(n^3)\) cycles.  This does not prove full-channel
achievability: the centered balanced two-copy fluctuation \(\Gamma\) in
Theorem~\ref{thm:direct-reduction} remains uncontrolled.  The present work
localizes that obstruction sharply.  Lemma~\ref{lem:connected-source-purity-budget}
shows that all gate-resolved connected sources have total propagated
Hilbert--Schmidt mass at most one and exactly zero one-copy marginals.
Lemma~\ref{lem:positive-operator-spreading-decomposition} strengthens this
to an exact positive one-copy representation,
\begin{equation}
\mathcal F_{n,p,\beta}(t)
=\sum_{\ell=1}^{m}\mathbb E\,\mathcal Q_{n,p,b}(B_{\ell,m}),
\end{equation}
where \(\mathcal Q\) is exactly the low-support Pauli leakage in
Eq.~\eqref{eq:exact-pauli-leakage-formula} and the total increment energy is
at most one.  After one complete cycle each new source is injected through
the collision Casimir and decomposes into three one-copy operator squares.
Theorem~\ref{thm:strong-rayleigh-late-source} proves that the raw
Hilbert--Schmidt energy generated after an \(O(n^2)\) burn-in is
\(e^{-\Omega(n)}\).  This raw norm estimate cannot be paired directly with
the exponentially large endpoint.  Moreover,
Remark~\ref{rem:collision-propagation-obstructions} proves that an adjacent
twirl regenerates charge-exchange channels and can increase the natural
task form.  The unresolved problem is therefore neither arbitrary global
two-copy mixing nor a one-channel collision walk: it is a source-restricted
operator-spreading estimate showing that the propagated increments place
only \(O(n^{-1})\) total mass on physical Pauli supports visible to the
erasure endpoint.  Such a bound, or equivalently the absolute or signed
slow-window bounds of Eqs.~\eqref{eq:absolute-slow-window-hypothesis} and
\eqref{eq:signed-slow-window-hypothesis}, would imply \(O(n^2)\)
achievability and close the formation time at \(\Theta(n^2)\).  This is the
central open problem left by the work.

\section{Open problems}

\begin{enumerate}

  \item[\textbf{O1.}] \textbf{Source-restricted operator spreading and
  matching achievability.}  The classical mean in the direct reduction is
  controlled unconditionally:
  \(M_t=M_{\rm eq}+O(n^{-1/2})\) by \(O(n^3)\) cycles, with
  \(M_{\rm eq}=\sqrt{p/[2\pi(1-p)]}\,n^{-1/2}+o(n^{-1/2})\).
  The missing quantity is the centered balanced two-copy fluctuation
  \(\Gamma\).  Lemma~\ref{lem:positive-operator-spreading-decomposition}
  gives the exact positive identity
  \begin{equation}
  \boxed{
  \mathcal F_{n,p,\beta}(t)
  =\sum_{\ell=1}^{m}
  \mathbb E\,\mathcal Q_{n,p,b}(B_{\ell,m}),
  \qquad
  \sum_{\ell=1}^{m}\mathbb E\|B_{\ell,m}\|_2^2\leq1,}
  \label{eq:operator-spreading-quadratic-target}
  \end{equation}
  and Eq.~\eqref{eq:exact-pauli-leakage-formula} identifies each summand
  with the squared Pauli coefficients of physical support at most
  \(b=\lfloor\beta n\rfloor\), weighted by their probability of fitting
  inside the Bernoulli erasure.  Thus the direct high-value target is
  \begin{equation}
  \begin{aligned}
  \sum_{\ell=1}^{m}
  \mathbb E\,\mathcal Q_{n,p,b}(B_{\ell,m})&=O(n^{-1}),\\
  m&=\Theta(n^3)\quad\bigl(t=\Theta(n^2)\bigr).
  \end{aligned}
  \label{eq:low-support-operator-leakage-target}
  \end{equation}
  together with the task-specific classical estimate
  \(M_{B_{p,\beta}n^2}=O(n^{-1/2})\).  The first cycle's six
  charge-exchange injections are already included in the positive sum.
  Thereafter the injection is the collision Casimir of
  Lemma~\ref{lem:collision-casimir-extraction}, but
  Remark~\ref{rem:collision-propagation-obstructions} rules out two tempting
  shortcuts: the collision position alone is not an invariant defect
  process, and the unmodified task form is not contractive gate by gate.
  Theorem~\ref{thm:strong-rayleigh-late-source} controls only the raw
  Hilbert--Schmidt energy of late injections; an endpoint-adapted leakage
  estimate is still required to convert it into
  Eq.~\eqref{eq:low-support-operator-leakage-target}.  A sufficient proof
  may use a source-generated operator module with a weighted Pauli-support
  norm, or a direct bilinear heat-kernel estimate.  In spectral language the
  corresponding alternatives are
  \begin{equation}
  \begin{aligned}
  \mathcal W_n(x)&\lesssim\sqrt{x}+n^{-1},\\
  |\mathcal S_n(x)|&\lesssim\sqrt{x}+n^{-1},\\
  0&<x\lesssim n^{-1}.
  \end{aligned}
  \end{equation}
  Any of these routes yields
  \(\mathfrak r_{n,p}^{\leq\beta}(Bn^2)=O(n^{-1})\) and hence
  \(\mathbb E D_E=O(n^{-1/2})\), closing
  \(t_{\rm form}=\Theta(n^2)\).

  \item[\textbf{O2.}] \textbf{Scope beyond one distinguished particle and
  beyond the unbiased local-Haar ensemble.}  The present lower bound is
  proved for codewords of the form \(C_0=B_0\cup\{s_0\}\) under the
  specified unbiased local-Haar brickwork ensemble.  Extending it to
  multiple correlated distinguished particles and holes requires a signed
  multi-species transport argument and is not proved here.  A broader
  universality statement must also specify the transport class: ergodicity,
  nonzero local mixing, absence of drift, and a finite nonzero diffusion
  constant are necessary hypotheses.  The established conclusion is that
  no \emph{subquadratic}-depth encoder in the stated diffusive ensemble
  reaches the charge-Haar \(n^{-1/2}\) accuracy; it is not a no-go theorem
  for all polynomial-depth number-conserving circuits.
\end{enumerate}

\appendix

\section{Covariance obstruction and protected charge ancillas}
\label{sec:encoding}

\subsection{Local encoding isometries}

At each node $j\in\{A,B\}$, a logical memory qubit $L_j$ is encoded into $n_j$ physical memory qubits:
\begin{equation}
\begin{aligned}
V_j:\mathcal{H}_{L_j}
&\longrightarrow \mathcal{H}_{P_j}
=(\mathbb C^2)^{\otimes n_j},
\\
V_j\left\lvert i\right\rangle{}&=\left\lvert i_L\right\rangle{}_j,
\qquad i=0,1.
\end{aligned}
\label{eq:local-isometry}
\end{equation}
The global encoder is deliberately restricted to
\begin{equation}
V=V_A\otimes V_B,
\label{eq:local-product-encoder}
\end{equation}
so that no nonlocal gate is required during encoding.  A global Haar unitary across geographically separated nodes would largely defeat the architectural motivation of distributed architectures.

For erasure sets $E_A\subseteq P_A$ and $E_B\subseteq P_B$, define
\begin{equation}
\mathcal{N}_{E_A,E_B}(\rho)
=
\operatorname{Tr}_{E_AE_B}\left[V\rho V^\dagger\right].
\label{eq:erasure-channel}
\end{equation}
The erasure locations are assumed to be flagged.  This is stronger than unheralded amplitude damping, which is outside the present scope.

\subsection{Exact local erasure-correction condition}

\paragraph*{Definition (Local erasure correction).}
The local code $V_j$ exactly corrects erasure of $E_j$ if there exists a parameter-independent channel $\mathcal{R}_j^{E_j}$ such that
\begin{equation}
\mathcal{R}_j^{E_j}\circ\operatorname{Tr}_{E_j}
\left[V_j X V_j^\dagger\right]=X
\label{eq:exact-local-recovery}
\end{equation}
for every operator $X$ on $\mathcal{H}_{L_j}$.

For erasure errors, the Knill--Laflamme condition~\cite{Knill2000} is equivalent to
\begin{equation}
\operatorname{Tr}_{\bar E_j}\left[\left\lvert i_L\right\rangle{}\!\left\langle k_L\right\rvert{}\right]
=\delta_{ik}\sigma_{E_j},
\qquad i,k\in\{0,1\},
\label{eq:erasure-KL}
\end{equation}
for a state $\sigma_{E_j}$ independent of the logical value.  Equation~\eqref{eq:erasure-KL} states that the erased subsystem contains neither population nor coherence information about the logical input.

\subsection{Covariance obstruction to uniform exact correction}
\label{sec:covariant-no-go}

Let the physical number generator at node $j$ be additive,
\begin{equation}
\hat N_j=\sum_{\ell=1}^{n_j}\hat n_{j,\ell},
\end{equation}
and let $\hat n_{L_j}=\left\lvert1\right\rangle{}\!\left\langle1\right\rvert{}$
be the nontrivial logical occupancy generator.  Strict covariance, including a
fixed background charge $Q_{0,j}$, means
\begin{equation}
e^{-i\alpha\hat N_j}V_j
=
e^{-i\alpha Q_{0,j}}V_j e^{-i\alpha\hat n_{L_j}}
\quad\text{for every }\alpha.
\label{eq:strict-covariance-main}
\end{equation}
For the neighboring-charge construction used below, the two codewords have
physical charges $Q_{0,j}$ and $Q_{0,j}+1$.

\begin{lemma}[Strict covariance excludes uniform exact single-memory correction]
\label{lem:covariant-no-go}
For a finite-dimensional encoder satisfying
Eq.~\eqref{eq:strict-covariance-main}, erasure of every individual physical
memory cannot be exactly correctable when $\hat n_{L_j}$ is nontrivial.
\end{lemma}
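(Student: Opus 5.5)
The natural route is the Eastin--Knill style argument, specialised to the simple generator $\hat n_{L_j}$. Suppose, for contradiction, that every single physical memory $\ell\in\{1,\ldots,n_j\}$ is exactly correctable. By the erasure Knill--Laflamme form, Eq.~\eqref{eq:erasure-KL}, exact correctability of $E_j=\{\ell\}$ means that every operator $O_\ell$ supported on site $\ell$ has a logical-independent expectation:
\begin{equation}
\left\langle i_L\right\rvert O_\ell\left\lvert k_L\right\rangle=\delta_{ik}\,c(O_\ell).
\end{equation}
In other words, $P O_\ell P=c(O_\ell)P$, where $P=V_jV_j^\dagger$ is the code projector.

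The first step applies this to the local number operator $O_\ell=\hat n_{j,\ell}$, which acts only on site $\ell$. Summing over $\ell$ and using additivity of $\hat N_j$ gives
\begin{equation}
P\hat N_jP=\Big(\sum_\ell c(\hat n_{j,\ell})\Big)P.
\end{equation}
So the physical charge is a scalar on the code space.

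The second step computes the same quantity from strict covariance. Differentiating Eq.~\eqref{eq:strict-covariance-main} at $\alpha=0$ yields $\hat N_jV_j=V_j(Q_{0,j}+\hat n_{L_j})$. Multiplying on the left by $V_j^\dagger$ and using $V_j^\dagger V_j=\mathbb I$ gives
\begin{equation}
V_j^\dagger\hat N_jV_j=Q_{0,j}\mathbb I+\hat n_{L_j}.
\end{equation}
The step above says the left-hand side is a scalar, so $\hat n_{L_j}$ must be proportional to $\mathbb I$. That contradicts nontriviality of $\hat n_{L_j}=\lvert1\rangle\langle1\rvert$, whose eigenvalues are $0$ and $1$. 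Concretely, the codewords have charges $Q_{0,j}$ and $Q_{0,j}+1$, so some single site must carry a differing expectation of $\hat n_{j,\ell}$ in the two codewords. That site's erasure leaks population information, which violates Eq.~\eqref{eq:erasure-KL}.

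The main care point is the equivalence between exact erasure correctability and the condition $PO_\ell P\propto P$ for all operators on the erased site. I would derive this directly from Eq.~\eqref{eq:erasure-KL} by taking the trace against $O_\ell$:
\begin{equation}
\operatorname{Tr}\!\big[O_\ell\operatorname{Tr}_{\bar E}\lvert i_L\rangle\langle k_L\rvert\big]=\delta_{ik}\operatorname{Tr}[O_\ell\sigma_{E}].
\end{equation}
Only the diagonal ($i=k$) population part is actually needed, so finite dimensionality enters just to make the differentiation and traces well defined. The differentiation is the only analytic step, and since $\alpha\mapsto e^{-i\alpha\hat N_j}$ is a finite-dimensional matrix exponential, it poses no difficulty.
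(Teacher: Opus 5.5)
Your proposal is correct and follows the paper's proof step for step. The erasure Knill--Laflamme condition makes each $V_j^\dagger\hat n_{j,\ell}V_j$ a scalar, so summing over $\ell$ makes $V_j^\dagger\hat N_jV_j$ a scalar, while differentiating the covariance relation at $\alpha=0$ gives $Q_{0,j}\mathbb I+\hat n_{L_j}$, which is not a scalar. Your remark that only the diagonal (population) part of the condition is needed is a valid minor sharpening.
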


\begin{proof}
Suppose, to the contrary, that erasure of every physical memory $\ell$ is
exactly correctable.  The Knill--Laflamme condition for erasure of memory
$\ell$ then implies
\begin{equation}
V_j^\dagger O_{j,\ell}V_j
=c_{j,\ell}(O_{j,\ell})\mathbb I_{L_j}
\end{equation}
for every operator $O_{j,\ell}$ supported on that memory.  Taking
$O_{j,\ell}=\hat n_{j,\ell}$ and summing over $\ell$ gives
\begin{equation}
V_j^\dagger\hat N_jV_j=c_j\mathbb I_{L_j}
\end{equation}
for a scalar $c_j$.  On the other hand, differentiating
Eq.~\eqref{eq:strict-covariance-main} at $\alpha=0$ yields
\begin{equation}
\hat N_jV_j
=V_j\left(Q_{0,j}\mathbb I_{L_j}+\hat n_{L_j}\right),
\end{equation}
and hence
\begin{equation}
V_j^\dagger\hat N_jV_j
=Q_{0,j}\mathbb I_{L_j}+\hat n_{L_j},
\end{equation}
which is not a scalar.  This is a contradiction.
\end{proof}

Lemma~\ref{lem:covariant-no-go} is the single-erasure form of the standard
continuous-symmetry obstruction to exact finite-dimensional covariant quantum
error correction~\cite{Eastin2009,Faist2020}.  It does not exclude
approximate covariant correction, exact correction of a restricted set of
locations, a charge-neutral logical subsystem, or a reference-assisted
implementation whose asymmetry resource is included explicitly.  It does
exclude interpreting the following exact-QEC statements as an achievable
uniform exact code for the strictly covariant logical-occupancy architecture.

\subsection{Protected references, the scope of the no-go, and its
fair-accounting restoration}
\label{sec:reference-assisted}

Lemma~\ref{lem:covariant-no-go} concerns encoders whose output space is the
signal memory alone.  A memory architecture with a protected reference
mode nevertheless supplies one reference photon; if that
reference is consumed during encoding, the effective encoder is no longer
strictly covariant on the signal memory, and the obstruction must be
re-examined.  This subsection determines exactly what the reference buys
and what it costs.

\begin{theorem}[Protected reference bypasses the strict-covariance
obstruction]
\label{thm:reference-assisted-exact}
Let a node have \(n\geq2\) memory cells \(P=\{1,\ldots,n\}\) and one
protected reference mode \(R\), prepared in the one-photon state
\(\left\lvert1\right\rangle{}_R\).  Let \(\left\lvert e_j\right\rangle{}\)
denote the single-photon state of cell \(j\), and let
\(\omega_1,\ldots,\omega_n\) be phases with \(\sum_j\omega_j=0\) (for
example \(\omega_j=e^{2\pi ij/n}\)).  Define the number-conserving isometry
\begin{equation}
\begin{aligned}
V\bigl(\left\lvert0_L\right\rangle{}\left\lvert1_R\right\rangle{}\bigr)
&=\frac{1}{\sqrt n}\sum_{j=1}^{n}
\left\lvert e_j\right\rangle{}_P\left\lvert0\right\rangle{}_{R'},\\
V\bigl(\left\lvert1_L\right\rangle{}\left\lvert1_R\right\rangle{}\bigr)
&=\frac{1}{\sqrt n}\sum_{j=1}^{n}\omega_j
\left\lvert e_j\right\rangle{}_P\left\lvert1\right\rangle{}_{R'}.
\end{aligned}
\label{eq:reference-assisted-code}
\end{equation}
where \(\left\lvert0\right\rangle{}_{R'}\) and
\(\left\lvert1\right\rangle{}_{R'}\) denote the empty and occupied reference
output.  Then the induced encoder
\(V_L:\left\lvert i_L\right\rangle{}\mapsto
V\left\lvert i_L\right\rangle{}\left\lvert1_R\right\rangle{}\) is strictly
\(U(1)\)-covariant with background charge \(Q_0=1\), and erasure of every
individual memory cell \(\ell\in P\) is exactly correctable by a
pattern-conditioned, number-conserving recovery.
\end{theorem}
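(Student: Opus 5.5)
The plan is to verify the statement in three steps: the isometry property and strict covariance by direct charge counting, exact correctability through the erasure Knill--Laflamme condition Eq.~\eqref{eq:erasure-KL}, and finally the existence of a recovery that is simultaneously pattern-conditioned and number-conserving.

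\textbf{Isometry and covariance.} The two images in Eq.~\eqref{eq:reference-assisted-code} are unit vectors, since \(|\omega_j|=1\). They are orthogonal because their \(R'\) factors are \(\left\lvert0\right\rangle{}\) and \(\left\lvert1\right\rangle{}\). So \(V_L\) is an isometry. Take the total generator \(\hat N=\hat N_P+\hat n_{R'}\). Every branch of \(V_L\left\lvert0_L\right\rangle{}\) carries one photon in \(P\) and none in \(R'\), so it is an eigenvector of \(\hat N\) with eigenvalue \(1\). Every branch of \(V_L\left\lvert1_L\right\rangle{}\) has eigenvalue \(2\). Hence \(e^{-i\alpha\hat N}V_L=e^{-i\alpha}V_Le^{-i\alpha\hat n_L}\), which is Eq.~\eqref{eq:covariant-encoding} with \(Q_0=1\). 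This also shows why Lemma~\ref{lem:covariant-no-go} does not apply. The generator counts \(R'\), which is never erased, so summing the single-cell Knill--Laflamme conditions over \(P\) only yields \(V_L^\dagger\hat N_PV_L\). That equals the identity on the code space and is consistent with correctability.

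\textbf{Knill--Laflamme check.} Fix a cell \(\ell\in P\) and compute \(\operatorname{Tr}_{\bar\ell}[V_L\left\lvert i_L\right\rangle{}\!\left\langle k_L\right\rvert{}V_L^\dagger]\), where the complement \(\bar\ell=(P\setminus\{\ell\})\cup R'\) includes the reference output. For \(i\neq k\), the partial trace over \(R'\) pairs \(\left\lvert0\right\rangle{}_{R'}\) with \(\left\lvert1\right\rangle{}_{R'}\), so the off-diagonal blocks vanish identically. For \(i=k\), the reduced state of cell \(\ell\) is \(\tfrac1n\left\lvert1\right\rangle{}\!\left\langle1\right\rvert{}+(1-\tfrac1n)\left\lvert0\right\rangle{}\!\left\langle0\right\rvert{}\) in both cases, because the phases \(\omega_j\) drop out of all diagonal terms. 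Thus Eq.~\eqref{eq:erasure-KL} holds with a logical-independent \(\sigma_\ell\), and exact correctability follows from~\cite{Knill2000}. The condition \(\sum_j\omega_j=0\) is not needed for this step; I would only remark that it makes the \(P\)-marginals of the two codewords orthogonal-phase twins.

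\textbf{Number-conserving recovery (main obstacle).} Knill--Laflamme guarantees some recovery, but the statement requires one that conserves photon number. I plan to build it explicitly from the two branches of the complementary state.

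\emph{Photon in cell \(\ell\).} With probability \(1/n\), erasure leaves \(P\setminus\{\ell\}\) in vacuum and the logical information in \(R'\) as \(\alpha\left\lvert0\right\rangle{}+\beta\omega_\ell\left\lvert1\right\rangle{}\). Because \(\ell\) is flagged, the recovery applies the diagonal phase \(\operatorname{diag}(1,\bar\omega_\ell)\) on \(R'\). This is a charge-diagonal operation.

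\emph{Photon elsewhere.} Otherwise the state is \(\alpha\left\lvert u_\ell\right\rangle{}\left\lvert0\right\rangle{}_{R'}+\beta\left\lvert u^\omega_\ell\right\rangle{}\left\lvert1\right\rangle{}_{R'}\). Here \(u_\ell\) and \(u^\omega_\ell\) are the normalized single-photon states on \(P\setminus\{\ell\}\), and they carry identical weight. A controlled unitary acting within the one-photon sector of \(P\setminus\{\ell\}\), conditioned on \(R'\), maps \(u^\omega_\ell\mapsto u_\ell\). It disentangles the branches while preserving photon number in each block.

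\emph{Combining the branches.} The two branches are distinguished by the total charge of \(P\setminus\{\ell\}\), which is \(0\) or \(1\). A projective measurement of this charge therefore commutes with \(\hat N\). Composing it with the branch-wise unitaries, and routing the logical qubit to \(R'\) (with a fixed-charge ancilla absorbing the background photon), gives a channel whose Kraus operators are all block-diagonal in charge.

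As a fallback, I would invoke the general fact that the Petz recovery of a covariant code under a covariant noise channel (partial trace commutes with the \(U(1)\) action) is itself covariant. That would give number conservation abstractly if the explicit bookkeeping of output charges proves awkward.
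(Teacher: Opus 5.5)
Your proposal is correct, and its first two steps coincide with the paper's proof. The charge counting $1,2$ of the two branches gives covariance with $Q_0=1$. The Knill--Laflamme check uses the same logical-independent marginal $\sigma_\ell=\frac1n|1\rangle\langle1|+(1-\frac1n)|0\rangle\langle0|$ and the same vanishing of the cross term through $\langle0|1\rangle_{R'}=0$. You are also right that $\sum_j\omega_j=0$ plays no role here.

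Where you differ is the recovery. The paper invokes Knill--Laflamme existence and describes the decoder only as sending the orthogonal supports $\operatorname{Tr}_\ell|\psi_i\rangle\langle\psi_i|$ to $|i_L\rangle\langle i_L|$, with the excess charge dumped into a reservoir; the coherent, covariant structure is left implicit. You instead write the decoder down:
\begin{itemize}
\item on the vacuum sector of $P\setminus\{\ell\}$, undo the phase $\omega_\ell$ on $R'$;
\item on the one-photon sector, apply an $R'$-controlled unitary $U_\ell$ with $U_\ell u^\omega_\ell=u_\ell$;
\item then output $R'$ and discard $P\setminus\{\ell\}$.
\end{itemize}
This can be checked directly, including on $|0_L\rangle\langle1_L|$, where the factor $\bar\omega_\ell$ of the vacuum branch is cancelled. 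So your version turns the number-conservation claim into a verified construction rather than an assertion.

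The charge measurement is in fact unnecessary. The branch operations assemble into one unitary $W$ on $Q$ that commutes separately with $\hat N_{P\setminus\{\ell\}}$ and $\hat n_{R'}$. Covariance of $X\mapsto\operatorname{Tr}_{P\setminus\{\ell\}}[WXW^\dagger]$ then follows from invariance of the partial trace under unitaries on $P\setminus\{\ell\}$.

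One phrasing point: the Kraus operators of this channel are not charge-preserving; they have definite nonpositive charge shifts. They are block-diagonal in charge only for the dilation that includes the discarded photon, which is the sense of number conservation the paper intends.

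Your fallback is the paper's Corollary~\ref{cor:explicit-petz-decoder}. Since the transpose channel is exact whenever Knill--Laflamme holds, it would also close the argument, only less explicitly.
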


\begin{proof}
The two input branches have charges \(1\) and \(2\), and the two output
branches have charges \(1\) and \(2\), so \(V\) is charge conserving;
acting with \(e^{-i\alpha(\hat N_P+\hat N_{R'})}\) on the output branch
\(i\in\{0,1\}\) gives the phase \(e^{-i\alpha(1+i)}\), which is exactly
\(e^{-i\alpha}V_Le^{-i\alpha i}\) acting on
\(\left\lvert i_L\right\rangle{}\); hence
Eq.~\eqref{eq:strict-covariance-main} holds with \(Q_0=1\).  Fix
\(\ell\in P\).  Both codewords have single-cell reduced state
\begin{equation}
\sigma_\ell
=
\frac1n\left\lvert1\right\rangle{}\!\left\langle1\right\rvert{}
+\left(1-\frac1n\right)\left\lvert0\right\rangle{}\!\left\langle0\right\rvert{},
\label{eq:reference-code-single-cell}
\end{equation}
and the cross reduced operator on cell \(\ell\) vanishes because the
reference outputs are orthogonal:
\begin{equation}
\operatorname{Tr}_{\setminus\ell,\,R'}\!
\left\lvert\psi_0\right\rangle{}\!\left\langle\psi_1\right\rvert{}
=
\operatorname{Tr}_{\setminus\ell}\!\left\lvert\phi_0\right\rangle{}\!
\left\langle\phi_1\right\rvert{}\cdot\langle0\vert1\rangle=0.
\label{eq:reference-code-cross}
\end{equation}
The erasure Knill--Laflamme condition
Eq.~\eqref{eq:erasure-KL} therefore holds for cell \(\ell\), so an exact
recovery exists: its action maps the two orthogonal remaining supports
\(\operatorname{Tr}_\ell\left\lvert\psi_i\right\rangle{}\!\left\langle
\psi_i\right\rvert{}\) to
\(\left\lvert i_L\right\rangle{}\!\left\langle i_L\right\rvert{}\), with
the excess charge moved to a discarded reservoir so that the recovery is
number conserving.
\end{proof}

\begin{theorem}[Fair accounting restores the obstruction]
\label{thm:reference-assisted-fair}
Suppose the reference output \(R'\) is subject to the same flagged-erasure
process as the memory cells.  Then no strictly covariant encoder of a
nontrivial logical occupancy can exactly correct the erasure of every
individual mode of \(P\cup\{R'\}\).  In particular, the code of
Theorem~\ref{thm:reference-assisted-exact} is not such a code: erasing
\(R'\) annihilates the logical coherence,
\begin{equation}
\operatorname{Tr}_{R'}\!\left\lvert\psi_0\right\rangle{}\!\left\langle
\psi_1\right\rvert{}=0,
\label{eq:reference-erasure-coherence-death}
\end{equation}
so the off-diagonal part of the stored family is destroyed.
\end{theorem}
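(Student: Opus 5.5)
The plan is to reduce the general claim to Lemma~\ref{lem:covariant-no-go} by absorbing the reference output into the physical memory, and then to check the specific code of Theorem~\ref{thm:reference-assisted-exact} by a direct computation.

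\textbf{General claim.} Regard \(P\cup\{R'\}\) as the physical memory. Its number generator is \(\hat N_P+\hat N_{R'}\), which is additive over the \(n+1\) modes. Any strictly covariant encoder of a nontrivial logical occupancy into this enlarged memory then satisfies Eq.~\eqref{eq:strict-covariance-main}. This holds for the induced encoder \(V_L\) of Theorem~\ref{thm:reference-assisted-exact}, with \(Q_0=1\), and for any other such encoder.

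Now suppose erasure of every individual mode of \(P\cup\{R'\}\) were exactly correctable. The Knill--Laflamme step in the proof of Lemma~\ref{lem:covariant-no-go} gives \(V_L^\dagger\hat n_mV_L\propto\mathbb I_L\) for each mode \(m\), including \(m=R'\). Summing over the modes yields
\[
V_L^\dagger(\hat N_P+\hat N_{R'})V_L=c\,\mathbb I_L .
\]
Differentiating the covariance relation gives instead \(V_L^\dagger(\hat N_P+\hat N_{R'})V_L=Q_0\mathbb I_L+\hat n_L\), which is not a scalar. This contradiction is the first claim. The only point to verify is that the lemma's hypotheses are exactly finite dimension, additivity of the generator, and strict covariance, all of which carry over once \(R'\) is counted as a physical mode.

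\textbf{The specific code.} Write \(\lvert\psi_0\rangle=\lvert\phi_0\rangle_P\lvert0\rangle_{R'}\) and \(\lvert\psi_1\rangle=\lvert\phi_1\rangle_P\lvert1\rangle_{R'}\). Tracing out \(R'\) from \(\lvert\psi_0\rangle\langle\psi_1\rvert\) gives
\[
\operatorname{Tr}_{R'}\lvert\psi_0\rangle\langle\psi_1\rvert=\lvert\phi_0\rangle\langle\phi_1\rvert\,\langle1\vert0\rangle=0,
\]
which is Eq.~\eqref{eq:reference-erasure-coherence-death}.

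Consequently the surviving state \(\operatorname{Tr}_{R'}[V_L\rho V_L^\dagger]\) depends only on the diagonal entries of \(\rho\). The logical states \(\lvert\pm\rangle\) therefore map to the same output, so no recovery can reproduce the off-diagonal part, violating Eq.~\eqref{eq:exact-local-recovery}.

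Equivalently, Eq.~\eqref{eq:erasure-KL} fails for \(E=\{R'\}\). The reduced states \(\operatorname{Tr}_P\lvert\psi_i\rangle\langle\psi_i\rvert=\lvert i\rangle\langle i\rvert_{R'}\) are distinct, so the erased reference carries the full population information.

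This second part also gives an explicit instance of the general no-go. The cell erasures are correctable by Theorem~\ref{thm:reference-assisted-exact}, so the failure is forced onto \(R'\), consistent with the summation argument above. I do not expect a real obstacle here.
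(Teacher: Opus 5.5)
Your proposal is correct and matches the paper's proof. The first claim is Lemma~\ref{lem:covariant-no-go} applied to the enlarged $(n+1)$-mode register $P\otimes R'$, with additive generator $\hat N_P+\hat N_{R'}$ and $Q_0=1$. The second is the one-line computation $\operatorname{Tr}_{R'}\lvert0\rangle\langle1\rvert=0$. Your extra checks go slightly beyond what the paper writes, and both are correct: the retained state depends only on the logical populations, and Eq.~\eqref{eq:erasure-KL} already fails at the population level for $E=\{R'\}$.
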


\begin{proof}
The encoder \(V_L\) maps into the joint register \(P\otimes R'\) and
satisfies Eq.~\eqref{eq:strict-covariance-main} with \(Q_0=1\) by
Theorem~\ref{thm:reference-assisted-exact}; applying
Lemma~\ref{lem:covariant-no-go} to this \((n+1)\)-mode register gives the
first claim.  For the second,
\(\operatorname{Tr}_{R'}\!\left\lvert\psi_0\right\rangle{}\!\left\langle
\psi_1\right\rvert{}
=\left\lvert\phi_0\right\rangle{}\!\left\langle\phi_1\right\rvert{}_P\,
\operatorname{Tr}_{R'}\!\left\lvert0\right\rangle{}\!\left\langle
1\right\rvert{}=0\).
\end{proof}

\begin{remark}[Physical status of the reference-assisted route]
\label{rem:reference-assisted-status}
Theorem~\ref{thm:reference-assisted-exact} shows that the strict-covariance
obstruction is bypassable, and it also identifies the price.  On the code
of Eq.~\eqref{eq:reference-assisted-code} the reference output carries the
logical charge, \(\hat N_{R'}=\hat n_L\) on the code, and its erasure
destroys the logical coherence
(Eq.~\eqref{eq:reference-erasure-coherence-death}); the reference is
therefore quantum-correlated with the logical value, cannot be regenerated
or substituted during storage, and has itself become a protected quantum
memory.  Under the fair accounting of
Theorem~\ref{thm:reference-assisted-fair}, the obstruction is restored on
the joint register, and the quadratic speed limits of
Theorem~\ref{thm:true-trace-quadratic-speed-limit}---derived for local
circuits without protected references---remain the operative statement for
reference-free architectures, in which no protected reference is stored
during encoding and any asymmetry resource is introduced only after
recovery.  Within this protected-ancilla model a
single ancilla photon is also minimal: with no ancilla, the logical charge
difference must sit in \(P\), and Lemma~\ref{lem:covariant-no-go} applies
unchanged.  This does not equate the stored local ancilla with an external phase
reference used only during readout.  The construction is not a full erasure
code: multi-cell erasure patterns can destroy the coherence.  It establishes
the exact scope of the single-cell obstruction and identifies the reference
resource that any reference-assisted extension must count.
\end{remark}

\section{Covariant-QEC context and the finite-depth target}
\label{sec:decoupling}

Volume-law entanglement and large Schmidt rank are useful diagnostics of scrambling~\cite{Page1993,Hayden2007}, but neither is by itself sufficient to establish correctability.  The operational criterion is decoupling of the erased subsystem from a reference that purifies the logical input.

Let $R$ be a reference qubit and prepare $\left\lvert \Phi_{RL}\right\rangle{}=(\left\lvert 00\right\rangle{}+\left\lvert 11\right\rangle{})/\sqrt2$.  After a local encoder $V$ and a bipartition into erased subsystem $E$ and retained subsystem $Q$, define
\begin{equation}
\rho_{REQ}
=
(\mathbb{I}_R\otimes V)
\left\lvert \Phi_{RL}\right\rangle{}\!\left\langle \Phi_{RL}\right\rvert{}
(\mathbb{I}_R\otimes V^\dagger).
\label{eq:encoded-Choi-state}
\end{equation}
The decoupling error is
\begin{equation}
D_E(V)
=
\frac12
\left\lVert \rho_{RE}-\rho_R\otimes\rho_E\right\rVert{}_1.
\label{eq:decoupling-error}
\end{equation}
Exact erasure correction is equivalent to $D_E(V)=0$.  Approximate decoupling implies the existence of an approximate recovery channel through the standard decoupling and information--disturbance framework~\cite{Knill2000,Hayden2007}.

Before imposing covariance, the fixed-pattern Haar benchmark can be made
explicit.

\begin{lemma}[Unrestricted fixed-pattern Haar decoupling]
\label{lem:haar-fixed-pattern}
Let
$V:\mathbb{C}^{d_L}\rightarrow
\mathbb{C}^{d_Q}\otimes\mathbb{C}^{d_E}$
be a Haar-random isometry obtained from the first $d_L$ columns of a Haar
unitary, where $d_L\leq d_Qd_E$, and define $D_E(V)$ by
Eq.~\eqref{eq:decoupling-error} using a maximally entangled state of Schmidt
rank $d_L$.  For a fixed erased subsystem,
\begin{equation}
\mathbb{E}_V D_E(V)
\leq
\frac12
\sqrt{
\frac{
d_Ed_Q(d_E^2-1)(d_L^2-1)
}{
d_L[(d_Ed_Q)^2-1]
}
}.
\label{eq:haar-fixed-pattern-bound}
\end{equation}
For one logical qubit encoded without symmetry constraints into $n$ physical
qubits, with a fixed set of $k$ erased qubits,
\begin{equation}
\mathbb{E}_V D_E(V)
\leq
\sqrt{\frac38}\,
2^{-(n-2k)/2}
\sqrt{
\frac{1-2^{-2k}}{1-2^{-2n}}
}.
\label{eq:haar-half-scaling}
\end{equation}
If $k/n\rightarrow\alpha\in(0,1/2)$, the last factor tends to one and
the bound decays as $2^{-(1-2\alpha)n/2}$.
\end{lemma}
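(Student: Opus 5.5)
The plan is the standard decoupling second-moment argument: bound the trace norm by a Hilbert--Schmidt norm, then evaluate the Haar average of the squared Hilbert--Schmidt norm exactly with Weingarten calculus.

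\textbf{Step 1 (trace norm to Hilbert--Schmidt).} The operator \(X=\rho_{RE}-\rho_R\otimes\rho_E\) lives on a space of dimension \(d_Ld_E\). Cauchy--Schwarz gives \(\|X\|_1\leq\sqrt{d_Ld_E}\,\|X\|_2\), and Jensen then gives
\[
\mathbb E_V D_E\leq\tfrac12\sqrt{d_Ld_E\,\mathbb E\|X\|_2^2}.
\]
Here \(\rho_R=\mathbb I/d_L\) exactly, because \(V\) is an isometry. Hence
\[
\|X\|_2^2=\operatorname{Tr}\rho_{RE}^2-\tfrac{1}{d_L}\operatorname{Tr}\rho_E^2.
\]

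\textbf{Step 2 (reduce to a unitary state).} The state \(\rho_{REQ}\) is \((\mathbb I\otimes U)\) applied to \(\Phi_{RL}\otimes|0\rangle\langle0|_{\rm anc}\), where \(U\) is Haar on \(D=d_Qd_E\) dimensions and the ancilla pads \(L\) up to \(D\). Then \(\rho_{RE}\) is the reduced state of a pure state on \(RQE\), so \(\operatorname{Tr}\rho_{RE}^2=\operatorname{Tr}\rho_Q^2\). Both \(\rho_Q\) and \(\rho_E\) are marginals of \(\omega=U\tau U^\dagger\), with \(\tau\) the maximally mixed state on a \(d_L\)-dimensional subspace. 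The standard second-moment formula for \(\mathbb E\,U^{\otimes2}\tau^{\otimes2}U^{\dagger\otimes2}\) is a combination of \(\mathbb I\) and SWAP with coefficients fixed by \(\operatorname{Tr}\tau=1\) and \(\operatorname{Tr}\tau^2=1/d_L\). It gives
\[
\mathbb E\operatorname{Tr}\omega_A^2=\frac{d_A+d_{\bar A}\,r}{D+1},\qquad r=\frac{D/d_L-1}{D-1}\cdot\frac{D+1}{D}\ \text{(after simplification)}.
\]
Concretely, I will use \(\mathbb E\operatorname{Tr}\omega_A^2=\bigl[(D^2/d_L-1)d_A+(D-D/d_L)d_{\bar A}\bigr]/(D^2-1)\) for \(A\in\{Q,E\}\).

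\textbf{Step 3 (algebra).} Substitute into
\[
d_Ld_E\Bigl(\mathbb E\operatorname{Tr}\omega_Q^2-\tfrac1{d_L}\mathbb E\operatorname{Tr}\omega_E^2\Bigr)
\]
and factor. The expected result is
\[
\frac{d_Ed_Q(d_E^2-1)(d_L^2-1)}{d_L(D^2-1)},
\]
which is exactly the bound \eqref{eq:haar-fixed-pattern-bound}. The main risk is this bookkeeping. I would check the sanity cases \(d_E=1\) (the expression must vanish) and \(d_L=1\) (it must vanish as well); both are visible in the target formula, so they serve as a check on the factorization.

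\textbf{Step 4 (qubit specialization).} Set \(d_L=2\), \(d_E=2^k\), \(d_Q=2^{n-k}\), \(D=2^n\). The radicand becomes
\[
\frac{2^n(4^k-1)\cdot3}{2(4^n-1)}=\frac38\cdot\frac{4\cdot2^n\,4^k(1-4^{-k})}{4^n(1-4^{-n})}.
\]
Taking the square root and the factor \(\tfrac12\) yields
\[
\sqrt{3/8}\;2^{-(n-2k)/2}\sqrt{\frac{1-2^{-2k}}{1-2^{-2n}}}.
\]
(The constant must be tracked carefully: \(\tfrac12\sqrt{3\cdot 2^{2k-n}/2}=\sqrt{3/8}\,2^{-(n-2k)/2}\).) The limit statement for \(k/n\to\alpha\in(0,1/2)\) is then immediate, since both \(2^{-2k}\) and \(2^{-2n}\) vanish.
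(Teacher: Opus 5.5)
Your route is the paper's: $\|X\|_1\le\sqrt{d_Ld_E}\,\|X\|_2$ with Jensen, the identity $\operatorname{Tr}X^2=\operatorname{Tr}\rho_{RE}^2-d_L^{-1}\operatorname{Tr}\rho_E^2$, the two-copy Haar identity for the two purities, and the qubit substitution. Your Step 4 constant bookkeeping is correct. Rewriting $\operatorname{Tr}\rho_{RE}^2=\operatorname{Tr}\rho_Q^2$, so that both purities become marginal purities of $\omega=V(\mathbb I/d_L)V^\dagger$, is a clean way to get both of the paper's purity identities from one formula.

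\textbf{The gap.} The formula you say you will use in Step 2 is wrong, and with it Step 3 does not produce the target.
\begin{itemize}
\item At $d_L=1$ it returns $d_A$ instead of Page's $(d_A+d_{\bar A})/(D+1)$.
\item At $d_{\bar A}=1$ it returns $D/d_L$ instead of $\operatorname{Tr}\tau^2=1/d_L$.
\item Your first display, with $r$, also fails at $d_L=1$.
\item Substituted into Step 3, it gives $\mathbb E\operatorname{Tr}\omega_Q^2-\mathbb E\operatorname{Tr}\omega_E^2=d_Q-d_E\neq0$ at $d_L=1$. So exactly the sanity check you propose would fail.
\end{itemize}

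\textbf{The fix.} Write $\mathbb E\,\omega^{\otimes2}=\alpha\mathbb I+\beta F$ and solve the trace conditions $\alpha D^2+\beta D=1$ and $\alpha D+\beta D^2=1/d_L$. This gives $\alpha=(D-1/d_L)/[D(D^2-1)]$ and $\beta=(D/d_L-1)/[D(D^2-1)]$. Pairing with $F_A\otimes\mathbb I_{\bar A}$, whose traces against $\mathbb I$ and $F$ are $d_Ad_{\bar A}^2$ and $d_A^2d_{\bar A}$, yields
\[
\mathbb E\operatorname{Tr}\omega_A^2=\frac{(D/d_L-1)\,d_A+(D-1/d_L)\,d_{\bar A}}{D^2-1}.
\]
For $A=Q$ and $A=E$ this reproduces Eqs.~\eqref{eq:haar-re-purity} and \eqref{eq:haar-e-purity}. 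In the difference $\mathbb E\operatorname{Tr}\omega_Q^2-d_L^{-1}\mathbb E\operatorname{Tr}\omega_E^2$, the $d_Q$ coefficients combine to $-(1-d_L^{-2})$ and the $d_E$ coefficients to $D(1-d_L^{-2})$. The difference is therefore $d_Q(d_E^2-1)(d_L^2-1)/[d_L^2(D^2-1)]$, which is Eq.~\eqref{eq:haar-hilbert-schmidt}. Multiplying by $d_Ld_E$ gives the radicand of Eq.~\eqref{eq:haar-fixed-pattern-bound}, and the rest of your argument goes through unchanged.
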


\begin{proof}
Put $X=\rho_{RE}-\rho_R\otimes\rho_E$.  Since
$\rho_R=\mathbb{I}_R/d_L$,
\begin{equation}
\operatorname{Tr}(X^2)
=
\operatorname{Tr}(\rho_{RE}^2)
-
\frac{1}{d_L}\operatorname{Tr}(\rho_E^2).
\label{eq:haar-purity-difference}
\end{equation}
The two-copy Haar-isometry identity gives
\begin{align}
\mathbb{E}_V\operatorname{Tr}(\rho_{RE}^2)
&=
\frac{
d_E(d_Q^2-1)+d_Ld_Q(d_E^2-1)
}{
d_L[(d_Ed_Q)^2-1]
},
\label{eq:haar-re-purity}\\
\mathbb{E}_V\operatorname{Tr}(\rho_E^2)
&=
\frac{
d_Ld_E(d_Q^2-1)+d_Q(d_E^2-1)
}{
d_L[(d_Ed_Q)^2-1]
}.
\label{eq:haar-e-purity}
\end{align}
Substitution into Eq.~\eqref{eq:haar-purity-difference} yields
\begin{equation}
\mathbb{E}_V\operatorname{Tr}(X^2)
=
\frac{
d_Q(d_E^2-1)(d_L^2-1)
}{
d_L^2[(d_Ed_Q)^2-1]
}.
\label{eq:haar-hilbert-schmidt}
\end{equation}
Because $X$ acts on a space of dimension $d_Ld_E$,
$\lVert X\rVert_1\leq\sqrt{d_Ld_E}\lVert X\rVert_2$.
Jensen's inequality then proves
Eq.~\eqref{eq:haar-fixed-pattern-bound}.  Setting
$d_L=2$, $d_E=2^k$, and $d_Q=2^{n-k}$ gives
Eq.~\eqref{eq:haar-half-scaling}.
\end{proof}

Lemma~\ref{lem:haar-fixed-pattern} establishes exponential fixed-pattern
decoupling below one half and is consistent with the observed crossover of
the unrestricted Haar benchmark.  It does
\emph{not} establish that one sampled encoder corrects all
$\binom{n}{k}$ sets simultaneously:
\begin{equation}
\sup_E\mathbb{E}_V D_E(V)
\leq
\mathbb{E}_V\max_E D_E(V)
\label{eq:average-versus-uniform-erasure}
\end{equation}
Equality need not hold, so a fixed-pattern expectation cannot be substituted
for a high-probability worst-pattern guarantee.
Nor can the unrestricted result be transferred unchanged to the
protected-coherence
logical qubit under a $U(1)$ constraint.  The
logical states $\left\lvert0\right\rangle{}$ and
$\left\lvert1\right\rangle{}$ transform with different charges, so a covariant
encoder must preserve their relative group action.  An isometry whose entire
range lies in one fixed-total-charge sector instead represents a
charge-neutral logical subsystem and does not by itself establish protection
of the vacuum--one-excitation coherence.  We therefore do not claim an
asymptotic capacity theorem from fixed-sector dimension counting.

\paragraph*{Known covariant-code results and the remaining question.}
Continuous-symmetry versions of the Eastin--Knill obstruction already give
quantitative lower bounds on the erasure-correction accuracy of
finite-dimensional covariant codes~\cite{Faist2020}.  The charge-Haar
construction relevant to the present numerical model is also not new in
principle: random $U(1)$-covariant codes generated by Haar-random
number-conserving unitaries have been analyzed in
Ref.~\cite{Kong2022}.  In the fixed-logical-size, fixed-erasure-size regimes
treated there, both average- and worst-case purified-distance errors vanish as
$O(n^{-1})$.  Asymptotically vanishing approximate error at fixed logical
size is therefore a known covariant-code result, not a conjecture of this
work.  Polynomial convergence to a symmetric $2$-design is likewise known for
the structured CQA ensemble~\cite{Li2024Efficient}; as detailed in
Remark~\ref{rem:finite-depth-scope}, that ensemble is not the canonical
nearest-neighbor local-Haar brickwork measure analyzed here.

The present family encodes one logical occupancy qubit into $n$ physical
memories and hence has rate $1/n$; we make no nonzero-rate claim.
Theorem~\ref{thm:finite-depth-achievability} below proves finite-depth
pattern-average decoupling existence for a canonical local-Haar brickwork
ensemble at fixed erasure size, but only with a non-scalable depth bound.
The architecture-specific open question is whether a specified
geometrically local, finite-depth, number-conserving circuit can obtain
comparable task-specific recovery guarantees at polynomial depth under
independent flagged loss.  Any higher-rate extension
must define the number and charge range of the logical degrees of freedom and
respect the known covariance--accuracy tradeoffs, rather than posit vanishing
full-state recovery error at a fixed positive rate.

Independently of random coding, an unconstrained exact $[[n,1,d]]$ quantum
code corrects $t=d-1$ flagged erasures.  The quantum Singleton bound gives
\begin{equation}
n-1\geq2(d-1)=2t,
\label{eq:singleton}
\end{equation}
and hence
\begin{equation}
t\leq\left\lfloor\frac{n-1}{2}\right\rfloor.
\label{eq:exact-half-limit}
\end{equation}
The near-one-half value is therefore a no-cloning-compatible upper limit for
exact local erasure correction of one logical qubit~\cite{Knill2000}.  It is
an unrestricted, noncovariant exact-code benchmark: it is neither an
achievability statement for the strictly $U(1)$-covariant logical-occupancy
code, which is subject to Lemma~\ref{lem:covariant-no-go}, nor a threshold for
approximate or task-specific covariant recovery.

\section{Finite-depth local scramblers}
\label{sec:finite-depth}

Haar-random encoders are analytical idealizations.  A physically structured
theory must specify a local circuit ensemble.  The unrestricted benchmark may
use generic local gates, whereas the architecture-specific encoder must use
number-conserving gates compatible with Eq.~\eqref{eq:covariant-encoding}.  In
either case write
\begin{equation}
V_j(L)=U_j^{(L)}U_j^{(L-1)}\cdots U_j^{(1)}V_j^{(0)},
\label{eq:brickwork-encoder}
\end{equation}
or a chaotic local Hamiltonian evolution
\begin{equation}
V_j(t)=e^{-itH_j}V_j^{(0)}.
\label{eq:chaotic-encoder}
\end{equation}

\begin{lemma}[Causal-depth obstruction to uniform exact erasure recovery]
\label{lem:causal-depth}
Suppose the logical qubit in $V_j^{(0)}$ is localized at one site of a
one-dimensional chain, all other inputs are fixed ancillas, and
$U_j^{(1)},\ldots,U_j^{(L)}$ form a depth-$L$ nearest-neighbor circuit.  If
one encoder is required to exactly correct every flagged erasure set of size
$k$,
then necessarily
\begin{equation}
2L+1>k.
\label{eq:causal-depth-bound}
\end{equation}
Thus protection against an extensive number $k=\alpha n$ of arbitrary
erasures requires $L=\Omega(n)$ in one dimension.  In a $d$-dimensional
bounded-degree lattice with bounded-range gates and causal-ball volume at
most $CL^d$, the corresponding causal-volume argument gives the necessary
scaling $L=\Omega(k^{1/d})$ for uniform exact recovery.
\end{lemma}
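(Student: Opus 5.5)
The plan is to argue by the light cone. Label the sites \(0,\ldots,n-1\) and let the logical qubit enter at site \(s\). After a depth-\(L\) nearest-neighbor circuit, the image of any operator at \(s\) is supported inside the causal interval \(C=[s-L,s+L]\cap[n]\), which has at most \(2L+1\) sites. Heisenberg-evolving backwards gives the dual statement. If \(U\) is the full circuit and \(A\) is any region with \(A\cap C=\emptyset\), then \(U^\dagger O_A U\) is supported on a set disjoint from \(s\). Suppose instead that \(2L+1\le k\). The idea is to exhibit a size-\(k\) erasure set that contains the whole causal interval, together with \(k-|C|\) arbitrary further sites (or all of \(C\) when \(|C|\le k\le n\)), and show that this erasure destroys the logical information.

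The key step is that the complementary region \(Q=[n]\setminus E\) lies outside the future light cone of \(s\). Its reduced state is therefore independent of the logical input. For every operator \(O_Q\), the quantity \(\langle i_L|V^\dagger O_Q V|k_L\rangle\) equals the matrix element of \(U^\dagger O_Q U\) between the product inputs \(|i\rangle_s|\text{anc}\rangle\) and \(|k\rangle_s|\text{anc}\rangle\). Because \(U^\dagger O_Q U\) acts trivially on site \(s\), this equals \(\delta_{ik}\,c(O_Q)\). So the retained system \(Q\) carries no logical information: \(\rho_{RQ}=\rho_R\otimes\rho_Q\).

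Complementarity then finishes the argument. If \(E\) were exactly correctable, decoupling would force \(\rho_{RE}=\rho_R\otimes\rho_E\). The reference purification on \(R E Q\) would then have \(R\) decoupled from both \(E\) and \(Q\), which contradicts purity of \(|\Phi_{RL}\rangle\) on \(R\,E\,Q\) with \(\rho_R\) maximally mixed. Concretely, \(I(R:E)+I(R:Q)=2S(R)=2\) for a pure state in which \(R\) is maximally entangled with \(EQ\), so both mutual informations cannot vanish. Equivalently, the Knill--Laflamme condition of Eq.~\eqref{eq:erasure-KL} for \(E\) together with the analogous condition for its complement \(Q\) would make the logical operators trivial. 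This contradiction gives \(2L+1>k\), and the choice \(k=\alpha n\) yields \(L=\Omega(n)\).

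For the \(d\)-dimensional statement I would repeat the same argument with the causal ball of radius \(O(L)\), whose volume is at most \(CL^d\). If \(k\ge CL^d\), that ball fits inside a size-\(k\) erasure set, which gives \(L=\Omega(k^{1/d})\).

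The main obstacle is bookkeeping the geometry rather than any deep step. One must take care that the light-cone radius is exactly \(L\) for brickwork layers (it may be smaller), and handle boundary truncation and the case \(n<2L+1\), where \(C=[n]\) and the bound is trivial. The other point needing care is that the ancillas are in a fixed product state independent of the logical input, so that the matrix-element argument goes through.
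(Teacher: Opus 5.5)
Your proposal is correct and follows the same light-cone argument as the paper. You place the at most $2L+1$ sites of the forward cone of the logical site inside a size-$k$ erasure set, so the retained state is independent of the logical input and no recovery can exist. The $d$-dimensional case then follows by replacing the interval with a causal ball. Your Heisenberg-picture computation $\langle i_L|V^\dagger O_Q V|k_L\rangle=\delta_{ik}c(O_Q)$ and your closing step via $I(R\!:\!E)+I(R\!:\!Q)=2S(R)$ only make explicit what the paper states in one line. The direct observation, that a recovery map fed an input-independent state cannot reproduce the input, already suffices.
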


\begin{proof}
After depth $L$, the forward causal cone of the localized logical input
contains at most $2L+1$ sites.  If $k\geq2L+1$, choose an erasure set that
contains this entire cone and add arbitrary sites if necessary.  The retained
state is then independent of the logical input, so no recovery channel on the
retained sites can reconstruct an unknown logical qubit.  The
$d$-dimensional statement follows by replacing the interval by a causal
region of volume $O(L^d)$.
\end{proof}

\begin{remark}[Boundary-localized obstruction]
\label{rem:boundary-causal-depth}
For the present input
$V^{(0)}=(\lvert011\cdots0\rangle,\lvert111\cdots0\rangle)$,
the logical distinction is localized at the boundary site.  Its depth-$L$
forward cone contains at most $\min\{n,L+1\}$ sites, so uniform correction of
every size-$k$ erasure requires $L+1>k$.  More quantitatively, if every site
is erased independently with probability $p$, the event that the complete
forward cone is erased has probability at least
$p^{\min\{n,L+1\}}$.  Conditional on this event the erased subsystem contains
an isometric copy of the logical qubit and
\begin{equation}
D_E(V)=
\frac12
\left\|
\lvert\Phi\rangle\!\langle\Phi\rvert
-\frac{\mathbb I_4}{4}
\right\|_1
=\frac34.
\label{eq:boundary-cone-decoupling}
\end{equation}
Consequently,
\begin{equation}
\mathbb E_{E\sim{\rm Bernoulli}(p)}D_E(V)
\geq
\frac34p^{\min\{n,L+1\}}.
\label{eq:boundary-iid-lower-bound}
\end{equation}
For the current $n=5$, $L=2$, $p=0.30$ work point this gives the
single-node lower bound $0.02025$.  This is a bound on full logical
decoupling, not directly on a task-specific receiver CFI.
\end{remark}

\begin{lemma}[Charge-resolved iid causal lower bound]
\label{lem:charge-causal-iid-lower}
Consider a number-conserving encoding for which the logical dependence is
confined to $m$ physical qubits and whose two logical basis codewords have
total charges differing by one.  Under independent flagged erasure with
probability $p$,
\begin{equation}
\mathbb E_E D_E\geq\frac{p}{2m}.
\label{eq:charge-causal-iid-lower}
\end{equation}
Consequently a boundary-localized input encoded by a depth-$L$
nearest-neighbor number-conserving circuit obeys
\begin{equation}
\mathbb E_E D_E\geq\frac{p}{2(L+1)},
\label{eq:boundary-charge-causal-lower}
\end{equation}
while a bulk input obeys the analogous bound $p/[2(2L+1)]$.
Thus attaining $\mathbb E_ED_E\leq C_pn^{-1/2}$ requires
$L=\Omega_p(\sqrt n)$.  Moreover, if a target threshold satisfies
$\epsilon<1/(2m)$, its failure probability over iid erasure patterns is at
least $p$.
\end{lemma}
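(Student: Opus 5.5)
We prove the single-pattern bound, Eq.~\eqref{eq:charge-causal-iid-lower}, by a charge witness on the erased subsystem. Write \(M\) for the set of \(m\) sites carrying the logical dependence. Because the codewords \(\lvert\psi_0\rangle,\lvert\psi_1\rangle\) have definite total charges \(Q_0\) and \(Q_0+1\), the erased charge operator \(\hat N_E=\sum_{\ell\in E}\hat n_\ell\) has \(\langle\psi_1|\hat N_E|\psi_1\rangle-\langle\psi_0|\hat N_E|\psi_0\rangle=:\Delta_E\). Outside \(M\) the single-site marginals are logical-independent, so only \(E\cap M\) contributes to \(\Delta_E\). Summing over all sites gives \(\sum_{\ell\in M}\delta_\ell=1\), where \(\delta_\ell=\langle\psi_1|\hat n_\ell|\psi_1\rangle-\langle\psi_0|\hat n_\ell|\psi_0\rangle\). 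Hence \(\mathbb E_E\Delta_E=p\sum_{\ell\in M}\delta_\ell=p\).

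The key steps, in order, are as follows.

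First, lower-bound \(D_E\) by the population part. Project the reference in the computational basis. This is a pinching, hence trace-norm nonincreasing, and it gives \(D_E\geq\frac14\|\rho_E^{(1)}-\rho_E^{(0)}\|_1\), where \(\rho_E^{(i)}\) are the erased marginals of the codewords. This uses \(\rho_{RE}=\frac12\sum_{ij}|i\rangle\langle j|\otimes\mathrm{Tr}_Q|\psi_i\rangle\langle\psi_j|\) and \(\rho_E=\frac12(\rho_E^{(0)}+\rho_E^{(1)})\); the diagonal blocks give \(\frac12\cdot\frac12\|\rho_E^{(i)}-\rho_E\|_1\) each.

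Second, apply the observable bound \(\|\rho_E^{(1)}-\rho_E^{(0)}\|_1\geq|\Delta_E|/\|\hat N_{E\cap M}\|_\infty\). Operationally we use \(\hat N_{E\cap M}\) rather than \(\hat N_E\), since outside \(M\) the marginals cancel. This requires \(\|\hat N_{E\cap M}\|\leq |E\cap M|\leq m\). Combining, \(D_E\geq|\Delta_E|/(4m)\). Averaging gives \(\mathbb E_E D_E\geq p/(4m)\), which is a factor two short.

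To recover the stated constant, I would instead work with the traceless witness \(\hat N_{E\cap M}-c\) with \(c=|E\cap M|/2\), which has operator norm \(|E\cap M|/2\leq m/2\); this yields exactly \(p/(2m)\). The main point needing care is the claim that sites outside \(M\) cannot alter the erased marginal difference. I would formalize "logical dependence confined to \(M\)" as \(\rho^{(0)}_{[n]\setminus M}=\rho^{(1)}_{[n]\setminus M}\) together with a tensor structure on \(E\cap M\) versus \(E\setminus M\). This lets us restrict to the \(E\cap M\) marginal by monotonicity of the trace norm under partial trace, which directly gives \(\|\rho_E^{(1)}-\rho_E^{(0)}\|_1\geq\|\rho^{(1)}_{E\cap M}-\rho^{(0)}_{E\cap M}\|_1\), so the weaker confinement assumption suffices.

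For the consequences, a depth-\(L\) nearest-neighbor number-conserving circuit acting on a boundary-localized logical input confines logical dependence to its forward light cone of at most \(L+1\) sites (at most \(2L+1\) sites in the bulk), as in Remark~\ref{rem:boundary-causal-depth}. The fixed ancilla background outside the cone is logical-independent. Substituting \(m=L+1\) or \(m=2L+1\) gives the two displayed bounds. Requiring \(p/(2(L+1))\leq C_pn^{-1/2}\) forces \(L=\Omega_p(\sqrt n)\).

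For the failure-probability claim, the argument is per pattern. Since \(D_E\geq|\Delta_E|/(2m)\), failure \(D_E\leq\epsilon<1/(2m)\) forces \(|\Delta_E|<1\). Here I would use that \(\Delta_E\) sums to \(1\) over complementary parts \(E\) and \(M\setminus E\). The cleanest route is to argue per site that some \(\ell\in M\) has \(\delta_\ell\geq 1/m\), but that alone only gives \(D\geq 1/(2m^2)\). The tighter route uses the definite charge difference: erasing all of \(M\) gives \(\Delta=1\). I expect this last claim, probability at least \(p\), to be the main obstacle. I would try conditioning on a single site \(\ell^\ast\in M\) with \(\delta_{\ell^\ast}>0\) and showing that its erasure alone produces \(D_E\geq 1/(2m)\). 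If that fails, I would weaken the statement to probability \(p^{m}\).
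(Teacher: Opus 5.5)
Your proof of the averaged bound $\mathbb E_ED_E\geq p/(2m)$ is correct, but it takes a different route from the paper. You pair the erased-population difference with the aggregate traceless witness $\hat N_{E\cap M}-|E\cap M|/2$. This gives the pointwise bound $D_E\geq|\Delta_E|/(2|E\cap M|)\geq|\Delta_E|/(2m)$, and linearity, $\mathbb E_E\Delta_E=p\sum_{\ell\in M}\delta_\ell=p$, finishes it. The paper instead uses three steps:
\begin{enumerate}
\item pigeonhole, to fix, before the pattern is drawn, a site $i_*$ with $|\delta_{i_*}|\geq1/m$;
\item the one-site bound $D_{\{i\}}\geq\frac14\|\rho_i^1-\rho_i^0\|_1\geq\frac12|\delta_i|$;
\item data processing, giving $D_E\geq D_{\{i_*\}}$ on the event $i_*\in E$.
\end{enumerate}
Both routes reach $p/(2m)$, and your light-cone consequences are fine.

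The gap is the final claim: $\Pr\{D_E>\epsilon\}\geq p$ whenever $\epsilon<1/(2m)$. Note that failure means $D_E>\epsilon$, not $D_E\leq\epsilon$ as you wrote.

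Your pointwise bound cannot give this claim. For a fixed encoding the $\delta_\ell$ may have either sign, so $\Delta_E=\sum_{\ell\in E\cap M}\delta_\ell$ can vanish on patterns that contain a heavily weighted site. The only event on which your argument forces $|\Delta_E|=1$, namely $M\subseteq E$, has probability $p^m$.

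Your estimate that the per-site route yields only $1/(2m^2)$ is also mistaken. That figure comes from inserting $|E\cap M|\leq m$ into the aggregate witness. For a single site the witness $n_\ell-\frac12$ has operator norm $1/2$, which gives $D_{\{\ell\}}\geq\frac12|\delta_\ell|$.

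The missing ingredient is monotonicity of $D_E$ in $E$. If $\ell\in E$, the partial trace over $E\setminus\{\ell\}$ is a channel on the erased system. It maps $\rho_{RE}\mapsto\rho_{R\ell}$ and $\rho_R\otimes\rho_E\mapsto\rho_R\otimes\rho_\ell$, so $D_E\geq D_{\{\ell\}}$.

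Now choose $\ell^*$ with $\delta_{\ell^*}\geq1/m$; this is guaranteed by $\sum_{\ell\in M}\delta_\ell=1$, whereas merely $\delta_{\ell^*}>0$ is not enough. Then every pattern containing $\ell^*$ has $D_E\geq1/(2m)>\epsilon$, an event of probability exactly $p$. No weakening to $p^m$ is needed. The same single-site argument also gives the averaged bound directly, which is how the paper proceeds.
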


\begin{proof}
Let $\rho_i^a$ be the one-site reduced state of logical codeword $a$ and put
\begin{equation}
\delta_i=\operatorname{Tr}[n_i(\rho_i^1-\rho_i^0)].
\end{equation}
Number conservation gives $\sum_{i=1}^m\delta_i=1$, so a fixed site $i_*$,
chosen before the erasure pattern, satisfies
$|\delta_{i_*}|\geq1/m$.  Dephasing the reference of the normalized encoded
Choi state in the logical charge basis and using trace-distance contraction
gives
\begin{equation}
D_i\geq\frac14\|\rho_i^1-\rho_i^0\|_1
\geq\frac12|\delta_i|.
\label{eq:one-site-charge-witness}
\end{equation}
If $E$ contains $i$, tracing out $E\setminus\{i\}$ maps both states in the
decoupling comparison to their one-site counterparts; hence data processing
gives $D_E\geq D_i$.  Averaging the event $i_*\in E$, whose probability is
$p$, proves Eq.~\eqref{eq:charge-causal-iid-lower}.  The forward cones contain
at most $L+1$ boundary sites or $2L+1$ bulk sites.  Finally, on the event
$i_*\in E$ one has $D_E\geq1/(2m)$, which proves the probability statement.
\end{proof}

The preceding obstruction does not exclude approximate pattern-averaged
recovery.  We next prove a finite-depth existence result for a precisely
specified local-Haar ensemble.  This ensemble is an analytical reference and
is not identified with a numerical gate ensemble unless the latter uses the
same probability measure.

\begin{theorem}[Finite-depth local-Haar pattern-average achievability]
\label{thm:finite-depth-achievability}
Let $n\geq3$ be odd, put $q=(n-1)/2$, and let
$V^{(0)}:\mathbb C^2\rightarrow\mathcal H_q\oplus\mathcal H_{q+1}$ obey
$V^{(0)}\lvert0\rangle\in\mathcal H_q$ and
$V^{(0)}\lvert1\rangle\in\mathcal H_{q+1}$.
Let $\mathcal E_{n,2t}^{\rm locH}$ be the depth-$2t$ open-chain brickwork
ensemble with independent gates
\begin{equation}
\begin{aligned}
G&=1\oplus U_1\oplus e^{i\alpha},\\
U_1&\sim{\rm Haar}[U(2)],\\
\alpha&\sim{\rm Unif}[0,2\pi).
\end{aligned}
\label{eq:local-Haar-gate-measure}
\end{equation}
on alternating even and odd nearest-neighbor bonds.

For a fixed erasure set $E$ of size $k\geq1$, define, with binomial
coefficients outside their natural range set to zero,
\begin{align}
e_a&=\binom{k}{a},\\
h_r(a)&=\binom{n-k}{r-a},\\
d_r&=\binom nr,
\label{eq:charge-Haar-dimensions}\\
P_r&=
\frac{1}{d_r(d_r+1)}
\sum_{a=0}^{k}e_ah_r(a)\bigl[h_r(a)+e_a\bigr],
\label{eq:charge-Haar-sector-purity}\\
B_{n,k}&=
\frac{
\sum_{a=0}^{k}e_ah_q(a)h_{q+1}(a)
}{
d_qd_{q+1}
},
\label{eq:charge-Haar-sector-overlap}\\
T_{n,k}&=
\frac{
\sum_{a=0}^{k}e_ae_{a+1}h_q(a)
}{
d_qd_{q+1}
},
\label{eq:charge-Haar-cross-coherence}\\
m_{n,k}^{\rm ch}
&=
\frac{P_q+P_{q+1}-2B_{n,k}}{8}
+\frac{T_{n,k}}{2}.
\label{eq:charge-Haar-decoupling-moment}
\end{align}
Set
\begin{equation}
N_n=16^n,
\qquad
Q_n=6^{3(n-1)/2}.
\label{eq:rational-gap-dimensions}
\end{equation}
Then
\begin{equation}
\mathbb E_{V\sim\mathcal E_{n,2t}^{\rm locH}}D_E(V)
\leq
\beta_{n,k,t},
\label{eq:finite-depth-single-pattern-bound}
\end{equation}
where
\begin{equation}
\beta_{n,k,t}
=
\left\{
2^{k-1}
\left[
m_{n,k}^{\rm ch}
+\sqrt3\,2^n
\exp\!\left(-\frac{t}{2Q_n^{N_n}}\right)
\right]
\right\}^{1/2}.
\label{eq:finite-depth-beta}
\end{equation}
The bound is independent of the location of $E$.  Hence, for every
$\pi(E)\geq0$ supported on $\lvert E\rvert=k$ with
$\sum_{\lvert E\rvert=k}\pi(E)=1$,
\begin{equation}
\mathbb E_V
\left[
\sum_{\lvert E\rvert=k}\pi(E)D_E(V)
\right]
\leq
\beta_{n,k,t}.
\label{eq:finite-depth-pattern-average-bound}
\end{equation}
For independent erasure probability $p$, including every flagged pattern,
\begin{multline}
\mathbb E_V\!\left[
\sum_{E\subseteq[n]}
p^{\lvert E\rvert}(1-p)^{n-\lvert E\rvert}D_E(V)
\right]\\
\leq
\sum_{k=1}^{n}\binom nkp^k(1-p)^{n-k}\beta_{n,k,t}.
\label{eq:finite-depth-iid-bound}
\end{multline}
\end{theorem}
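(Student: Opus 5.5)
The plan is to run the standard second-moment decoupling argument of Lemma~\ref{lem:haar-fixed-pattern}, but with the global Haar average replaced by the charge-sector Haar average plus an error term for finite brickwork depth. We write \(X_E(V)=\rho_{RE}-\rho_R\otimes\rho_E\). Since \(X_E\) acts on \(R\otimes E\) of dimension \(2\cdot2^k\), Cauchy--Schwarz gives \(\|X_E\|_1\leq\sqrt{2^{k+1}}\,\|X_E\|_2\). Jensen's inequality then gives
\begin{equation*}
\mathbb E_VD_E(V)\leq\frac12\bigl(2^{k+1}\,\mathbb E_V\operatorname{Tr}X_E^2\bigr)^{1/2}
=\bigl(2^{k-1}\,\mathbb E_V\operatorname{Tr}X_E^2\bigr)^{1/2}.
\end{equation*}
This produces exactly the prefactor \(2^{k-1}\) in Eq.~\eqref{eq:finite-depth-beta}. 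It therefore suffices to prove
\begin{equation*}
\mathbb E_{\mathcal E^{\rm locH}_{n,2t}}\operatorname{Tr}X_E^2\leq m^{\rm ch}_{n,k}+\sqrt3\,2^n e^{-t/(2Q_n^{N_n})}.
\end{equation*}

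\textbf{Step 1 (charge-Haar value).} We expand \(\operatorname{Tr}X_E^2\) using \(\rho_{RE}=\frac12\sum_{i,j}|i\rangle\langle j|\otimes\operatorname{Tr}_Q|\psi_i\rangle\langle\psi_j|\) and \(\rho_R=\mathbb I/2\). This yields
\begin{equation*}
\operatorname{Tr}X_E^2=\tfrac18\bigl[\operatorname{Tr}\sigma_0^2+\operatorname{Tr}\sigma_1^2-2\operatorname{Tr}\sigma_0\sigma_1\bigr]+\tfrac12\operatorname{Tr}(\tau\tau^\dagger),
\end{equation*}
where \(\sigma_i=\operatorname{Tr}_Q\psi_i\) and \(\tau=\operatorname{Tr}_Q|\psi_0\rangle\langle\psi_1|\). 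We then average over independent Haar vectors in \(\mathcal H_q\) and \(\mathcal H_{q+1}\). Each sector splits as \(\mathcal H_r=\bigoplus_a\mathcal H^E_a\otimes\mathcal H^Q_{r-a}\), with block dimensions \(e_a h_r(a)\). The Haar purity formula \(\mathbb E|\psi\rangle\langle\psi|^{\otimes2}=(\mathbb I+{\rm SWAP})/[d(d+1)]\) gives \(P_r\). Independence makes the cross moments factorize into \(\operatorname{Tr}\,\mathbb E\sigma_0\,\mathbb E\sigma_1\), and since \(\mathbb E\sigma_r=\bigoplus_a h_r(a)\mathbb I_{e_a}/d_r\) this equals \(B_{n,k}\). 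For the coherence term, \(\mathbb E\operatorname{Tr}\tau\tau^\dagger\) pairs blocks with the same \(Q\)-charge \(q-a\). These have \(E\)-charge \(a\) in \(\psi_0\) and \(a+1\) in \(\psi_1\), which gives \(\sum_a e_ae_{a+1}h_q(a)/(d_qd_{q+1})=T_{n,k}\) because \(h_{q+1}(a+1)=h_q(a)\). Collecting terms gives \(m^{\rm ch}_{n,k}\).

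\textbf{Step 2 (finite-depth deviation).} The quantity \(\operatorname{Tr}X_E^2\) is a linear functional of \(V^{\otimes2}\otimes\bar V^{\otimes2}\) applied to the fixed input \(V^{(0)}\). The brickwork second-moment operator \(M=\mathbb E[G^{\otimes2}\otimes\bar G^{\otimes2}]\) (one even and one odd layer) is a self-adjoint-composable contraction. Its fixed space on the relevant charge-respecting subspace coincides with the commutant of the charge-sector Haar measure, because local U(1) gates of the form \eqref{eq:local-Haar-gate-measure} generate the full charge-conserving group on an open chain. So \(\mathbb E\operatorname{Tr}X_E^2-m^{\rm ch}\) is bounded by (operator norm of the observable) \(\times\) (norm of the input) \(\times\|M^t-P_{\rm fix}\|\). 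Bounding the first two factors crudely gives the \(\sqrt3\,2^n\) factor.

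\textbf{Main obstacle: the spectral gap.} We need a bound \(1-\lambda_2\geq1/Q_n^{N_n}\) on the \(N_n=16^n\)-dimensional moment space, which then gives \(\|M^t-P_{\rm fix}\|\leq(1-Q_n^{-N_n})^{t/2}\leq e^{-t/(2Q_n^{N_n})}\). We would obtain this with a non-quantitative rationality argument rather than a genuine mixing estimate. The one-layer moment operators have entries that are rational with denominators dividing powers of \(6\), coming from \(U(2)\) Haar Weingarten values. The characteristic polynomial of the composite operator, restricted to the complement of the fixed space, therefore has integer-scaled coefficients bounded by \(Q_n\) per entry. A Cauchy/root-separation bound then keeps any eigenvalue \(\neq1\) at distance at least \(Q_n^{-N_n}\) from \(1\). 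The delicate points are the exact denominator count \(6^{3(n-1)/2}\) and the fact that \(|\lambda|<1\) off the fixed space. The latter follows from the commutant identification above.

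Finally, the bound \(\beta_{n,k,t}\) depends on \(E\) only through \(k\), and \(D_E\) is linear in averaging over patterns. Taking \(\pi\)-averages and then binomial averages over \(k\) gives Eqs.~\eqref{eq:finite-depth-pattern-average-bound} and \eqref{eq:finite-depth-iid-bound}; the \(k=0\) term vanishes since \(D_\emptyset=0\).
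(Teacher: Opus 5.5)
Your trace-to-Hilbert--Schmidt conversion, the charge-Haar evaluation in Step~1, and the pairing of the observable with the two-copy moment channel all match the paper's proof. The gap is in the step you single out as the main obstacle: the spectral-gap bound.

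\textbf{The gap step.} A Cauchy or root-separation bound does not give $1-\mu_*\geq Q_n^{-N_n}$. Denominators dividing $Q_n$ make $Q_n^{N_n}\det(xI-C)$ an integer polynomial, but they do not bound its coefficients by $Q_n$. A generic Mahler-type separation bound for degree $N_n$ and leading coefficient $Q_n^{N_n}$ is far weaker, roughly $Q_n^{-N_n(N_n-1)}$. Even the simpler route via the nonzero value at $x=1$ loses a factor $2^{N_n-1}$ if you only know $|\mu_j|<1$.

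The missing idea is positivity. Work with the self-adjoint compression $C=P_{\rm e}P_{\rm o}P_{\rm e}=T^\dagger T$, where $T=P_{\rm o}P_{\rm e}$ is the one-cycle operator. This choice is also where the three-layer denominator $6^{3(n-1)/2}$ comes from, and it gives $0\preceq C\preceq I$. The argument then runs as follows.
\begin{enumerate}
\item Since $Q_n^{N_n}\det(xI-C)\in\mathbb Z[x]$, and dividing by the monic factor $(x-1)^m$ preserves integrality, $\prod_{\mu_j<1}(1-\mu_j)$ is a nonzero element of $Q_n^{-N_n}\mathbb Z$. Deflating the full characteristic polynomial, rather than restricting to a complement, also avoids a basis choice that could spoil the denominators.
\item Every factor $1-\mu_j$ lies in $(0,1]$. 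Hence the smallest one, $1-\mu_*$, is at least the whole product, so $1-\mu_*\geq Q_n^{-N_n}$.
\item Then $\lambda_n=\sqrt{\mu_*}\leq1-\tfrac12Q_n^{-N_n}$ is the largest singular value of $T$ off $\operatorname{Ran}P_{\rm ch}$. Because $P_{\rm ch}$ commutes with both layers, $\|T^t-P_{\rm ch}\|_{2\to2}\leq\lambda_n^t\leq e^{-t/(2Q_n^{N_n})}$.
\end{enumerate}
Without the fact that all non-unit eigenvalues lie in $[0,1)$, a lower bound on the product does not transfer to the single smallest gap. This is a missing idea, not a detail.

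\textbf{Smaller corrections.}
\begin{itemize}
\item Nearest-neighbor $U(1)$ gates do \emph{not} generate all of $\prod_rU(\mathcal H_r)$. The relative sector phases are constrained: up to the identity, the center of the generated algebra is spanned only by $N$ and $\binom N2$. What you need, and what the paper invokes for $n\geq3$, is only that the balanced second-moment commutant equals the charge-Haar one. This is consistent with $r+s$ and $\binom r2+\binom s2$ separating unordered charge pairs.
\item $|\mu|<1$ off the fixed space comes from the two-projection structure: $\|P_{\rm o}P_{\rm e}v\|=\|v\|$ forces $v\in\operatorname{Ran}P_{\rm e}\cap\operatorname{Ran}P_{\rm o}$. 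The commutant identification is what makes the limit equal $m^{\rm ch}_{n,k}$.
\item The factor $\sqrt3\,2^n$ is the Hilbert--Schmidt norm of $A_E=(F_R-\tfrac12\mathbb I)\otimes F_E\otimes\mathbb I_{Q_1Q_2}$. It is paired with $\|\rho^{\otimes2}\|_2=1$ and the $2\to2$ norm of the moment deviation. The operator norm of $A_E$ is $3/2$, and it does not pair with a $2\to2$ estimate.
\end{itemize}
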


\begin{proof}
Write
\begin{equation}
X_E=\rho_{RE}-\rho_R\otimes\rho_E.
\end{equation}
Since $\rho_R=\mathbb I_R/2$,
\begin{equation}
\operatorname{Tr}(X_E^2)
=
\operatorname{Tr}(\rho_{RE}^2)
-\frac12\operatorname{Tr}(\rho_E^2).
\label{eq:finite-depth-purity-difference}
\end{equation}
The two-copy swap identity rewrites this as
\begin{equation}
\begin{aligned}
\operatorname{Tr}(X_E^2)
&=\operatorname{Tr}\!\left[A_E\rho_{RH}^{\otimes2}\right],\\
A_E&=\left(F_R-\frac12\mathbb I_{R_1R_2}\right)
\otimes F_E\otimes\mathbb I_{Q_1Q_2}.
\end{aligned}
\label{eq:finite-depth-swap-observable}
\end{equation}
Because $F_R$ has trace $2$ on the four-dimensional two-copy reference
space,
\begin{equation}
\left\|F_R-\frac12\mathbb I\right\|_2^2=3,
\qquad
\|A_E\|_2=\sqrt3\,2^n.
\label{eq:finite-depth-swap-norm}
\end{equation}
Thus the ensemble average of $\operatorname{Tr}(X_E^2)$ is a linear
functional of the balanced second-moment channel
\begin{equation}
\mathcal M^{(2)}(Z)
=
\mathbb E_U
\left[
U^{\otimes2}Z(U^\dagger)^{\otimes2}
\right].
\label{eq:balanced-second-moment-channel}
\end{equation}

Let $P_{\rm e}$ and $P_{\rm o}$ be the even- and odd-layer second-moment
twirls.  Each is an orthogonal projector in the Hilbert--Schmidt inner
product, and a complete brickwork cycle has moment operator
$P_{\rm o}P_{\rm e}$.  The common fixed space is the second-moment invariant
space of the group generated by all nearest-neighbor number-conserving
two-site gates.  The gate convention in
Eq.~\eqref{eq:local-Haar-gate-measure} differs from the full two-site
number-conserving group only by an irrelevant global phase.  For moment order
two and $n\geq3$, the generated-group moment equals the Haar moment of the
full number-conserving group~\cite{Hearth2025,Mitsuhashi2025}.  Therefore the
orthogonal projector onto
$\operatorname{Ran}P_{\rm e}\cap\operatorname{Ran}P_{\rm o}$ is the
charge-Haar moment projector $P_{\rm ch}$.

It remains to bound the convergence rate without inserting a numerical
spectral gap.  In the computational Liouville basis, every matrix element of
a two-site second-moment twirl has denominator dividing $6$: the nonzero
$U(2)$ one- and two-copy Weingarten coefficients have denominators $2$, $3$,
and $6$, while the $U(1)$ phase integral contributes only Kronecker
constraints.  Each layer contains $(n-1)/2$ gates.  Consequently every
matrix element of
\begin{equation}
C=P_{\rm e}P_{\rm o}P_{\rm e}
\end{equation}
has common denominator dividing $Q_n=6^{3(n-1)/2}$.
The matrix acts on a Liouville space of dimension $N_n=16^n$, is positive
semidefinite, and has eigenvalue one exactly on the common fixed space.
Let $\mu_*<1$ be its largest nonunit eigenvalue.  After all factors
$(x-1)$ are removed from the characteristic polynomial, its value at $x=1$
is both
\begin{equation}
\prod_{\mu_j<1}(1-\mu_j)
\end{equation}
and a nonzero rational number with denominator at most $Q_n^{N_n}$.
All factors lie in $(0,1]$, hence
\begin{equation}
1-\mu_*\geq Q_n^{-N_n}.
\label{eq:rational-moment-gap}
\end{equation}
The largest nontrivial singular value
$\lambda_n$ of $P_{\rm o}P_{\rm e}$ satisfies
$\lambda_n^2=\mu_*$ and therefore
\begin{equation}
1-\lambda_n
\geq\frac12Q_n^{-N_n}.
\label{eq:rational-singular-gap}
\end{equation}
Since $P_{\rm ch}$ commutes with both layer projectors,
\begin{equation}
\left\|
(P_{\rm o}P_{\rm e})^t-P_{\rm ch}
\right\|_{2\rightarrow2}
\leq
\lambda_n^t
\leq
\exp\!\left(-\frac{t}{2Q_n^{N_n}}\right).
\label{eq:finite-depth-moment-convergence}
\end{equation}
Equations~\eqref{eq:finite-depth-swap-norm} and
\eqref{eq:finite-depth-moment-convergence}, applied to the normalized
two-copy input state, give
\begin{equation}
\left|
\mathbb E_{\mathcal E_{n,2t}^{\rm locH}}
\operatorname{Tr}(X_E^2)
-
\mathbb E_{\rm ch}\operatorname{Tr}(X_E^2)
\right|
\leq
\sqrt3\,2^n
\exp\!\left(-\frac{t}{2Q_n^{N_n}}\right).
\label{eq:finite-depth-purity-convergence}
\end{equation}

For completeness, decompose a charge-$r$ state according to the charge $a$
in $E$:
\begin{equation}
\mathcal H_r
=
\bigoplus_a
\mathcal H_{E,a}\otimes\mathcal H_{Q,r-a}.
\label{eq:charge-sector-decomposition}
\end{equation}
The two-copy Haar identity in $\mathcal H_r$ gives
\begin{equation}
\mathbb E_{\rm ch}\operatorname{Tr}(\rho_{E,r}^2)=P_r.
\end{equation}
Independence of the two global charge blocks gives
\begin{align}
\mathbb E_{\rm ch}
\operatorname{Tr}(\rho_{E,q}\rho_{E,q+1})
&=B_{n,k},\\
\mathbb E_{\rm ch}
\left\|
\operatorname{Tr}_Q
\lvert\psi_q\rangle\!\langle\psi_{q+1}\rvert
\right\|_2^2
&=T_{n,k}.
\end{align}
The second equality uses
$h_{q+1}(a+1)=h_q(a)$ and is the cross-charge coherence contribution; it
cannot be replaced by an independent single-sector purity.
Writing $X_E$ in its two logical-reference blocks now gives
\begin{equation}
\mathbb E_{\rm ch}\operatorname{Tr}(X_E^2)
=
\frac{P_q+P_{q+1}-2B_{n,k}}8+\frac{T_{n,k}}2
=m_{n,k}^{\rm ch}.
\label{eq:charge-Haar-moment-proof}
\end{equation}

Finally, $X_E$ acts on a space of dimension $2^{k+1}$, so
\begin{equation}
D_E(V)
=\frac12\|X_E\|_1
\leq
\sqrt{2^{k-1}\operatorname{Tr}(X_E^2)}.
\label{eq:finite-depth-trace-HS}
\end{equation}
Jensen's inequality together with
Eqs.~\eqref{eq:finite-depth-purity-convergence} and
\eqref{eq:charge-Haar-moment-proof} proves
Eqs.~\eqref{eq:finite-depth-single-pattern-bound} and
\eqref{eq:finite-depth-pattern-average-bound}.  Summing over all patterns
and grouping them by cardinality proves
Eq.~\eqref{eq:finite-depth-iid-bound}.
\end{proof}

\begin{corollary}[Analytical finite-depth existence at fixed erasure size]
\label{cor:finite-depth-fixed-k}
For every fixed $k$,
\begin{equation}
2^{k-1}m_{n,k}^{\rm ch}
=
\frac{k}{4n^2}+O_k(n^{-3}).
\label{eq:charge-Haar-fixed-k-asymptotic}
\end{equation}
Choose
\begin{equation}
t_n=
\left\lceil
2Q_n^{N_n}
\left[
(n+k-1)\log2+2\log n+\frac12\log3
\right]
\right\rceil.
\label{eq:explicit-finite-depth-choice}
\end{equation}
Then
\begin{equation}
\mathbb E_{V\sim\mathcal E_{n,2t_n}^{\rm locH}}
\left[
\sum_{\lvert E\rvert=k}\pi(E)D_E(V)
\right]
=O_k(n^{-1})
\label{eq:finite-depth-fixed-k-achievability}
\end{equation}
for every normalized pattern distribution $\pi$.  In particular, at least
one finite-depth local number-conserving brickwork encoder attains this bound.
\end{corollary}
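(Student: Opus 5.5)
The plan is to prove Eq.~\eqref{eq:charge-Haar-fixed-k-asymptotic} by direct asymptotics of the closed-form quantities in Theorem~\ref{thm:finite-depth-achievability}. Then we substitute the explicit depth \(t_n\) into \(\beta_{n,k,t}\), which makes the exponential remainder negligible, and conclude with Eq.~\eqref{eq:finite-depth-pattern-average-bound} together with a first-moment existence argument.

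For the asymptotic, fix \(k\) and expand every ratio \(h_r(a)/d_r=\binom{n-k}{r-a}/\binom nr\) with \(r\in\{q,q+1\}\) in powers of \(1/n\). Writing \(\binom{n-k}{r-a}/\binom nr=\frac{(r)_a(n-r)_{k-a}}{(n)_k}\) gives, at \(r\approx n/2\), leading value \(2^{-k}\) times the corrections \(1+c_1(a)/n+c_2(a)/n^2+\cdots\). The offsets \(r-n/2=\mp1/2\) enter through the \(1/n\) coefficients, with opposite signs for \(q\) and \(q+1\). We then insert these expansions into \(P_q\), \(P_{q+1}\) and \(B_{n,k}\). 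In \(P_r\) the term \(e_a^2h_r(a)/[d_r(d_r+1)]\) is \(O(d_r^{-1})\), which is exponentially small and can be dropped. The same holds for \(T_{n,k}\), whose single factor \(h_q(a)\) is divided by \(d_qd_{q+1}\). The combination \(P_q+P_{q+1}-2B_{n,k}\) is therefore \(\sum_a e_a\bigl(h_q(a)/d_q-h_{q+1}(a)/d_{q+1}\bigr)^2+O(d^{-1})\) up to the \(d_r/(d_r+1)\) factors, which are \(1-O(d^{-1})\). This rewriting is the key simplification: the second-order Haar moment collapses to a sum of squared differences of the two sector marginals on \(E\).

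Next compute the difference \(h_q(a)/d_q-h_{q+1}(a)/d_{q+1}\). Since \(d_q=d_{q+1}\) for odd \(n\), it equals \(\bigl[\binom{n-k}{q-a}-\binom{n-k}{q+1-a}\bigr]/d_q\). To leading order this is \(2^{-k}\,(2a-k)\cdot 2/n\), obtained by differentiating the hypergeometric probability in \(r\) near \(n/2\). Hence
\begin{equation}
\sum_a e_a\Bigl(\tfrac{h_q(a)-h_{q+1}(a)}{d_q}\Bigr)^2=\frac{4}{n^2}4^{-k}\sum_a\binom ka(2a-k)^2+O_k(n^{-3})=\frac{4k\,2^{-k}}{n^2}+O_k(n^{-3}),
\end{equation}
using \(\sum_a\binom ka(2a-k)^2=k2^k\). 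Dividing by \(8\) and multiplying by \(2^{k-1}\) gives \(k/(4n^2)\), as claimed. I expect the main obstacle to be this bookkeeping. One has to check that the \(O(n^{-1})\) terms cancel exactly between \(P_q+P_{q+1}\) and \(2B\); they do, because the expression is a perfect square. One also has to verify that the subleading corrections to the difference are only \(O(n^{-2})\), so that their cross term with the \(O(n^{-1})\) leading difference is \(O(n^{-3})\). A finite check for small \(k\) would confirm the constant.

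For the depth, note that with \(t_n\) as chosen,
\begin{equation}
\exp(-t_n/(2Q_n^{N_n}))\le 2^{-(n+k-1)}n^{-2}3^{-1/2},
\end{equation}
so \(2^{k-1}\sqrt3\,2^n\exp(\cdot)\le n^{-2}\). Thus \(\beta_{n,k,t_n}^2\le k/(4n^2)+n^{-2}+O_k(n^{-3})\), which gives \(\beta_{n,k,t_n}=O_k(n^{-1})\). Equation~\eqref{eq:finite-depth-pattern-average-bound} then yields Eq.~\eqref{eq:finite-depth-fixed-k-achievability} uniformly in \(\pi\). Finally, because the \(\pi\)-average of \(D_E(V)\) is a random variable whose ensemble mean is at most \(\beta_{n,k,t_n}\), some circuit in the support of \(\mathcal E^{\rm locH}_{n,2t_n}\) attains a value no larger than the mean. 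This proves the existence statement.
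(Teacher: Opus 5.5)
Your proposal is correct and follows essentially the same route as the paper's proof, which is only sketched there. You expand the binomial ratios at fixed $k$, with $T_{n,k}$ and the $e_a^2h_r(a)$ terms exponentially small, and your perfect-square rewriting of $P_q+P_{q+1}-2B_{n,k}$ is the needed bookkeeping; the exact identity $h_q(a)-h_{q+1}(a)=h_q(a)(k-2a)/(q-a+1)$ makes your leading-order difference rigorous, and your sign of $2a-k$ is flipped but harmless after squaring. You then substitute $t_n$ so that $\sqrt3\,2^{n+k-1}e^{-t_n/(2Q_n^{N_n})}\le n^{-2}$ and conclude by the first-moment existence argument, exactly as the paper does.
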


\begin{proof}
For fixed $k$, expansion of the finite binomial ratios in
Eqs.~\eqref{eq:charge-Haar-dimensions}--\eqref{eq:charge-Haar-decoupling-moment}
gives Eq.~\eqref{eq:charge-Haar-fixed-k-asymptotic}.  The choice
Eq.~\eqref{eq:explicit-finite-depth-choice} ensures
\begin{equation}
\sqrt3\,2^{n+k-1}
\exp\!\left(-\frac{t_n}{2Q_n^{N_n}}\right)
\leq n^{-2}.
\end{equation}
Substitution into Eq.~\eqref{eq:finite-depth-beta} proves the claimed
$O_k(n^{-1})$ expectation.  Existence of one realization follows because
the minimum of a nonnegative random variable does not exceed its mean.
\end{proof}

The fixed-$k$ expansion does not determine the physically different regime
in which a nonzero fraction of the memories is erased.  In that regime one
can avoid the dimension-loss in Eq.~\eqref{eq:finite-depth-trace-HS} and
calculate the charge-Haar trace distance itself.

\begin{theorem}[Extensive-loss charge-Haar trace-decoupling law]
\label{thm:charge-haar-extensive}
Let $n$ be odd, $q=(n-1)/2$, $d=\binom nq$, and let
$\lvert\psi_q\rangle$ and $\lvert\psi_{q+1}\rangle$ be independent Haar
unit vectors in $\mathcal H_q$ and $\mathcal H_{q+1}$, respectively.  For a
fixed set $E$ of $k$ erased qubits define $D_E$ as in
Eq.~\eqref{eq:decoupling-error} and put
\begin{equation}
\overline D_{n,k}
=
\frac{1}{4d}
\sum_{a=0}^{k}
\binom ka
\left|
\binom{n-k}{q-a}-\binom{n-k}{q+1-a}
\right|.
\label{eq:charge-haar-mean-distance}
\end{equation}
Then the following finite-size estimate holds:
\begin{equation}
0\leq
\mathbb E_{\rm ch}D_E-\overline D_{n,k}
\leq
\frac{2^k}{\sqrt d}
\leq
\sqrt{n+1}\,2^{k-n/2}.
\label{eq:charge-haar-fluctuation-bound}
\end{equation}
Consequently, if $k/n\to\alpha$ for a constant
$0<\alpha<1/2$, then
\begin{equation}
\mathbb E_{\rm ch}D_E
=
\sqrt{\frac{\alpha}{2\pi(1-\alpha)}}\,n^{-1/2}
+o(n^{-1/2}).
\label{eq:charge-haar-extensive-distance}
\end{equation}
The convergence is uniform when $\alpha$ ranges in a compact subset of
$(0,1/2)$.  In the same regime the exact balanced second moment in
Eq.~\eqref{eq:charge-Haar-decoupling-moment} obeys
\begin{equation}
2^{k-1}m_{n,k}^{\rm ch}
=
\frac{\alpha}{4(1-\alpha^2)^{3/2}}\,n^{-1}
+o(n^{-1}).
\label{eq:charge-haar-extensive-second-moment}
\end{equation}
Thus the Hilbert--Schmidt conversion has the correct $n^{-1/2}$ order but
does not give the sharp leading constant in trace distance.
\end{theorem}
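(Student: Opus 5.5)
Write \(\lvert\psi_0\rangle=\lvert\psi_q\rangle\) and \(\lvert\psi_1\rangle=\lvert\psi_{q+1}\rangle\), and split \(X_E=\rho_{RE}-\rho_R\otimes\rho_E\) in the logical-reference basis into a diagonal part and an off-diagonal part. The diagonal blocks are \(\pm\frac14(\rho_{E}^{(0)}-\rho_{E}^{(1)})\), with \(\rho_E^{(i)}=\operatorname{Tr}_Q\lvert\psi_i\rangle\!\langle\psi_i\rvert\). The off-diagonal blocks are \(\frac12 C\) and \(\frac12C^\dagger\), with \(C=\operatorname{Tr}_Q\lvert\psi_0\rangle\!\langle\psi_1\rvert\).

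Every \(\rho_E^{(i)}\) is block diagonal in the erased charge \(a\), and \(C\) maps charge \(a+1\) to charge \(a\). Write \(\pi_i(a)=\operatorname{Tr}\Pi_a\rho_E^{(i)}\) for the erased-charge weights. The charge-Haar mean of \(\rho_E^{(i)}\) is the block-uniform state with weights \(\bar\pi_i(a)=\binom ka h_{q+i}(a)/d\), and the mean of \(C\) is zero because the two blocks are independent.

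\textbf{Lower bound.} The paper's dephasing argument gives it. Dephasing the reference and measuring the erased charge \(a\) is a channel, so data processing gives
\[
D_E\geq\frac14\sum_a\lvert\pi_0(a)-\pi_1(a)\rvert .
\]
The map \((\pi_0,\pi_1)\mapsto\frac14\sum_a\lvert\pi_0(a)-\pi_1(a)\rvert\) is convex, and \(\mathbb E\pi_i=\bar\pi_i\), so Jensen gives \(\mathbb E D_E\geq\frac14\sum_a\lvert\bar\pi_0(a)-\bar\pi_1(a)\rvert=\overline D_{n,k}\). This proves \(\mathbb E D_E-\overline D_{n,k}\geq0\).

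\textbf{Upper bound.} Write \(X_E=\bar X+(X_E-\bar X)\), where \(\bar X\) is the mean of \(X_E\). It has the block-uniform diagonal and zero off-diagonal. Its trace norm is exactly \(\overline D_{n,k}\), because inside each \(a\)-block the diagonal terms are multiples of the same projector. The triangle inequality then gives
\[
D_E\leq\overline D_{n,k}+\tfrac12\lVert X_E-\bar X\rVert_1,
\]
and Hilbert--Schmidt conversion on dimension \(2^{k+1}\) gives
\[
\mathbb E\,\tfrac12\lVert X_E-\bar X\rVert_1\leq\sqrt{2^{k-1}\,\mathbb E\lVert X_E-\bar X\rVert_2^2}.
\]
The expected squared norm splits into sector fluctuations \(\frac18\sum_i(P_{q+i}-\sum_a\bar\pi_i(a)^2/\dim_a)\) plus the coherence term \(\frac12T_{n,k}\), both computed in Theorem~\ref{thm:finite-depth-achievability}. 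The required estimate is that each is at most \(2^{k+1}/d\) up to constants. For the sector term, each block contributes roughly \(e_ah(a)(h(a)+e_a)/d^2-e_a h(a)^2/d^2\sim e_a^2h(a)/d^2\), and summing gives at most \(2^k\cdot e_{\max}/d\). This crude bound is dangerous, so I would instead write it as \(\sum_a e_a^2 h(a)/d^2\leq 2^k\max_a e_a\,d^{-1}\). The main obstacle is closing this bookkeeping to exactly \(2^k/\sqrt d\). If the plain Jensen route loses a factor, I would apply Cauchy--Schwarz blockwise (a block of dimension \(e_a\) costs \(\sqrt{e_a}\)) and use \(\sum_a \sqrt{e_a\cdot e_a^2h/d^2}\leq 2^k/\sqrt d\). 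The terms \(T_{n,k}\) are handled the same way with \(h_{q+1}(a+1)=h_q(a)\).

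\textbf{Asymptotics.} The last inequality in \eqref{eq:charge-haar-fluctuation-bound} follows from \(d\geq 2^n/(n+1)\). For \(\alpha<1/2\) the error \(\sqrt{n+1}\,2^{k-n/2}\) is exponentially small, so \(\mathbb E D_E\) equals \(\overline D_{n,k}\) to leading order. Note \(\binom{n-k}{q-a}-\binom{n-k}{q+1-a}\) is a discrete derivative of the hypergeometric law, so \(\overline D_{n,k}\) is \(\frac14\) times the \(L^1\) norm of the difference of two hypergeometric distributions shifted by one unit. A local CLT with variance \(\sigma^2\sim n\alpha(1-\alpha)/4\) gives
\[
\tfrac14\cdot 2\cdot\frac{\binom{n-k}{\cdot}}{d}\text{-scale}\;\to\;\tfrac14\cdot\frac{2}{\sqrt{2\pi}}\,\frac{\sqrt{k}\,}{\sigma\sqrt{\cdot}} .
\]
More cleanly, the relative shift is \(\Delta=\frac{k}{n-k}\cdot\) one unit on a scale \(\sigma\), and \(\lVert p-p(\cdot-\Delta)\rVert_1\to 2\Delta\varphi(0)\cdot\)\(\ldots\). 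Evaluating carefully should yield \(\sqrt{\alpha/(2\pi(1-\alpha))}\,n^{-1/2}\). Uniformity in compact \(\alpha\) comes from uniform local CLT error terms. Finally, \eqref{eq:charge-haar-extensive-second-moment} follows by a Gaussian Laplace evaluation of the ratios in \(m^{\rm ch}_{n,k}\), with the \((1-\alpha^2)^{-3/2}\) factor arising from the collision sums \(\sum_a e_a^2h(a)^2/d^2\).
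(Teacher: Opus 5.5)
Your lower bound (dephase the reference, measure the erased charge, then Jensen) is correct. Your upper-bound skeleton --- triangle inequality about $\bar X$, trace-to-Hilbert--Schmidt conversion, the two-copy sector variances and the coherence term $T_{n,k}$ --- is the paper's. However, the step you flag as the main obstacle is never closed, and the estimate you write for it is off by a factor $2^k$. With $\sum_a e_a^2h(a)/d^2\le 2^k\max_a e_a/d$ the fluctuation term comes out of order $2^{3k/2}/\sqrt d$, which is not even exponentially small once $k/n\geq 1/3$. The missing input is Vandermonde's identity, $\sum_a e_ah_r(a)=\binom nr=d$ and $\sum_a e_{a+1}h_q(a)=\binom n{q+1}=d$. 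It gives $\mathbb E_{\rm ch}\|\rho_E^{(i)}-\mathbb E_{\rm ch}\rho_E^{(i)}\|_2^2\le\max_a e_a/(d+1)$ and $T_{n,k}\le\max_a e_a/d$. Hence $\mathbb E_{\rm ch}\|X_E-\bar X\|_2^2\le\frac34\,2^k/d$, and even your single global conversion gives $\sqrt{2^{k-1}\cdot\frac34\,2^k/d}<2^k/\sqrt d$. The blockwise fallback, whose per-block variance bound you never check, is therefore unnecessary.

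The two asymptotic laws are not proved.

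\textbf{Trace-distance law.} You assert the constant after a heuristic with the wrong shift. The laws $P_r(a)=e_ah_r(a)/d$, $r=q,q+1$, are not translates of each other, and their means differ by $k/n$, not $k/(n-k)$; with $k/n$ the Gaussian heuristic does give $\sqrt{\alpha/[2\pi(1-\alpha)]}$. More importantly, a local CLT for each law cannot make this rigorous. Even at the optimal pointwise rate $O(n^{-1})$, the errors summed over the $\Theta(\sqrt n)$ relevant values of $a$ are $O(n^{-1/2})$, the same order as $\sum_a|P_q(a)-P_{q+1}(a)|$ itself. You would then need an Edgeworth-level expansion with cancellation between the two laws. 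The idea you are missing is to turn your ``discrete derivative'' into the exact ratio $h_q(a)-h_{q+1}(a)=h_q(a)\,(k-2a)/(q-a+1)$. This gives $\overline D_{n,k}=\frac14\mathbb E\bigl[|k-2A|/(q-A+1)\bigr]$ for a single hypergeometric $A\sim P_q$. Replacing the denominator by $(n-k)/2$ costs $O(n^{-1})$. What remains is $\mathbb E|A-\mathbb EA|=\sqrt{\operatorname{Var}A}\,(\sqrt{2/\pi}+o(1))$, which needs only the ordinary CLT and uniform integrability.

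\textbf{Second-moment law.} Your attribution of the leading term is wrong. No sum $\sum_a e_a^2h(a)^2$ occurs in $m^{\rm ch}_{n,k}$. Every term carrying $e_a^2$ (the $\sum_a e_a^2h_r(a)$ part of $P_r$, and $T_{n,k}$) contributes only $O(2^{2k}/d)$ to $2^{k-1}m^{\rm ch}_{n,k}$, which is exponentially small for $\alpha<1/2$. The leading term is $\frac{2^{k-4}}{d^2}\sum_a e_a[h_q(a)-h_{q+1}(a)]^2=2^{k-1}\|\bar X\|_2^2$. Writing $a=k/2+x$, the factor $(1-\alpha^2)^{-3/2}$ comes from two ingredients: the Gaussian exponent $2/k+4/(n-k)=2(n+k)/[k(n-k)]$ of $e_ah_q(a)^2$, and the weight $16x^2/(n-k)^2$ supplied by the same ratio identity.
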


\begin{proof}
Let
\begin{equation}
\tau_r=\mathbb E_{\rm ch}\rho_{E,r}
=\bigoplus_a \frac{h_r(a)}{d}\,\mathbb I_{E,a},
\qquad r\in\{q,q+1\},
\label{eq:charge-haar-average-marginal}
\end{equation}
and set
$\sigma_E=\operatorname{Tr}_Q
\lvert\psi_q\rangle\!\langle\psi_{q+1}\rvert$.
In the logical-reference basis the decoupling operator is exactly
\begin{equation}
X_E=
\begin{pmatrix}
(\rho_{E,q}-\rho_{E,q+1})/4 & \sigma_E/2\\
\sigma_E^\dagger/2 & (\rho_{E,q+1}-\rho_{E,q})/4
\end{pmatrix}.
\label{eq:charge-haar-block-X}
\end{equation}
Its ensemble mean is obtained by replacing $\rho_{E,r}$ by $\tau_r$ and
$\sigma_E$ by zero.  Hence
\begin{equation}
\frac12\|\mathbb E_{\rm ch}X_E\|_1
=\frac14\|\tau_q-\tau_{q+1}\|_1
=\overline D_{n,k}.
\label{eq:charge-haar-mean-X-distance}
\end{equation}

To control the random part, write
\begin{align}
A_r&=\sum_a e_ah_r(a)^2,
&
C_r&=\sum_a e_a^2h_r(a).
\end{align}
The two-copy Haar identity and
Eq.~\eqref{eq:charge-Haar-sector-purity} give
\begin{equation}
\mathbb E_{\rm ch}\|\rho_{E,r}-\tau_r\|_2^2
=\frac{dC_r-A_r}{d^2(d+1)}
\leq\frac{\max_a e_a}{d+1}.
\label{eq:charge-haar-marginal-variance}
\end{equation}
Since the erased Hilbert space has dimension $2^k$, Cauchy--Schwarz and
Jensen imply
\begin{equation}
\mathbb E_{\rm ch}\|\rho_{E,r}-\tau_r\|_1
\leq\frac{2^k}{\sqrt{d+1}}.
\label{eq:charge-haar-marginal-concentration}
\end{equation}
Likewise, Eq.~\eqref{eq:charge-Haar-cross-coherence} and Vandermonde's
identity give
\begin{equation}
\mathbb E_{\rm ch}\|\sigma_E\|_1
\leq\sqrt{2^kT_{n,k}}
\leq\frac{2^k}{\sqrt d},
\label{eq:charge-haar-coherence-concentration}
\end{equation}
because
$\sum_a e_ae_{a+1}h_q(a)\leq(\max_a e_{a+1})d$.
Convexity gives $\mathbb E_{\rm ch}D_E\geq\frac12\|\mathbb E_{\rm ch}X_E\|_1=\overline D_{n,k}$, while the triangle inequality applied to Eq.~\eqref{eq:charge-haar-block-X} gives the upper bound in Eq.~\eqref{eq:charge-haar-fluctuation-bound}.  The second follows from the
elementary central-binomial bound $d\geq2^n/(n+1)$.

It remains to evaluate Eq.~\eqref{eq:charge-haar-mean-distance}.  If
$A$ has the hypergeometric law
\begin{equation}
\Pr(A=a)=\frac{e_ah_q(a)}{d},
\label{eq:charge-haar-hypergeometric-law}
\end{equation}
then the exact binomial ratio
\begin{equation}
h_q(a)-h_{q+1}(a)
=h_q(a)\frac{k-2a}{q-a+1}
\label{eq:adjacent-charge-binomial-ratio}
\end{equation}
shows that
\begin{equation}
\overline D_{n,k}
=\frac14\mathbb E
\left[
\frac{|k-2A|}{q-A+1}
\right].
\label{eq:charge-haar-hypergeometric-distance}
\end{equation}
Here
\begin{equation}
\begin{aligned}
\mu&=\mathbb EA=\frac{k(n-1)}{2n},\\
\varsigma^2&=\operatorname{Var}(A)
=\frac{k(n+1)(n-k)}{4n^2}.
\end{aligned}
\label{eq:charge-haar-hypergeometric-moments}
\end{equation}
For $k/n$ in a compact subset of $(0,1/2)$, the denominator in
Eq.~\eqref{eq:charge-haar-hypergeometric-distance} is uniformly of order
$n$.  Replacing it by $(n-k)/2$ changes the expectation by $O(n^{-1})$;
this follows directly from
\begin{equation}
\left|n-k-2(q-A+1)\right|=|2A-k-1|
\end{equation}
and Eq.~\eqref{eq:charge-haar-hypergeometric-moments}.  The uniform
hypergeometric central-limit theorem, together with uniform integrability
from the bounded second moments, yields
\begin{equation}
\mathbb E|A-\mu|
=\varsigma\left(\sqrt{\frac2\pi}+o(1)\right).
\end{equation}
Moreover,
$|\mathbb E|k-2A|-2\mathbb E|A-\mu||\leq k/n$.
Substitution into
Eq.~\eqref{eq:charge-haar-hypergeometric-distance} proves
Eq.~\eqref{eq:charge-haar-extensive-distance}.  Notice that this step uses
only a classical limit theorem; no numerical ansatz is involved.

For the second-moment statement, an exact rearrangement of
Eqs.~\eqref{eq:charge-Haar-sector-purity}--
\eqref{eq:charge-Haar-decoupling-moment} gives
\begin{equation}
2^{k-1}m_{n,k}^{\rm ch}
=
\frac{2^{k-4}}{d^2}
\sum_a e_a\bigl[h_q(a)-h_{q+1}(a)\bigr]^2+R_{n,k},
\label{eq:charge-haar-second-moment-rearrangement}
\end{equation}
where, using only Vandermonde's identity and
$\max_a e_a\leq2^k$, $\max_a h_r(a)\leq2^{n-k}$,
\begin{equation}
|R_{n,k}|
\leq
\frac{2^{n-3}}{d^2}
+\frac{3\,2^{2k-3}}{d}.
\label{eq:charge-haar-second-moment-remainder}
\end{equation}
This remainder is exponentially small when
$k/n\leq1/2-\eta$.  For completeness, the leading sum is a one-dimensional
lattice Laplace problem.  Put $M=n-k$ and $a=k/2+x$.  Uniform Stirling
expansions in the central window give
\begin{align}
e_a&=2^k\sqrt{\frac{2}{\pi k}}
\exp(-2x^2/k)[1+o(1)],\\
h_q(a)&=2^M\sqrt{\frac{2}{\pi M}}
\exp[-2(x+1/2)^2/M][1+o(1)],
\end{align}
while Eq.~\eqref{eq:adjacent-charge-binomial-ratio} contributes
$16x^2/M^2[1+o(1)]$.  Outside a
$\sqrt n\log n$ window the binomial Chernoff bound is superpolynomially
small, so the expansion is uniform and summable.  The remaining Gaussian sums use
\begin{equation}
\frac2k+\frac4M=\frac{2(n+k)}{kM},
\end{equation}
and yield
\begin{equation}
\frac{2^k}{d^2}
\sum_a e_a\bigl[h_q(a)-h_{q+1}(a)\bigr]^2
=
\frac{4\alpha}{(1-\alpha^2)^{3/2}}\,n^{-1}
+o(n^{-1}).
\label{eq:charge-haar-laplace-evaluation}
\end{equation}
Equations~\eqref{eq:charge-haar-second-moment-rearrangement}--
\eqref{eq:charge-haar-laplace-evaluation} prove
Eq.~\eqref{eq:charge-haar-extensive-second-moment}.
\end{proof}

\begin{theorem}[Universal erased-charge floor and charge-Haar optimality]
\label{thm:universal-charge-count-optimality}
Let \(V\) be any isometric encoding with logical codewords supported in
\(\mathcal H_q\) and \(\mathcal H_{q+1}\), where \(n=2q+1\), and put \(d=\binom nq\).  Define the
uniform fixed-cardinality and iid-erasure errors
\begin{align}
\mathcal D_{n,k}(V)
&=\binom nk^{-1}\sum_{|E|=k}D_E(V),
\label{eq:arbitrary-code-fixed-k-error}\\
\mathcal D_{n,p}(V)
&=\sum_{E\subseteq[n]}p^{|E|}(1-p)^{n-|E|}D_E(V).
\label{eq:arbitrary-code-iid-error}
\end{align}
Then, for every \(V\), every \(k\), and every \(p\in[0,1]\),
\begin{align}
\boxed{\mathcal D_{n,k}(V)\geq\overline D_{n,k},}
\label{eq:universal-fixed-k-charge-floor}\\
\boxed{\mathcal D_{n,p}(V)\geq M_{\rm eq}(n,p).}
\label{eq:universal-iid-charge-floor}
\end{align}
These bounds use only the total occupation of the erased subsystem and are
therefore independent of all amplitudes and phases inside the two charge
sectors.

Let
\begin{equation}
\mathcal D_{n,k}^{\rm opt}=\inf_V\mathcal D_{n,k}(V),
\qquad
\mathcal D_{n,p}^{\rm opt}=\inf_V\mathcal D_{n,p}(V),
\label{eq:adjacent-charge-optimal-errors}
\end{equation}
where the infima range over all such adjacent-charge isometries.  Then
\begin{equation}
\boxed{
\overline D_{n,k}
\leq\mathcal D_{n,k}^{\rm opt}
\leq\overline D_{n,k}+\frac{2^k}{\sqrt d}.}
\label{eq:fixed-k-charge-haar-optimality}
\end{equation}
For fixed \(0<p<\beta<1/2\), define
\begin{equation}
I(\beta\|p)
=\beta\log\frac{\beta}{p}
+(1-\beta)\log\frac{1-\beta}{1-p}.
\label{eq:optimality-binomial-rate}
\end{equation}
Then
\begin{align}
M_{\rm eq}(n,p)
\leq\mathcal D_{n,p}^{\rm opt}
\leq{}& M_{\rm eq}(n,p)
+\sqrt{n+1}\,2^{-(1/2-\beta)n}
\nonumber\\
&+e^{-nI(\beta\|p)}.
\label{eq:iid-charge-haar-optimality}
\end{align}
Consequently, if \(k/n\to\alpha\in(0,1/2)\),
\begin{equation}
\mathcal D_{n,k}^{\rm opt}
=\sqrt{\frac{\alpha}{2\pi(1-\alpha)}}\,n^{-1/2}
+o(n^{-1/2}),
\label{eq:fixed-k-optimal-asymptotic}
\end{equation}
and, for every fixed \(p<1/2\),
\begin{equation}
\boxed{
\mathcal D_{n,p}^{\rm opt}
=\sqrt{\frac{p}{2\pi(1-p)}}\,n^{-1/2}
+o(n^{-1/2}).}
\label{eq:iid-optimal-asymptotic}
\end{equation}
Thus charge-Haar codes are exponentially close to the
information-theoretic optimum below one-half erasure within the strict
adjacent-charge model.
\end{theorem}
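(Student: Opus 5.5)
The lower bound is a data-processing argument applied to the total erased charge. Fix \(E\) with \(|E|=k\). Apply to \(E\) the number-conserving channel that measures the erased charge \(\hat N_E\) and outputs the classical value \(a\). This channel contracts trace distance, so
\[
D_E(V)\geq\tfrac12\bigl\|\mathcal C(X_E)\bigr\|_1 .
\]
In the block form of Eq.~\eqref{eq:charge-haar-block-X}, the off-diagonal block \(\sigma_E=\operatorname{Tr}_Q|\psi_0\rangle\langle\psi_1|\) only connects erased charge \(a\) with \(a+1\): the retained part \(Q\) must carry the same charge in both branches, and the total charges differ by one. The measurement therefore annihilates it. What remains is \(\tfrac14\|p_0-p_1\|_1\), where \(p_i(a)=\operatorname{Tr}[\Pi_{E,a}\rho_{E,i}]\) is the erased-charge distribution of codeword \(i\).

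The next step removes the dependence on amplitudes by averaging over all \(E\) of size \(k\). By the triangle inequality,
\[
\binom nk^{-1}\sum_E\tfrac14\|p_0^E-p_1^E\|_1\geq\tfrac14\bigl\|\bar p_0-\bar p_1\bigr\|_1,
\qquad \bar p_i=\binom nk^{-1}\sum_E p_i^E .
\]
For any computational-basis string of weight \(r\), a uniformly random \(k\)-set meets it in \(a\) sites with hypergeometric probability \(\binom ra\binom{n-r}{k-a}/\binom nk\). Because \(\bar p_i\) is linear in the diagonal of \(|\psi_i\rangle\langle\psi_i|\) and every basis string in \(\mathcal H_r\) contributes the same law, \(\bar p_i\) is exactly this hypergeometric law with \(r=q+i\), independent of \(V\). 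The identity \(\binom ra\binom{n-r}{k-a}/\binom nk=\binom ka\binom{n-k}{r-a}/\binom nr\), together with \(\binom nq=\binom n{q+1}=d\), turns \(\tfrac14\|\bar p_0-\bar p_1\|_1\) into \(\overline D_{n,k}\). This gives \eqref{eq:universal-fixed-k-charge-floor}. The iid bound \eqref{eq:universal-iid-charge-floor} follows by grouping \(\mathcal D_{n,p}(V)\) by cardinality and invoking Eq.~\eqref{eq:equilibrium-flagged-identity} of Theorem~\ref{thm:equilibrium-flagged-tv}.

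For the upper bounds, average over charge-Haar codewords. The uniform average over \(E\) commutes with \(\mathbb E_{\rm ch}\), and Eq.~\eqref{eq:charge-haar-fluctuation-bound} bounds every fixed \(E\) by \(\overline D_{n,k}+2^k/\sqrt d\). Some realization must do at least as well as the mean, which gives \eqref{eq:fixed-k-charge-haar-optimality}.

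In the iid case a single code must serve all cardinalities simultaneously, so take one charge-Haar draw. For \(k\leq\beta n\), use \(2^k/\sqrt d\leq\sqrt{n+1}\,2^{k-n/2}\leq\sqrt{n+1}\,2^{-(1/2-\beta)n}\). For \(k>\beta n\), bound \(D_E\leq1\) and apply the Chernoff bound \(\Pr[\mathrm{Bin}(n,p)>\beta n]\leq e^{-nI(\beta\|p)}\); these terms exceed \(M_{\rm eq}\)'s corresponding terms by at most that probability. Since \(\overline D_{n,k}\geq0\), summing \(\mathbb E_{\rm ch}D_E\) with binomial weights gives \(M_{\rm eq}\) plus the two error terms, and again some realization attains this. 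This proves \eqref{eq:iid-charge-haar-optimality}.

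The asymptotics \eqref{eq:fixed-k-optimal-asymptotic} then follow from Eq.~\eqref{eq:charge-haar-extensive-distance}, and \eqref{eq:iid-optimal-asymptotic} from Eq.~\eqref{eq:equilibrium-flagged-asymptotic}, because all corrections are exponentially small.

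The main obstacle is the symmetrization step. Data processing alone gives, for each fixed \(E\), only a code-dependent distribution \(p_i^E\). The triangle inequality across patterns is exactly what turns these into the code-independent hypergeometric laws. One must also check the sign and direction carefully, since we need the average of the norms to dominate the norm of the averages, which is the favorable direction here.
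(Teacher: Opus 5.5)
Your proposal is correct and follows essentially the same route as the paper. The paper packages your per-pattern erased-charge measurement and your triangle inequality over patterns into a single post-processing of the flagged direct sum (read \(E\), measure the reference and \(N_E\), discard \(E\)), which gives the same amplitude-independent hypergeometric laws; its upper bounds likewise rest on the fixed-pattern charge-Haar estimate, split at \(k\le\beta n\) against a Chernoff tail. The one cosmetic difference is that you remove the cross-charge coherence with the erased-charge pinching rather than by also measuring the reference, and the two are equivalent.
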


\begin{proof}
For fixed \(k\), retain the uniformly sampled erasure set as an orthogonal
classical flag and form the direct-sum decoupling operator
\begin{equation}
\begin{aligned}
\mathfrak X_k(V)&=\binom nk^{-1}\bigoplus_{|E|=k}X_E(V),\\
\mathcal D_{n,k}(V)&=\frac12\|\mathfrak X_k(V)\|_1.
\end{aligned}
\end{equation}
Apply the following completely positive trace-preserving post-processing:
read the flag \(E\), measure the logical reference in the computational
basis, measure the erased charge
\(N_E=\sum_{i\in E}n_i\), and then discard the detailed pattern \(E\).
For any normalized state supported in total charge \(r\), the resulting
charge law after averaging uniformly over \(|E|=k\) is
\begin{equation}
P_r^{(k)}(a)
=\frac{\binom ka\binom{n-k}{r-a}}{\binom nr}.
\label{eq:universal-hypergeometric-charge-law}
\end{equation}
Indeed, expanding the state in the computational basis removes all
coherences under the charge measurement, and for every fixed weight-\(r\)
string the fraction of size-\(k\) subsets containing exactly \(a\) occupied
sites is the same hypergeometric probability.  Hence it is independent of
the encoded amplitudes.

After the reference measurement, the actual classical joint law is
\(P(R=j,N_E=a)=\frac12P_{q+j}^{(k)}(a)\), while the product of its reference
and erased-charge marginals assigns
\(\frac14[P_q^{(k)}(a)+P_{q+1}^{(k)}(a)]\) to each value of \(R\).  Classical
data processing therefore gives
\begin{equation}
\mathcal D_{n,k}(V)
\geq\frac14\sum_a|P_q^{(k)}(a)-P_{q+1}^{(k)}(a)|
=\overline D_{n,k},
\end{equation}
where \(\binom nq=\binom n{q+1}=d\) was used in the last equality.  For iid
erasures, retain \(K=|E|\) in addition to \(N_E\), condition on \(K=k\),
and average the preceding bound with the binomial weights.  Equation
\eqref{eq:equilibrium-flagged-identity} then gives
Eq.~\eqref{eq:universal-iid-charge-floor}.

The lower bounds on the two infima follow immediately.  For the upper bound
at fixed \(k\), permutation invariance of the charge-Haar ensemble implies
\begin{equation}
\mathbb E_{\rm ch}\mathcal D_{n,k}(V)
=\mathbb E_{\rm ch}D_{E_0}
\leq\overline D_{n,k}+\frac{2^k}{\sqrt d}
\end{equation}
for any fixed \(|E_0|=k\), by
Theorem~\ref{thm:charge-haar-extensive}.  At least one realization cannot
exceed this expectation.  The iid upper bound follows in the same way from
the one-sided version of
Eq.~\eqref{eq:charge-haar-classical-center-equivalence}.  Finally,
Eqs.~\eqref{eq:charge-haar-extensive-distance} and
\eqref{eq:equilibrium-flagged-asymptotic}, together with the exponentially
small upper remainders for \(\alpha,p<1/2\), prove
Eqs.~\eqref{eq:fixed-k-optimal-asymptotic} and
\eqref{eq:iid-optimal-asymptotic}.
\end{proof}

\begin{corollary}[Fixed logical-charge separation]
\label{cor:fixed-charge-separation}
Fix an integer $s\geq1$, take $n\equiv s\pmod2$, and put
\begin{equation}
r_-=\frac{n-s}{2},\qquad r_+=\frac{n+s}{2},\qquad
d=\binom n{r_-}=\binom n{r_+}.
\end{equation}
Let the two logical codewords be independent Haar vectors in
$\mathcal H_{r_-}$ and $\mathcal H_{r_+}$.  For a fixed erased set of size
$k$, define
\begin{equation}
\overline D^{(s)}_{n,k}
=
\frac1{4d}\sum_a\binom ka
\left|
\binom{n-k}{r_--a}-\binom{n-k}{r_+-a}
\right|.
\label{eq:fixed-separation-center}
\end{equation}
Then
\begin{equation}
\left|\mathbb E_{\rm ch}D_E-\overline D^{(s)}_{n,k}\right|
\leq\frac{2^k}{\sqrt d}.
\label{eq:fixed-separation-fluctuation}
\end{equation}
If $k/n\to\alpha\in(0,1/2)$, then
\begin{equation}
\mathbb E_{\rm ch}D_E
=
s\sqrt{\frac{\alpha}{2\pi(1-\alpha)}}\,n^{-1/2}
+o(n^{-1/2}).
\label{eq:fixed-separation-law}
\end{equation}
The remainder is uniform for $\alpha$ in compact subsets of $(0,1/2)$.
Thus a fixed larger logical charge gap changes the leading constant
linearly, but not the $n^{-1/2}$ exponent.
\end{corollary}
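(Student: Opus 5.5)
The plan is to repeat the proof of Theorem~\ref{thm:charge-haar-extensive} with the charges $q,q+1$ replaced by $r_\mp$. The decoupling operator keeps the block form of Eq.~\eqref{eq:charge-haar-block-X}, with $\rho_{E,r_-}$, $\rho_{E,r_+}$ and $\sigma_E=\operatorname{Tr}_Q\lvert\psi_{r_-}\rangle\!\langle\psi_{r_+}\rvert$. Because the two codewords are independent Haar vectors, the ensemble mean of $\sigma_E$ is zero. Hence the mean operator again gives $\frac14\|\tau_{r_-}-\tau_{r_+}\|_1=\overline D^{(s)}_{n,k}$, where $\tau_r=\bigoplus_a h_r(a)d^{-1}\mathbb I_{E,a}$. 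The condition $d=\binom n{r_-}=\binom n{r_+}$ holds by symmetry of binomials, so the same normalizing $d$ appears in both blocks.

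For the two-sided estimate Eq.~\eqref{eq:fixed-separation-fluctuation}, I would take the following steps.
\begin{itemize}
\item[(a)] Convexity gives the lower bound $\mathbb E_{\rm ch}D_E\geq\overline D^{(s)}_{n,k}$.
\item[(b)] Eq.~\eqref{eq:charge-haar-marginal-variance} depends only on the sector dimension $d$ and on $\max_a e_a$, so it gives $\mathbb E\|\rho_{E,r}-\tau_r\|_1\leq 2^k/\sqrt{d+1}$.
\item[(c)] For the coherence I would compute $\mathbb E\|\sigma_E\|_2^2=d^{-2}\sum_a e_a e_{a+s}h_{r_-}(a)$. Here I use that the $Q$-charge must match on both sides, so $h_{r_+}(a+s)=h_{r_-}(a)$. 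This is at most $(\max e)\,d/d^2$, and Cauchy--Schwarz then gives $\mathbb E\|\sigma_E\|_1\leq 2^k/\sqrt d$.
\item[(d)] Assembling with the triangle inequality exactly as before yields the weights $\frac14(2\cdot 2^k/\sqrt{d+1})\cdot\frac12\cdot 2+\frac12\cdot 2^k/\sqrt d\leq 2^k/\sqrt d$.
\end{itemize}
One small check: the off-diagonal blocks contribute $\|\sigma_E\|_1$ once to $\|X_E\|_1$, consistent with the $q$ case.

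For the asymptotic law, let $A$ be hypergeometric, $\Pr(A=a)=e_ah_{r_-}(a)/d$. Then $\overline D^{(s)}_{n,k}=\frac14\mathbb E|1-h_{r_+}(A)/h_{r_-}(A)|$. The ratio $\binom{M}{r_+-a}/\binom{M}{r_--a}$ with $M=n-k$ is a product of $s$ factors $(M-r_--a-j+1)/(r_--a+j)$, $j=1,\dots,s$. Writing $a=\mu+x$ with $x=O(\sqrt n)$, each factor equals $1-\frac{4x'}{M}+O(n^{-1})$ with $x'=a-k/2$. The product is therefore $1-\frac{4s(A-k/2)}{M}+O(n^{-1})$ uniformly on a $\sqrt n\log n$ window. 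Outside that window I would discard the contribution using Chernoff bounds, together with the crude bound that the ratio is at most polynomial there, or simply bound $|h_{r_-}-h_{r_+}|\leq h_{r_-}+h_{r_+}$. This gives $\overline D^{(s)}_{n,k}=\frac{s}{M}\mathbb E|A-k/2|+O(n^{-1})$. The hypergeometric CLT with $\varsigma^2\sim k(n-k)/(4n)$ gives $\mathbb E|A-k/2|=\varsigma\sqrt{2/\pi}(1+o(1))$, since the mean offset $\mu-k/2=O(s)$ is negligible. Substituting, $\frac{s}{n(1-\alpha)}\sqrt{\frac{2}{\pi}}\sqrt{\frac{n\alpha(1-\alpha)}{4}}=s\sqrt{\alpha/(2\pi(1-\alpha))}\,n^{-1/2}$. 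Finally, the fluctuation term $2^k/\sqrt d\leq\sqrt{n+1}\,2^{k-n/2}$ is exponentially small for $\alpha<1/2$.

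The main obstacle is the uniform linearization of the $s$-fold binomial ratio, with controlled error in the tails, uniformly for $\alpha$ in compact sets. I would handle it by splitting into the central window, where Taylor expansion applies with $O(x^2/n^2)$ errors whose expectation is $O(n^{-1})$, and its complement, which carries superpolynomially small hypergeometric mass.
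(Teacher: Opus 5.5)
Your proposal is correct and follows the paper's proof essentially step for step. The fluctuation bound uses the same block form of $X_E$, the same cross-coherence count $\mathbb E_{\rm ch}\|\sigma_E\|_2^2=d^{-2}\sum_a e_ae_{a+s}h_{r_-}(a)\le 2^k/d$, and the same unchanged marginal variances. The asymptotic law uses the same hypergeometric representation, the same $s$-fold likelihood-ratio product linearized on a $\sqrt n\log n$ window with sampling-without-replacement tail bounds, and the same hypergeometric CLT; centering at $k/2$ instead of $\mu_-$, and expanding the product directly instead of its logarithm, are only cosmetic differences.

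One harmless slip: each factor is $1+4x'/M+\cdots$, not $1-4x'/M$. This does not affect the result, because only $|1-h_{r_+}(A)/h_{r_-}(A)|$ enters.
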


\begin{proof}
The mean-state calculation in
Eq.~\eqref{eq:charge-haar-mean-X-distance} gives
Eq.~\eqref{eq:fixed-separation-center}.  Tracing the retained subsystem in
the cross term forces erased charges $a$ and $a+s$, and hence
\begin{equation}
\mathbb E_{\rm ch}\|\sigma_E\|_2^2
=\frac1{d^2}\sum_a
\binom ka\binom{k}{a+s}\binom{n-k}{r_--a}
\leq\frac{2^k}{d}.
\label{eq:fixed-separation-coherence}
\end{equation}
The two marginal-variance estimates used in
Eq.~\eqref{eq:charge-haar-marginal-variance} are unchanged.  Applying
trace/HS conversion to the two marginal fluctuations and
Eq.~\eqref{eq:fixed-separation-coherence} proves
Eq.~\eqref{eq:fixed-separation-fluctuation}.

It remains to evaluate the deterministic center.  Let $P_\pm$ be the
hypergeometric laws
\begin{equation}
P_\pm(a)=\frac{\binom ka\binom{n-k}{r_\pm-a}}d,
\qquad
\overline D^{(s)}_{n,k}=\frac14\sum_a|P_-(a)-P_+(a)|.
\end{equation}
Under $P_-$, write $A=\mu_-+x$.  Its mean and variance are
\begin{align}
\mu_-&=\frac{k(n-s)}{2n},\\
\sigma_n^2
&=\frac{k(n-k)}{4(n-1)}
\left(1-\frac{s^2}{n^2}\right).
\label{eq:fixed-separation-hypergeom-moments}
\end{align}
The exact likelihood ratio is
\begin{equation}
\frac{P_+(a)}{P_-(a)}
=
\prod_{j=1}^s
\frac{n-k-(r_--a)-j+1}{r_--a+j}.
\label{eq:fixed-separation-likelihood}
\end{equation}
On $|x|\leq\sqrt n\log n$, Taylor's formula for the logarithm in
Eq.~\eqref{eq:fixed-separation-likelihood}, with its third derivative
bounded uniformly on compact $\alpha$ intervals, gives
\begin{equation}
\begin{aligned}
\log\frac{P_+(A)}{P_-(A)}
&=\frac{4s}{n-k}x-\frac{2s^2k}{n(n-k)}\\
&\quad+O_s\!\left(
\frac{1+|x|}{n^2}+\frac{1+|x|^3}{n^3}
\right).
\end{aligned}
\label{eq:fixed-separation-lan}
\end{equation}
The sampling-without-replacement Hoeffding bound controls the complementary
event.  Exponentiating Eq.~\eqref{eq:fixed-separation-lan}, using
$\mathbb E x^2=O(n)$ and
Eq.~\eqref{eq:fixed-separation-hypergeom-moments}, yields the $L^1(P_-)$
expansion
\begin{equation}
\begin{aligned}
\frac{P_+(A)}{P_-(A)}-1
&=\frac{ks/n}{\sigma_n^2}(A-\mu_-)+\mathcal R_n,\\
\mathbb E_-|\mathcal R_n|&=O_s(n^{-1}).
\end{aligned}
\label{eq:fixed-separation-score}
\end{equation}
The hypergeometric central-limit theorem and uniform integrability imply
$\mathbb E|A-\mu_-|=\sigma_n[\sqrt{2/\pi}+o(1)]$.  Therefore
\begin{equation}
\sum_a|P_+(a)-P_-(a)|
=
\frac{ks/n}{\sigma_n}\sqrt{\frac2\pi}+o(n^{-1/2}).
\end{equation}
Substitution of
$\sigma_n\sim\frac12\sqrt{\alpha(1-\alpha)n}$ proves
Eq.~\eqref{eq:fixed-separation-law}.
\end{proof}

\begin{corollary}[Independent extensive erasures below one half]
\label{cor:charge-haar-iid-extensive}
For a fixed $0<p<1/2$, average over every flagged erasure pattern with
independent probability $p$ per site.  Then
\begin{equation}
\begin{aligned}
\mathcal D_{n,p}^{\rm ch}
&:=\mathbb E_{\rm ch}\sum_{E\subseteq[n]}
 p^{|E|}(1-p)^{n-|E|}D_E\\
&=\sqrt{\frac{p}{2\pi(1-p)}}\,n^{-1/2}
+o(n^{-1/2}).
\end{aligned}
\label{eq:charge-haar-iid-extensive-distance}
\end{equation}
\end{corollary}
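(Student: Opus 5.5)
The plan is to condition on the erasure cardinality and reduce everything to Theorem~\ref{thm:charge-haar-extensive}. Charge-Haar is permutation invariant, so \(\mathbb E_{\rm ch}D_E\) depends on \(E\) only through \(k=|E|\). Grouping patterns by cardinality then gives
\begin{equation}
\mathcal D_{n,p}^{\rm ch}=\mathbb E_K\,\mathbb E_{\rm ch}D_{E_K},
\qquad K\sim{\rm Bin}(n,p),
\end{equation}
where \(E_K\) is any fixed set of size \(K\). By Eq.~\eqref{eq:charge-haar-fluctuation-bound}, for every \(k\),
\begin{equation}
0\leq\mathbb E_{\rm ch}D_{E_k}-\overline D_{n,k}\leq\sqrt{n+1}\,2^{k-n/2}.
\end{equation}
So it suffices to show two things. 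First, \(\mathbb E_K\overline D_{n,K}=\sqrt{p/[2\pi(1-p)]}\,n^{-1/2}+o(n^{-1/2})\). Second, the averaged fluctuation remainder is \(o(n^{-1/2})\).

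\textbf{Step 1: typical and atypical cardinalities.} Fix \(\eta>0\) small enough that \(p+\eta<1/2\) and \(p-\eta>0\). Let \(J=[(p-\eta)n,(p+\eta)n]\) be the typical window. By Hoeffding's inequality, \(\Pr(K\notin J)\leq2e^{-2\eta^2n}\). Both \(D_E\) and \(\overline D_{n,k}\) are bounded by one, since \(\overline D_{n,k}\leq\frac14\cdot 2\). Hence the atypical region contributes \(e^{-\Omega(n)}\) to both the mean and the fluctuation terms. On \(J\) we have \(k\leq(1/2-\eta')n\), so the fluctuation bound is \(\sqrt{n+1}\,2^{-\eta' n}\), which is exponentially small uniformly in \(k\).

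\textbf{Step 2: evaluate \(\overline D_{n,K}\) on the window.} This is the main obstacle, because Eq.~\eqref{eq:charge-haar-extensive-distance} is stated only as a limit along sequences \(k/n\to\alpha\). I need it uniformly over \(k\in J\), which is exactly why the theorem asserts uniformity on compact subsets of \((0,1/2)\). I will show
\begin{equation}
\sup_{k\in J}\left|\sqrt n\,\overline D_{n,k}-g(k/n)\right|\to0,
\qquad g(\alpha)=\sqrt{\frac{\alpha}{2\pi(1-\alpha)}}.
\end{equation}
The argument is by contradiction. If the supremum stayed above some \(\epsilon>0\), pick maximizers \(k_n\in J\). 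By compactness, pass to a subsequence along which \(k_n/n\to\alpha\in[p-\eta,p+\eta]\). Continuity of \(g\), combined with the sequential statement of Theorem~\ref{thm:charge-haar-extensive}, then gives a contradiction.

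It is also safe to invoke the ingredients of that proof directly, namely Eq.~\eqref{eq:charge-haar-hypergeometric-distance} with the \(O(n^{-1})\) denominator replacement and the hypergeometric CLT. Their error terms depend only on the hypergeometric variance, which is uniformly of order \(n\) on \(J\).

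\textbf{Step 3: conclude.} From Step 2,
\begin{equation}
\mathbb E_K\bigl[\overline D_{n,K}\mathbf 1_J\bigr]=n^{-1/2}\,\mathbb E_K\bigl[g(K/n)\mathbf 1_J\bigr]+o(n^{-1/2}).
\end{equation}
Since \(K/n\to p\) in probability and \(g\) is bounded and continuous on the window, \(\mathbb E_K[g(K/n)\mathbf 1_J]\to g(p)\). Combining this with Step 1 gives Eq.~\eqref{eq:charge-haar-iid-extensive-distance}.

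The same computation yields the identity \(\mathbb E_K\overline D_{n,K}=\sqrt{p/[2\pi(1-p)]}\,n^{-1/2}+o(n^{-1/2})\) used in Theorem~\ref{thm:equilibrium-flagged-tv}. It also gives the one-sided bound
\begin{equation}
0\leq\mathcal D_{n,p}^{\rm ch}-M_{\rm eq}(n,p)\leq\sqrt{n+1}\,2^{-(1/2-\beta)n}+e^{-nI(\beta\|p)}.
\end{equation}
To obtain it, split at \(K\leq\beta n\) and apply the Chernoff bound \(\Pr(K>\beta n)\leq e^{-nI(\beta\|p)}\) to the complementary event.
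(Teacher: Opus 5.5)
Your proposal is correct and takes essentially the same route as the paper. You reduce to $K\sim{\rm Bin}(n,p)$ by permutation invariance, apply Theorem~\ref{thm:charge-haar-extensive} uniformly on a compact window $an\le K\le bn$ inside $(0,1/2)$ together with $K/n\to p$ in probability, and discard the binomial tails by Chernoff/Hoeffding. The only differences are cosmetic: you derive the uniformity from the sequential statement by a valid compactness argument (the limit $g$ is continuous), and you use uniform convergence where the paper uses a uniform $O(n^{-1/2})$ domination plus dominated convergence.
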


\begin{proof}
Let $K\sim\operatorname{Bin}(n,p)$.  The charge-Haar expectation depends on
an erasure set only through $K$.  Choose constants
$0<a<p<b<1/2$.  On $an\leq K\leq bn$, the proof of
Theorem~\ref{thm:charge-haar-extensive} is uniform, and its leading
coefficient is a continuous function of $K/n$.  Moreover,
Eq.~\eqref{eq:charge-haar-hypergeometric-distance}, Cauchy--Schwarz, and
Eq.~\eqref{eq:charge-haar-fluctuation-bound} give a uniform $O(n^{-1/2})$
bound there.  Since $K/n\to p$ in probability, dominated convergence on
this event gives the right-hand side of
Eq.~\eqref{eq:charge-haar-iid-extensive-distance}.  The two complementary
binomial tails have exponentially small probability by the Chernoff bound,
and $0\leq D_E\leq1$, so they are negligible.
\end{proof}

\begin{corollary}[Exponential classicalization of the charge-Haar endpoint]
\label{cor:charge-haar-exponential-classicalization}
For a fixed erased set \(E\) of size \(k\), let
\(\overline X_E=\mathbb E_{\rm ch}X_E\) and define the centered
charge-Haar fluctuation
\begin{equation}
\gamma^{\rm ch}_{n,k}
=\frac12\mathbb E_{\rm ch}
\|X_E-\overline X_E\|_1.
\label{eq:charge-haar-fixed-pattern-centered-trace}
\end{equation}
Then
\begin{equation}
\boxed{
\gamma^{\rm ch}_{n,k}\leq\frac{2^k}{\sqrt d},
\qquad
2^{k-1}\mathbb E_{\rm ch}
\|X_E-\overline X_E\|_2^2
\leq\frac{2^{2k-1}}d,}
\label{eq:charge-haar-centered-fixed-pattern-bounds}
\end{equation}
where \(d=\binom n{(n-1)/2}\).

Let \(E\) be iid Bernoulli-\(p\), with fixed \(0<p<1/2\), and choose any
\(p<\beta<1/2\).  Let \(I(\beta\|p)\) be the binomial rate function in
Eq.~\eqref{eq:optimality-binomial-rate}.
The full centered trace fluctuation satisfies
\begin{equation}
\begin{aligned}
\Gamma^{\rm ch}_{n,p}
&:=\frac12\mathbb E_E\mathbb E_{\rm ch}
\|X_E-\overline X_E\|_1\\
&\leq\sqrt{n+1}\,2^{-(1/2-\beta)n}
+2e^{-nI(\beta\|p)}.
\end{aligned}
\label{eq:charge-haar-iid-centered-trace}
\end{equation}
Moreover, the typical-set rank-weighted covariance
\begin{equation}
\mathcal F^{\rm ch}_{n,p,\beta}(\infty)
:=\sum_{|E|\leq\beta n}\pi_p(E)2^{|E|-1}
\mathbb E_{\rm ch}\|X_E-\overline X_E\|_2^2
\end{equation}
obeys
\begin{equation}
\boxed{
\mathcal F^{\rm ch}_{n,p,\beta}(\infty)
\leq
\frac{n+1}{2}\,2^{-(1-2\beta)n}.}
\label{eq:charge-haar-stationary-centered-covariance}
\end{equation}
Finally,
\begin{equation}
0\leq
\mathcal D_{n,p}^{\rm ch}-M_{\rm eq}(n,p)
\leq
\sqrt{n+1}\,2^{-(1/2-\beta)n}
+e^{-nI(\beta\|p)}.
\label{eq:charge-haar-classical-center-equivalence}
\end{equation}
Thus, below one-half erasure, the charge-Haar trace-distance endpoint is
classical up to \(e^{-\Omega(n)}\), and the \(O(n^{-1})\) typical-set
stationary rank-weighted second moment comes from the deterministic mean
rather than a centered circuit fluctuation.
\end{corollary}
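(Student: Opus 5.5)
The plan reuses the block decomposition of Eq.~\eqref{eq:charge-haar-block-X} and the concentration estimates already proved in Theorem~\ref{thm:charge-haar-extensive}. For fixed \(E\) with \(|E|=k\), the centered operator \(X_E-\overline X_E\) has diagonal blocks \(\pm(\delta_q-\delta_{q+1})/4\), where \(\delta_r=\rho_{E,r}-\tau_r\), and off-diagonal blocks \(\sigma_E/2\), \(\sigma_E^\dagger/2\). The triangle inequality gives
\[
\tfrac12\|X_E-\overline X_E\|_1\leq\tfrac14\bigl(\|\delta_q\|_1+\|\delta_{q+1}\|_1\bigr)+\tfrac12\|\sigma_E\|_1 .
\]
Taking expectations with Eqs.~\eqref{eq:charge-haar-marginal-concentration} and \eqref{eq:charge-haar-coherence-concentration} then yields \(\gamma^{\rm ch}_{n,k}\leq 2^k/(2\sqrt{d+1})+2^k/(2\sqrt d)\leq 2^k/\sqrt d\). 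For the Hilbert--Schmidt statement, the off-diagonal blocks enter the squared norm as \(2\cdot\frac14\|\sigma_E\|_2^2\), and the diagonal blocks as \(\frac18\|\delta_q-\delta_{q+1}\|_2^2\leq\frac14(\|\delta_q\|_2^2+\|\delta_{q+1}\|_2^2)\). The variance bound \eqref{eq:charge-haar-marginal-variance} gives \(\mathbb E\|\delta_r\|_2^2\leq 2^k/(d+1)\), and \(\mathbb E\|\sigma_E\|_2^2=T_{n,k}\leq 2^k/d\). Summing, \(\mathbb E\|X_E-\overline X_E\|_2^2\leq 2^k/d\), and multiplying by \(2^{k-1}\) gives the second boxed bound. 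If a stray factor appears, I will absorb it by using the sharper \(\max_a e_a\leq\binom{k}{\lfloor k/2\rfloor}\).

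For the iid statements, I split on the typical event \(|E|\leq\beta n\). On this event, \(2^k/\sqrt d\leq\sqrt{n+1}\,2^{k-n/2}\leq\sqrt{n+1}\,2^{-(1/2-\beta)n}\), using \(d\geq2^n/(n+1)\). On the complement, I bound \(\tfrac12\|X_E-\overline X_E\|_1\leq\tfrac12(\|X_E\|_1+\|\overline X_E\|_1)\leq 2\), since each decoupling operator has half trace norm at most one. The Chernoff bound gives \(\Pr\{K>\beta n\}\leq e^{-nI(\beta\|p)}\), which produces the \(2e^{-nI}\) term in Eq.~\eqref{eq:charge-haar-iid-centered-trace}. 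For the covariance \(\mathcal F^{\rm ch}_{n,p,\beta}(\infty)\), I insert the fixed-pattern bound \(2^{2k-1}/d\leq\frac{n+1}{2}2^{2k-n}\) and maximize over \(k\leq\beta n\) to get \(\frac{n+1}{2}2^{-(1-2\beta)n}\). The probability weights sum to at most one.

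For Eq.~\eqref{eq:charge-haar-classical-center-equivalence}, the lower bound follows from Theorem~\ref{thm:universal-charge-count-optimality} applied realization-wise, or equivalently from convexity \(\mathbb E_{\rm ch}D_E\geq\overline D_{n,|E|}\) together with Theorem~\ref{thm:equilibrium-flagged-tv}. For the upper bound, I use \(\mathbb E_{\rm ch}D_E-\overline D_{n,k}\leq\gamma^{\rm ch}_{n,k}\), which is the triangle inequality around \(\overline X_E\). This one-sided difference is bounded by \(\sqrt{n+1}2^{-(1/2-\beta)n}\) on the typical event. On the atypical event it is bounded by \(D_E\leq1\), which costs only \(e^{-nI}\) (note \(\overline D\geq0\)). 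Averaging against the binomial weights and invoking Eq.~\eqref{eq:equilibrium-flagged-identity} finishes the argument.

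I expect the main obstacle to be bookkeeping of constants, in particular checking that the \(\sqrt{d+1}\) versus \(\sqrt d\) terms and the block-norm factors combine to exactly \(2^k/\sqrt d\) and \(2^{2k-1}/d\). I will first try the direct sum above. Otherwise I will replace \(2^k\) by the tighter \(\max_ae_a\) in the marginal terms.
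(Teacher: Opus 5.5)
Your proposal is correct and follows the paper's proof essentially step for step. It uses the same block decomposition of \(X_E-\overline X_E\), the same trace-norm and Hilbert--Schmidt splits fed by Eqs.~\eqref{eq:charge-haar-marginal-variance}--\eqref{eq:charge-haar-coherence-concentration} and \(T_{n,k}\leq 2^k/d\), the same split at \(K\leq\beta n\) with the Chernoff tail, and the same convexity/triangle sandwich \(0\leq\mathbb E_{\rm ch}D_E-\frac12\|\overline X_E\|_1\leq\gamma^{\rm ch}_{n,k}\) for the classical center. The constants already combine exactly as you wrote, namely \(2^{k-1}/\sqrt{d+1}+2^{k-1}/\sqrt d\leq 2^k/\sqrt d\) and \(2^{k-1}/(d+1)+2^{k-1}/d\leq 2^k/d\), so your fallback to \(\max_a e_a\leq\binom{k}{\lfloor k/2\rfloor}\) is not needed.
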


\begin{proof}
For fixed \(E\), put
\(\delta_r=\rho_{E,r}-\tau_r\), \(r\in\{q,q+1\}\).  From
Eq.~\eqref{eq:charge-haar-block-X},
\begin{equation}
X_E-\overline X_E
=
\begin{pmatrix}
(\delta_q-\delta_{q+1})/4&\sigma_E/2\\
\sigma_E^\dagger/2&(\delta_{q+1}-\delta_q)/4
\end{pmatrix}.
\end{equation}
The diagonal and off-diagonal block matrices have trace norms
\(\frac12\|\delta_q-\delta_{q+1}\|_1\) and
\(\|\sigma_E\|_1\), respectively.  Therefore
\begin{align}
\gamma^{\rm ch}_{n,k}
&\leq
\frac14\sum_{r=q}^{q+1}
\mathbb E_{\rm ch}\|\delta_r\|_1
+\frac12\mathbb E_{\rm ch}\|\sigma_E\|_1\\
&\leq
\frac{2^{k-1}}{\sqrt{d+1}}
+\frac{2^{k-1}}{\sqrt d}
\leq\frac{2^k}{\sqrt d},
\end{align}
by Eqs.~\eqref{eq:charge-haar-marginal-concentration} and
\eqref{eq:charge-haar-coherence-concentration}.

For the Hilbert--Schmidt bound, orthogonality of the logical-reference
blocks gives
\begin{equation}
\|X_E-\overline X_E\|_2^2
=\frac18\|\delta_q-\delta_{q+1}\|_2^2
+\frac12\|\sigma_E\|_2^2.
\end{equation}
Equations~\eqref{eq:charge-haar-marginal-variance} and
\eqref{eq:charge-Haar-cross-coherence}, together with
\(\max_a e_a\leq2^k\), imply
\begin{equation}
\mathbb E_{\rm ch}\|X_E-\overline X_E\|_2^2
\leq
\frac{2^{k-1}}{d+1}+\frac{2^{k-1}}d
\leq\frac{2^k}{d}.
\end{equation}
Multiplication by \(2^{k-1}\) proves
Eq.~\eqref{eq:charge-haar-centered-fixed-pattern-bounds}.

Let \(K=|E|\sim{\rm Bin}(n,p)\).  On \(K\leq\beta n\), the central-binomial
bound \(d\geq2^n/(n+1)\) gives
\begin{equation}
\gamma^{\rm ch}_{n,K}
\leq\sqrt{n+1}\,2^{-(1/2-\beta)n}.
\end{equation}
On the complementary event we use the universal bound
\(\frac12\|X_E-\overline X_E\|_1\leq2\).  The binomial Chernoff bound
\(\Pr\{K>\beta n\}\leq e^{-nI(\beta\|p)}\) proves
Eq.~\eqref{eq:charge-haar-iid-centered-trace}.
The second inequality in
Eq.~\eqref{eq:charge-haar-centered-fixed-pattern-bounds}, averaged only over
\(K\leq\beta n\), gives
Eq.~\eqref{eq:charge-haar-stationary-centered-covariance}.

Finally, convexity and the triangle inequality give, for every fixed
pattern,
\begin{equation}
0\leq
\mathbb E_{\rm ch}D_E
-\frac12\|\overline X_E\|_1
\leq\gamma^{\rm ch}_{n,k}.
\end{equation}
The Bernoulli average of \(\frac12\|\overline X_E\|_1\) is
\(M_{\rm eq}(n,p)\) by
Theorem~\ref{thm:equilibrium-flagged-tv}; on the upper binomial tail the
absolute difference is at most one.  The same split proves
Eq.~\eqref{eq:charge-haar-classical-center-equivalence}.  Since
\(\mathbb E\operatorname{Tr}(X_E^2)
=\operatorname{Tr}(\overline X_E^2)
+\mathbb E\|X_E-\overline X_E\|_2^2\),
Eq.~\eqref{eq:charge-haar-stationary-centered-covariance} also proves the
last statement.
\end{proof}

\begin{corollary}[Charge-Haar extensive-erasure transition]
\label{cor:charge-haar-erasure-threshold}
For a deterministic sequence $k/n\to\alpha>1/2$,
\begin{equation}
\mathbb E_{\rm ch}D_E=\frac34-O_\alpha(n^{-1/4}).
\label{eq:charge-haar-above-half}
\end{equation}
For Bernoulli erasures at any fixed $p\in(0,1)$, the complete limiting law is
\begin{equation}
\lim_{n\to\infty}\mathcal D_{n,p}^{\rm ch}
=
\begin{cases}
0, & 0<p<1/2,\\[2pt]
3/8, & p=1/2,\\[2pt]
3/4, & 1/2<p<1.
\end{cases}
\label{eq:charge-haar-threshold-law}
\end{equation}
Below one half, the sharper leading term is
Eq.~\eqref{eq:charge-haar-iid-extensive-distance}.
\end{corollary}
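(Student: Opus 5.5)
Above one half the plan is to show that the erased subsystem almost contains a full copy of the logical qubit, so that $D_E$ sits near its maximal value $3/4$. Write $X_E$ in the block form Eq.~\eqref{eq:charge-haar-block-X}. By complementarity, $\rho_{RE}$ is nearly maximally entangled whenever the retained set $Q$ (of size $n-k<n/2$) is decoupled. The decoupled-$Q$ picture is just the swapped-role version of Theorem~\ref{thm:charge-haar-extensive}.

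Concretely, I would apply the fluctuation bound Eq.~\eqref{eq:charge-haar-fluctuation-bound} with $Q$ in place of $E$, i.e.\ $|Q|=n-k\to(1-\alpha)n$ with $1-\alpha<1/2$. This gives
\begin{equation}
\mathbb E_{\rm ch}D_Q\leq\overline D_{n,n-k}+\sqrt{n+1}\,2^{(n-k)-n/2},
\end{equation}
and $\overline D_{n,n-k}=O(n^{-1/2})$. By the decoupling/recovery duality (Uhlmann plus Fuchs--van de Graaf), closeness of $\rho_{RQ}$ to $\rho_R\otimes\rho_Q$ in trace distance $\epsilon$ implies that $\rho_{RE}$ is within purified distance $O(\sqrt\epsilon)$ of a state $(\mathrm{id}\otimes\mathcal D)(\Phi_{RL})$ with $\mathcal D$ an isometric decoding into $E$. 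This is the step that produces the $n^{-1/4}$ loss, through $\sqrt{\epsilon}$ with $\epsilon=O(n^{-1/2})$.

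Next I need $D_E$ for the ideal state. For $\rho_{RE}=(\mathrm{id}\otimes W)\Phi$ with $W$ an isometry, data processing through the decoder $W^\dagger$ (completed to a channel) lower bounds $D_E$ by $\tfrac12\|\Phi-\mathbb I/4\|_1=3/4$. Both $\rho_{RE}$ and $\rho_R\otimes\rho_E$ move by $O(n^{-1/4})$ under this approximation, so $D_E\geq 3/4-O(n^{-1/4})$ after averaging with Jensen. For the matching upper bound I would use the universal inequality $D_E\leq 3/4$ for a qubit reference. I expect it follows from $\tfrac12\|\rho_{RE}-\mathbb I/2\otimes\rho_E\|_1\leq 1-1/d_R^2$, which is the maximal decoupling value attained by maximally entangled states; I would check this via the Schmidt form or cite it. Together these give Eq.~\eqref{eq:charge-haar-above-half}.

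For the Bernoulli law I would condition on $K=|E|$. For $p<1/2$ the result follows from Corollary~\ref{cor:charge-haar-iid-extensive}. For $p>1/2$, Chernoff confines $K/n$ to a compact subset of $(1/2,1)$, where the estimate above is uniform, so the limit is $3/4$. At $p=1/2$ the fluctuations of $K$ are of order $\sqrt n$. On $K\leq n/2-n^{2/3}$ the below-half bound applies with the exponent $2^{k-n/2}\leq2^{-n^{2/3}}$, so $D\to0$ uniformly. On $K\geq n/2+n^{2/3}$ the mirrored bound gives $3/4-o(1)$, and the middle window has probability $o(1)$ by the CLT. Symmetry of $\mathrm{Bin}(n,1/2)$ then gives $(0+3/4)/2=3/8$.

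The main obstacle is uniformity near the threshold. For $k=n/2-n^{2/3}$ the term $\overline D_{n,k}$ must still be $o(1)$. That holds from Eq.~\eqref{eq:charge-haar-hypergeometric-distance}, since the ratio $|k-2A|/(q-A+1)$ is $O(n^{-1/2})$ whenever $k\leq n/2$, and I would verify this directly rather than through the asymptotic formula.
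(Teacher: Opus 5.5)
Your treatment of $k/n\to\alpha>1/2$ and of $p\neq1/2$ follows the paper's argument. You decouple the smaller complement $Q$ via Theorem~\ref{thm:charge-haar-extensive}, use Uhlmann and Fuchs--van de Graaf to get $D_E\geq\frac34-2\sqrt{2D_Q}$, and lose $n^{-1/4}$ through Jensen. For the ceiling $D_E\leq\frac34$, a Schmidt decomposition is not the right tool, because $\rho_{RE}$ is mixed. The clean argument is the identity $\rho_{RE}-\rho_R\otimes\rho_E=(\mathrm{id}_R\otimes\mathcal N)(\Phi_{RL}-\mathbb I_{RL}/4)$ for the complementary channel $\mathcal N:L\to E$, followed by trace-norm contractivity.

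The genuine gap is at $p=1/2$. Your cutoff $n^{2/3}$ is far larger than the $\sqrt n/2$ standard deviation of $K\sim\operatorname{Bin}(n,1/2)$. The CLT therefore gives $\Pr\{|K-n/2|<n^{2/3}\}\to1$, not $o(1)$. Essentially all the mass sits in the window where you have no estimate, each outer side has vanishing probability, and $(0+\frac34)/2$ does not follow.

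The half-width $w_n$ must satisfy two competing requirements:
\begin{itemize}
\item $w_n=o(\sqrt n)$, so that the central window is negligible;
\item $\sqrt n\,2^{-w_n}\to0$, so that the fluctuation term $2^k/\sqrt d\leq\sqrt{n+1}\,2^{k-n/2}$ vanishes for $k\leq q-w_n$.
\end{itemize}
The paper takes $w_n=\lceil n^{1/4}\rceil$. With such a window you lose the deterministic bound $q-A+1\geq q-k+1\geq n^{2/3}$ that your choice would have provided. You therefore genuinely need $\overline D_{n,k}=O(n^{-1/2})$ uniformly for all $k\leq q$.

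That bound holds for the hypergeometric expectation, not pointwise for the ratio $|k-2A|/(q-A+1)$. To prove it:
\begin{itemize}
\item use $\mathbb E|k-2A|\leq\sqrt{\mathbb E(k-2A)^2}=O(\sqrt n)$ on the event $A\leq 3n/8$, where $q-A+1\geq n/8+O(1)$;
\item dispose of the complementary event by the sampling-without-replacement Hoeffding bound, since the ratio is at most $n$ there.
\end{itemize}
Applying the same uniform estimate to $Q$ with $|Q|\leq q-w_n$ gives $\mathbb E_{\rm ch}D_E=\frac34-o(1)$ for $k\geq q+1+w_n$. The symmetry $K\mapsto n-K$ then assigns limiting probability $\frac12$ to each outer side, yielding $\frac38$.
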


\begin{proof}
We first record two general facts for a pure qubit encoding into $EQ$.  Its
state $\rho_{RE}$ is the Choi state of a channel $\mathcal N:L\to E$, and
\begin{equation}
\rho_{RE}-\rho_R\otimes\rho_E
=
(\operatorname{id}_R\otimes\mathcal N)
\left(\Phi_{RL}-\frac{\mathbb I_{RL}}4\right).
\end{equation}
Trace-norm contractivity therefore gives
\begin{equation}
D_E\leq\frac12
\left\|\Phi_{RL}-\frac{\mathbb I_{RL}}4\right\|_1
=\frac34.
\label{eq:qubit-decoupling-maximum}
\end{equation}
The bound is saturated by total erasure for every isometric encoder:
if \(E=[n]\) (all cells erased), then \(Q\) is empty,
\(\rho_{RE}=(\mathbb I_R\otimes V)\Phi_{RL}(\mathbb I_R\otimes V^\dagger)\),
and \(\rho_R\otimes\rho_E
=\frac14\sum_{i,j}\left\lvert i\right\rangle{}\!\left\langle i\right\rvert{}_R
\otimes\left\lvert\psi_j\right\rangle{}\!\left\langle\psi_j\right\rvert{}\).
In the orthonormal basis
\(\{\left\lvert0\,\psi_0\right\rangle{},\left\lvert1\,\psi_1\right\rangle{},
\left\lvert0\,\psi_1\right\rangle{},\left\lvert1\,\psi_0\right\rangle{}\}\)
the difference
\(\rho_{RE}-\rho_R\otimes\rho_E\) has eigenvalues
\(\{3/4,-1/4,-1/4,-1/4\}\), so
\begin{equation}
D_{[n]}(V)=\frac34
\label{eq:total-erasure-constant}
\end{equation}
for every isometric encoder, independent of the code and of the circuit.
This is the exact single-point form of the extensive upper-bound
\(3/4\) at erased fractions above one half
(Eq.~\eqref{eq:charge-haar-above-half}); the numerical plateau of the
canonical-code probe in Problem O2 is the finite-\(k\) interpolation
between this endpoint and the \(n^{-1/2}\) extensive regime.
Conversely, if $D_Q=\epsilon$, the Fuchs--van de Graaf inequality and
Uhlmann's theorem provide an isometry on $E$ under which the global encoded
state has trace distance at most $\sqrt{2\epsilon}$ from
$\Phi_{R E_L}\otimes\chi_{E'Q}$.  The functional $D_E$ is
two-Lipschitz in state trace distance, so
\begin{equation}
D_E\geq\frac34-2\sqrt{2D_Q}.
\label{eq:complement-decoupling-lower-bound}
\end{equation}

If $k/n\to\alpha>1/2$, the complementary subsystem $Q$ has size
$(1-\alpha)n+o(n)<n/2$.  Theorem~\ref{thm:charge-haar-extensive}, applied to
$Q$, gives $\mathbb E_{\rm ch}D_Q=O_\alpha(n^{-1/2})$.  Taking expectations
in Eq.~\eqref{eq:complement-decoupling-lower-bound}, using Jensen's
inequality, and combining with Eq.~\eqref{eq:qubit-decoupling-maximum}
proves Eq.~\eqref{eq:charge-haar-above-half}.  Binomial concentration gives
the cases $p<1/2$ and $p>1/2$ in
Eq.~\eqref{eq:charge-haar-threshold-law}.

For $p=1/2$, choose an integer sequence $w_n\to\infty$ with
$w_n=o(\sqrt n)$ and $\sqrt n\,2^{-w_n}\to0$, for example
$w_n=\lceil n^{1/4}\rceil$.  Uniformly for $k\leq q$, the mean-state term
in Eq.~\eqref{eq:charge-haar-hypergeometric-distance} is $O(n^{-1/2})$.
Indeed, on the hypergeometric event $A\leq3n/8$ the denominator is at least
$n/8+O(1)$ and
$\mathbb E|k-2A|\leq\sqrt{\mathbb E(k-2A)^2}=O(\sqrt n)$; the complementary
event has exponentially small probability by the sampling-without-
replacement Hoeffding bound.  Equation
\eqref{eq:charge-haar-fluctuation-bound} then shows, uniformly for
$k\leq q-w_n$, that $\mathbb E_{\rm ch}D_E=o(1)$.  By
Eq.~\eqref{eq:complement-decoupling-lower-bound}, uniformly for
$k\geq q+1+w_n$ one instead has
$\mathbb E_{\rm ch}D_E=3/4-o(1)$.  Under
$K\sim\operatorname{Bin}(n,1/2)$, the central window
$|K-n/2|\leq w_n+1$ has probability $O(w_n/\sqrt n)=o(1)$, while symmetry
assigns asymptotic probability $1/2$ to each outer side.  The limit is
therefore $(1/2)(0)+(1/2)(3/4)=3/8$.
\end{proof}

\begin{theorem}[Charge-Haar simultaneous worst-pattern bound]
\label{thm:charge-haar-worst-pattern}
Fix $\eta>0$.  For the central adjacent-charge code of
Theorem~\ref{thm:charge-haar-extensive}, there are constants
$C_\eta,c_\eta>0$ such that
\begin{equation}
\begin{aligned}
&\Pr_{\rm ch}\!\left[
\max_{|E|\leq(1/2-\eta)n}D_E>C_\eta n^{-1/2}
\right]\\
&\qquad\leq
2^{n+1}\exp\!\left[
-c_\eta\frac{\binom n{(n-1)/2}}{n}
\right].
\end{aligned}
\label{eq:charge-haar-worst-pattern}
\end{equation}
In particular, below one-half erasure a single charge-Haar code has
$O_\eta(n^{-1/2})$ decoupling error simultaneously for every allowed
pattern with probability tending to one superexponentially in $n$.
\end{theorem}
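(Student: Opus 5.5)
We prove the theorem by concentration of measure for each fixed pattern, followed by a union bound over patterns. The number of erasure sets with \(|E|\leq(1/2-\eta)n\) is at most \(2^n\), and this accounts for the prefactor \(2^{n+1}\). It therefore suffices to show that, for every fixed \(E\) with \(|E|=k\leq(1/2-\eta)n\),
\[
\Pr_{\rm ch}\bigl[D_E>C_\eta n^{-1/2}\bigr]\leq 2\exp\!\left[-c_\eta d/n\right],
\qquad d=\binom n{(n-1)/2}.
\]
We decompose \(D_E\leq\frac12\|\overline X_E\|_1+\frac12\|X_E-\overline X_E\|_1\), as in the proof of Corollary~\ref{cor:charge-haar-exponential-classicalization}. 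The deterministic center is \(\frac12\|\overline X_E\|_1=\overline D_{n,k}\). By Eq.~\eqref{eq:charge-haar-hypergeometric-distance}, uniformly in \(k\leq(1/2-\eta)n\), this center is at most \(C'_\eta n^{-1/2}\). The denominator \(q-A+1\) is of order \(\eta n\) off an exponentially rare hypergeometric event, and \(\mathbb E|k-2A|=O(\sqrt n)\), exactly as in the \(p=1/2\) step of Corollary~\ref{cor:charge-haar-erasure-threshold}. Choosing \(C_\eta=2C'_\eta\), the event to control becomes \(\frac12\|X_E-\overline X_E\|_1>C'_\eta n^{-1/2}\).

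For the fluctuation term we use Lévy's lemma on the two independent spheres \(S^{2d-1}\subset\mathcal H_q\) and \(\mathcal H_{q+1}\). The map \((\psi_q,\psi_{q+1})\mapsto\frac12\|X_E-\overline X_E\|_1\) is Lipschitz with an \(O(1)\) constant in the Euclidean norm. This follows from the block form Eq.~\eqref{eq:charge-haar-block-X}: the partial trace is trace-norm contractive, and \(\|\operatorname{Tr}_Q(|\psi\rangle\langle\phi|-|\psi'\rangle\langle\phi'|)\|_1\leq\|\psi-\psi'\|+\|\phi-\phi'\|\). Its mean is at most \(\gamma^{\rm ch}_{n,k}\leq2^k/\sqrt d\leq\sqrt{n+1}\,2^{-\eta n}\) by Eq.~\eqref{eq:charge-haar-centered-fixed-pattern-bounds}, and this is below \(\frac12C'_\eta n^{-1/2}\) for large \(n\). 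Lévy concentration on a sphere of real dimension \(2d\) then gives
\[
\Pr\bigl[\tfrac12\|X_E-\overline X_E\|_1>C'_\eta n^{-1/2}\bigr]\leq 2\exp[-c\,d\,(C'_\eta)^2/(4n\,L^2)],
\]
which has the form \(2\exp[-c_\eta d/n]\). The \(1/n\) in the exponent is exactly the square of the deviation threshold \(n^{-1/2}\). The product of two spheres is handled by concentration for product measures, or by applying Lévy twice with a half-threshold each. Summing over at most \(2^n\) patterns yields Eq.~\eqref{eq:charge-haar-worst-pattern}. Finitely many small \(n\) are absorbed into the constants.

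The main obstacle is obtaining the Lipschitz constant cleanly and uniformly. One pitfall is that \(\overline X_E\) must be treated as a constant, which is harmless because it does not depend on \(\psi\). The other is making sure no dimension factor like \(2^{k/2}\) enters through a trace-norm/Hilbert--Schmidt conversion. Working directly in trace norm with the partial-trace contraction avoids this, and the mean bound already carries the only \(2^k\) factor, which the gap \(2^{-\eta n}\) absorbs. A secondary check is that the bound on \(\overline D_{n,k}\) is uniform down to small \(k\). There \(\overline D_{n,k}=O(\sqrt k/n)\leq O(n^{-1/2})\), and the claim holds trivially for \(k=0\).
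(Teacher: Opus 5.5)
Your proposal is correct and follows essentially the paper's argument. Both use a uniform $O_\eta(n^{-1/2})$ bound on the hypergeometric center, an $O(1)$ Lipschitz constant on the product of the two codeword spheres, Gaussian concentration at deviation $u\asymp n^{-1/2}$ giving $e^{-c_\eta d/n}$, and a union bound over fewer than $2^n$ patterns. The only cosmetic difference is that you concentrate the centered fluctuation $\frac12\|X_E-\overline X_E\|_1$ about its exponentially small mean, whereas the paper concentrates $D_E$ itself about $\mathbb E_{\rm ch}D_E$.

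One small simplification: $A\leq k$, so the denominator bound $q-A+1\geq q-k+1\geq\eta n$ holds deterministically, and no exceptional hypergeometric event needs to be excluded.
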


\begin{proof}
First make the mean estimate uniform in the erased size.  For
$k\leq(1/2-\eta)n$, Eq.~\eqref{eq:charge-haar-hypergeometric-distance}
has denominator
$q-A+1\geq q-k+1\geq\eta n$.  Moreover,
\begin{equation}
\begin{aligned}
\mathbb E(k-2A)&=\frac{k}{n},\\
\operatorname{Var}(k-2A)
&=\frac{k(n+1)(n-k)}{n^2}\leq n+1.
\end{aligned}
\end{equation}
Cauchy--Schwarz therefore gives
$\overline D_{n,k}\leq C_\eta n^{-1/2}$.  The fluctuation term in
Eq.~\eqref{eq:charge-haar-fluctuation-bound} is exponentially smaller,
uniformly in the same range, and hence
\begin{equation}
\sup_{k\leq(1/2-\eta)n}\mathbb E_{\rm ch}D_E
\leq C_\eta' n^{-1/2}.
\label{eq:charge-haar-uniform-mean}
\end{equation}

It remains to pass from one fixed pattern to all patterns.  For a codeword
pair $x=(\psi_q,\psi_{q+1})$, put
\begin{equation}
|\Psi_x\rangle
=2^{-1/2}(|0\rangle|\psi_q\rangle+
|1\rangle|\psi_{q+1}\rangle).
\end{equation}
The reference marginal is always $\mathbb I/2$.  Trace-norm contractivity,
the reverse triangle inequality, and the pure-state trace-distance bound
show that, for every fixed $E$,
\begin{equation}
|D_E(x)-D_E(y)|
\leq
2\||\Psi_x\rangle-|\Psi_y\rangle\|_2.
\label{eq:charge-haar-lipschitz}
\end{equation}
Thus $D_E$ has a numerical Lipschitz constant, independent of $n,k,E$, on
the product of two complex unit spheres of dimension $d$.  The
product-sphere concentration inequality gives, for a universal $c>0$,
\begin{equation}
\Pr_{\rm ch}\{|D_E-\mathbb E_{\rm ch}D_E|>u\}
\leq2e^{-cdu^2}.
\label{eq:charge-haar-sphere-concentration}
\end{equation}
This form follows, for example, by tensorizing the sphere log-Sobolev
inequality~\cite{Ledoux2001}.  Taking $u=n^{-1/2}$ and applying a union
bound over fewer than $2^n$ sets proves
Eq.~\eqref{eq:charge-haar-worst-pattern}, after absorbing the mean bound
Eq.~\eqref{eq:charge-haar-uniform-mean} into $C_\eta$.
\end{proof}

\begin{corollary}[Pattern-conditioned decoder existence]
\label{cor:decoder-existence}
For every fixed code and erasure pattern $E$, there is a recovery channel
$\mathcal R_E$ on the retained subsystem such that
\begin{equation}
\frac12\|\mathcal R_E\mathcal N_E-\operatorname{id}_L\|_\diamond
\leq2\sqrt{D_E}.
\label{eq:decoder-existence-diamond}
\end{equation}
For any flagged-pattern distribution,
\begin{equation}
\mathbb E_E\frac12
\|\mathcal R_E\mathcal N_E-\operatorname{id}_L\|_\diamond
\leq2\sqrt{\mathbb E_ED_E}.
\label{eq:decoder-existence-average}
\end{equation}
Consequently Theorem~\ref{thm:charge-haar-worst-pattern} supplies, with
the same high probability, pattern-dependent decoders having simultaneous
logical half-diamond error $O_\eta(n^{-1/4})$ below one-half erasure.
These are existence statements and do not give an efficient or local
decoder.
\end{corollary}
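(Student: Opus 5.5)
The statement to prove is Corollary~\ref{cor:decoder-existence}. My plan is to combine the standard decoupling-to-recovery step (Uhlmann's theorem) with the Fuchs--van de Graaf inequality for a fixed pattern, and then average using concavity of the square root.

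\emph{Step 1 (fixed pattern, fidelity form).} Fix \(E\) and purify the encoded Choi state as \(\left\lvert\Psi\right\rangle{}_{REQ}\). Set \(\epsilon=D_E=\frac12\|\rho_{RE}-\rho_R\otimes\rho_E\|_1\). Fuchs--van de Graaf gives
\[
F(\rho_{RE},\rho_R\otimes\rho_E)\geq1-\epsilon,
\]
where \(F\) is the root fidelity. A purification of \(\rho_R\otimes\rho_E\) is \(\Phi_{RL'}\otimes\left\lvert\phi\right\rangle{}_{EE'}\) with \(\rho_R=\mathbb I/2\). By Uhlmann's theorem, some isometry \(W:Q\to L'E'\) satisfies
\[
\bigl\lvert\left\langle\Phi_{RL'}\otimes\phi_{EE'}\right\rvert(\mathbb I\otimes W)\left\lvert\Psi\right\rangle{}\bigr\rvert\geq1-\epsilon.
\]
Define \(\mathcal R_E(\omega)=\operatorname{Tr}_{E'}[W\omega W^\dagger]\). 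Monotonicity of fidelity under \(\operatorname{Tr}_{EE'}\) then gives \(F\bigl((\mathrm{id}\otimes\mathcal R_E\mathcal N_E)(\Phi_{RL}),\Phi_{RL'}\bigr)\geq1-\epsilon\).

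\emph{Step 2 (entanglement fidelity to diamond norm).} The lower Fuchs--van de Graaf direction gives a Choi-state trace distance of at most \(\sqrt{1-(1-\epsilon)^2}\leq\sqrt{2\epsilon}\). This is the main obstacle, because the Choi trace distance does not by itself control the diamond norm: a naive dimension factor of \(d_L=2\) would appear. To avoid it, I would use the known fact that if a channel \(\mathcal C=\mathcal R_E\mathcal N_E\) has entanglement fidelity \(\geq1-\epsilon\) with respect to the maximally entangled input, then every pure input \(\left\lvert\xi\right\rangle{}_{RL}\) satisfies \(\langle\xi|(\mathrm{id}\otimes\mathcal C)(\xi)|\xi\rangle\geq1-2\epsilon\) up to a qubit constant. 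Concretely, write \(\xi=(\mathbb I\otimes\sqrt{2}\,\sigma^{1/2})\Phi\) and take the Kraus representation of \(\mathcal C\). The cleaner route is to bound the purified distance through the Choi state and pay at most a factor \(\sqrt2\); one checks that \(\sqrt2\cdot\sqrt{2\epsilon}=2\sqrt\epsilon\), which matches Eq.~\eqref{eq:decoder-existence-diamond}. Alternatively, one can keep Uhlmann's construction for each input \(\xi\) via the complementary-channel form. Because decoupling is controlled on the maximally mixed reference, the deviation for any \(\xi\) is at most \(d_L\) times the Choi deviation in the operator sense. Taking the square root then yields the \(\sqrt{2}\) factor. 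Either way, the target is a constant \(2\) in front of \(\sqrt{D_E}\).

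\emph{Step 3 (averaging and consequences).} Apply Eq.~\eqref{eq:decoder-existence-diamond} pattern by pattern, choosing \(\mathcal R_E\) conditioned on the flag. Since \(x\mapsto\sqrt x\) is concave, Jensen gives \(\mathbb E_E2\sqrt{D_E}\leq2\sqrt{\mathbb E_ED_E}\), which is Eq.~\eqref{eq:decoder-existence-average}. On the high-probability event of Theorem~\ref{thm:charge-haar-worst-pattern}, every \(|E|\leq(1/2-\eta)n\) satisfies \(D_E\leq C_\eta n^{-1/2}\). Inserting this into Eq.~\eqref{eq:decoder-existence-diamond} gives a simultaneous bound of \(2\sqrt{C_\eta}\,n^{-1/4}\). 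The recovery maps obtained from Uhlmann's theorem are only shown to exist, so these statements are existence claims and say nothing about efficiency or locality.
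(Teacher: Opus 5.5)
Your route differs from the paper's and can be completed with exactly the constant $2$. However, the step you flag as the main obstacle is asserted rather than proved, and one of the variants you offer would fail.

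The paper never looks at the entanglement fidelity of the recovered channel. It applies the input-dimension-two Choi-to-diamond inequality to the complementary channel, $\|\widehat{\mathcal N}_E-\mathcal S_E\|_\diamond\leq4D_E$ with $\mathcal S_E(\rho)=\operatorname{Tr}(\rho)\widehat{\mathcal N}_E(\mathbb I_L/2)$. It then invokes Kretschmann--Schlingemann--Werner continuity of Stinespring dilations to get a single recovery with full diamond error at most $2\sqrt{4D_E}$. You instead apply Uhlmann directly to the Choi states, with no dimension loss. Your Step~1 is correct: for $\mathcal C=\mathcal R_E\mathcal N_E$ and $\omega=(\mathrm{id}\otimes\mathcal C)(\Phi)$ it gives $F_e=\langle\Phi|\omega|\Phi\rangle\geq(1-D_E)^2$, so $1-F_e\leq2D_E$.

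Step~2 then needs precisely the following inequality:
\[
1-\langle\xi|(\mathrm{id}\otimes\mathcal C)(\xi)|\xi\rangle\leq d_L(1-F_e)
\quad\text{for every pure }\xi_{RL}.
\]
In other words, the factor $d_L$ must be paid on the infidelity, under the square root. It has a short proof:
\begin{itemize}
\item Write $|\xi\rangle=\sqrt{d_L}(X\otimes\mathbb I)|\Phi\rangle$ with $\|X\|_2=1$, and put $\tau=X^\dagger X$.
\item Since $\operatorname{Tr}[(\mathrm{id}\otimes\mathcal C)(\xi)]=1$, the left side equals $d_L\operatorname{Tr}(Y\omega)$ with $Y=\tau\otimes\mathbb I-d_L(\tau\otimes\mathbb I)\Phi(\tau\otimes\mathbb I)$.
\item One checks $Y|\Phi\rangle=0$ and $0\preceq Y\preceq\|\tau\|_\infty\mathbb I\preceq\mathbb I$, hence $Y\preceq\mathbb I-\Phi$.
\end{itemize}
Restricting the diamond norm to pure inputs and using $\frac12\|\rho-\xi\|_1\leq\sqrt{1-\langle\xi|\rho|\xi\rangle}$ gives $\frac12\|\mathcal C-\mathrm{id}\|_\diamond\leq\sqrt{2\cdot2D_E}=2\sqrt{D_E}$.

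Two of your alternatives do not work:
\begin{itemize}
\item Reading ``$d_L$ times the Choi deviation'' in trace norm (the usual Choi-to-diamond bound after Fuchs--van de Graaf) yields only $2\sqrt{2D_E}$.
\item Running Uhlmann separately for each input $\xi$ produces input-dependent recoveries. These bound no diamond norm of a single channel.
\end{itemize}

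Once the lemma is written out, your argument is elementary and self-contained and exhibits the recovery explicitly. The paper's proof is two lines but uses the KSW theorem as a black box. Step~3 is fine.
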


\begin{proof}
Let $\widehat{\mathcal N}_E:L\to E$ be the complementary erased-system
channel and let
\begin{equation}
\mathcal S_E(\rho)=\operatorname{Tr}(\rho)\,
\widehat{\mathcal N}_E(\mathbb I_L/2).
\end{equation}
With normalized Choi states, the definition of $D_E$ and the
input-dimension-two Choi-to-diamond inequality give
\begin{equation}
\|\widehat{\mathcal N}_E-\mathcal S_E\|_\diamond\leq4D_E.
\end{equation}
Continuity of Stinespring dilations then supplies a recovery whose full
diamond error is at most twice the square root of this complementary-channel
error~\cite{Kretschmann2008}, proving
Eq.~\eqref{eq:decoder-existence-diamond}.  Jensen's inequality proves
Eq.~\eqref{eq:decoder-existence-average}.
\end{proof}

\begin{corollary}[A covariant Petz recovery map]
\label{cor:explicit-petz-decoder}
Fix a \(U(1)\)-covariant logical-qubit encoder and a flagged erasure pattern
\(E\), and let \(\mathcal N_E:L\to Q\) be the retained channel.  Define
\(\tau_E=\mathcal N_E(\mathbb I_L/2)\) and
\(\Pi_E=\operatorname{supp}\tau_E\), with the inverse below understood as
the Moore--Penrose inverse on \(\Pi_E\).  The extended transpose/Petz map
\begin{equation}
\mathcal R_E^{\rm P}(X)
=\frac12\mathcal N_E^\dagger
\!\left(\tau_E^{-1/2}X\tau_E^{-1/2}\right)
+\operatorname{Tr}[(\mathbb I_Q-\Pi_E)X]\frac{\mathbb I_L}{2}
\label{eq:extended-petz-decoder}
\end{equation}
is CPTP and \(U(1)\)-covariant.  If
\(\mathcal N_E(X)=\sum_\alpha K_\alpha X K_\alpha^\dagger\), the first
term has Kraus operators
\(L_\alpha=2^{-1/2}K_\alpha^\dagger\tau_E^{-1/2}\); the second term
supplies Kraus operators on \(\ker\tau_E\).  The construction does not, by
itself, provide a recovery-error guarantee or a polynomial-time procedure
for the pattern-specific inverse square root.
\end{corollary}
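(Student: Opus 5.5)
The statement to prove is Corollary~\ref{cor:explicit-petz-decoder}. I will verify in turn that the map is completely positive, that it is trace preserving, and that it is covariant. The Kraus representation then follows from the same computations.

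\emph{Complete positivity and Kraus form.} The first term is a composition of completely positive maps. These are the congruence \(X\mapsto\tau_E^{-1/2}X\tau_E^{-1/2}\), the adjoint channel \(\mathcal N_E^\dagger(Y)=\sum_\alpha K_\alpha^\dagger YK_\alpha\), and the factor \(1/2\). Substituting the Kraus form of \(\mathcal N_E^\dagger\) gives exactly \(L_\alpha=2^{-1/2}K_\alpha^\dagger\tau_E^{-1/2}\). The second term is \(X\mapsto\sum_{j,m}\frac{1}{2}|m\rangle\langle f_j|X|f_j\rangle\langle m|\), where \(\{|f_j\rangle\}\) is an orthonormal basis of \(\ker\tau_E\) and \(\{|m\rangle\}\) is a logical basis. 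It therefore has Kraus operators \(2^{-1/2}|m\rangle\langle f_j|\) and is completely positive.

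\emph{Trace preservation.} I will check \(\sum L_\alpha^\dagger L_\alpha+\sum_{j,m}\tfrac12|f_j\rangle\langle m|m\rangle\langle f_j|=\mathbb I_Q\). The second sum equals \(\mathbb I_Q-\Pi_E\). For the first, duality gives
\[
\sum_\alpha L_\alpha^\dagger L_\alpha=\tfrac12\tau_E^{-1/2}\Big(\sum_\alpha K_\alpha K_\alpha^\dagger\Big)\tau_E^{-1/2}=\tau_E^{-1/2}\mathcal N_E(\mathbb I_L/2)\tau_E^{-1/2}=\tau_E^{-1/2}\tau_E\tau_E^{-1/2}=\Pi_E,
\]
using the Moore--Penrose convention. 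The two pieces sum to \(\mathbb I_Q\). This is the only place the pseudo-inverse must be handled, and it is where the kernel term is essential.

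\emph{Covariance.} Strict covariance (Eq.~\eqref{eq:covariant-encoding}) together with additivity of \(\hat N=\hat N_E+\hat N_Q\) shows that \(\mathcal N_E\) is covariant. Writing \(U_\alpha=e^{-i\alpha\hat n_L}\) and \(W_\alpha=e^{-i\alpha\hat N_Q}\), we have \(\mathcal N_E(U_\alpha XU_\alpha^\dagger)=W_\alpha\mathcal N_E(X)W_\alpha^\dagger\). The background phase \(e^{-i\alpha Q_0}\) cancels in the conjugation, and the \(e^{-i\alpha\hat N_E}\) factor drops out under \(\operatorname{Tr}_E\) by cyclicity. Taking adjoints gives \(\mathcal N_E^\dagger(W_\alpha YW_\alpha^\dagger)=U_\alpha\mathcal N_E^\dagger(Y)U_\alpha^\dagger\).

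Because \(\mathbb I_L\) is invariant, \(\tau_E\) commutes with \(W_\alpha\). Hence \(\tau_E^{-1/2}\) and \(\Pi_E\) commute with \(W_\alpha\), being functions of \(\tau_E\) via functional calculus. The first term is therefore covariant. The second term is invariant on both sides: \(\operatorname{Tr}[(\mathbb I-\Pi_E)W_\alpha XW_\alpha^\dagger]=\operatorname{Tr}[(\mathbb I-\Pi_E)X]\), and \(\mathbb I_L/2\) is fixed by \(U_\alpha\). This gives \(\mathcal R_E^{\rm P}(W_\alpha XW_\alpha^\dagger)=U_\alpha\mathcal R_E^{\rm P}(X)U_\alpha^\dagger\).

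The closing disclaimer in the statement is a scope remark and requires no proof. I expect no serious obstacle. The only delicate points are the support bookkeeping for the pseudo-inverse and the requirement that the kernel term be chosen charge-neutral so that covariance is preserved.
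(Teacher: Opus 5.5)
Your proposal is correct and follows essentially the same route as the paper: complete positivity by composition, trace preservation from $\mathcal N_E(\mathbb I_L)=2\tau_E$ yielding $\Pi_E$ plus the kernel complement $\mathbb I_Q-\Pi_E$, and covariance from the retained-system representation commuting with $\tau_E^{-1/2}$ and $\Pi_E$. The only differences are cosmetic: you check trace preservation through the Kraus completeness relation where the paper uses trace cyclicity, and you derive the covariance of $\mathcal N_E$ from the encoder condition and the additivity of $\hat N$, which the paper takes as given.
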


\begin{proof}
Both summands in Eq.~\eqref{eq:extended-petz-decoder} are completely
positive.  Since \(\mathcal N_E(\mathbb I_L)=2\tau_E\), cyclicity of the
trace gives, for every \(X\),
\begin{align}
\operatorname{Tr}\!\left[
\frac12\mathcal N_E^\dagger
(\tau_E^{-1/2}X\tau_E^{-1/2})\right]
&=\frac12\operatorname{Tr}\!\left[
\tau_E^{-1/2}X\tau_E^{-1/2}\mathcal N_E(\mathbb I_L)
\right]\nonumber\\
&=\operatorname{Tr}(\Pi_E X).
\end{align}
The second term has trace
\(\operatorname{Tr}[(\mathbb I_Q-\Pi_E)X]\), because
\(\operatorname{Tr}(\mathbb I_L/2)=1\).  Their sum is
\(\operatorname{Tr}X\), so the map is trace preserving.

Let \(U_L(\theta)\) and \(U_Q(\theta)\) be the logical and retained-system
representations.  Covariance of \(\mathcal N_E\) implies
\(U_Q(\theta)\tau_EU_Q(\theta)^\dagger=\tau_E\), hence
\(\tau_E^{-1/2}\) and \(\Pi_E\) commute with \(U_Q(\theta)\), and the
adjoint channel satisfies
\begin{equation}
\mathcal N_E^\dagger
\!\left(U_Q(\theta)XU_Q(\theta)^\dagger\right)
=U_L(\theta)\mathcal N_E^\dagger(X)U_L(\theta)^\dagger.
\end{equation}
The first summand therefore intertwines the two group actions.  The second
summand does as well because its trace coefficient is invariant and
\(\mathbb I_L/2\) is fixed by \(U_L(\theta)\).  Thus
\(\mathcal R_E^{\rm P}\) is \(U(1)\)-covariant.
\end{proof}

The ensemble-averaged one-copy mean charge is explicitly diffusive, but the decoupling functional is a balanced
two-copy observable and does not close on the one-copy diffusion sector.
For Bernoulli erasures the relevant replica observable has the exact
factorization
\begin{equation}
\sum_{E\subseteq[n]}p^{|E|}(1-p)^{n-|E|}F_E
=\bigotimes_{i=1}^n\left[(1-p)\mathbb I+pF_i\right],
\label{eq:iid-swap-factorization}
\end{equation}
and the rank-weighted second-moment route replaces each local factor by
$(1-p)\mathbb I+2pF_i$ and adds an overall factor $1/2$.  Thus the missing
finite-depth estimate is a definite matrix element of the balanced two-copy
moment channel, rather than an unspecified scrambling diagnostic.

To state that matrix element without a hidden normalization, let
$\rho_{RH}^{(0)}$ be the encoded maximally entangled input before the random
circuit and define
\begin{align}
O_p&=\bigotimes_{i=1}^n[(1-p)\mathbb I+pF_i],\\
\widetilde O_p
&=\bigotimes_{i=1}^n[(1-p)\mathbb I+2pF_i].
\label{eq:iid-replica-observables}
\end{align}
If $\mathcal M_t^{(2)}$ is the $t$-cycle balanced moment channel, then the
unweighted and rank-weighted iid replica functionals are exactly
\begin{align}
\mathfrak m_{n,p}(t)
&=\operatorname{Tr}\!\Bigl[
\left(F_R-\frac12\mathbb I\right)\otimes O_p
\nonumber\\[-2pt]
&\qquad\times(\operatorname{id}_R\otimes\mathcal M_t^{(2)})
\bigl((\rho_{RH}^{(0)})^{\otimes2}\bigr)\Bigr],
\label{eq:iid-unweighted-replica-functional}\\
\mathfrak r_{n,p}(t)
&=\frac12\operatorname{Tr}\!\Bigl[
\left(F_R-\frac12\mathbb I\right)\otimes\widetilde O_p
\nonumber\\[-2pt]
&\qquad\times(\operatorname{id}_R\otimes\mathcal M_t^{(2)})
\bigl((\rho_{RH}^{(0)})^{\otimes2}\bigr)\Bigr].
\label{eq:iid-rank-weighted-replica-functional}
\end{align}
They satisfy
\begin{align}
\mathfrak m_{n,p}(t)
&=\mathbb E_{V,E}\operatorname{Tr}(X_E^2),\\
\mathfrak r_{n,p}(t)
&=\mathbb E_{V,E}\!\left[
2^{|E|-1}\operatorname{Tr}(X_E^2)\right],\\
\mathbb E_{V,E}D_E
&\leq\sqrt{\mathfrak r_{n,p}(t)}.
\label{eq:iid-rank-weighted-jensen}
\end{align}
The distinction matters.  Although $\|O_p\|_\infty=1$ for
$0\leq p\leq1/2$, one has
\begin{equation}
\|\widetilde O_p\|_\infty=(1+p)^n,\qquad
\|\widetilde O_p\|_2=2^n(1+3p^2)^{n/2}.
\label{eq:rank-weighted-observable-norms}
\end{equation}
Therefore a generic Hilbert--Schmidt moment-gap estimate pays an exponential
prefactor.  A direct scalar estimate could avoid that norm prefactor after an
initial burn-in.  Concretely, suppose there is
$t_{\rm b}(n,p)\leq B_pn^2$ such that
\begin{equation}
\begin{aligned}
\left|\mathfrak r_{n,p}(t)-\mathfrak r_{n,p}(\infty)\right|
&\leq C_p e^{-c_p[t-t_{\rm b}(n,p)]/n^2},\\
t&\geq t_{\rm b}(n,p).
\end{aligned}
\label{eq:direct-rank-weighted-target}
\end{equation}
with constants independent of $n$.  This would be sufficient for
$O(n^2\log n)$ achievability in the range where its
stationary value vanishes.  This range is strictly smaller than the true
one-half erasure region.  Equation
\eqref{eq:charge-haar-extensive-second-moment}, the exact finite-size moment
formula, and binomial concentration give
\begin{equation}
\mathfrak r_{n,p}(\infty)
=
\frac{p}{4(1-p^2)^{3/2}}\,n^{-1}+o(n^{-1}),
\qquad 0<p<1/3.
\label{eq:iid-rank-weighted-stationary}
\end{equation}
Taking
$t=t_{\rm b}(n,p)+(c_p^{-1})n^2[\log n+O_p(1)]$ in
Eq.~\eqref{eq:direct-rank-weighted-target} and using
Eq.~\eqref{eq:iid-rank-weighted-jensen} gives
$\mathbb E D_E=O_p(n^{-1/2})$ for $p<1/3$.
At and above one third the rank-weighted stationary functional instead
obeys
\begin{align}
\mathfrak r_{n,1/3}(\infty)
&=\frac{9}{16\sqrt2}+o(1),\\
\mathfrak r_{n,p}(\infty)
&=\frac{3(1+3p)}{16\sqrt{2p(1+p)}}
\left(\frac{1+3p}{2}\right)^n[1+o(1)],\\[-2pt]
&\hspace{28mm}\frac13<p<\frac12.
\label{eq:rank-weighted-method-threshold}
\end{align}
Thus the untruncated trace/HS replica route has a sharp
methodological threshold at $p=1/3$, even though the true charge-Haar trace
distance vanishes up to $p<1/2$.  The interval $1/3\leq p<1/2$ requires a
different functional: either a direct trace-norm argument, or the
typical-set-truncated replica scalar introduced below.  Merely faster mixing
of the untruncated $\mathfrak r_{n,p}$ cannot suffice.

Here is a proof of both sides of this methodological threshold.  Let
$K\sim{\rm Bin}(n,p)$.  Positivity of $B_{n,k}$ and the elementary bounds
\begin{align}
\sum_a e_ah_r(a)^2&\leq 2^{n-k}d_r,&
\sum_a e_a^2h_r(a)&\leq2^kd_r,\\
\sum_a e_ae_{a+1}h_q(a)&\leq2^kd_q
\end{align}
in Eqs.~\eqref{eq:charge-Haar-sector-purity}--
\eqref{eq:charge-Haar-cross-coherence}, together with
$\binom n{(n-1)/2}\geq c2^n/\sqrt n$, imply uniformly in $k$
\begin{equation}
2^{k-1}m_{n,k}^{\rm ch}
\leq C\sqrt n\bigl(1+2^{2k-n}\bigr).
\label{eq:rank-weighted-uniform-upper}
\end{equation}
For $p<1/3$, choose $0<a<p<b<1/2$.  On $an\leq K\leq bn$,
Eq.~\eqref{eq:charge-haar-extensive-second-moment} holds uniformly and
its coefficient is continuous.  The contribution of the two binomial
tails in the first term of Eq.~\eqref{eq:rank-weighted-uniform-upper} is
exponentially small, while the second term is controlled by the exact
identity
\begin{equation}
\mathbb E\,2^{2K-n}
=\left(\frac{1+3p}{2}\right)^n.
\label{eq:rank-weighted-binomial-identity}
\end{equation}
This proves Eq.~\eqref{eq:iid-rank-weighted-stationary}, including its
constant.

For the converse, first note the deterministic rank bounds
\begin{equation}
\operatorname{rank}\rho_{RE}\leq2^{n-k},\qquad
\operatorname{rank}(\rho_R\otimes\rho_E)\leq4\,2^{n-k},
\end{equation}
and hence $\operatorname{rank}X_E\leq5\,2^{n-k}$.  Therefore
\begin{equation}
m_{n,k}^{\rm ch}
\geq\frac{4(\mathbb E_{\rm ch}D_E)^2}{5\,2^{n-k}}.
\label{eq:rank-weighted-rank-lower}
\end{equation}
For $k/n\geq1/2+\eta$, apply
Eq.~\eqref{eq:charge-haar-uniform-mean} to the complementary subsystem and
then Eq.~\eqref{eq:complement-decoupling-lower-bound}.  This gives uniformly
$\mathbb E_{\rm ch}D_E\geq3/4-O_\eta(n^{-1/4})$, and hence makes the
right-hand side of Eq.~\eqref{eq:rank-weighted-rank-lower} at least
$c_\eta2^{k-n}$ for all sufficiently large $n$.  Consequently
\begin{align}
\mathfrak r_{n,p}(\infty)
&\geq c_\eta
\sum_{k\geq(1/2+\eta)n}
\binom nkp^k(1-p)^{n-k}2^{2k-n}\\
&=c_\eta\left(\frac{1+3p}{2}\right)^n
\Pr\!\left\{
K'\geq(1/2+\eta)n
\right\},
\label{eq:rank-weighted-tilted-lower}
\end{align}
where $K'\sim{\rm Bin}(n,4p/(1+3p))$.  For every $p\geq1/3$ one may
choose a fixed $\eta>0$ below
$4p/(1+3p)-1/2$, so the last probability tends to one.  This proves
the nonvanishing and exponential lower bounds in
Eq.~\eqref{eq:rank-weighted-method-threshold}.

The sharp constants follow from a second saddle, on the opposite side of
half erasure.  Put $M=n-k$, $d=\binom n{(n-1)/2}$, and abbreviate
\begin{align}
U_r&=\sum_a e_ah_r(a)^2,&
V_r&=\sum_a e_a^2h_r(a),\\
W&=\sum_a e_ah_q(a)h_{q+1}(a),&
Z&=\sum_a e_ae_{a+1}h_q(a).
\end{align}
The exact finite-size formula is
\begin{equation}
m_{n,k}^{\rm ch}
=\frac{U_q+U_{q+1}+V_q+V_{q+1}}{8d(d+1)}
-\frac{W}{4d^2}+\frac{Z}{2d^2}.
\label{eq:supercritical-moment-decomposition}
\end{equation}
If $k/n\to\alpha\in(1/2,1)$, then
$U_r,W\leq2^Md$ and $d\asymp2^n/\sqrt n$, so their contribution to
$2^{M-1}m_{n,k}^{\rm ch}$ is
$O(\sqrt n\,2^{-(2\alpha-1)n})$.  The replacement of $d(d+1)$ by $d^2$
costs $O(n2^{-n})$.  With
$S_0=\sum_a h_q(a)e_a^2$ and
$h_{q+1}(a)=h_q(a-1)$, shifting the index in $V_{q+1}$ gives
\begin{equation}
\begin{aligned}
V_q&=S_0,\\
V_{q+1}&=\sum_a h_q(a)e_{a+1}^2,\\
Z&=\sum_a h_q(a)e_ae_{a+1}.
\end{aligned}
\end{equation}
All three sums concentrate on
$a=k/2+O(\sqrt n\log n)$, where
$e_{a+1}/e_a=(k-a)/(a+1)=1+o(1)$ uniformly.  Chernoff bounds control the
complementary window.  Uniformly for $\alpha$ in compact subsets of
$(1/2,1)$, local Stirling expansion and the resulting lattice Gaussian sum
give
\begin{equation}
\frac{2^MS_0}{d^2}
=\frac{n}{\sqrt{k(2n-k)}}[1+o(1)].
\label{eq:supercritical-page-saddle}
\end{equation}
Indeed, before division by $d^2$ the common saddle is
\begin{equation}
S_0=
2^{2k+M}\frac{2}{\pi\sqrt{k(2M+k)}}[1+o(1)],
\end{equation}
whereas $d^2=2^{2n}[2/(\pi n)][1+o(1)]$.
Equations~\eqref{eq:supercritical-moment-decomposition} and
\eqref{eq:supercritical-page-saddle} therefore prove
\begin{equation}
\frac{2^{k-1}m_{n,k}^{\rm ch}}{2^{2k-n}}
=2^{n-k-1}m_{n,k}^{\rm ch}
\longrightarrow
g(\alpha):=\frac{3}{8\sqrt{\alpha(2-\alpha)}}.
\label{eq:supercritical-rank-weighted-law}
\end{equation}
The factor $3/8$ is the sum of the limiting $1:1:4$ coefficients of
$V_q,V_{q+1},4Z$.

Finally, for
\begin{equation}
a_p=\frac{1+3p}{2},\qquad
\beta_p=\frac{4p}{1+3p},
\end{equation}
there is the exact change of measure
\begin{equation}
\binom nkp^k(1-p)^{n-k}2^{2k-n}
=a_p^n\binom nk\beta_p^k(1-\beta_p)^{n-k}.
\label{eq:rank-weighted-exact-tilt}
\end{equation}
At $p=1/3$, the original subcritical contribution is $O(n^{-1})$, while
Eq.~\eqref{eq:rank-weighted-exact-tilt} turns the supercritical contribution
into the expectation of the left-hand side of
Eq.~\eqref{eq:supercritical-rank-weighted-law} under
${\rm Bin}(n,2/3)$.  It converges to
$g(2/3)=9/(16\sqrt2)$.  For fixed $1/3<p<1/2$, the same argument under
${\rm Bin}(n,\beta_p)$ gives
$a_p^{-n}\mathfrak r_{n,p}(\infty)\to g(\beta_p)$; simplifying
$g(\beta_p)$ yields the second line of
Eq.~\eqref{eq:rank-weighted-method-threshold}.  The uniform bounds above
make all complementary binomial tails negligible.  No spectral or
numerical assumption enters any part of this stationary analysis.

The threshold does not rule out every two-copy proof of the physical iid
problem.  It identifies precisely where the rare upper tail must be removed
before the trace/HS conversion.

\begin{lemma}[Typical-set repair of the rank-weighted replica method]
\label{lem:truncated-rank-weighted-replica}
Fix $0<p<\beta<1/2$, let $K=|E|$, and define
\begin{equation}
\mathfrak r_{n,p}^{\leq\beta}(t)
=\mathbb E_{V,E}\!\left[
2^{K-1}\operatorname{Tr}(X_E^2)
\mathbf 1_{\{K\leq\beta n\}}
\right].
\label{eq:truncated-rank-weighted-functional}
\end{equation}
It is still a linear balanced two-copy functional, with physical replica
observable
\begin{equation}
\widetilde O_{p,\beta}^{\leq}
=\sum_{|E|\leq\beta n}
(2p)^{|E|}(1-p)^{n-|E|}F_E.
\label{eq:truncated-rank-weighted-observable}
\end{equation}
Then, at every depth,
\begin{equation}
\mathbb E_{V,E}D_E
\leq
\sqrt{\mathfrak r_{n,p}^{\leq\beta}(t)}
+\Pr\{K>\beta n\},
\label{eq:truncated-rank-weighted-decoupling}
\end{equation}
and its exact charge-Haar stationary asymptotic is
\begin{equation}
\mathfrak r_{n,p}^{\leq\beta}(\infty)
=\frac{p}{4(1-p^2)^{3/2}}n^{-1}+o(n^{-1}).
\label{eq:truncated-rank-weighted-stationary}
\end{equation}
Consequently, if there is
$t_{\rm b}(n,p,\beta)\leq B_{p,\beta}n^2$ such that the post-burn-in scalar
estimate
\begin{equation}
\begin{aligned}
\left|
\mathfrak r_{n,p}^{\leq\beta}(t)
-\mathfrak r_{n,p}^{\leq\beta}(\infty)
\right|
&\leq C_{p,\beta}
e^{-c_{p,\beta}[t-t_{\rm b}(n,p,\beta)]/n^2},\\
t&\geq t_{\rm b}(n,p,\beta).
\end{aligned}
\label{eq:direct-truncated-rank-target}
\end{equation}
would prove physical iid-loss achievability at depth $O_{p,\beta}(n^2\log n)$
for every $p<1/2$.  Equation~\eqref{eq:direct-truncated-rank-target} remains
an open finite-depth mixing estimate.
\end{lemma}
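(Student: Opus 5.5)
We treat the four claims of Lemma~\ref{lem:truncated-rank-weighted-replica} in turn: the replica observable, the truncated decoupling bound, the stationary asymptotic, and the conditional achievability statement.

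\emph{The replica observable.} The functional \(\mathfrak r_{n,p}^{\leq\beta}(t)\) is a finite sum over patterns \(E\) with \(|E|\leq\beta n\). Each term carries the weight \(\pi_p(E)2^{|E|-1}\operatorname{Tr}(X_E^2)\). By Eq.~\eqref{eq:finite-depth-swap-observable}, each \(\operatorname{Tr}(X_E^2)\) is the expectation of \((F_R-\frac12\mathbb I)\otimes F_E\) in the moment-evolved two-copy state. Linearity then shows that the functional is balanced two-copy. Absorbing \(p^{|E|}(1-p)^{n-|E|}2^{|E|}\) into the coefficient gives the observable Eq.~\eqref{eq:truncated-rank-weighted-observable}, with the overall factor \(\frac12\) placed outside as in Eq.~\eqref{eq:iid-rank-weighted-replica-functional}.

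\emph{The decoupling bound.} We split \(\mathbb E_{V,E}D_E\) on the events \(\{K\leq\beta n\}\) and \(\{K>\beta n\}\).
\begin{itemize}
\item On the upper event we use \(D_E\leq\frac34\leq1\) from Eq.~\eqref{eq:qubit-decoupling-maximum}, which contributes at most \(\Pr\{K>\beta n\}\).
\item On the lower event we use the pointwise bound \(D_E\leq\sqrt{2^{K-1}\operatorname{Tr}(X_E^2)}\) from Eq.~\eqref{eq:finite-depth-trace-HS}. Jensen's inequality (Cauchy--Schwarz) applied to \(\mathbb E[\sqrt{Y}\mathbf 1]\leq\sqrt{\mathbb E[Y\mathbf 1]}\) then yields Eq.~\eqref{eq:truncated-rank-weighted-decoupling}.
\end{itemize}

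\emph{The stationary asymptotic.} The charge-Haar value is \(\sum_{k\leq\beta n}\binom nk p^k(1-p)^{n-k}\,2^{k-1}m_{n,k}^{\rm ch}\). We split the sum into a central window \(an\leq k\leq bn\), with \(a<p<b<\beta\), and its complement.
\begin{itemize}
\item In the central window we use Eq.~\eqref{eq:charge-haar-extensive-second-moment}, which holds uniformly and has the continuous coefficient \(\alpha/[4(1-\alpha^2)^{3/2}]\). Dominated convergence with \(K/n\to p\) gives the leading term.
\item For the remaining \(k\leq\beta n\) we invoke the uniform bound Eq.~\eqref{eq:rank-weighted-uniform-upper}. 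Its first term is \(C\sqrt n\) times an exponentially small Chernoff tail.
\item The dangerous second term is \(2^{2k-n}\). The truncation is exactly what controls it: \(2^{2k-n}\leq2^{-(1-2\beta)n}\) on \(k\leq\beta n\), so it is exponentially small. This step replaces the tilted-measure blow-up of Eq.~\eqref{eq:rank-weighted-binomial-identity} and is the main point of the lemma, since it removes the \(p=1/3\) threshold.
\end{itemize}
As a consistency check, the centered part is already \(\le\frac{n+1}2 2^{-(1-2\beta)n}\) by Eq.~\eqref{eq:charge-haar-stationary-centered-covariance}. The result is Eq.~\eqref{eq:truncated-rank-weighted-stationary}.

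\emph{Conditional achievability.} Assume the hypothesized estimate Eq.~\eqref{eq:direct-truncated-rank-target} and choose \(t=t_{\rm b}+c_{p,\beta}^{-1}n^2\log n\). Then the deviation of \(\mathfrak r_{n,p}^{\leq\beta}(t)\) from its stationary value is \(O(n^{-1})\), so \(\mathfrak r_{n,p}^{\leq\beta}(t)=O(n^{-1})\). By Eq.~\eqref{eq:truncated-rank-weighted-decoupling} together with the Chernoff bound \(\Pr\{K>\beta n\}\leq e^{-nI(\beta\|p)}\), this gives \(\mathbb E D_E=O(n^{-1/2})\) at depth \(O(n^2\log n)\).

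The only delicate point is uniformity of the central-window asymptotic near \(\beta\). We avoid it entirely by choosing \(b<\beta\), since \(\beta\) is fixed and strictly above \(p\).
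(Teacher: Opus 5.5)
Your proposal is correct and follows essentially the paper's own argument. You use the trace/HS conversion plus Jensen on $\{K\le\beta n\}$ and $D_E\le1$ on the tail; the uniform extensive second-moment asymptotic on a central window, with the uniform bound \eqref{eq:rank-weighted-uniform-upper} and Chernoff tails elsewhere; and then substitute $t=t_{\rm b}+c_{p,\beta}^{-1}n^2\log n$ into the decoupling bound. Your choice of $b<\beta$ is harmless but unnecessary: $[a,\beta]$ is already a compact subset of $(0,1/2)$, and the paper applies the uniform asymptotic all the way up to $\beta$.
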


\begin{proof}
On $K\leq\beta n$, apply
Eq.~\eqref{eq:finite-depth-trace-HS} and Cauchy--Schwarz; on the complement
use $D_E\leq1$.  This gives
Eq.~\eqref{eq:truncated-rank-weighted-decoupling}, while the Chernoff bound
makes its last term exponentially small.

For the stationary value choose $0<a<p$.  Uniformly on
$an\leq k\leq\beta n$,
Eq.~\eqref{eq:charge-haar-extensive-second-moment} gives
\begin{equation}
2^{k-1}m_{n,k}^{\rm ch}
=\frac{k/n}{4[1-(k/n)^2]^{3/2}}n^{-1}+o(n^{-1}).
\end{equation}
The coefficient is continuous on $[a,\beta]$, and $K/n\to p$ in
probability.  For $k<an$, Eq.~\eqref{eq:rank-weighted-uniform-upper} is
$O(\sqrt n)$ because $a<1/2$, whereas the binomial lower tail is
exponentially small.  This proves
Eq.~\eqref{eq:truncated-rank-weighted-stationary}.  Finally, inserting
Eq.~\eqref{eq:direct-truncated-rank-target} with
$t=t_{\rm b}(n,p,\beta)
+c_{p,\beta}^{-1}n^2[\log n+O_{p,\beta}(1)]$ into
Eq.~\eqref{eq:truncated-rank-weighted-decoupling} proves the stated
conditional implication.
\end{proof}

\begin{lemma}[Exact physical bilinear spectrum]
\label{lem:exact-physical-bilinear-spectrum}
Let the two initial codewords span $\mathcal C$, let $\Pi_{\mathcal C}$ be
its projector, and let $F_{\mathcal C}$ swap the two replicas on
$\mathcal C^{\otimes2}$.  Contracting the normalized logical Choi state with
$F_R-\mathbb I/2$ gives exactly
\begin{equation}
K_{\mathcal C}
=\frac14\left(F_{\mathcal C}
-\frac12\Pi_{\mathcal C}^{\otimes2}\right),
\qquad
\|K_{\mathcal C}\|_2^2=\frac3{16}.
\label{eq:physical-code-endpoint}
\end{equation}
Let $B_{\rm e}$ be an isometry onto $\operatorname{Ran}P_{\rm e}$, put
\begin{equation}
C=B_{\rm e}^\dagger P_{\rm o}B_{\rm e},
\quad
k=B_{\rm e}^\dagger K_{\mathcal C},
\quad
v=B_{\rm e}^\dagger P_{\rm o}
\widetilde O_{p,\beta}^{\leq},
\label{eq:physical-bilinear-endpoints}
\end{equation}
and let $Q$ project onto the unit eigenspace of $C$.  For the physical cycle
$T=P_{\rm o}P_{\rm e}$ and every $t\geq1$,
\begin{equation}
\boxed{
\mathfrak r_{n,p}^{\leq\beta}(t)
-\mathfrak r_{n,p}^{\leq\beta}(\infty)
=\frac12\langle (I-Q)k,
C^{t-1}(I-Q)v\rangle .}
\label{eq:exact-physical-bilinear-transient}
\end{equation}
Consequently its spectral measure is the generally signed atomic measure
\begin{equation}
\nu_n=\frac12\sum_{\mu<1}
\langle (I-Q)k,E_\mu(I-Q)v\rangle\,\delta_\mu,
\label{eq:signed-task-spectral-measure}
\end{equation}
where $E_\mu$ is the complete projector onto the possibly degenerate
$\mu$ eigenspace.  In particular, an eigenvalue or a right-Krylov overlap
alone does not determine the physical decay rate.
\end{lemma}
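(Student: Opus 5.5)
We organize the proof as three steps: compute the reference-contracted endpoint, rewrite the moment evolution as a power of a self-adjoint compression, and separate off the fixed-space contribution.

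\emph{Step 1: the endpoint $K_{\mathcal C}$.} Write the normalized encoded Choi state as $\rho^{(0)}_{RH}=\frac12\sum_{i,j}|i\rangle\langle j|_R\otimes|\psi_i\rangle\langle\psi_j|$. Then
\[
\operatorname{Tr}_{R_1R_2}\bigl[(F_R-\tfrac12\mathbb I)(\rho^{(0)}_{RH})^{\otimes2}\bigr]
=\tfrac14\sum_{i,j}|\psi_i\psi_j\rangle\langle\psi_j\psi_i|
-\tfrac18\sum_{i,j}|\psi_i\psi_j\rangle\langle\psi_i\psi_j| .
\]
The first sum is $F_{\mathcal C}$ and the second is $\Pi_{\mathcal C}^{\otimes2}$. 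This gives Eq.~\eqref{eq:physical-code-endpoint}. The norm follows from $\operatorname{Tr}F_{\mathcal C}^2=4$, $\operatorname{Tr}F_{\mathcal C}\Pi^{\otimes2}=2$ and $\operatorname{Tr}\Pi^{\otimes2}=4$:
\[
\tfrac1{16}\bigl(4-2+1\bigr)=\tfrac3{16}.
\]

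\emph{Step 2: linear form and compression.} By Lemma~\ref{lem:truncated-rank-weighted-replica}, $\mathfrak r^{\leq\beta}_{n,p}(t)=\frac12\langle \widetilde O^{\leq}_{p,\beta},\mathcal M_t^{(2)}(K_{\mathcal C})\rangle_{\rm HS}$. The circuit applies the even layer first, so $\mathcal M^{(2)}_t=(P_{\rm o}P_{\rm e})^t=T^t$. All operators involved are Hermitian and the projectors are HS-self-adjoint, so real pairings can be written freely.

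Insert $P_{\rm e}=B_{\rm e}B_{\rm e}^\dagger$. This gives
\[
T^t=P_{\rm o}B_{\rm e}\,(B_{\rm e}^\dagger P_{\rm o}B_{\rm e})^{t-1}B_{\rm e}^\dagger ,
\]
which is the telescoping of the alternating product. Hence
\[
\mathfrak r(t)=\tfrac12\langle B_{\rm e}^\dagger P_{\rm o}\widetilde O,\,C^{t-1}B_{\rm e}^\dagger K_{\mathcal C}\rangle=\tfrac12\langle v,C^{t-1}k\rangle .
\]
Since $C$ is self-adjoint, this is symmetric and equals $\frac12\langle k,C^{t-1}v\rangle$.

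\emph{Step 3: removing the stationary part.} $C$ is positive with $\|C\|\le1$, so $C=Q+\sum_{\mu<1}\mu E_\mu$ with every $\mu\in[0,1)$. Therefore $\langle k,C^{t-1}v\rangle=\langle k,Qv\rangle+\langle(I-Q)k,C^{t-1}(I-Q)v\rangle$, because $Q$ commutes with $C$ and $QC=Q$.

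It remains to identify $\frac12\langle k,Qv\rangle$ with $\mathfrak r(\infty)$, and this is the delicate step. The unit eigenspace of $C$ consists of $y$ with $\|P_{\rm o}B_{\rm e}y\|=\|y\|$, that is $B_{\rm e}y\in\operatorname{Ran}P_{\rm e}\cap\operatorname{Ran}P_{\rm o}$. So $B_{\rm e}QB_{\rm e}^\dagger=P_{\rm ch}$, using the identification of the common fixed space with the charge-Haar projector $P_{\rm ch}$ in the proof of Theorem~\ref{thm:finite-depth-achievability}.

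Then, using $P_{\rm ch}P_{\rm o}=P_{\rm ch}$,
\[
\langle k,Qv\rangle=\langle K_{\mathcal C},P_{\rm ch}P_{\rm o}\widetilde O\rangle=\langle K_{\mathcal C},P_{\rm ch}\widetilde O\rangle .
\]
This is $2\,\mathfrak r(\infty)$, since $T^t\to P_{\rm ch}$ and the charge-Haar stationary value is exactly this pairing. This proves Eq.~\eqref{eq:exact-physical-bilinear-transient} for every $t\ge1$, with $C^0=I$ at $t=1$.

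\emph{Spectral measure.} Expanding $C^{t-1}$ in its spectral resolution gives $\int\mu^{t-1}d\nu_n$ with the atoms in Eq.~\eqref{eq:signed-task-spectral-measure}. The weights $\langle(I-Q)k,E_\mu(I-Q)v\rangle$ are real but can have either sign, because $k\neq v$.

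For the final remark, exhibit that two different endpoint pairs can share eigenvalues yet have different supports of $\nu_n$. Equivalently, note that the decay rate is set by the largest $\mu$ carrying nonzero weight. That weight can vanish even when the right-Krylov overlap $\langle E_\mu v,v\rangle$ does not, since the weight also depends on $k$.
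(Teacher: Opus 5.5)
Your proof of the boxed identity is correct and follows the paper's argument. You obtain $K_{\mathcal C}$ by the same contraction, and you get $T^t=P_{\rm o}B_{\rm e}C^{t-1}B_{\rm e}^\dagger$ by the same insertion of $P_{\rm e}=B_{\rm e}B_{\rm e}^\dagger$. Your description of the unit eigenspace ($Cy=y$ iff $B_{\rm e}y\in\operatorname{Ran}P_{\rm e}\cap\operatorname{Ran}P_{\rm o}$, hence $B_{\rm e}QB_{\rm e}^\dagger=P_{\rm ch}$) spells out what the paper compresses into the relation $P_{\rm ch}P_{\rm e}=P_{\rm ch}P_{\rm o}=P_{\rm ch}$.

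The gap is in the last part. Your reason that the atoms ``can have either sign, because $k\neq v$'' does not show that any atom is negative. Distinct endpoints can still give $\langle(I-Q)k,E_\mu(I-Q)v\rangle\geq0$ for every $\mu$, for example when $(I-Q)k$ is a positive multiple of $(I-Q)v$. In that case $\nu_n$ would be a positive measure, and $\mathfrak r^{\leq\beta}_{n,p}(t)-\mathfrak r^{\leq\beta}_{n,p}(\infty)$ would be nonnegative and nonincreasing in $t$; nothing in your argument excludes this.

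The paper treats ``signed'' as essential rather than cautionary, and supplies an exact witness. Take $n=4$, $\beta=2/5$, $p=1/5$ and boundary codewords $|0100\rangle,|1100\rangle$. Exact rational contraction shows that the moments $m_j=\mathfrak r(j+1)-\mathfrak r(\infty)=\int\mu^j\,d\nu_n$ satisfy $m_0m_2-m_1^2=-32/664453125<0$. Since $\sum_{i,j}c_ic_jm_{i+j}=\int(c_0+c_1\mu)^2\,d\nu_n$, a positive or a negative measure would have a semidefinite $2\times2$ Hankel matrix with nonnegative determinant. This one computation therefore shows that $\nu_n$ genuinely carries atoms of both signs.

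Your final paragraph has the same problem. It says you will exhibit endpoint pairs that share eigenvalues but have different supports, and then gives no example. To support the ``in particular'' clause you need a concrete computation of this kind, not only the remark that the weight also depends on $k$.
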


\begin{proof}
Writing $E_{ab}=|\psi_a\rangle\langle\psi_b|$, direct contraction of the two
reference replicas gives
\begin{equation}
\begin{aligned}
&\operatorname{Tr}_{R_1R_2}
\left[(F_R-\mathbb I/2)(\rho_{RH}^{(0)})^{\otimes2}\right]\\
&\qquad=\frac14\sum_{a,b}E_{ab}\otimes E_{ba}
-\frac18\sum_{a,c}E_{aa}\otimes E_{cc}.
\end{aligned}
\end{equation}
which is Eq.~\eqref{eq:physical-code-endpoint}.  Since
\begin{equation}
(P_{\rm o}P_{\rm e})^t
=P_{\rm o}(P_{\rm e}P_{\rm o}P_{\rm e})^{t-1}P_{\rm e},
\end{equation}
isometric compression gives Eq.~\eqref{eq:exact-physical-bilinear-transient};
$P_{\rm ch}P_{\rm e}=P_{\rm ch}P_{\rm o}=P_{\rm ch}$ identifies $Q$ with
the charge-Haar fixed projector.  The spectral theorem gives
Eq.~\eqref{eq:signed-task-spectral-measure}.

The word ``signed'' is essential, rather than merely cautious.  For
$n=4$, $\beta=2/5$, the boundary codewords
$|0100\rangle,|1100\rangle$, and $p=1/5$, put
$m_j=\mathfrak r(j+1)-\mathfrak r(\infty)$.  Exact rational contraction gives
\begin{equation}
\det\begin{pmatrix}m_0&m_1\\m_1&m_2\end{pmatrix}
=-\frac{32}{664453125}<0.
\label{eq:signed-measure-hankel-witness}
\end{equation}
Every positive measure has positive-semidefinite Hankel moment matrices, so
Eq.~\eqref{eq:signed-measure-hankel-witness} is an exact counterexample to
positivity.
\end{proof}

\begin{lemma}[Exact iid-generated invariant and physical cyclic modules]
\label{lem:iid-minimal-invariant-module}
For \(n\geq3\) and arbitrary coefficients \(w_0,\ldots,w_n\), put
\begin{equation}
W_w=\sum_{E\subseteq[n]}w_{|E|}F_E,
\qquad
R_w=(I-P_{\rm ch})W_w,
\label{eq:iid-general-swap-polynomial}
\end{equation}
where \(P_{\rm ch}\) is the common charge-Haar fixed projector.  The unique
smallest subspace containing \(R_w\) and invariant under both brickwork-layer
twirls is
\begin{equation}
\boxed{
\begin{aligned}
\mathcal M_w
&=\operatorname{alg}_{\mathbb C}\{I,P_{\rm e},P_{\rm o}\}R_w\\
&=\operatorname{span}\{R_w,P_{\rm e}R_w,P_{\rm o}R_w,\\[-2pt]
&\hspace{18mm}P_{\rm e}P_{\rm o}R_w,
P_{\rm o}P_{\rm e}R_w,\ldots\}.
\end{aligned}}
\label{eq:iid-minimal-word-module}
\end{equation}
The exact right cyclic module visible to the even-layer compression is
\begin{equation}
\boxed{
\begin{aligned}
\mathcal K_w^{\rm e}
&=\operatorname{span}\{z_w,C_{\rm e}z_w,C_{\rm e}^2z_w,\ldots\},\\
z_w&=P_{\rm e}P_{\rm o}R_w,\\
C_{\rm e}&=P_{\rm e}P_{\rm o}P_{\rm e}.
\end{aligned}}
\label{eq:iid-physical-cyclic-module}
\end{equation}

Writing
\(\mathcal L_{rs;tu}=\operatorname{Hom}
(\mathcal H_t\otimes\mathcal H_u,
\mathcal H_r\otimes\mathcal H_s)\), these modules have support only on
\(r+s=t+u\).  If \(F=\prod_iF_i\) is the global replica swap, their joint
left/right swap support contains only the equal-parity sectors
\((++),(--)\), not \((+-),(-+)\).  They are particle--hole even.  For even
\(n\) they are spatial-reflection even; for odd \(n\), reflection exchanges
the even- and odd-compressed conventions and gives no parity exclusion
within either one.
\end{lemma}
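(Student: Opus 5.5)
The plan has two parts: identify the smallest invariant subspace from the layer-projector structure, then read off the symmetry and support properties from the symmetries commuting with both twirls.

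\emph{Minimal invariant subspace.} Any subspace containing \(R_w\) and invariant under \(P_{\rm e}\) and \(P_{\rm o}\) is invariant under every polynomial in them. So it contains \(\operatorname{alg}_{\mathbb C}\{I,P_{\rm e},P_{\rm o}\}R_w\). This span is itself invariant under both twirls, so it is the unique smallest such subspace. Since \(P_{\rm e}^2=P_{\rm e}\) and \(P_{\rm o}^2=P_{\rm o}\), every monomial reduces to an alternating word. This gives the listed spanning set \(R_w,P_{\rm e}R_w,P_{\rm o}R_w,P_{\rm e}P_{\rm o}R_w,\dots\), which is finite because the Liouville space is finite dimensional.

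\emph{Physical cyclic module.} For the even-compressed module I would use the factorization from the proof of Lemma~\ref{lem:exact-physical-bilinear-spectrum}: \((P_{\rm o}P_{\rm e})^t=P_{\rm o}C_{\rm e}^{t-1}P_{\rm e}\). The endpoint vector \(R_w\) enters the physical pairing only through the compressed vector \(P_{\rm e}P_{\rm o}R_w=z_w\), which is where \(P_{\rm o}\) is applied to the observable. Iterating \(C_{\rm e}\) on \(z_w\) then produces exactly the Krylov span \(\mathcal K_w^{\rm e}\). Removing \(P_{\rm ch}\) does not change invariance, because \(P_{\rm ch}\) commutes with both layer projectors and is absorbed by them, as in the earlier lemma.

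\emph{Selection rules.} Each property follows from a symmetry \(S\) of the Liouville space that commutes with every two-site twirl and fixes \(W_w\), hence also \(R_w\). Such an \(S\) preserves \(\mathcal M_w\) and \(\mathcal K_w^{\rm e}\).

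\textbf{Charge.} Take the superoperator generated by \(e^{i\theta(\hat N_1+\hat N_2)}\) acting by left and right multiplication. Each replica swap \(F_i\) commutes with total charge, so \(W_w\) lies in the \(r+s=t+u\) sector. The number-conserving twirls preserve this grading.

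\textbf{Swap parity.} Use left and right multiplication by the global swap \(F\). \(F\) commutes with \(U^{\otimes2}\), and \(F F_E=F_{[n]\setminus E}\). Conjugating \(W_w\) by \(F\) therefore leaves it invariant, so no component can be odd under conjugation. The mixed sectors \((+-)\) and \((-+)\) are exactly the odd part under \(X\mapsto FXF\), so they are excluded.

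\textbf{Particle--hole.} Take the global \(X^{\otimes n}\) on each replica. It maps the gate ensemble \(1\oplus U_1\oplus e^{i\alpha}\) to itself in law, up to a phase relabeling that leaves the twirl unchanged. It fixes every \(F_E\), and it fixes \(P_{\rm ch}\).

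\textbf{Reflection.} Spatial reflection permutes \(F_E\) and preserves \(w_{|E|}\). For even \(n\) it maps even bonds to even bonds and odd bonds to odd bonds, so it commutes with each of \(P_{\rm e}\) and \(P_{\rm o}\). For odd \(n\) it exchanges the even and odd matchings. It therefore maps \(P_{\rm e}\) to \(P_{\rm o}\) and \(\mathcal K_w^{\rm e}\) to the odd-compressed module, which gives no constraint within either convention.

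\emph{Expected main obstacle.} The delicate step is the particle--hole and swap-sector claims. They require checking that the twirls genuinely commute with these superoperators. The left-only swap multiplication, as opposed to conjugation, does not obviously commute with the twirl, so I would phrase the parity statement through joint left/right action \(X\mapsto FXF\) and verify it gate by gate. For particle--hole, I would verify directly that \(X\otimes X\) conjugation maps \(1\oplus U_1\oplus e^{i\alpha}\) to \(e^{i\alpha}\oplus U_1'\oplus 1\), which has the same twirl after a global phase.
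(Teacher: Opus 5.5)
Your proposal is correct and follows essentially the same route as the paper: minimality by idempotence of the layer twirls and reduction to alternating words, the cyclic module taken as the Krylov span of the post-odd-layer seed, and each selection rule obtained from a symmetry that fixes $W_w$ and commutes with the layer and charge-Haar projectors. The obstacle you flag is not real: left and right multiplication by $F$ each commute with every twirl because $F$ commutes with $U^{\otimes2}$, which is exactly what the paper uses, and your conjugation version is equivalent since $F^2=I$. Also, each symmetry fixes $R_w$ and commutes with the generators, so it fixes every element of the modules, not merely preserves them; pointwise fixing is what the parity and charge statements actually need.
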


\begin{proof}
Both layer maps are idempotent.  Every word in them therefore reduces to an
alternating word, and the span in
Eq.~\eqref{eq:iid-minimal-word-module} is invariant under left multiplication
by either layer.  Conversely, any simultaneously invariant subspace
containing \(R_w\) must contain every displayed word, proving minimality.
Equation~\eqref{eq:iid-physical-cyclic-module} is the definition of the
cyclic subspace of \(C_{\rm e}\) generated by the exact post-odd-layer
physical seed.

Every \(F_E\) preserves total occupation across the two replicas, while each
number-conserving twirl preserves the four Liouville charges separately.
Moreover \(FW_w=W_wF\), and the layer and fixed-space projectors commute with
left and right multiplication by \(F\).  This proves the charge and
equal-swap-parity rules.  The bond Haar law and the seed are invariant under
particle--hole conjugation.  Finally, spatial reflection preserves both
matchings for even \(n\) and exchanges them for odd \(n\), which proves the
last statement.
\end{proof}

Lemma~\ref{lem:iid-minimal-invariant-module} is an exact algebraic
characterization, not a claim that the module has a uniformly small basis.
In particular, the local two-copy obstruction in
Lemma~\ref{lem:one-copy-not-two-copy} excludes the naive closure on the
identity and full-bond swap alone; no uniformly bounded decorated-wall basis
follows from Eq.~\eqref{eq:iid-minimal-word-module}.

\begin{lemma}[Absolute physical slow-window criterion]
\label{lem:absolute-slow-window-criterion}
For the grouped atoms in
Eq.~\eqref{eq:signed-task-spectral-measure}, write
\[
\begin{aligned}
a_{n,\mu}
&=\frac12\langle(I-Q)k,E_\mu(I-Q)v\rangle,\\
\mathcal W_n(x)
&=\sum_{0<1-\mu\leq x}|a_{n,\mu}|.
\end{aligned}
\]
Fix \(0<p<\beta<1/2\), and suppose that there are
\(A,\eta,a>0\), independent of \(n\), such that
\begin{equation}
\mathcal W_n(x)
\leq A(\sqrt x+n^{-1})+e^{-an},
\qquad 0<x\leq\eta/n.
\label{eq:absolute-slow-window-hypothesis}
\end{equation}
Put
\[
c_0=\frac{\sqrt3}{8},
\qquad
\kappa_p=\log\!\left(2\sqrt{1+3p^2}\right).
\]
For every fixed \(B>\kappa_p/\eta\), every
\(\tau=t-1\geq Bn^2\) obeys
\begin{align}
\left|\mathfrak r_{n,p}^{\leq\beta}(t)
-\mathfrak r_{n,p}^{\leq\beta}(\infty)\right|
&\leq
\frac{A}{\sqrt\tau}
\left(\frac{\sqrt\pi}{2}+\frac1{\sqrt{2e}}\right)
+\frac{A}{n}+2e^{-an}
\nonumber\\
&\quad
+c_0e^{-(\eta B-\kappa_p)n}.
\label{eq:absolute-slow-window-consequence}
\end{align}
Thus Eq.~\eqref{eq:absolute-slow-window-hypothesis}, together with the
stationary estimate
Eq.~\eqref{eq:truncated-rank-weighted-stationary}, would prove the
rank-weighted \(O(n^{-1})\) endpoint by \(O(n^2)\) depth.
The hypothesis remains open.
\end{lemma}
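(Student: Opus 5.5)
The plan is to start from the exact signed spectral representation of Lemma~\ref{lem:exact-physical-bilinear-spectrum}. With \(\tau=t-1\),
\[
\mathfrak r_{n,p}^{\leq\beta}(t)-\mathfrak r_{n,p}^{\leq\beta}(\infty)=\sum_{\mu<1}a_{n,\mu}\mu^{\tau}.
\]
Every nonunit eigenvalue \(\mu\) of the positive semidefinite contraction \(C\) lies in \([0,1)\). I would split the sum at \(1-\mu=\eta/n\) into a slow window and a fast remainder, and bound the two parts separately.

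\textbf{Slow window.} For the part with \(0<1-\mu\leq\eta/n\), I would use \(\mu^\tau\leq e^{-\tau(1-\mu)}\) together with \(|a_{n,\mu}|\) and write the sum as a Stieltjes integral against the monotone function \(\mathcal W_n\). Integration by parts gives
\[
\sum_{0<1-\mu\leq\eta/n}|a_{n,\mu}|\mu^\tau\leq e^{-\tau\eta/n}\mathcal W_n(\eta/n)+\int_0^{\eta/n}\tau e^{-\tau x}\mathcal W_n(x)\,dx .
\]
Substituting Eq.~\eqref{eq:absolute-slow-window-hypothesis} into the integral produces three contributions:
\begin{itemize}
\item the \(\sqrt x\) term gives \(A\int_0^\infty\tau\sqrt x\,e^{-\tau x}dx=A\Gamma(3/2)/\sqrt\tau=A\sqrt\pi/(2\sqrt\tau)\);
\item the \(n^{-1}\) term gives at most \(A/n\);
\item the \(e^{-an}\) term gives at most \(e^{-an}\).
\end{itemize}
The boundary term is \(e^{-\tau\eta/n}A(\sqrt{\eta/n}+n^{-1})+e^{-\tau\eta/n}e^{-an}\). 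I would bound \(\sup_x\sqrt x\,e^{-\tau x}=1/\sqrt{2e\tau}\) to get the \(A/\sqrt{2e\tau}\) piece. The exact split of the \(A/n\) and \(e^{-an}\) terms between the boundary and the integral may need care to match the stated constants \(A/n\) and \(2e^{-an}\). If necessary I would instead bound \(\mathcal W_n\) at the endpoint through a single sup argument: \(\sup_{0<x\le\eta/n}e^{-\tau x}\mathcal W_n(x)\) plus the integral of the increasing part. This accounting is where I expect the constants to be fiddly, but not deep.

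\textbf{Fast remainder.} For \(1-\mu>\eta/n\) I would not use the atoms individually, because they may be signed and exponentially large. Instead I would bound the remainder as a single inner product,
\[
\tfrac12\bigl|\langle(I-Q)k,\,C^\tau\Pi_{\rm fast}(I-Q)v\rangle\bigr|\leq\tfrac12\|k\|_2\,\|v\|_2\,(1-\eta/n)^\tau .
\]
The norms come from earlier results and from contraction of the compressions:
\begin{itemize}
\item \(\|k\|_2\leq\|K_{\mathcal C}\|_2=\sqrt3/4\) by Eq.~\eqref{eq:physical-code-endpoint}, since \(B_{\rm e}^\dagger\) and \(I-Q\) are contractions;
\item \(\|v\|_2\leq\|\widetilde O^{\leq}_{p,\beta}\|_2\leq\|\widetilde O_p\|_2=2^n(1+3p^2)^{n/2}=e^{\kappa_p n}\), using Eq.~\eqref{eq:rank-weighted-observable-norms} and the fact that truncation removes orthogonal Pauli/swap components. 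The swap operators \(F_E\) are not orthogonal, so this monotonicity needs a check. I would verify it by expanding each \(F_i=\tfrac12\sum_P P\otimes P\) and noting that all coefficients are nonnegative, so dropping terms decreases the HS norm.
\end{itemize}
This gives a prefactor \(c_0=\sqrt3/8\) and the bound \(c_0e^{\kappa_p n}e^{-\eta\tau/n}\leq c_0e^{-(\eta B-\kappa_p)n}\) for \(\tau\geq Bn^2\). Summing the slow-window and fast-remainder bounds gives Eq.~\eqref{eq:absolute-slow-window-consequence}.

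\textbf{Final claim.} With \(B>\kappa_p/\eta\) and \(\tau=\Theta(n^2)\), every error term is \(O(n^{-1})\). Combined with Eq.~\eqref{eq:truncated-rank-weighted-stationary}, this gives \(\mathfrak r^{\leq\beta}_{n,p}=O(n^{-1})\) at \(O(n^2)\) depth.

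The main obstacle is the norm-monotonicity check for the truncated observable in the fast remainder; everything else is the routine Laplace-type integration above.
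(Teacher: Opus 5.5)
Your proposal is correct and follows the paper's proof: split the atoms at $1-\mu=\eta/n$, bound the slow window by Stieltjes integration by parts against $\mathcal W_n$, and bound the bulk by Cauchy--Schwarz as $c_0e^{\kappa_p n}e^{-\eta\tau/n}$. Your single-inner-product form of the bulk bound and your Pauli-coefficient domination argument for the truncated observable simply spell out steps the paper states in one line. The constant bookkeeping you flagged closes at once: for any constant $c$ one has $e^{-\tau X}c+\tau\int_0^X e^{-\tau x}c\,dx=c$, so the $n^{-1}$ and $e^{-an}$ parts contribute exactly $A/n+e^{-an}$, which is within the stated $A/n+2e^{-an}$.
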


\begin{proof}
Cauchy--Schwarz for the grouped spectral projectors and
\(\|K_{\mathcal C}\|_2=\sqrt3/4\) give
\[
\sum_{\mu<1}|a_{n,\mu}|
\leq c_0[2\sqrt{1+3p^2}]^n.
\]
Truncating the positive-coefficient swap polynomial cannot increase this
bound.  Split the atoms at \(1-\mu=\eta/n\).  The bulk contributes at most
\(c_0\exp(\kappa_p n-\eta\tau/n)\).  For the slow part,
\(\mu^\tau\leq e^{-\tau(1-\mu)}\), and Stieltjes integration by parts gives
\[
\begin{aligned}
\int_{(0,\eta/n]}e^{-\tau x}\,d\mathcal W_n(x)
&=e^{-\eta\tau/n}\mathcal W_n(\eta/n)\\
&\quad+\tau\int_0^{\eta/n}e^{-\tau x}\mathcal W_n(x)\,dx .
\end{aligned}
\]
Substitution of Eq.~\eqref{eq:absolute-slow-window-hypothesis} and the
elementary Gaussian integral give
Eq.~\eqref{eq:absolute-slow-window-consequence}.
\end{proof}

\begin{lemma}[Signed physical slow-window criterion]
\label{lem:signed-slow-window-criterion}
With the notation of Lemma~\ref{lem:absolute-slow-window-criterion}, define
the signed cumulative physical weight
\begin{equation}
\mathcal S_n(x)=
\sum_{0<1-\mu\leq x}a_{n,\mu}.
\label{eq:signed-slow-window-cumulative}
\end{equation}
Suppose that, for constants \(A,\eta,a>0\) independent of \(n\),
\begin{equation}
|\mathcal S_n(x)|
\leq A(\sqrt{x}+n^{-1})+e^{-an},
\qquad 0<x\leq\eta/n.
\label{eq:signed-slow-window-hypothesis}
\end{equation}
Then the conclusion
Eq.~\eqref{eq:absolute-slow-window-consequence} holds with the same
right-hand side.  In particular, the signed condition alone, together with
Eq.~\eqref{eq:truncated-rank-weighted-stationary}, implies the
rank-weighted \(O(n^{-1})\) endpoint by \(O(n^2)\) depth.  This hypothesis
is weaker than Eq.~\eqref{eq:absolute-slow-window-hypothesis} and remains
open.
\end{lemma}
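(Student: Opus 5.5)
The plan is to repeat the proof of Lemma~\ref{lem:absolute-slow-window-criterion} essentially verbatim. The only change is in the slow part: we integrate by parts against the signed cumulative weight \(\mathcal S_n\) rather than the total-variation function \(\mathcal W_n\), so the hypothesis only ever has to be applied to \(\mathcal S_n\).

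\emph{Decomposition.} Start from the exact bilinear identity Eq.~\eqref{eq:exact-physical-bilinear-transient}, which writes the transient as \(\sum_{\mu<1}a_{n,\mu}\mu^{\tau}\) with \(\tau=t-1\). Split the atoms at \(1-\mu=\eta/n\).
\begin{itemize}
\item \emph{Bulk atoms} (\(1-\mu>\eta/n\)): treat them exactly as before. Use \(|\mu^\tau|\leq e^{-\eta\tau/n}\) for \(0\leq\mu<1\); the moment operator \(C\) is positive semidefinite. Combine this with the Cauchy--Schwarz bound \(\sum|a_{n,\mu}|\leq c_0[2\sqrt{1+3p^2}]^n\). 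This gives \(c_0e^{\kappa_pn-\eta\tau/n}\leq c_0e^{-(\eta B-\kappa_p)n}\) for \(\tau\geq Bn^2\). This step uses only absolute values of atoms, but requires no slow-window hypothesis.
\item \emph{Slow atoms} (\(0<1-\mu\leq\eta/n\)): reparametrize by \(x=1-\mu\) and set \(f(x)=(1-x)^\tau\). Then the slow sum is the Stieltjes integral \(\int_{(0,\eta/n]}f\,d\mathcal S_n\), where \(\mathcal S_n\) is a right-continuous step function with \(\mathcal S_n(0^+)=0\).
\end{itemize}

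\emph{Integration by parts.} Integrating by parts gives
\[
\int_{(0,\eta/n]}f\,d\mathcal S_n=f(\eta/n)\mathcal S_n(\eta/n)-\int_0^{\eta/n}f'(x)\mathcal S_n(x)\,dx.
\]
Here \(f\) is decreasing, nonnegative, and differentiable. Bounding \(|\mathcal S_n|\) by the hypothesis Eq.~\eqref{eq:signed-slow-window-hypothesis} produces
\[
\Bigl|\int f\,d\mathcal S_n\Bigr|\leq f(\eta/n)\,h(\eta/n)+\int_0^{\eta/n}|f'(x)|\,h(x)\,dx,
\qquad h(x)=A(\sqrt x+n^{-1})+e^{-an}.
\]
No monotonicity or positivity of \(\mathcal S_n\) is needed, because only \(|\mathcal S_n|\) enters after integration by parts.

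\emph{Evaluating the bound.} Use \(|f'(x)|=\tau(1-x)^{\tau-1}\).
\begin{itemize}
\item The constant parts of \(h\), namely \(A/n+e^{-an}\), contribute at most \(f(\eta/n)+\int|f'|=f(0)=1\) times themselves.
\item The \(\sqrt x\) part is bounded by \(A\bigl[f(\eta/n)\sqrt{\eta/n}+\tau\int_0^\infty e^{-(\tau-1)x}\sqrt x\,dx\bigr]\). This comes from \((1-x)^{\tau-1}\leq e^{-(\tau-1)x}\).
\end{itemize}
For \(\tau\) large, the Gaussian integral gives \(\frac{\sqrt\pi}{2}\,\tau(\tau-1)^{-3/2}\). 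The boundary term \(e^{-\eta\tau/n}\sqrt{\eta/n}\) is bounded by \(\sup_x e^{-\tau x}\sqrt x=(2e\tau)^{-1/2}\). Together these reproduce the constants \(\frac{\sqrt\pi}{2}+\frac1{\sqrt{2e}}\). The \(2e^{-an}\) term absorbs the \(e^{-an}\) contribution together with the slack from replacing \(\tau-1\) by \(\tau\).

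\emph{Main obstacle.} The hard part is the bookkeeping that recovers exactly the same constants as Eq.~\eqref{eq:absolute-slow-window-consequence}. In the absolute lemma the kernel \(e^{-\tau x}\) was used directly, whereas here the exact kernel is \((1-x)^\tau\). I would first try to carry out the integration by parts with \(f(x)=(1-x)^\tau\) itself and compare it pointwise to \(e^{-\tau x}\). If that does not match cleanly, the fallback is to check that the \(\tau/(\tau-1)\) and \(e^{x}\) corrections are \(1+O(n^{-1})\) on \(x\leq\eta/n\) with \(\tau\geq Bn^2\), and absorb them into the stated remainders.

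\emph{Final conclusions.} The "weaker" claim is immediate, since \(|\mathcal S_n(x)|\leq\mathcal W_n(x)\). The \(O(n^{-1})\)-by-\(O(n^2)\) conclusion then follows exactly as in Lemma~\ref{lem:absolute-slow-window-criterion}, combined with Eq.~\eqref{eq:truncated-rank-weighted-stationary}.
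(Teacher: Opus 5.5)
Your architecture matches the paper's proof. The bulk atoms are handled by the global absolute-weight bound $\sum_{\mu<1}|a_{n,\mu}|\le c_0[2\sqrt{1+3p^2}]^n$. The slow window is handled by Stieltjes integration by parts against $\mathcal S_n$, with absolute values taken only after the identity.

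In one respect you are more careful than the paper. Its proof integrates $e^{-\tau x}$ against the signed measure $\nu_n$, but the slow sum is really $\int(1-x)^\tau\,d\nu_n(x)$. The substitution $\mu^\tau\le e^{-\tau(1-\mu)}$ is valid termwise under absolute values, as in Lemma~\ref{lem:absolute-slow-window-criterion}, but not inside a signed sum. Working with the exact kernel $f(x)=(1-x)^\tau$, as you propose, avoids that issue.

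Your bookkeeping step, however, fails as written. The bound $(1-x)^{\tau-1}\le e^{-(\tau-1)x}$ gives $A\frac{\sqrt\pi}{2}\tau(\tau-1)^{-3/2}$. This exceeds $A\frac{\sqrt\pi}{2}\tau^{-1/2}$ by at least $\frac{3\sqrt\pi}{4}A\tau^{-3/2}$, which at $\tau\asymp n^2$ is of order $An^{-3}$. That excess is polynomial, so the extra $e^{-an}$ cannot absorb it; the integral also diverges at $\tau=1$.

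The clean repair is a second integration by parts on the $\sqrt x$ piece. Take $X=\min\{\eta/n,1\}$; no atoms lie beyond $1$, so nothing is lost. Then
\[
\int_0^X\tau(1-x)^{\tau-1}\sqrt x\,dx
=-(1-X)^\tau\sqrt X+\int_0^X\frac{(1-x)^\tau}{2\sqrt x}\,dx
\le-(1-X)^\tau\sqrt X+\frac{\sqrt\pi}{2\sqrt\tau}.
\]
Here $(1-x)^\tau\le e^{-\tau x}$ is used against a positive weight, which is legitimate. The boundary term $A(1-X)^\tau\sqrt X$ then cancels exactly. Your correct observation that the constant part of $h$ contributes $(A/n+e^{-an})f(0)$ completes the estimate. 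The slow window is bounded by $A\sqrt\pi/(2\sqrt\tau)+A/n+e^{-an}$, which lies inside the right-hand side of Eq.~\eqref{eq:absolute-slow-window-consequence}.

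Your fallback could alternatively be pushed through. The true boundary term is at most $Ae^{-\eta Bn}\sqrt{\eta/n}$, which leaves slack in the $A/\sqrt{2e\tau}$ allowance. But that slack, not the $2e^{-an}$ term, would have to pay for the correction, and small $\tau$ would need separate treatment.
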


\begin{proof}
Let \(\nu_n=\sum_{\mu<1}a_{n,\mu}\,\delta_{1-\mu}\).  On the slow window,
Stieltjes integration by parts for the signed measure gives
\begin{equation}
\begin{aligned}
\int_{(0,\eta/n]}e^{-\tau x}\,d\nu_n(x)
&=e^{-\eta\tau/n}\mathcal S_n(\eta/n)\\
&\quad+\tau\int_0^{\eta/n}e^{-\tau x}\mathcal S_n(x)\,dx.
\end{aligned}
\end{equation}
Taking absolute values only after this identity and inserting
Eq.~\eqref{eq:signed-slow-window-hypothesis} produces exactly the slow-part
bound used in Lemma~\ref{lem:absolute-slow-window-criterion}.  The bulk
estimate is unchanged because it uses the global absolute-weight bound of
Lemma~\ref{lem:absolute-spectral-weight}.  This proves the claim.
\end{proof}

\begin{remark}[Why a standard \(L^2\) Nash argument does not yet close the proof]
\label{rem:nash-obstruction}
The exact physical measure \(\nu_n\) is signed, and the Hilbert--Schmidt norm
of the right endpoint grows exponentially with \(n\).  Consequently a
spectral-gap or ordinary \(L^2\)-Nash estimate applied before the physical
left--right pairing pays an exponential endpoint factor and yields at best
the conditional \(O(n^3)\) route of
Theorem~\ref{thm:spectral-radius-replica-convergence}.  An optimal
\(O(n^2)\) proof must instead establish
Eq.~\eqref{eq:signed-slow-window-hypothesis}, the stronger absolute version,
or a direct bilinear estimate in a task-adapted norm whose two physical
endpoints have only polynomial mass.  No such norm or comparison inequality
is established here.
\end{remark}

Equation~\eqref{eq:direct-truncated-rank-target} is therefore stronger than
a statement about spectral location alone: it must also control the total
variation of the signed physical weights, or prove the scalar bound directly.
A two-sided $e^{-\Theta(t/n^z)}$ assertion additionally requires a nonzero
grouped left--right weight at an eigenvalue with
$1-\mu=\Theta(n^{-z})$ and control of its crossover against the remaining
signed atoms.

\begin{lemma}[One-copy diffusion does not close the decoupling sector]
\label{lem:one-copy-not-two-copy}
For the local-Haar gate measure in
Eq.~\eqref{eq:local-Haar-gate-measure}, the ensemble-averaged one-copy charge
profile evolves under alternating nearest-neighbor averaging and has
relaxation gap $\Theta(n^{-2})$ on the open chain.  However, the local
two-copy twirl of a one-site replica swap is not contained in the
two-dimensional full-Haar swap span generated by the identity and the
two-site replica swap.  Therefore the one-copy diffusion gap alone does not
imply a bound on Eq.~\eqref{eq:finite-depth-purity-difference}.
\end{lemma}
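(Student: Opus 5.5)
The lemma has three parts, and I would prove them in order: the one-copy diffusion statement, the two-copy non-closure statement, and then the logical conclusion.

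\emph{One-copy diffusion.} I would average a single number operator \(n_i\) under one gate \(G=1\oplus U_1\oplus e^{i\alpha}\) acting on bond \((i,i+1)\). The operator \(n_i\) acts on the one-particle block as \(|10\rangle\langle10|\). The Haar twirl of the \(U(2)\) block sends it to \(\frac12(|10\rangle\langle10|+|01\rangle\langle01|)\). It is zero on the \(|00\rangle\) block, and equals \(|11\rangle\langle11|\) on the two-particle block. Adding these up gives \(\mathbb E\,G^\dagger n_iG=\frac12(n_i+n_{i+1})\), and the same holds with \(i\) and \(i+1\) exchanged. So the expectation profile \(\langle n_i\rangle\) evolves linearly under the doubly stochastic averaging matrices \(A_{\rm e}\) and \(A_{\rm o}\), one for each layer. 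For the gap, I would note that \(A_{\rm o}A_{\rm e}\) is the transition matrix of a single lazy-free random walk on the path \([n]\), which alternately averages within matchings. Equivalently, \(A_{\rm e}A_{\rm o}A_{\rm e}\) is a symmetric contraction with stationary vector uniform. I would bound its gap from below by comparison with the simple random walk on the path, which has gap \(\Theta(n^{-2})\); the comparison constant is \(O(1)\) since each edge is used once per cycle. For the matching upper bound on the gap, I would use the test function \(f(i)=\cos(\pi(i+1/2)/n)\), whose Dirichlet form is \(O(n^{-2})\lVert f\rVert^2\). This gives relaxation gap \(\Theta(n^{-2})\), consistent with the Caputo--Liggett--Richthammer statement for a single particle.

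\emph{Non-closure.} This is a finite local computation on two sites and two replicas, a 16-dimensional Liouville space. I would compute \(\mathcal T(F_i)=\mathbb E_G\,G^{\otimes2}F_i(G^\dagger)^{\otimes2}\) for the two-site bond gate and show that it is not of the form \(a\,\mathbb I+b\,F_iF_{i+1}\). The cleanest route is to exhibit two matrix elements that violate every such relation. First, \(F_i\) commutes with the \(U(1)\) phases and with both the \(|00\rangle\) and \(|11\rangle\) blocks. Consider the two-copy product basis states in which both replicas are in the \(|00\rangle\) sector, say \(|00\rangle|11\rangle\) in copy one and copy two. On these, the \(U(1)\)-phase and block structure fix the twirled matrix element exactly, and they give an element of \(\mathcal T(F_i)\) that equals the corresponding element of \(F_i\) (a swap of site \(i\) only). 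That element is nonzero off the diagonal, for example \(\langle 00,11|\,\cdot\,|10,01\rangle\), whereas both \(\mathbb I\) and \(F_iF_{i+1}\) vanish there. To make this robust, I would instead evaluate on the charge sectors where only the \(U(1)\) phase integral acts, since there the twirl is a projection onto phase-invariant entries and cannot mix in the full swap. Alternatively, I would use the \(U(2)\) Weingarten formula on the one-particle two-copy block and record the resulting projection explicitly. Either way, the conclusion is that \(\mathcal T(F_i)\) has nonzero components along operators such as \(n_i\otimes n_{i+1}\)-type charge-resolved projectors, which lie outside \(\operatorname{span}\{\mathbb I,F_{i}F_{i+1}\}\). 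I expect the main bookkeeping obstacle to be choosing a witness matrix element cleanly. I would first try the diagonal elements on \(|10\rangle\otimes|10\rangle\) versus \(|00\rangle\otimes|11\rangle\) and check whether they contradict any linear fit \(a,b\).

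\emph{Conclusion.} The decoupling quantity in Eq.~\eqref{eq:finite-depth-purity-difference} is the pairing of the balanced two-copy moment channel with \(F_E\otimes(F_R-\tfrac12\mathbb I)\). Suppose one-copy diffusion sufficed. Then the iterated twirls of \(F_E\) would stay in the span generated by products of \(\mathbb I\) and the bond swaps, on which the dynamics reduce to one-copy transport. The non-closure shows they do not. This means the relevant module, described in Lemma~\ref{lem:iid-minimal-invariant-module}, is strictly larger than that span, so the one-copy gap gives no bound on it, which is the claimed implication failure.
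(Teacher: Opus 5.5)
\textbf{One-copy part.} Your Heisenberg-picture computation $\mathbb E\,G^\dagger n_iG=\frac12(n_i+n_{i+1})$ is correct. One caveat on the gap: a Dirichlet-form comparison does not apply verbatim to the product of two noncommuting matching projections. The exact cosine eigenbasis of $A_{\rm e}A_{\rm o}A_{\rm e}$, with eigenvalues $\cos^2(k\pi/n)$, gives the gap $\sin^2(\pi/n)$ directly.

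\textbf{The gap: non-closure.} You never produce a valid witness, and the one you write down fails. Write basis states as copy one then copy two, each listing sites $i,i+1$. You have $\langle 00,11|F_i|10,01\rangle=1$, but $F_i$ does \emph{not} commute with the block phases: it maps the replica bond-charge pair $(0,2)$ to $(1,1)$. By Haar invariance you may replace $G$ by $1\oplus e^{i\phi}U_1\oplus e^{i\alpha}$ with $\phi$ an independent uniform phase. This multiplies the corresponding matrix element of $G^{\otimes2}F_i(G^\dagger)^{\otimes2}$ by $e^{i(\alpha-2\phi)}$, so $\langle 00,11|\mathcal T(F_i)|10,01\rangle=0$. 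More generally, the twirl annihilates every entry between charge pairs $(r_1,r_2)$ and $(s_1,s_2)$ unless $\{r_1,r_2\}=\{s_1,s_2\}$. Charge-changing entries of $F_i$ can therefore never serve as witnesses.

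Your fallback does not close either. The diagonal entries of $\mathcal T(F_i)$ at $|10,10\rangle$ and $|00,11\rangle$ are $2/3$ and $0$. The ansatz $a\mathbb I+bF_iF_{i+1}$ gives $a+b$ and $a$ there, which is consistent with $a=0$, $b=2/3$. Finally, saying $\mathcal T(F_i)$ has nonzero components along charge-resolved projectors proves nothing, since $\mathbb I$ and $F_iF_{i+1}$ have such components too.

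\textbf{The missing idea.} Pair with invariants instead of computing the twirl. Each $Z_{rs}=\Pi_r^{(1)}\otimes\Pi_s^{(2)}$ is fixed by the balanced bond twirl, and the twirl is a Hilbert--Schmidt orthogonal projector. Hence $\operatorname{Tr}[Z_{rs}\mathcal T(F_i)]=\operatorname{Tr}(Z_{rs}F_i)$, and an elementary count gives the symmetric table with rows $(1,1,0)$, $(1,2,1)$, $(0,1,1)$. For $a\mathbb I+bF_iF_{i+1}$ the same pairing gives $a\,d_rd_s+b\,\delta_{rs}d_r$ with $(d_0,d_1,d_2)=(1,2,1)$. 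The $(0,1)$ entry forces $a=1/2$ and the $(0,2)$ entry forces $a=0$, a contradiction.

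In your matrix-element language, these are the diagonal entries at $|00,01\rangle$ and $|00,11\rangle$. The first has value $\mathbb E|\langle01|U_1|01\rangle|^2=1/2$ and the second has value $0$, while the ansatz gives $a$ at both. You only needed to replace your $(1,1)$-sector test state by a $(0,1)$-sector one, and no Weingarten calculus is required.
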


\begin{proof}
On a gate bond $(i,i+1)$, Haar averaging in the one-excitation block gives
\begin{equation}
\mathbb E(n_i',n_{i+1}')
=\left(\frac{n_i+n_{i+1}}2,
\frac{n_i+n_{i+1}}2\right).
\label{eq:one-copy-bond-averaging}
\end{equation}
Alternating even and odd layers therefore give the standard deterministic
nearest-neighbor averaging chain.  The discrete Poincare inequality on a
path gives its $\Theta(n^{-2})$ gap.

For the two-copy statement, let $\Pi_r$ project the two-site physical bond
onto charge $r\in\{0,1,2\}$, whose dimensions are $(1,2,1)$, and let
$\mathcal T_b^{(2)}$ denote the balanced local two-copy twirl.  Each
$Z_{rs}=\Pi_r^{(1)}\otimes\Pi_s^{(2)}$ is invariant.  Since the twirl is the
orthogonal projection onto its invariant algebra,
\begin{equation}
\operatorname{Tr}
\left[Z_{rs}\mathcal T_b^{(2)}(F_i)\right]
=\operatorname{Tr}(Z_{rs}F_i)
=
\begin{pmatrix}
1&1&0\\
1&2&1\\
0&1&1
\end{pmatrix}_{rs}.
\label{eq:two-copy-swap-witness}
\end{equation}
If the twirled operator were $a\mathbb I+bF_iF_{i+1}$, the corresponding
matrix would be
$a\,d_rd_s+b\,\delta_{rs}d_r$.  The $(r,s)=(0,1)$ entry would force
$a=1/2$, whereas the $(0,2)$ entry would force $a=0$, a contradiction.
Thus additional charge-resolved replica modes are present.  Since
$F_E=\prod_{i\in E}F_i$ occurs explicitly in
Eq.~\eqref{eq:finite-depth-swap-observable}, controlling only
Eq.~\eqref{eq:one-copy-bond-averaging} cannot prove decoupling.
\end{proof}

\begin{theorem}[Linear task-time speed limit from diffusion]
\label{thm:diffusive-linear-task-lower}
Let $n\geq3$ and take the canonical boundary code
$|\psi_0\rangle=|0\rangle_0|\chi\rangle$ and
$|\psi_1\rangle=|1\rangle_0|\chi\rangle$, and let $t\geq1$ be the number of
complete local-Haar brickwork cycles, with the even matching applied first;
$|\chi\rangle$ is the fixed computational-basis product state of the
canonical initialization.
Under iid flagged erasure with probability $p$,
\begin{equation}
\boxed{
\mathbb E_{V,E}D_E(V)
\geq\frac{p}{16(\sqrt{t-1}+1)}.}
\label{eq:diffusive-linear-task-lower}
\end{equation}
Hence $\mathbb E_{V,E}D_E\leq Cn^{-1/2}$ implies
\begin{equation}
t\geq1+\left(\frac{p\sqrt n}{16C}-1\right)_+^2.
\label{eq:linear-task-depth-necessary}
\end{equation}
For fixed $p>0$ and a target constant $C$ independent of $n$, the
right-hand side is $\Omega_{p,C}(n)$.
More precisely, if
$\bar\delta_i(t)=\mathbb E_V\operatorname{Tr}
[n_i(\rho_i^1(V)-\rho_i^0(V))]$, then for $t\geq2$
\begin{equation}
\max\left\{\frac1n,
\frac1{8(\sqrt{t-1}+1)}\right\}
\leq\max_i\bar\delta_i(t)
\leq\frac1n+\frac{\sqrt\pi}{2\sqrt{t-1}}.
\label{eq:one-copy-peak-window}
\end{equation}
Thus the same diffusive mode reaches the local $n^{-1/2}$ accuracy scale at
order $n$, whereas complete flattening to its $n^{-1}$ stationary scale
requires order $n^2$.  Equation~\eqref{eq:linear-task-depth-necessary} is a
necessary condition, not a matching achievability theorem.
\end{theorem}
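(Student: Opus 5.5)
The plan is to reduce the quantum statement to a one-site charge witness, exactly as in Lemma~\ref{lem:charge-causal-iid-lower}, and then control that witness through the one-copy diffusion of Lemma~\ref{lem:one-copy-not-two-copy}.

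\textbf{Step 1: reduction to a single-site charge difference.} For a fixed circuit \(V\) and any site \(i\in E\), data processing and Eq.~\eqref{eq:one-site-charge-witness} give \(D_E(V)\geq\frac12|\delta_i(V)|\). Choose a site \(i_*\) deterministically, before sampling \(E\) and independently of \(V\). Averaging over the event \(i_*\in E\), which has probability \(p\), and then applying Jensen in \(V\) gives
\begin{equation*}
\mathbb E_{V,E}D_E\geq\frac p2\,|\bar\delta_{i_*}(t)|.
\end{equation*}
It then suffices to prove the lower half of Eq.~\eqref{eq:one-copy-peak-window} with \(i_*=\arg\max_i\bar\delta_i\), namely \(\max_i\bar\delta_i\geq 1/[8(\sqrt{t-1}+1)]\). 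This produces the factor \(p/16\).

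\textbf{Step 2: the one-copy profile as a random walk.} Because each gate is independent and its first moment acts by Eq.~\eqref{eq:one-copy-bond-averaging}, the averaged occupation vector evolves linearly under a doubly stochastic matrix. The circuit is the same for both codewords, so the difference vector \(\bar\delta(t)\) evolves under the same linear map. It is therefore the law of a lazy walker started at site \(0\), where the codewords differ. Each layer moves the walker across its bond with probability \(1/2\), and the walker is reflected at the open ends. The vector \(\bar\delta(t)\) is a probability distribution on \([n]\), which gives \(\max_i\bar\delta_i\geq1/n\) for free.

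\textbf{Step 3: the diffusive lower bound.} The walker's position after the first complete cycle lies in \(\{0,1,2\}\). Over the remaining \(t-1\) cycles, each of its moves is a fair coin step of size at most one. The displacement \(X\) from \(0\) is therefore dominated by a reflected simple random walk with at most \(2(t-1)\) fair steps plus an \(O(1)\) offset, so its second moment is at most \(2(t-1)+O(\sqrt{t-1})+O(1)\). Chebyshev then confines mass \(\geq 1/2\) to a window of \(w\lesssim 2(\sqrt{t-1}+1)\) sites near the boundary. Pigeonhole gives some site with \(\bar\delta_i\geq 1/(2w)\), and tuning the window constant yields \(1/[8(\sqrt{t-1}+1)]\).

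\textbf{Step 4: the upper half of Eq.~\eqref{eq:one-copy-peak-window}.} Here I would use a local limit or anticoncentration bound for the reflected lazy walk. The coin-flip steps of each cycle give a binomial component of variance \(\gtrsim t-1\). The maximal point mass of such a binomial is \(\leq\sqrt{\pi}/(2\sqrt{t-1})\) by the standard central-binomial estimate. Reflection folds at most two preimages onto a site, and this is absorbed, with the \(1/n\) term covering saturation.

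\textbf{Step 5: inversion.} Rearranging \(p/[16(\sqrt{t-1}+1)]\leq Cn^{-1/2}\) gives Eq.~\eqref{eq:linear-task-depth-necessary}.

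\textbf{Main obstacle.} The step I expect to be hardest is making the walk coupling in Steps 3 and 4 exact enough for the stated constants. Bond alternation, the boundary at site \(0\), and the even-first convention make the walk non-homogeneous. For Step 3 I would first bound the second moment directly by a martingale computation on \(X^2\), using that each layer's mean increment is nonnegative only through reflection. I would also need to handle the upper bound via a monotone coupling.
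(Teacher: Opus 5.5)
Your Steps 1, 2 and 5 match the paper's reduction: the one-site charge witness, data processing on $\{i_*\in E\}$, Jensen, and $\bar\delta(t)=(OE)^te_0$ as the law of the tagged particle.

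\textbf{The gap is in Step 3.} The single-layer moves of this walker are \emph{not} fair steps. In an even layer a walker on site $2k$ can only move right, and one on $2k+1$ can only move left; in an odd layer the directions are reversed. So a walker that moves keeps its direction in the next layer, and one that holds reverses it. If every gate swaps, it travels ballistically $0\to1\to2\to\cdots$. In the bulk the per-layer drift is $\pm\frac12$, and $\mathbb E[X'^2-X^2\mid X]=\pm X+\frac12$. Over a full cycle this increment is $2k+\frac32$ from $X=2k$ and $-2k+\frac12$ from $X=2k+1$. The premise of your martingale computation (zero interior drift, with only reflection contributing) is therefore false, and nothing you state justifies ``domination by a reflected simple random walk.''

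The missing idea is to coarse-grain into even-bond cells. After every even layer the walker is uniform on its cell, independently of the past. The $EOE$ compression is then the reflecting lazy walk $x_a'=\frac14x_{a-1}+\frac12x_a+\frac14x_{a+1}$, $x_0'=\frac34x_0+\frac14x_1$ (with the analogous right-end columns). In every column $\mathbb E(A_{s+1}^2-A_s^2\mid A_s)\le\frac12$, so $\mathbb EA_s^2\le s/2$. Half the mass then lies in the first $\lfloor\sqrt s\rfloor+1$ cells, and some cell has mass at least $[2(\sqrt s+1)]^{-1}$. Losing one factor of two for cell-to-site and one for the final odd layer gives $1/[8(\sqrt{t-1}+1)]$. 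Your site-level Chebyshev argument would also close once you have this, via $X\le 2A_s+2$.

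\textbf{Step 4 needs the same cell reduction, and as written it mishandles the boundary.} In the bulk the cell walk after $s$ compressions is exactly $\operatorname{Bin}(2s,\frac12)-s$, so a binomial local bound is available. However, ``reflection folds at most two preimages'' holds only before the right wall is felt. Unfolding both walls gives a period-$n$ image sum in cell units, with infinitely many preimages once $t\gtrsim n^2$. To produce the $\frac1n$ term you would need an estimate such as $\sum_j f(x+jn)\le\max f+\frac1n$ for unimodal probability $f$. That route would in fact give the sharper bound $\frac1n+\binom{2s}{s}4^{-s}$.

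The paper avoids this by using the exact cosine eigenbasis $g_k=Ef_k$ of $EOE$ on $\operatorname{Ran}E$, with
\begin{itemize}
\item $EOE\,g_k=\cos^2(k\pi/n)\,g_k$;
\item $\|g_k\|_2^2=\frac n2\cos^2\frac{k\pi}{2n}$;
\item $\|g_k\|_\infty\le\cos\frac{k\pi}{2n}$.
\end{itemize}
The $k=0$ mode gives the $\frac1n$ term, each $k\ge1$ contributes at most $\frac2n e^{-4sk^2/n^2}$, and both walls are handled automatically.
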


\begin{proof}
Let $E$ and $O$ denote the even- and odd-bond averaging projections of
Eq.~\eqref{eq:one-copy-bond-averaging}.  The averaged excess-charge profile
is the probability vector
\begin{equation}
\bar\delta(t)=(OE)^t e_0.
\label{eq:averaged-excess-charge-walk}
\end{equation}
Immediately before the final odd layer, group each even bond into one cell
and let $x_a$ be its total mass, treating an uncovered final site as a
singleton.  Starting from $x_0=1$, one $EOE$ compression is the reflecting
lazy walk
\begin{equation}
x'_0=\frac34x_0+\frac14x_1,
\qquad
x'_a=\frac14x_{a-1}+\frac12x_a+\frac14x_{a+1}
\label{eq:coarse-reflected-lazy-walk}
\end{equation}
in the interior.  At a terminal pair the missing outward jump is added to
the holding probability; from a terminal singleton the walk moves one cell
left or holds, each with probability $1/2$.

If $A_s$ is its cell position after $s=t-1$ compressions, direct evaluation
of every interior and endpoint column gives
\begin{equation}
\mathbb E(A_{s+1}^2-A_s^2\mid A_s)\leq\frac12,
\qquad
\mathbb EA_s^2\leq\frac{s}{2}.
\end{equation}
At least half the mass therefore lies in the first
$\lfloor\sqrt s\rfloor+1$ cells.  One cell has mass at least
$[2(\sqrt s+1)]^{-1}$; converting cell mass to a site entry and applying the
last odd averaging loses at most two further factors of two.  This proves
the second lower bound in Eq.~\eqref{eq:one-copy-peak-window}; normalization
gives the $1/n$ bound.

For the upper bound set
\begin{equation}
\begin{aligned}
f_k(j)&=\cos\frac{k\pi(j+1/2)}n,\qquad g_k=Ef_k,\\
0&\leq k\leq\left\lfloor\frac{n-1}{2}\right\rfloor.
\end{aligned}
\end{equation}
Direct substitution gives
\begin{equation}
EOE\,g_k=\cos^2\frac{k\pi}{n}\,g_k,
\end{equation}
and these vectors form an orthogonal basis of $\operatorname{Ran}E$.  For
$k>0$,
\begin{equation}
\|g_k\|_2^2=\frac n2\cos^2\frac{k\pi}{2n},
\qquad
\max_j|g_k(j)|\leq\cos\frac{k\pi}{2n}.
\end{equation}
Expanding $(EOE)^sEe_0$ in the normalized basis and using
$\cos x\leq e^{-2x^2/\pi^2}$ on $[0,\pi/2]$ gives
\begin{align}
\max_i\bar\delta_i(t)
&\leq\frac1n+\frac2n\sum_{k\geq1}
e^{-4sk^2/n^2}\nonumber\\
&\leq\frac1n+\frac{\sqrt\pi}{2\sqrt s}.
\end{align}
The final odd layer cannot increase the maximum.

Finally, Eq.~\eqref{eq:one-site-charge-witness}, data processing for every
erasure set containing a fixed site $i$, and Jensen's inequality imply
\begin{equation}
\mathbb E_{V,E}D_E(V)
\geq\frac p2|\mathbb E_V\delta_i(V)|.
\end{equation}
Choosing the largest entry of Eq.~\eqref{eq:averaged-excess-charge-walk}
proves Eq.~\eqref{eq:diffusive-linear-task-lower} and then
Eq.~\eqref{eq:linear-task-depth-necessary}.
\end{proof}

\begin{theorem}[Quadratic operational speed limit from the exclusion cloud]
\label{thm:true-trace-quadratic-speed-limit}
Under the assumptions of
Theorem~\ref{thm:diffusive-linear-task-lower}, define
\begin{equation}
m_{n,t}=\min\{n,\,2\lfloor\sqrt{t-1}\rfloor+3\}.
\label{eq:diffusive-window-size}
\end{equation}
For every $t\geq1$ and iid flagged erasure probability $p$,
\begin{equation}
\boxed{
\mathbb E_{V,E}D_E(V)
\geq\frac{p}{48\sqrt{m_{n,t}}}
\geq\frac{p}{48\sqrt{2\sqrt{t-1}+3}}.}
\label{eq:true-trace-quadratic-lower}
\end{equation}
Consequently,
$\mathbb E_{V,E}D_E(V)\leq Cn^{-1/2}$ implies
\begin{equation}
\boxed{
t\geq
1+\frac14
\left(\frac{p^2n}{2304C^2}-3\right)_+^2.}
\label{eq:true-trace-quadratic-depth}
\end{equation}
For fixed $p>0$ and $C$ independent of $n$, the necessary depth is therefore
$\Omega_{p,C}(n^2)$ complete brickwork cycles.
\end{theorem}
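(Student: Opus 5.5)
The plan is to reduce the full decoupling error to a classical two-point test on the \emph{erased charge in a diffusive boundary window}, and then exploit the fact that this count fluctuates only on the $\sqrt{m_{n,t}}$ scale while the logical particle shifts it by an $\Omega(p)$ amount on average. Let $W=\{0,1,\ldots,m_{n,t}-1\}$ be the first $m=m_{n,t}$ sites. As in the proof of Theorem~\ref{thm:universal-charge-count-optimality}, I would apply to the flagged direct-sum operator the CPTP post-processing that reads $E$, measures the reference in the logical basis, and measures $N_{E\cap W}=\sum_{i\in E\cap W}n_i$. Data processing followed by convexity in $V$ (Jensen on the trace norm) gives
\[
\mathbb E_{V,E}D_E(V)\geq \frac14\sum_{E}\pi_p(E)\sum_a\bigl|\bar P^1_E(a)-\bar P^0_E(a)\bigr| ,
\]
where $\bar P^j_E$ is the circuit-averaged law of $N_{E\cap W}$ for codeword $j$. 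The right-hand side is $\frac12\operatorname{TV}$ of the joint law of $(E,N_{E\cap W})$ under the two inputs.

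The first key step is to identify the circuit-averaged computational-basis occupation law. Because every gate in Eq.~\eqref{eq:local-Haar-gate-measure} is block diagonal, and the independent Haar $U_1$ together with the phase $\alpha$ kill all cross terms in the diagonal of the averaged output, the averaged one-copy diagonal should evolve as a Markov chain. In this chain each bond in state $|10\rangle$ or $|01\rangle$ is swapped with probability $1/2$, so the averaged diagonal is the lazy interchange (symmetric exclusion) process run on the brickwork schedule. I would verify this gate by gate from first Haar moments and then iterate. Labeling particles gives a coupling: under input $1$ the configuration equals the input-$0$ configuration (background particles started from $|\chi\rangle$) plus one distinguished particle started at site $0$. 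In the interchange representation, the distinguished particle's position $Y_t$ follows exactly the averaged walk of Theorem~\ref{thm:diffusive-linear-task-lower}, and the background evolves identically in both cases. Hence, with the same $E$,
\[
N^1=N^0+B,\qquad B=\mathbf 1\{Y_t\in E\cap W\}.
\]
The bound $\mathbb E A_s^2\leq s/2$ from that proof, together with Markov/Chebyshev, gives $\Pr\{Y_t\in W\}\geq 1/2$ for the stated window $2\lfloor\sqrt{t-1}\rfloor+3$; this is where the cell-to-site conversion of that proof is reused. Since $E$ is independent of the circuit, $\mathbb E B\geq p/2$.

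The second step bounds the fluctuation of $N^0$. Conditioned on the occupations, $N^0$ is a sum of independent Bernoulli-$p$ thinnings, so its conditional variance is at most $m/4$. For the unconditional occupancy count in $W$ I would invoke negative association of symmetric exclusion started from a deterministic configuration (the Borcea--Brändén--Liggett strong Rayleigh property, preserved by each lazy swap). This gives $\operatorname{Var}(\#\text{occupied in }W)\leq m/4$, hence $\operatorname{Var}N^0\leq m/2$. With $\mu=\mathbb E N^0$ and $\sigma^2\leq m/2$, I would test with the $1$-Lipschitz clipped function $f(x)=\max\{-L,\min\{L,x-\mu\}\}$ with $L\asymp\sqrt m$. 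Then $\mathbb E f(N^0+B)-\mathbb E f(N^0)\geq \mathbb E B-\Pr\{|N^0-\mu|\geq L-1\}\geq p/2-\sigma^2/(L-1)^2$. This has to be done carefully: the defect must be paid only on the event $B=1$, so I would bound $\mathbb E[B\,\mathbf 1\{|N^0-\mu|\ge L-1\}]\le p\Pr\{|N^0-\mu|\ge L-1\}$ using independence of the erasure at $Y_t$ from the background thinning. Choosing $L$ so that this loss is at most $p/4$, and using $\operatorname{TV}\geq |\mathbb E f(N^1)-\mathbb E f(N^0)|/(2L)$, gives a bound of order $p/\sqrt m$. Tracking constants should yield $p/(48\sqrt m)$. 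The depth bound \eqref{eq:true-trace-quadratic-depth} then follows by solving $p/(48\sqrt{2\sqrt{t-1}+3})\leq Cn^{-1/2}$ for $t$.

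I expect the main obstacle to be the first step together with the variance control: proving rigorously that the circuit-averaged occupation law is exactly the brickwork exclusion process, with the distinguished-particle coupling respected, and that the background occupancy in a window started from the half-filled step profile is negatively correlated. If the strong Rayleigh route is awkward, the fallback is to compute pair correlations directly from the two-point equations of SSEP, where duality shows they are nonpositive, which gives the same $m/4$ variance bound.
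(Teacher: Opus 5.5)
Your argument is correct, and its skeleton matches the paper's. In both, dephasing the reference and averaging over circuits reduces the problem to the classical flagged output. Input one is coupled to input zero through the interchange coupling, with an added distinguished particle $S_t=\Pi_t(0)\notin B_t$. The window of $m_{n,t}$ sites comes from $\mathbb E A_s^2\le s/2$ plus the last odd layer. Finally, the background occupations have nonpositive pairwise covariances. You get this last fact from strong Rayleigh; the paper uses a two-line induction showing that one half-swap preserves nonpositive covariances.

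The genuine difference is the final shift estimate.

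The paper proves a distribution-free lemma,
$\operatorname{TV}(\mathcal L(K+J),\mathcal L(K))\ge\max_\ell\Pr\{K=\ell,J=1\}\ge H/[4(\sqrt{2M/H}+1)]$,
with $H=\mathbb EJ$ and $M=\mathbb E[J(K-\mathbb EK)^2]\le\operatorname{Var}K$. It holds for arbitrarily correlated $J$ and $K$. Because it never uses independence, the correct $p$-scaling has to come from the thinned variance bound $\operatorname{Var}K\le pm$.

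You instead use a clipped $1$-Lipschitz test function. You extract the factor $p$ from the conditional independence, given $\Pi_t$, of $\xi_{S_t}$ and the background count. That is why your cruder bound $\operatorname{Var}N^0\le m/2$ suffices. With $L=1+\sqrt{2m}$, Chebyshev gives $\mathbb E[f(N^1)-f(N^0)]\ge p/4$, hence
$\mathbb E_{V,E}D_E\ge p/[16(1+\sqrt{2m})]\ge p/(39\sqrt m)$.
This is slightly better than the stated $1/48$, so your ``tracking constants'' step does close.

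Note that with $\operatorname{Var}N^0\le m/2$, your first, unweighted defect estimate would lose a factor $\sqrt p$. The $p$-weighted version you give afterwards is essential.

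The paper's route buys two things:
\begin{itemize}
\item Its output is exactly $\max_\ell a_\ell$. This is the KS statistic $\max_\ell[F_0(\ell)-F_1(\ell)]$ reused in Corollary~\ref{cor:flagged-window-count-law}.
\item It would survive couplings in which the unit shift and the background count are correlated.
\end{itemize}
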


\begin{proof}
The proof uses only the one-copy averaged channel and a classical statistic
of the flagged complementary output.  The flag identifies the sites but,
by itself, does not reveal their lost occupations; experimental access is
specified in Remark~\ref{rem:flagged-count-experimental-access}.  The case
$p=0$ is immediate, so assume $p>0$.  Fix an erased set $G$.  Convexity of the
trace norm, followed by dephasing the logical reference, gives
\begin{equation}
\mathbb E_VD_G(V)
\geq\frac12\|\mathbb E_VX_G(V)\|_1
\geq\frac14\|\bar\rho_G^1-\bar\rho_G^0\|_1,
\label{eq:trace-lower-average-dephase}
\end{equation}
where
$\bar\rho_G^a=\mathbb E_V\rho_G^a(V)$.  These averaged states are diagonal;
let their classical laws be $P_{t,G}^a$.  Thus
\begin{equation}
\mathbb E_VD_G(V)
\geq\frac12\operatorname{TV}(P_{t,G}^1,P_{t,G}^0).
\label{eq:trace-lower-classical-TV}
\end{equation}
Put $\pi_p(G)=p^{|G|}(1-p)^{n-|G|}$ and retain $G$ as an orthogonal
classical flag:
\begin{equation}
\widehat P_t^a(G,z_G)=\pi_p(G)P_{t,G}^a(z_G).
\end{equation}
The direct-sum identity for total variation and
Eq.~\eqref{eq:trace-lower-classical-TV} imply
\begin{equation}
\mathbb E_{V,E}D_E(V)
\geq\frac12\operatorname{TV}(\widehat P_t^1,\widehat P_t^0).
\label{eq:flagged-TV-lower}
\end{equation}

We next represent the diagonal one-copy channel exactly.  On an active bond,
the $U(1)$ local-Haar twirl fixes the $00$ and $11$ occupations and maps
either $10$ or $01$ to their equal mixture.  On diagonal inputs this is the
classical channel that applies the bond transposition with probability
$1/2$.  Hence the alternating layers generate a random permutation
$\Pi_t$.  If $B_0$ is the occupied set of $|\psi_0\rangle$, then the two
averaged inputs admit the common coupling
\begin{equation}
B=\Pi_t(B_0),
\qquad
B\cup\{S\},\quad S=\Pi_t(0).
\label{eq:interchange-second-class-coupling}
\end{equation}
Because $0\notin B_0$, one has $S\notin B$ deterministically.  Moreover,
\begin{equation}
\Pr\{S=i\}=[(OE)^te_0]_i.
\label{eq:distinguished-particle-profile}
\end{equation}

Put $s=t-1$.  The reflected cell-walk argument in the proof of
Theorem~\ref{thm:diffusive-linear-task-lower} places at least half of the
pre-final-odd-layer probability in the first
$\lfloor\sqrt s\rfloor+1$ even cells.  The final odd half-swap moves a
particle by at most one site.  Therefore the initial interval
\begin{equation}
A_t=\{0,\ldots,\min(n-1,2\lfloor\sqrt s\rfloor+2)\}
\end{equation}
has size $m_{n,t}$ and satisfies
\begin{equation}
\Pr\{S\in A_t\}\geq\frac12.
\label{eq:distinguished-particle-window}
\end{equation}
For $s=0$ this follows directly from the first even and odd layers.

We require one elementary negative-correlation fact about the background.
Let $\eta_i=\mathbf1_{\{i\in B\}}$.  Initially the $\eta_i$ are
deterministic.  Suppose their pairwise covariances are nonpositive and apply
one random half-swap of sites $i,j$.  If
$u_k=\mathbb E\eta_k$, then the new means at the swapped sites both equal
$(u_i+u_j)/2$, their product is invariant, and hence
\begin{equation}
\operatorname{Cov}'(\eta_i,\eta_j)
\leq u_iu_j-\left(\frac{u_i+u_j}{2}\right)^2
=-\frac{(u_i-u_j)^2}{4}\leq0.
\end{equation}
For $k\notin\{i,j\}$,
\begin{equation}
\operatorname{Cov}'(\eta_i,\eta_k)
=\frac12\left[
\operatorname{Cov}(\eta_i,\eta_k)
+\operatorname{Cov}(\eta_j,\eta_k)\right]\leq0,
\end{equation}
and disjoint pairs are unchanged.  Induction over all half-swaps proves
\begin{equation}
\operatorname{Cov}(\eta_i,\eta_j)\leq0
\quad(i\ne j).
\label{eq:interchange-pairwise-negative-correlation}
\end{equation}

Let $\xi_i=\mathbf1_{\{i\in E\}}$ be the independent Bernoulli-$p$ erasure
flags and define the erased background count in $A_t$ by
\begin{equation}
K=\sum_{i\in A_t}\xi_i\eta_i.
\end{equation}
For distinct sites,
$\operatorname{Cov}(\xi_i\eta_i,\xi_j\eta_j)
=p^2\operatorname{Cov}(\eta_i,\eta_j)\leq0$.
Since every $\xi_i\eta_i$ is Bernoulli,
\begin{equation}
\operatorname{Var}K\leq pm_{n,t}.
\label{eq:erased-background-count-variance}
\end{equation}
The distinguished particle contributes
\begin{equation}
\begin{aligned}
J&=\xi_S\mathbf1_{\{S\in A_t\}}\in\{0,1\},\\
H:=\mathbb EJ&=p\Pr\{S\in A_t\}\geq\frac p2.
\end{aligned}
\label{eq:distinguished-erased-indicator}
\end{equation}
Thus the flagged occupation count in $A_t$ is $K$ for logical input zero
and $K+J$ for logical input one.

For completeness, consider any integer-valued $K$ and
$J\in\{0,1\}$, with arbitrary correlations, and put
\begin{equation}
a_\ell=\Pr\{K=\ell,J=1\},
\qquad
M=\sum_\ell(\ell-\mathbb EK)^2a_\ell.
\end{equation}
Extending $a_\ell$ by zero,
\begin{equation}
\Pr\{K+J=\ell\}-\Pr\{K=\ell\}
=a_{\ell-1}-a_\ell.
\end{equation}
The total variation of a nonnegative finite sequence that begins and ends
at zero is at least twice its maximum.  Also, at least $H/2$ of the
$a$-mass lies within
$R=\sqrt{2M/H}$ of $\mathbb EK$, an interval containing at most
$2R+2$ integers.  Consequently
\begin{equation}
\operatorname{TV}(\mathcal L(K+J),\mathcal L(K))
\geq\max_\ell a_\ell
\geq\frac{H}{4(\sqrt{2M/H}+1)}.
\label{eq:integer-count-shift-TV}
\end{equation}
Here
$M=\mathbb E[J(K-\mathbb EK)^2]\leq\operatorname{Var}K$.
Using Eqs.~\eqref{eq:erased-background-count-variance} and
\eqref{eq:distinguished-erased-indicator} in
Eq.~\eqref{eq:integer-count-shift-TV} yields, since $m_{n,t}\geq1$,
\begin{equation}
\operatorname{TV}(\mathcal L(K+J),\mathcal L(K))
\geq\frac{p}{24\sqrt{m_{n,t}}}.
\label{eq:erased-count-TV-lower}
\end{equation}
The count is a deterministic statistic of the flagged observation
$(E,z_E)$.  Classical data processing, Eq.~\eqref{eq:flagged-TV-lower},
and Eq.~\eqref{eq:erased-count-TV-lower} prove
Eq.~\eqref{eq:true-trace-quadratic-lower}.  Solving that inequality against
$Cn^{-1/2}$ proves Eq.~\eqref{eq:true-trace-quadratic-depth}.
\end{proof}

\begin{corollary}[Quadratic time to any fixed factor of the optimum]
\label{cor:constant-factor-optimality-speed-limit}
Fix \(0<p<1/2\) and \(A\geq1\), and let
\begin{equation}
\mathcal D_{n,p}^{\rm loc}(t)
=\mathbb E_{V\sim\mathcal E_{n,2t}^{\rm locH}}
\mathbb E_E D_E(V)
\end{equation}
for the canonical boundary initialization.  If, for all sufficiently large
odd \(n\),
\begin{equation}
\mathcal D_{n,p}^{\rm loc}(t)
\leq A\mathcal D_{n,p}^{\rm opt},
\label{eq:constant-factor-optimality-assumption}
\end{equation}
then \(t=\Omega_{p,A}(n^2)\).  More explicitly, with
\(c_p=\sqrt{p/[2\pi(1-p)]}\), for all sufficiently large \(n\),
\begin{equation}
\boxed{
t\geq1+\frac14
\left(\frac{p^2n}{9216A^2c_p^2}-3\right)_+^2.}
\label{eq:constant-factor-optimality-depth}
\end{equation}
Thus the quadratic obstruction is a time-to-optimality theorem, not merely
a comparison with a convenient random-code benchmark.
\end{corollary}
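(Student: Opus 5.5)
The plan is to combine the quadratic lower bound of Theorem~\ref{thm:true-trace-quadratic-speed-limit} with the sharp asymptotic for the optimum in Theorem~\ref{thm:universal-charge-count-optimality}. The latter gives Eq.~\eqref{eq:iid-optimal-asymptotic}, namely \(\mathcal D_{n,p}^{\rm opt}=c_pn^{-1/2}+o(n^{-1/2})\) for fixed \(p<1/2\). We only need the upper half of this statement. Concretely, for all sufficiently large odd \(n\) we have \(\mathcal D_{n,p}^{\rm opt}\leq 2c_pn^{-1/2}\). This also follows directly from Eq.~\eqref{eq:iid-charge-haar-optimality}: there \(M_{\rm eq}(n,p)=c_pn^{-1/2}+o(n^{-1/2})\) by Eq.~\eqref{eq:equilibrium-flagged-asymptotic}, and the remaining terms are exponentially small once we fix some \(\beta\in(p,1/2)\).

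Under the hypothesis Eq.~\eqref{eq:constant-factor-optimality-assumption} this yields \(\mathcal D_{n,p}^{\rm loc}(t)\leq 2Ac_pn^{-1/2}\) for large \(n\). We then invoke Eq.~\eqref{eq:true-trace-quadratic-depth} with \(C=2Ac_p\). Since \(2304C^2=2304\cdot4A^2c_p^2=9216A^2c_p^2\), the resulting bound is exactly Eq.~\eqref{eq:constant-factor-optimality-depth}. As a check, the derivation behind Eq.~\eqref{eq:true-trace-quadratic-depth} goes as follows. One combines \(p/(48\sqrt{2\sqrt{t-1}+3})\leq Cn^{-1/2}\), squares to get \(2\sqrt{t-1}+3\geq p^2n/(2304C^2)\), and isolates \(t-1\). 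The positive part handles the regime where the bracket is negative. For fixed \(p,A\), the right-hand side grows like \(n^2\), which gives \(t=\Omega_{p,A}(n^2)\).

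The only mildly delicate point is the factor \(2\) slack. It must absorb the \(o(n^{-1/2})\) error, which is why the statement reads ``for all sufficiently large \(n\)''. The constant \(c_p>0\) is fixed, so the relative error tends to zero and eventually falls below one. No other obstacle arises. The assumption \(A\geq1\) is not needed beyond keeping the constant meaningful, and the restriction \(p<1/2\) is used only so that the optimum asymptotic of Theorem~\ref{thm:universal-charge-count-optimality} applies.
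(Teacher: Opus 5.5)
Your proposal is correct and follows the paper's proof exactly: it bounds \(\mathcal D_{n,p}^{\rm opt}\leq2c_pn^{-1/2}\) for large \(n\) via Theorem~\ref{thm:universal-charge-count-optimality}, then applies Eq.~\eqref{eq:true-trace-quadratic-depth} with \(C=2Ac_p\), using \(2304\cdot4=9216\). Your explicit re-derivation of the depth inequality from Eq.~\eqref{eq:true-trace-quadratic-lower} and your remark on the factor-two slack absorbing the \(o(n^{-1/2})\) term are both accurate.
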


\begin{proof}
Theorem~\ref{thm:universal-charge-count-optimality} gives
\(\mathcal D_{n,p}^{\rm opt}=c_pn^{-1/2}+o(n^{-1/2})\), and therefore
\(\mathcal D_{n,p}^{\rm opt}\leq2c_pn^{-1/2}\) for all sufficiently large
\(n\).  Under Eq.~\eqref{eq:constant-factor-optimality-assumption}, apply
Eq.~\eqref{eq:true-trace-quadratic-depth} with \(C=2Ac_p\), which is exactly
Eq.~\eqref{eq:constant-factor-optimality-depth}.
\end{proof}

\begin{corollary}[Two-sided flagged window-count law]
\label{cor:flagged-window-count-law}
Let \(A_t\) be the prefix used in the proof of
Theorem~\ref{thm:true-trace-quadratic-speed-limit}, let
\(m=|A_t|=m_{n,t}\), and let \(\xi_i\) be the iid Bernoulli-\(p\)
erasure flags.  In the ensemble-averaged occupation experiment define
\begin{equation}
X_a=\sum_{i\in A_t}\xi_i z_i^{(a)},\qquad
F_a(\ell)=\Pr\{X_a\leq\ell\},
\label{eq:flagged-window-count-cdf}
\end{equation}
for logical input \(a=0,1\).  For every computational-basis product
background in Theorem~\ref{thm:diffusive-linear-task-lower},
\begin{equation}
\boxed{
D_{\rm KS}:=\max_{\ell\in\mathbb Z}[F_0(\ell)-F_1(\ell)]
\geq\frac{p}{24\sqrt m}.}
\label{eq:flagged-count-ks-lower}
\end{equation}

Now specialize to \(n=2q+1\), \(q\geq1\), \(0<p<1\), and the contiguous half-filled
background
\begin{equation}
B_0=\{1,\ldots,q\};
\label{eq:contiguous-half-filled-background}
\end{equation}
logical one additionally occupies site \(0\).  Then
\begin{equation}
\boxed{
\frac{p}{24\sqrt m}
\leq D_{\rm KS}
\leq
\min\!\left\{1,\frac{24}{m}
+\frac{\sqrt{6\pi}}{2\sqrt{mp(1-p)}}\right\}.}
\label{eq:flagged-count-two-sided}
\end{equation}
Thus, for fixed \(p\in(0,1)\), along every sequence with
\(n\to\infty\), \(t=t_n\to\infty\), and \(t_n/n^2\to0\),
\begin{equation}
D_{\rm KS}=\Theta_p(t^{-1/4}).
\label{eq:flagged-count-quarter-law}
\end{equation}
This is a matching upper bound for the count observable, not for the full
flagged total variation or the quantum trace distance.

For \(N_a\) independently resampled circuit-and-flag trials per logical
input, let \(\widehat F_a\) be the empirical CDF and put
\begin{equation}
\widehat D_{\rm KS}
=\max_{\ell\in\mathbb Z}
[\widehat F_0(\ell)-\widehat F_1(\ell)],
\qquad
\epsilon_a=\sqrt{\frac{\log(4/\alpha)}{2N_a}}.
\end{equation}
Then, with probability at least \(1-\alpha\),
\begin{equation}
\boxed{
\max\{0,\widehat D_{\rm KS}-\epsilon_0-\epsilon_1\}
\leq D_{\rm KS}.}
\label{eq:flagged-count-dkw-certificate}
\end{equation}
For \(N_0=N_1=N\), reject equality of the two count laws when
\(\widehat D_{\rm KS}>2\sqrt{\log(4/\alpha)/(2N)}\).  This test has
type-I error at most \(\alpha\) and power at least \(1-\alpha\) against the
lower envelope Eq.~\eqref{eq:flagged-count-ks-lower} whenever
\begin{equation}
N>\frac{4608m}{p^2}\log\frac4\alpha .
\label{eq:flagged-count-sample-complexity}
\end{equation}
\end{corollary}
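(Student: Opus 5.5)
The plan is to reuse the exact coupling from the proof of Theorem~\ref{thm:true-trace-quadratic-speed-limit}, which turns $D_{\rm KS}$ into a single point probability. In the ensemble-averaged occupation experiment, write $B=\Pi_t(B_0)$ and $S=\Pi_t(0)$. Then $X_0=K$ and $X_1=K+J$, with $K=\sum_{i\in A_t}\xi_i\eta_i$ and $J=\xi_S\mathbf 1_{\{S\in A_t\}}\in\{0,1\}$. Since $X_1\geq X_0$ pathwise, $F_0\geq F_1$ pointwise, and
\begin{equation}
F_0(\ell)-F_1(\ell)=\Pr\{K\leq\ell<K+J\}=\Pr\{K=\ell,J=1\}=a_\ell .
\end{equation}
Hence $D_{\rm KS}=\max_\ell a_\ell$ exactly.

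\textbf{Lower bound.} Equation~\eqref{eq:integer-count-shift-TV} already bounds $\max_\ell a_\ell$ from below. The other inputs are $H\geq p/2$ from the window estimate and $M\leq\operatorname{Var}K\leq pm$ from the negative-correlation induction. Neither uses anything about $B_0$ beyond its being a deterministic product configuration avoiding site $0$. So Eq.~\eqref{eq:flagged-count-ks-lower} follows for every such background, with the same arithmetic that gave Eq.~\eqref{eq:erased-count-TV-lower}.

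\textbf{Upper bound for the contiguous background.} Use $a_\ell\leq\Pr\{K=\ell\}$ and condition on the background configuration $B$. Given $B$, the flags are independent of the circuit, so
\begin{equation}
K\mid B\sim\operatorname{Bin}(N_B,p),\qquad N_B=|B\cap A_t| .
\end{equation}
A uniform local bound on the binomial mode, of the form $\max_\ell\binom N\ell p^\ell(1-p)^{N-\ell}\leq c/\sqrt{Np(1-p)}$, then gives
\begin{equation}
\max_\ell\Pr\{K=\ell\}\leq\Pr\{N_B<m/6\}+\frac{\sqrt{6\pi}}{2\sqrt{mp(1-p)}} .
\end{equation}
This holds on the event $N_B\geq m/6$ after adjusting the constant $c$; the fraction $1/6$ is chosen to match the stated constant.

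It then remains to show $\Pr\{N_B<m/6\}\leq 24/m$. The number of holes in $A_t$ is $m-N_B$, and its expectation should be computed from the one-copy averaging chain. Initially the only holes near the left boundary are site $0$ and the sites $\geq q+1$, and $A_t$ is a prefix of length $m\leq 2\sqrt{t-1}+3$. By linearity, the expected number of holes in $A_t$ is at most $1$ plus the expected number of the right-hand holes that the diffusive kernel has carried into the window.

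I would bound that second contribution using the heat-kernel estimate from the upper-bound part of Theorem~\ref{thm:diffusive-linear-task-lower}, via the cosine-mode expansion. This estimate is where I expect the main obstacle: the constant must come out so that the expected hole count is at most roughly $3m/5$ or a similar fraction, after which Markov's inequality produces the $24/m$ term. If that fails, the fallback is to note that for $m$ comparable to $n$ the trivial bound $1$ in the $\min$ suffices, so the estimate is needed only when $m\ll n$. In that regime, right-hand holes entering the window are exponentially rare by a Gaussian tail. Combining the two bounds gives Eq.~\eqref{eq:flagged-count-two-sided}. Since $m\asymp\sqrt t$ and $m<n$ once $t_n/n^2\to0$, both sides are of order $t^{-1/4}$, which is Eq.~\eqref{eq:flagged-count-quarter-law}.

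\textbf{Certificate and test.} Apply the Dvoretzky--Kiefer--Wolfowitz inequality with Massart's constant, at level $\alpha/2$, to each empirical CDF, giving $\sup_\ell|\widehat F_a-F_a|\leq\epsilon_a$. A union bound and the triangle inequality yield Eq.~\eqref{eq:flagged-count-dkw-certificate}. Under equality of the two laws, $D_{\rm KS}=0$, so rejection implies the DKW event failed; this has probability at most $\alpha$, which bounds the type-I error.

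For power, note first that Eq.~\eqref{eq:flagged-count-ks-lower} gives $D_{\rm KS}\geq p/(24\sqrt m)$. On the DKW event, $\widehat D_{\rm KS}\geq D_{\rm KS}-2\epsilon$, so rejection is guaranteed once $D_{\rm KS}>4\epsilon$, that is, once
\begin{equation}
\frac{p}{24\sqrt m}>4\sqrt{\frac{\log(4/\alpha)}{2N}} .
\end{equation}
Squaring and rearranging this is exactly Eq.~\eqref{eq:flagged-count-sample-complexity}.
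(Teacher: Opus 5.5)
Your coupling identity $D_{\rm KS}=\max_\ell a_\ell$, the lower bound, the conditional binomial reduction $K\mid B\sim\operatorname{Bin}(N_B,p)$, and the DKW certificate and sample-size arithmetic all agree with the paper. The gap is the estimate $\Pr\{N_B<m/6\}\leq 24/m$, which you do not establish.

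First, Markov's inequality cannot turn an expected hole count of order $m$ into an $O(1/m)$ bound. If $\mathbb E(m-N_B)\leq 3m/5$, Markov gives only $\Pr\{m-N_B>5m/6\}\leq 18/25$. That is a constant, so your upper bound would be $\Theta(1)$. You need a second-moment bound. The pairwise negative correlation of Eq.~\eqref{eq:interchange-pairwise-negative-correlation} gives $\operatorname{Var}N_B\leq\mathbb E N_B$, and Chebyshev then gives the $1/m$ rate.

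Second, your fallback is backwards. The trivial bound $1$ covers only small $m$, where $24/m\geq1$. For $m$ comparable to $n$, for example $t\asymp n^2$ with $m<n$, the right-hand side is $O(n^{-1/2})$. In that regime right-hand holes do enter the window in numbers proportional to $m$. So neither the trivial bound nor a Gaussian-tail argument proves the boxed finite-size inequality there. The tail route could at best handle the asymptotic regime $t_n/n^2\to0$, and you leave even that tail estimate unquantified.

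The missing idea is monotonicity of the mean profile, which also makes your heat-kernel computation unnecessary.
\begin{itemize}
\item The ensemble-averaged logical-one profile $u$ starts as the nonincreasing indicator of $\{0,\ldots,q\}$. Replacing a disjoint matching of adjacent pairs by pair averages preserves both monotonicity and total mass. Hence every prefix satisfies $\sum_{i\in A_t}u_i\geq m(q+1)/n\geq m/2$ at every $t$.
\item Subtracting $\Pr\{S\in A_t\}\leq1$ gives $\mu=\mathbb E N_B\geq m/2-1\geq m/6$, since $m\geq3$.
\item Negative correlation gives $\operatorname{Var}N_B\leq\mu$, so Chebyshev yields $\Pr\{N_B<\mu/2\}\leq 4/\mu\leq 24/m$.
\item On the complementary event $N_B\geq\mu/2\geq m/12$, the Fourier bound $\max_k\Pr\{\operatorname{Bin}(r,p)=k\}\leq\sqrt\pi/[2\sqrt{2rp(1-p)}]$ gives exactly $\sqrt{6\pi}/[2\sqrt{mp(1-p)}]$. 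This also fixes the constant $c$ you leave unspecified.
\end{itemize}
This argument is uniform in $t$ and covers the full two-sided inequality, not only the $t_n/n^2\to0$ asymptotics.
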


\begin{proof}
In the common interchange coupling from the proof of
Theorem~\ref{thm:true-trace-quadratic-speed-limit}, write
\(X_0=K\), \(X_1=K+J\), with \(J\in\{0,1\}\), and define
\(a_\ell=\Pr\{K=\ell,J=1\}\).  Exact cancellation gives
\begin{equation}
F_0(\ell)-F_1(\ell)
=\Pr\{K=\ell,J=1\}=a_\ell.
\label{eq:flagged-count-cdf-identity}
\end{equation}
The proof of Eq.~\eqref{eq:integer-count-shift-TV} already establishes
\(\max_\ell a_\ell\geq p/(24\sqrt m)\), proving
Eq.~\eqref{eq:flagged-count-ks-lower}.

For the upper bound, let \(u_i\) be the ensemble-averaged logical-one
occupation profile.  Its initial contiguous profile is nonincreasing, and
replacing either disjoint matching of adjacent pairs by pair averages
preserves this property and the total mass.  Hence every prefix of length
\(m\) satisfies
\[
\sum_{i\in A_t}u_i
\geq m\frac{q+1}{2q+1}\geq\frac m2.
\]
If \(N_A=|B\cap A_t|\) is the number of background particles in the
window, then
\[
\mu:=\mathbb EN_A
=\sum_{i\in A_t}u_i-\Pr\{S\in A_t\}
\geq\frac m2-1\geq\frac m6,
\]
because \(m\geq3\).  The pairwise negative correlation proved in
Eq.~\eqref{eq:interchange-pairwise-negative-correlation} gives
\(\operatorname{Var}N_A\leq\mu\), and therefore
\[
\Pr\{N_A<\mu/2\}\leq4/\mu\leq24/m.
\]
Conditioned on \(N_A=r\), the erased background count is exactly
\(\operatorname{Bin}(r,p)\).  Fourier inversion and
\[
|1-p+pe^{i\theta}|^r
\leq\exp[-2rp(1-p)\theta^2/\pi^2],
\qquad |\theta|\leq\pi,
\]
give
\[
\max_k\Pr\{\operatorname{Bin}(r,p)=k\}
\leq\frac{\sqrt\pi}{2\sqrt{2rp(1-p)}}.
\]
Since \(D_{\rm KS}=\max_\ell a_\ell
\leq\max_\ell\Pr\{K=\ell\}\), splitting on
\(N_A\geq\mu/2\geq m/12\) proves
Eq.~\eqref{eq:flagged-count-two-sided}.  The pre-boundary asymptotic follows
from \(m=\Theta(\sqrt t)\).

Finally, the Dvoretzky--Kiefer--Wolfowitz inequality applied to both
empirical CDFs proves Eq.~\eqref{eq:flagged-count-dkw-certificate}.  Under
the event used there,
\(\widehat D_{\rm KS}\geq D_{\rm KS}-\epsilon_0-\epsilon_1\).
For equal sample sizes, requiring the theorem lower envelope to exceed
\(4\epsilon_0\) gives
Eq.~\eqref{eq:flagged-count-sample-complexity}.
\end{proof}

\begin{remark}[Experimental access to the count witness]
\label{rem:flagged-count-experimental-access}
The variables \(z_i^{(a)}\) on erased sites belong to the complementary
output.  They are not revealed by an ordinary natural-loss flag alone.
Equation~\eqref{eq:flagged-count-dkw-certificate} is directly measurable in
an instrumented benchmark that either routes flag-selected sites to
occupation detectors or measures all sites on a separate copy and applies
an independently generated flagged mask.  The decoder arm must use
different copies, on which the selected sites are actually discarded.
\end{remark}

\begin{theorem}[Full-flag causal plateau and hole-overlap obstruction]
\label{thm:full-flag-causal-plateau}
Use the contiguous half-filled initialization
of Eq.~\eqref{eq:contiguous-half-filled-background}, so that
\(n=2q+1\),
\begin{equation}
B_0=\{1,\ldots,q\},\qquad
C_0=B_0\cup\{0\}=\{0,\ldots,q\}.
\end{equation}
In the random-interchange representation after \(t\) complete brickwork
cycles, write
\begin{equation}
\begin{aligned}
B_t&=\Pi_t(B_0),& S_t&=\Pi_t(0),\\
C_t&=B_t\cup\{S_t\}=\Pi_t(C_0).
\end{aligned}
\label{eq:full-flag-interchange-sets}
\end{equation}
Let \(E\) be the iid Bernoulli-\(p\) erased set and retain the complete
classical observation \((E,z_E)\).  Then, for every \(t\geq1\),
\begin{equation}
\operatorname{TV}(\widehat P_t^0,\widehat P_t^1)\leq p.
\label{eq:full-flag-coupling-upper}
\end{equation}
More generally, for every deterministic set \(L\subseteq[n]\), put
\(H_L=|L\setminus C_t|\).  One has
\begin{equation}
\boxed{
\operatorname{TV}(\widehat P_t^0,\widehat P_t^1)
\geq p\left(\Pr\{S_t\in L\}-\mathbb E H_L\right)_+.}
\label{eq:full-flag-hole-overlap}
\end{equation}
If \(4t\leq q\), the upper bound is attained exactly:
\begin{equation}
\boxed{
\operatorname{TV}(\widehat P_t^0,\widehat P_t^1)=p,
\qquad
\mathbb E_{V,E}D_E(V)\geq\frac p2.}
\label{eq:full-flag-causal-plateau}
\end{equation}

This plateau also excludes a uniform pre-boundary triangular-discrimination
upper.  Indeed, fix \(0<p<\beta<1/2\), set \(b=\lfloor\beta n\rfloor\),
and define, with zero-denominator terms omitted,
\begin{equation}
\Delta_{\beta}^{\rm flag}(t)
=\sum_{|E|\leq b}\pi_p(E)\sum_z
\frac{[P_{t,E}^1(z)-P_{t,E}^0(z)]^2}
{P_{t,E}^1(z)+P_{t,E}^0(z)}.
\label{eq:full-flag-triangular-definition}
\end{equation}
Whenever \(4t\leq q\),
\begin{equation}
\boxed{
\Delta_{\beta}^{\rm flag}(t)
\geq p\Pr\{\operatorname{Bin}(n-1,p)\leq b-1\}.}
\label{eq:full-flag-triangular-obstruction}
\end{equation}
For fixed \(p<\beta\), the right-hand side tends exponentially to \(p\).
Consequently, no \((n,t)\)-uniform estimate
\(\Delta_{\beta}^{\rm flag}(t)
\leq C_p(t^{-1/2}+n^{-1})\), and hence no full-channel upper
\(\mathbb E D_E\leq C_p(t^{-1/4}+n^{-1/2})\), can hold for this
initialization throughout the pre-boundary regime.
\end{theorem}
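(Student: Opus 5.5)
The argument works entirely in the random-interchange representation from the proof of Theorem~\ref{thm:true-trace-quadratic-speed-limit}. There the averaged inputs are the laws of \(B_t\) (input zero) and \(C_t=B_t\cup\{S_t\}\) (input one), coupled on a single permutation \(\Pi_t\).

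\emph{Upper bound (\ref{eq:full-flag-coupling-upper}).} Couple the two flagged observations with the same \(E\) and the same \(\Pi_t\). The patterns \(z_E\) then differ only if \(S_t\in E\). Since \(E\) is independent of \(\Pi_t\), this event has probability \(p\), and the coupling inequality gives \(\operatorname{TV}\leq p\).

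\emph{Lower bound (\ref{eq:full-flag-hole-overlap}).} Use the test event \(\mathcal A=\{L\subseteq E,\ z_L\equiv1\}\), which is a function of \((E,z_E)\). Under input one, its probability is at least \(p^{|L|}\) times the probability that \(L\subseteq C_t\). That probability is not what we need, so instead condition on the pattern restricted to \(L\setminus\{S\}\).

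A cleaner route uses the count-type statistic. Apply the test
\(f(E,z_E)=\mathbf 1\{\text{some } i\in E\cap L \text{ has } z_i=1 \text{ while } \ldots\}\).
Concretely, I would take the statistic ``\(E\cap L\) contains a site unoccupied in \(z\)'', meaning there is a hole visible in \(L\). Then compare the two inputs site by site.

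- Input one has fewer holes than input zero. The extra hole of input zero is at \(S_t\).
- \(\Pr_0\{\text{hole seen}\}-\Pr_1\{\text{hole seen}\}\geq \Pr\{S_t\in E\cap L,\ \text{no other hole of } C_t \text{ in } E\cap L\}\).
- This is at least \(p\Pr\{S_t\in L\}-p\,\mathbb E|E\cap L\setminus C_t|\geq p(\Pr\{S_t\in L\}-\mathbb EH_L)\).

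The last step needs a union bound over the holes in \(L\setminus C_t\), each erased with probability \(p\), independently of \(\Pi_t\). Strictly, this gives \(p\Pr\{S\in L\}-p^2\mathbb EH_L\), which is stronger than the claim.

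\emph{Plateau (\ref{eq:full-flag-causal-plateau}).} Take \(L=\{0,\dots,2t\}\) or any causal window. When \(4t\leq q\), light-cone control of the brickwork gives two facts:
- \(S_t\) moves at most \(2t\) sites, so \(\Pr\{S_t\in L\}=1\).
- The region \(\{0,\dots,q-2t\}\) is fed only from \(C_0=\{0,\dots,q\}\), so it stays fully occupied and \(H_L=0\).

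The lightcone bookkeeping, with one site moved per layer and two layers per cycle, is what requires \(2t\le q-2t\); this is where the \(4t\leq q\) hypothesis enters. The lower bound then meets the upper bound, giving \(\operatorname{TV}=p\). Equation~\eqref{eq:flagged-TV-lower} then yields \(\mathbb ED\geq p/2\).

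\emph{Triangular obstruction (\ref{eq:full-flag-triangular-obstruction}).} In the plateau regime the two conditional laws \(P^0_{t,E}\) and \(P^1_{t,E}\) have disjoint supports whenever \(S_t\in E\). A visible hole in the protected window appears only under input zero, so on that event each term of the triangular sum contributes its full mass.

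More carefully, I would show that, given \(E\ni S\)-region, the supports of \(z_E\) separate. This holds because under input one the window is full, while under input zero the hole at \(S_t\) is observed. On the separated part, \((P^1-P^0)^2/(P^1+P^0)=P^1+P^0\). Summing gives at least \(\Pr\{S_t\in E,\ |E|\le b\}\), which equals \(p\Pr\{\operatorname{Bin}(n-1,p)\leq b-1\}\).

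This support-separation step is the main obstacle. The window being full must be shown simultaneously for all configurations, not just on average, and that again rests on the deterministic lightcone. The final non-uniformity claim follows by fixing \(t\) with \(4t\le q\) and letting \(t^{-1/2}+n^{-1}\to0\), which contradicts a limit of \(p\). For the \(D_E\) claim, use \(\Delta\lesssim\) the TV and the plateau in Eq.~\eqref{eq:full-flag-causal-plateau} directly.
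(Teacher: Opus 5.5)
Your proposal is correct and is essentially the paper's proof. It uses the same-permutation coupling for $\operatorname{TV}\le p$, the ``visible hole in $L$'' test event for the hole-overlap bound, the deterministic light cone $\{0,\ldots,q-2t\}\subseteq C_t$ with $L_t=\{0,\ldots,2t\}$ for the plateau, and the same separating event restricted to $|E|\le b$ for the triangular bound. Your use of the conditional independence of the flag at $S_t$ from the flags on $L\setminus C_t$ even gives the slightly sharper correction $p^2\,\mathbb E H_L$ in place of the paper's $p\,\mathbb E H_L$.

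Two phrasings should be tightened. First, the laws $P^0_{t,E}$ and $P^1_{t,E}$ need not be disjointly supported. What you actually use is that the visible-hole event in $L_t$ is $P^1_{t,E}$-null for every $E$ and has $P^0_{t,E}$-mass at least $\Pr\{S_t\in E\}$. Second, the final non-uniformity claim needs a sequence $t_n\to\infty$ with $4t_n\le q$ (e.g.\ $t_n=\lfloor n^a\rfloor$, $0<a<1$), not a fixed $t$.
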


\begin{proof}
Couple the two logical inputs with the same random permutation \(\Pi_t\)
and the same erased set \(E\).  Their flagged observations are
\begin{equation}
Y_0=(E,B_t\cap E),\qquad Y_1=(E,C_t\cap E).
\end{equation}
They differ only if \(S_t\in E\).  Since \(E\) is independent of
\(\Pi_t\), this event has probability \(p\), and the coupling
characterization of total variation proves
Eq.~\eqref{eq:full-flag-coupling-upper}.

For a deterministic set \(L\), let \(\mathsf A_L\) denote the event in the
common observation alphabet that a flagged empty site occurs in \(L\), and
let \(\mathsf A_L^{(a)}\) be its pullback under \(Y_a\).
Since \(B_t=C_t\setminus\{S_t\}\), the common coupling gives pointwise
\begin{equation}
\mathbf 1_{\mathsf A_L^{(0)}}-
\mathbf 1_{\mathsf A_L^{(1)}}
=\mathbf 1_{\{S_t\in L\cap E\}}
 \mathbf 1_{(\mathsf A_L^{(1)})^c}.
\end{equation}
Moreover, conditional on \(C_t\),
\begin{equation}
\Pr(\mathsf A_L^{(1)}\mid C_t)
=1-(1-p)^{H_L}\leq pH_L.
\end{equation}
It follows that
\begin{align}
\widehat P_t^0(\mathsf A_L)-\widehat P_t^1(\mathsf A_L)
&\geq p\Pr\{S_t\in L\}-\Pr(\mathsf A_L^{(1)})\nonumber\\
&\geq p[\Pr\{S_t\in L\}-\mathbb EH_L],
\end{align}
which proves Eq.~\eqref{eq:full-flag-hole-overlap} after taking the positive
part.

It remains to establish the exact plateau.  One complete cycle contains two
nearest-neighbor matching layers, so every label moves by at most two sites
per cycle and hence \(S_t\leq2t\).  All holes of \(C_0\) start at sites
at least \(q+1\); after \(2t\) matching layers their leftmost possible
position is at least \(q+1-2t\).  Therefore
\begin{equation}
\{0,\ldots,q-2t\}\subseteq C_t
\label{eq:full-flag-filled-light-cone}
\end{equation}
for every interchange history.  If \(4t\leq q\), then for
\(L_t=\{0,\ldots,2t\}\) one has \(S_t\in L_t\) and \(H_{L_t}=0\)
deterministically.  Equation~\eqref{eq:full-flag-hole-overlap} gives the
lower bound \(p\), which combines with
Eq.~\eqref{eq:full-flag-coupling-upper} to prove the TV equality.  The
trace-distance consequence follows from
Eq.~\eqref{eq:flagged-TV-lower}.

Finally, intersect the separating event
\(\mathsf A_{L_t}^{(0)}\) with \(|E|\leq b\).  Its probability under
logical one is zero.  On this event every summand of
Eq.~\eqref{eq:full-flag-triangular-definition} contributes its logical-zero
probability, while, conditional on \(S_t\in E\), the remaining \(n-1\)
flags are iid Bernoulli-\(p\).  The event therefore has logical-zero
probability
\begin{equation}
p\Pr\{\operatorname{Bin}(n-1,p)\leq b-1\},
\end{equation}
which proves Eq.~\eqref{eq:full-flag-triangular-obstruction}.  Taking, for
example, \(t=\lfloor n^a\rfloor\) with any \(0<a<1\) gives
\(4t\leq q\) for all sufficiently large odd \(n\), whereas both proposed
upper envelopes tend to zero.  This proves the final impossibility
statements.
\end{proof}

\begin{remark}[Why the count law can decay while the full flag does not]
\label{rem:count-versus-full-flag}
The event in Theorem~\ref{thm:full-flag-causal-plateau} uses both the flag
location and the erased occupation.  If that information is compressed to
the total count on \(L_t\), then in the same plateau regime
\begin{equation}
X_1\sim\operatorname{Bin}(2t+1,p),\qquad
X_0\sim\operatorname{Bin}(2t,p),
\end{equation}
and hence
\begin{equation}
D_{\rm KS}^{L_t}
=p\max_k\Pr\{\operatorname{Bin}(2t,p)=k\}
=\Theta_p(t^{-1/2}).
\end{equation}
The diffusive window of Corollary~\ref{cor:flagged-window-count-law} instead
has size \(\Theta(\sqrt t)\) and gives \(\Theta_p(t^{-1/4})\).  Neither
compressed count is sufficient for the complete flagged output, whose TV is
exactly \(p\) in Eq.~\eqref{eq:full-flag-causal-plateau}.
\end{remark}

\begin{lemma}[Exact mean--fluctuation reduction of the direct upper problem]
\label{lem:direct-trace-mean-fluctuation}
For each circuit \(V\), form the flagged direct sum
\begin{equation}
\mathcal X(V)=\bigoplus_E\pi_p(E)X_E(V),
\qquad
\overline{\mathcal X}=\mathbb E_V\mathcal X(V),
\end{equation}
and define
\begin{equation}
\begin{aligned}
\mathcal D&=\frac12\mathbb E_V\|\mathcal X(V)\|_1,\\
M&=\frac12\|\overline{\mathcal X}\|_1,\\
\Gamma&=\frac12\mathbb E_V
\|\mathcal X(V)-\overline{\mathcal X}\|_1.
\end{aligned}
\label{eq:direct-trace-three-terms}
\end{equation}
Then exactly
\begin{equation}
\boxed{\max\{M,\Gamma/2\}\leq\mathcal D\leq M+\Gamma.}
\label{eq:direct-trace-mean-fluctuation-equivalence}
\end{equation}
For the canonical computational-basis initialization and \(t\geq1\),
\begin{equation}
M=\frac12\operatorname{TV}
(\widehat P_t^0,\widehat P_t^1).
\label{eq:direct-trace-classical-mean}
\end{equation}
Thus any post-burn-in direct trace-distance upper bound requires both a full
classical flagged insertion-tolerance bound and a circuit-to-circuit trace
fluctuation bound on \(\Gamma\); the averaged one-copy exclusion process
alone cannot supply the latter.  Theorem~\ref{thm:full-flag-causal-plateau}
also shows that the classical requirement is false throughout the initial
causal plateau, so the reduction cannot yield a uniform pre-boundary decay
law for the complete flagged channel.

One concrete sufficient second-moment quantity is
\begin{equation}
\begin{aligned}
\mathcal F_{n,p,\beta}(t)
&=\sum_{|E|\leq\beta n}\pi_p(E)2^{|E|-1}\\
&\qquad\times\mathbb E_V
\|X_E(V)-\mathbb E_VX_E(V)\|_2^2.
\end{aligned}
\label{eq:direct-trace-covariance-functional}
\end{equation}
for which
\begin{equation}
\Gamma\leq\sqrt{\mathcal F_{n,p,\beta}(t)}
+2\Pr\{|E|>\beta n\}.
\label{eq:direct-trace-fluctuation-sufficient}
\end{equation}
\end{lemma}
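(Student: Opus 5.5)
The statement to prove is Lemma~\ref{lem:direct-trace-mean-fluctuation}. Its parts follow from convexity and the triangle inequality for the trace norm, plus one evaluation of the mean. I would proceed in four steps.

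\textbf{Step 1: the sandwich inequality.} Convexity of $\|\cdot\|_1$ gives $\|\overline{\mathcal X}\|_1\leq\mathbb E_V\|\mathcal X(V)\|_1$, which is $M\leq\mathcal D$. The triangle inequality gives
\[
\|\mathcal X(V)-\overline{\mathcal X}\|_1\leq\|\mathcal X(V)\|_1+\|\overline{\mathcal X}\|_1 .
\]
Averaging over $V$ yields $\Gamma\leq\mathcal D+M\leq2\mathcal D$, so $\Gamma/2\leq\mathcal D$. The triangle inequality in the other direction,
\[
\|\mathcal X(V)\|_1\leq\|\overline{\mathcal X}\|_1+\|\mathcal X(V)-\overline{\mathcal X}\|_1 ,
\]
averaged over $V$, gives $\mathcal D\leq M+\Gamma$.

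\textbf{Step 2: identifying $M$.} The direct sum decouples blockwise, so $\|\overline{\mathcal X}\|_1=\sum_E\pi_p(E)\|\mathbb E_VX_E\|_1$. For the canonical computational-basis code, the gate measure contains the independent phases $e^{i\alpha}$ and the $U(1)$ structure of each gate. The first step is to check that $\mathbb E_V$ kills every off-diagonal contribution.

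For the cross block $\sigma_E$, the two codewords lie in charge sectors $q$ and $q+1$. Each gate carries independent phases on its charge-$0$, $1$ and $2$ blocks (the $U(2)$ determinant phase together with $e^{i\alpha}$). Averaging these phases annihilates every coherence between computational strings of different local charge structure, and in particular the cross term. For the diagonal blocks, the same twirl makes the averaged erased marginals $\bar\rho_E^a$ diagonal; this is exactly what the proof of Theorem~\ref{thm:true-trace-quadratic-speed-limit} used. Hence
\[
\mathbb E_VX_E=\tfrac14\,\mathrm{diag}\bigl(\bar\rho_E^0-\bar\rho_E^1,\;\bar\rho_E^1-\bar\rho_E^0\bigr),
\]
with no error term, so $\tfrac12\|\mathbb E_VX_E\|_1=\tfrac14\|\bar\rho_E^0-\bar\rho_E^1\|_1=\tfrac12\mathrm{TV}(P^0_{t,E},P^1_{t,E})$, with TV the $\ell^1$ distance used there. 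Summing with the weights $\pi_p(E)$ produces the flagged direct-sum total variation, giving Eq.~\eqref{eq:direct-trace-classical-mean}.

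The main care point is this exact vanishing of the averaged cross block at every $t\geq1$. I would verify it first on a single gate: the phase $e^{i\alpha}$ and the Haar $U(2)$ phase make $\mathbb E\,G\,X\,G^\dagger$ block-diagonal in local charge. I would then propagate this through the layers, noting that the erased partial trace preserves diagonality.

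\textbf{Step 3: the interpretive consequences.} These follow directly. Necessity of each estimate up to a factor two comes from the lower bound $\max\{M,\Gamma/2\}\leq\mathcal D$. Failure of the classical requirement in the causal regime comes from combining Eq.~\eqref{eq:full-flag-causal-plateau} with the identity for $M$. The remark about one-copy processes holds because $\overline{\mathcal X}$ sees only the one-copy average, while $\Gamma$ depends on the fluctuations of $\mathcal X(V)$ about that average.

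\textbf{Step 4: the bound on $\Gamma$.} Split the direct sum at $|E|\leq\beta n$. On the large-$|E|$ part, use $\tfrac12\|X_E-\mathbb E X_E\|_1\leq2$, since $\|X_E\|_1\leq2$. This contributes at most $2\Pr\{|E|>\beta n\}$. On the small-$|E|$ part, $X_E$ acts on dimension $2^{|E|+1}$, so
\[
\tfrac12\|Y\|_1\leq\sqrt{2^{|E|-1}\|Y\|_2^2},
\]
as in Eq.~\eqref{eq:finite-depth-trace-HS}. Applying Jensen over $V$ and then Cauchy--Schwarz against the probability weights $\pi_p(E)$ gives $\sqrt{\mathcal F_{n,p,\beta}(t)}$, which is Eq.~\eqref{eq:direct-trace-fluctuation-sufficient}.
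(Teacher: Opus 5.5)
Your proposal is correct and follows the paper's proof essentially step for step. The sandwich comes from Jensen and the two triangle inequalities. The identification of $M$ comes from the exact vanishing of the averaged cross block once the first layer acts, together with diagonality of the averaged marginals. The bound on $\Gamma$ comes from the blockwise conversion $\tfrac12\|Y_E\|_1\leq 2^{(|E|-1)/2}\|Y_E\|_2$, Jensen and Cauchy--Schwarz over $E$, and the trivial tail bound.

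Two slips are cosmetic. The equality $\tfrac14\|\bar\rho_E^0-\bar\rho_E^1\|_1=\tfrac12\mathrm{TV}$ holds with TV equal to half the $\ell^1$ distance, which is the paper's convention, not the $\ell^1$ distance itself. Also, the charge-$0$ block of each gate is fixed to $1$; only the relative block phases are random, which is all your argument needs.
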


\begin{proof}
Jensen gives \(M\leq\mathcal D\), and the triangle inequality gives
\(\mathcal D\leq M+\Gamma\).  Conversely,
\[
\mathbb E\|\mathcal X-\overline{\mathcal X}\|_1
\leq\mathbb E\|\mathcal X\|_1+\|\overline{\mathcal X}\|_1
\leq2\mathbb E\|\mathcal X\|_1,
\]
which proves \(\Gamma/2\leq\mathcal D\).
After the first active local-Haar layer, the averaged logical coherence
between adjacent total-charge sectors vanishes and the averaged diagonal
occupations follow the random-interchange law.  Direct evaluation of the two
reference blocks then gives
Eq.~\eqref{eq:direct-trace-classical-mean}.
On \(|E|\leq\beta n\),
\[
\frac12\|Y_E\|_1
\leq2^{(|E|-1)/2}\|Y_E\|_2,
\qquad
Y_E=X_E-\mathbb EX_E.
\]
Cauchy--Schwarz over \(E\) gives the square-root term in
Eq.~\eqref{eq:direct-trace-fluctuation-sufficient}; on the tail,
\(\|Y_E\|_1/2\leq2\).
\end{proof}

\begin{lemma}[Connected-moment representation of the unresolved fluctuation]
\label{lem:connected-moment-covariance}
Let \(\Phi_t^{(1)}\) be the one-copy averaged physical channel after
\(t\) complete brickwork cycles and let \(\Phi_t^{(2)}\) be the balanced
two-copy moment channel generated by the \emph{same} circuit realization in
the two replicas.  Define the independent-copy channel
\begin{equation}
\Phi_{t,{\rm ind}}^{(2)}
=\Phi_t^{(1)}\otimes\Phi_t^{(1)}.
\end{equation}
Then
\begin{equation}
\mathcal F_{n,p,\beta}(0)=0,
\label{eq:connected-covariance-initial-zero}
\end{equation}
and the covariance functional of
Eq.~\eqref{eq:direct-trace-covariance-functional} has the exact connected
representation
\begin{align}
\mathcal F_{n,p,\beta}(t)
=\frac12\operatorname{Tr}\!\Bigl[
&\left(F_R-\frac12\mathbb I\right)
\otimes\widetilde O_{p,\beta}^{\leq}
\nonumber\\[-2pt]
&\times\bigl(
\operatorname{id}_{R}^{\otimes2}\otimes
[\Phi_t^{(2)}-\Phi_{t,{\rm ind}}^{(2)}]
\bigr)
\bigl((\rho_{RH}^{(0)})^{\otimes2}\bigr)
\Bigr].
\label{eq:connected-moment-covariance}
\end{align}
If \(T_1\) and \(T_2\) denote the corresponding one-cycle channels, so
that \(\Phi_t^{(1)}=T_1^t\) and \(\Phi_t^{(2)}=T_2^t\), and
\begin{equation}
S_2=T_1\otimes T_1,
\qquad
\Delta_2=T_2-S_2,
\end{equation}
then
\begin{equation}
\boxed{
T_2^t-S_2^t
=\sum_{s=0}^{t-1}
T_2^{\,t-1-s}\Delta_2S_2^{\,s}.}
\label{eq:connected-moment-duhamel}
\end{equation}
The one-cycle source is itself a sum of local gate covariances.  Order the
\(n-1\) bond gates in one cycle compatibly with the even-then-odd circuit
and write \(T_{a,j}\) for the \(a\)-copy average of gate \(j\),
\(S_{2,j}=T_{1,j}\otimes T_{1,j}\), and
\(\delta_{2,j}=T_{2,j}-S_{2,j}\).  Then
\begin{equation}
\boxed{
\Delta_2
=\sum_{j=1}^{n-1}
T_{2,n-1}\cdots T_{2,j+1}\,
\delta_{2,j}\,
S_{2,j-1}\cdots S_{2,1}.}
\label{eq:connected-local-source-decomposition}
\end{equation}
If \(\mathcal U_j\) is the random one-copy conjugation channel on that
bond, then exactly
\begin{equation}
\delta_{2,j}
=\mathbb E\bigl[
(\mathcal U_j-T_{1,j})\otimes
(\mathcal U_j-T_{1,j})
\bigr],
\label{eq:single-gate-connected-covariance}
\end{equation}
so every source term is centered and supported on a single physical bond
before subsequent propagation.  Since a local Haar twirl is the
Hilbert--Schmidt orthogonal projector onto the pointwise invariant algebra,
for arbitrary bond operators \(Z,W\),
\begin{equation}
\boxed{
\|\delta_{2,j}(Z\otimes W)\|_2
\leq
\|(I-T_{1,j})Z\|_2\,
\|(I-T_{1,j})W\|_2.}
\label{eq:local-connected-source-bound}
\end{equation}
Thus connected covariance is injected only by the locally noninvariant
parts of both one-copy factors and is quadratic in their local defects.
At the charge-Haar fixed point and for every fixed
\(0<p<\beta<1/2\),
\begin{equation}
\mathcal F_{n,p,\beta}(\infty)
\leq\frac{n+1}{2}\,2^{-(1-2\beta)n}.
\label{eq:connected-moment-stationary-floor}
\end{equation}
Thus the remaining fluctuation starts from zero, is dynamically generated
by local gate covariances, and has an exponentially small stationary floor;
it need not be monotone in time.  In
particular, proving
\(\mathcal F_{n,p,\beta}(Bn^2)=O(n^{-1})\) would give
\(\Gamma=O(n^{-1/2})\) through
Eq.~\eqref{eq:direct-trace-fluctuation-sufficient}.
\end{lemma}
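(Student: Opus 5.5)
The plan is to prove the five assertions in order. Each is a short algebraic identity except the last, which is inherited from the charge-Haar analysis. The starting point is the variance identity
\[
\mathbb E_V\|X_E-\mathbb E_VX_E\|_2^2=\mathbb E_V\operatorname{Tr}(X_E^2)-\operatorname{Tr}\bigl[(\mathbb E_VX_E)^2\bigr].
\]
Both terms are handled by the swap identity of Eq.~\eqref{eq:finite-depth-swap-observable}. The first term pairs \(A_E\) with the same-circuit two-copy state, which after averaging is \((\operatorname{id}_R^{\otimes2}\otimes\Phi_t^{(2)})((\rho_{RH}^{(0)})^{\otimes2})\). For the second term, \(X_E\) depends linearly on the encoded state and the reference is untouched, so \(\mathbb E_VX_E\) is \(X_E\) evaluated on the one-copy averaged state. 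Hence \(\operatorname{Tr}[(\mathbb E_VX_E)^2]\) is the same swap contraction applied to \(\Phi_t^{(1)}\otimes\Phi_t^{(1)}\).

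Summing over \(|E|\le\beta n\) with weights \(\pi_p(E)2^{|E|-1}=\tfrac12(2p)^{|E|}(1-p)^{n-|E|}\) assembles \(\widetilde O^{\leq}_{p,\beta}\) of Eq.~\eqref{eq:truncated-rank-weighted-observable} together with the prefactor \(1/2\). This gives Eq.~\eqref{eq:connected-moment-covariance}. At \(t=0\) the encoder is deterministic, so \(\Phi_0^{(2)}=\Phi^{(2)}_{0,\rm ind}=\mathrm{id}\) and \(\mathcal F(0)=0\).

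The Duhamel formula Eq.~\eqref{eq:connected-moment-duhamel} is the standard telescoping identity \(T_2^t-S_2^t=\sum_s T_2^{t-1-s}(T_2-S_2)S_2^s\), which one checks by cancellation. For Eq.~\eqref{eq:connected-local-source-decomposition}, I first use independence of the gates within a cycle to write \(T_2=T_{2,n-1}\cdots T_{2,1}\) and \(T_1=T_{1,n-1}\cdots T_{1,1}\). The latter gives \(S_2=\prod_j(T_{1,j}\otimes T_{1,j})=\prod_jS_{2,j}\), with the same ordering. The same telescoping identity then applies to products of \(n-1\) factors.

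For Eq.~\eqref{eq:single-gate-connected-covariance}, I expand \(\mathbb E[(\mathcal U_j-T_{1,j})^{\otimes2}]\). Using \(\mathbb E\,\mathcal U_j=T_{1,j}\), the cross terms cancel and the expansion returns \(T_{2,j}-T_{1,j}^{\otimes2}\). Centering follows from Haar invariance, \(T_{1,j}\mathcal U_j=T_{1,j}\). Applying \(T_{1,j}\) to either tensor factor therefore annihilates the source, so its one-copy marginals vanish. Bond locality is immediate because \(\mathcal U_j-T_{1,j}\) acts trivially off the bond.

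For Eq.~\eqref{eq:local-connected-source-bound}, I use two facts. First, \(T_{1,j}\) is the Hilbert--Schmidt orthogonal projector onto the pointwise invariant algebra. Second, every \(\mathcal U_j\) fixes that algebra pointwise. Together these give \((\mathcal U_j-T_{1,j})Z=(\mathcal U_j-T_{1,j})Z'\) with \(Z'=(I-T_{1,j})Z\). Since \(T_{1,j}Z'=0\), this equals \(\mathcal U_jZ'\). Thus \(\delta_{2,j}(Z\otimes W)=\mathbb E[\mathcal U_jZ'\otimes\mathcal U_jW']\). Because unitary conjugation is a Hilbert--Schmidt isometry, Jensen's inequality bounds the norm by \(\|Z'\|_2\|W'\|_2\).

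The stationary floor Eq.~\eqref{eq:connected-moment-stationary-floor} is the step I expect to need the most care. I must identify the \(t\to\infty\) limits of both channels. For \(\Phi_t^{(2)}\), the limit is the charge-Haar moment projector \(P_{\rm ch}\), by the argument of Theorem~\ref{thm:finite-depth-achievability}. For \(\Phi_t^{(1)}\otimes\Phi_t^{(1)}\), the limit is the tensor square of the one-copy charge-Haar twirl, since the one-copy layer projectors likewise generate the full number-conserving first moment. With these limits, the connected functional equals \(\mathcal F^{\rm ch}_{n,p,\beta}(\infty)\) of Corollary~\ref{cor:charge-haar-exponential-classicalization}, and Eq.~\eqref{eq:charge-haar-stationary-centered-covariance} gives the bound.

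Two points need checking here. The charge-Haar codewords are random vectors, whereas the limit here is a twirl of a fixed boundary code. However, both states lie in the same adjacent sectors, so their first and second moments coincide. The remaining claims follow directly: nonmonotonicity is only a remark, and the final implication substitutes into Eq.~\eqref{eq:direct-trace-fluctuation-sufficient}.
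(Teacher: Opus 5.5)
Your proposal is correct and follows essentially the same route as the paper's proof. The shared steps are:
\begin{itemize}
\item the variance identity, with both terms represented by the swap contraction $(F_R-\tfrac12\mathbb I)\otimes F_E$ (same-circuit term versus independent-copy term);
\item the non-commutative telescoping identity at both the cycle level and the gate level;
\item the expansion of $\mathbb E[(\mathcal U_j-T_{1,j})^{\otimes2}]$;
\item the reduction $(\mathcal U_j-T_{1,j})Z=\mathcal U_j(I-T_{1,j})Z$ together with Hilbert--Schmidt isometry;
\item the charge-Haar centered-covariance bound of Corollary~\ref{cor:charge-haar-exponential-classicalization}.
\end{itemize}
Your explicit identification of the $t\to\infty$ limits of $\Phi_t^{(2)}$ and $\Phi_t^{(1)}\otimes\Phi_t^{(1)}$ with the charge-Haar moments is a useful detail that the paper leaves implicit. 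For the final implication, you should add that the tail $2\Pr\{|E|>\beta n\}$ is exponentially small by the Chernoff bound, since $p<\beta$.
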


\begin{proof}
At \(t=0\), both moment channels are the identity, proving
Eq.~\eqref{eq:connected-covariance-initial-zero}.
For two physical output states \(\rho\) and \(\sigma\) with the same
maximally mixed reference marginal, write
\(X(\rho)=\rho_{RE}-\mathbb I_R\otimes\rho_E/2\).  Direct expansion gives
\begin{equation}
\operatorname{Tr}[X(\rho)X(\sigma)]
=\operatorname{Tr}(\rho_{RE}\sigma_{RE})
-\frac12\operatorname{Tr}(\rho_E\sigma_E),
\end{equation}
which is represented by the two-copy observable
\((F_R-\mathbb I/2)\otimes F_E\).  Taking \(\rho=\sigma\) from the same
circuit realization gives the \(\Phi_t^{(2)}\) term.  Taking two independent
circuit realizations and averaging gives
\(\operatorname{Tr}(\overline X_E^2)\) and hence the
\(\Phi_t^{(1)}\otimes\Phi_t^{(1)}\) term.  Subtracting, multiplying by
\(2^{|E|-1}\), and summing the typical erasure patterns gives
Eq.~\eqref{eq:connected-moment-covariance} by the definition
Eq.~\eqref{eq:truncated-rank-weighted-observable}.

The telescoping identity
\begin{equation}
A^t-B^t
=\sum_{s=0}^{t-1}A^{t-1-s}(A-B)B^s
\end{equation}
holds for arbitrary linear maps \(A,B\) on the same space and does not
require them to commute.  Applying it to \(A=T_2\) and \(B=S_2\) proves
Eq.~\eqref{eq:connected-moment-duhamel}.  Applying the corresponding
product telescoping identity to the ordered local gates proves
Eq.~\eqref{eq:connected-local-source-decomposition}.  Finally,
\begin{align}
\mathbb E[(\mathcal U_j-T_{1,j})\otimes
(\mathcal U_j-T_{1,j})]
&=\mathbb E(\mathcal U_j\otimes\mathcal U_j)
-T_{1,j}\otimes T_{1,j},
\end{align}
which is Eq.~\eqref{eq:single-gate-connected-covariance}.
Let \(P=T_{1,j}\).  For every gate realization,
\(\mathcal U_jP=P\), and therefore
\((\mathcal U_j-P)Z=\mathcal U_j(I-P)Z\).  Unitary invariance of the
Hilbert--Schmidt norm and the triangle inequality then give
Eq.~\eqref{eq:local-connected-source-bound}.
The stationary estimate is
Eq.~\eqref{eq:charge-haar-stationary-centered-covariance}.  The final
implication follows directly from
Eq.~\eqref{eq:direct-trace-fluctuation-sufficient} and the exponentially
small binomial tail.
\end{proof}

\begin{lemma}[Gate-resolved connected-source purity budget]
\label{lem:connected-source-purity-budget}
Order all \(m=t(n-1)\) local bond gates in \(t\) complete cycles.  Let
\(\mathsf P_\ell\) be the one-copy local-Haar twirl at gate \(\ell\),
let \(\mathsf T_\ell\) be the same-gate balanced two-copy twirl, and set
\begin{equation}
\mathsf S_\ell=\mathsf P_\ell\otimes\mathsf P_\ell,
\qquad
\boldsymbol\delta_\ell=\mathsf T_\ell-\mathsf S_\ell.
\end{equation}
All maps are understood to act trivially on the reference factors.  Put
\begin{equation}
z_0=\rho_{RH}^{(0)},
\qquad
z_\ell=\mathsf P_\ell\cdots\mathsf P_1z_0,
\end{equation}
and define the source created at gate \(\ell\), propagated through all
later same-circuit two-copy gates, by
\begin{equation}
\mathfrak Y_{\ell,m}
=\mathsf T_m\cdots\mathsf T_{\ell+1}
\boldsymbol\delta_\ell(z_{\ell-1}^{\otimes2}).
\label{eq:gate-resolved-propagated-source}
\end{equation}
Then
\begin{equation}
(\mathsf T_m\cdots\mathsf T_1-
\mathsf S_m\cdots\mathsf S_1)(z_0^{\otimes2})
=\sum_{\ell=1}^m\mathfrak Y_{\ell,m},
\label{eq:gate-resolved-connected-telescoping}
\end{equation}
and the propagated sources obey the dimension-free budget
\begin{equation}
\boxed{
\begin{aligned}
\sum_{\ell=1}^m\|\mathfrak Y_{\ell,m}\|_2
&\leq\sum_{\ell=1}^m
\|(I-\mathsf P_\ell)z_{\ell-1}\|_2^2\\
&=\|z_0\|_2^2-\|z_m\|_2^2\leq1.
\end{aligned}}
\label{eq:connected-source-purity-budget}
\end{equation}
More generally, for every cut \(0\leq L<m\),
\begin{equation}
\sum_{\ell=L+1}^m\|\mathfrak Y_{\ell,m}\|_2
\leq\|z_L\|_2^2-\|z_m\|_2^2.
\label{eq:connected-source-tail-budget}
\end{equation}
Finally, each local connected source has exactly zero one-copy marginals,
\begin{equation}
\operatorname{Tr}_{H_1}\boldsymbol\delta_\ell(X)
=\operatorname{Tr}_{H_2}\boldsymbol\delta_\ell(X)=0,
\label{eq:local-connected-source-zero-marginals}
\end{equation}
and consequently the complete connected output
\((\Phi_t^{(2)}-\Phi_{t,{\rm ind}}^{(2)})(X)\) has zero physical marginal
on either replica.  Thus the unresolved fluctuation lives entirely in a
genuine two-copy connected sector and contains no embedded one-copy branch.
\end{lemma}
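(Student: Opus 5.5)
The proof has three parts: a gate-level telescoping identity, a purity-budget estimate, and the zero-marginal property. We treat them in that order.

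\emph{Telescoping.} For \eqref{eq:gate-resolved-connected-telescoping} apply the noncommutative identity
\[
A_m\cdots A_1-B_m\cdots B_1=\sum_{\ell=1}^m A_m\cdots A_{\ell+1}(A_\ell-B_\ell)B_{\ell-1}\cdots B_1,
\]
already used in Lemma~\ref{lem:connected-moment-covariance}, with \(A_\ell=\mathsf T_\ell\) and \(B_\ell=\mathsf S_\ell\). Then
\[
B_{\ell-1}\cdots B_1(z_0^{\otimes2})=(\mathsf P_{\ell-1}\cdots\mathsf P_1z_0)^{\otimes2}=z_{\ell-1}^{\otimes2},
\]
and each summand is exactly \(\mathfrak Y_{\ell,m}\).

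\emph{Budget.} Three facts give the bound.
\begin{itemize}
\item Each \(\mathsf T_k\) is an average of unitary conjugations \(\mathcal U\otimes\mathcal U\), so it is Hilbert--Schmidt contractive. Hence \(\|\mathfrak Y_{\ell,m}\|_2\leq\|\boldsymbol\delta_\ell(z_{\ell-1}^{\otimes2})\|_2\).
\item The bound \eqref{eq:local-connected-source-bound} extends to operators on the full space, with reference factors and spectator sites as passive tensor factors. Its proof uses only \(\mathcal U_\ell\mathsf P_\ell=\mathsf P_\ell\) and unitary invariance, both of which survive tensoring with identities. This gives \(\|\boldsymbol\delta_\ell(z^{\otimes2})\|_2\leq\|(I-\mathsf P_\ell)z\|_2^2\).
\item \(\mathsf P_\ell\) is an HS-orthogonal projector, so \(\|(I-\mathsf P_\ell)z_{\ell-1}\|_2^2=\|z_{\ell-1}\|_2^2-\|z_\ell\|_2^2\) by Pythagoras.
\end{itemize}
Summing the last identity telescopes to \(\|z_0\|_2^2-\|z_m\|_2^2\). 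This is at most \(\|z_0\|_2^2=1\), because \(z_0\) is a pure normalized state. Restricting the sum to \(\ell>L\) gives \eqref{eq:connected-source-tail-budget} in the same way. The one point needing care is the claim that the local twirl, tensored with identities, remains an orthogonal projector. I would justify it by noting that the bond twirl is the HS-orthogonal projector onto the commutant algebra, tensored with the identity superoperator.

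\emph{Zero marginals.} Let \(D_\ell=\mathcal U_\ell-\mathsf P_\ell\), with \(\mathcal U_\ell\) the random gate conjugation. By \eqref{eq:single-gate-connected-covariance}, \(\boldsymbol\delta_\ell=\mathbb E[D_\ell\otimes D_\ell]\). Both maps in \(D_\ell\) are trace preserving, so \(\operatorname{Tr}\circ D_\ell=0\) for every realization. This holds also when \(D_\ell\) acts as a partial map on one replica, since partial trace over \(H_1\) annihilates the first factor. Hence
\[
\operatorname{Tr}_{H_1}\bigl[(D_\ell\otimes D_\ell)(X)\bigr]=0
\]
by linearity on product operators and extension to all \(X\), and the same holds for \(\operatorname{Tr}_{H_2}\). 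This gives \eqref{eq:local-connected-source-zero-marginals}.

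\emph{Propagation to the full output.} Later twirls \(\mathsf T_k=\mathbb E[\mathcal U_k\otimes\mathcal U_k]\) satisfy
\[
\operatorname{Tr}_{H_1}\bigl[(\mathcal U\otimes\mathcal U)(Y)\bigr]=\mathcal U\bigl(\operatorname{Tr}_{H_1}Y\bigr),
\]
so they preserve the vanishing of each marginal. By \eqref{eq:gate-resolved-connected-telescoping} the full connected output is a sum of such propagated sources, so it has zero marginal on each replica. This applies to any input \(X\), not only \(z_0^{\otimes2}\), by applying the gate-level telescoping to \(X\) directly.
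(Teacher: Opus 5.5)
Your proposal is correct and follows essentially the same route as the paper: the product telescoping identity, Hilbert--Schmidt contractivity of the later \(\mathsf T_k\), the local source bound \eqref{eq:local-connected-source-bound}, and Pythagoras for the orthogonal projector \(\mathsf P_\ell\). The only difference is cosmetic and lies in the marginal step: you use the covariance form \(\boldsymbol\delta_\ell=\mathbb E[D_\ell\otimes D_\ell]\) with trace-annihilating \(D_\ell\), whereas the paper uses the intertwining identity \(\operatorname{Tr}_{H_2}\mathsf T_\ell(X)=\mathsf P_\ell(\operatorname{Tr}_{H_2}X)=\operatorname{Tr}_{H_2}\mathsf S_\ell(X)\); both rest on the same trace-preservation fact.
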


\begin{proof}
The product telescoping identity gives
\begin{equation}
\begin{aligned}
&\mathsf T_m\cdots\mathsf T_1-
\mathsf S_m\cdots\mathsf S_1\\
&\quad=\sum_{\ell=1}^m
\mathsf T_m\cdots\mathsf T_{\ell+1}
(\mathsf T_\ell-\mathsf S_\ell)
\mathsf S_{\ell-1}\cdots\mathsf S_1.
\end{aligned}
\end{equation}
Since
\(\mathsf S_{\ell-1}\cdots\mathsf S_1(z_0^{\otimes2})
=z_{\ell-1}^{\otimes2}\), this is
Eq.~\eqref{eq:gate-resolved-connected-telescoping}.  Every
\(\mathsf T_j\) is a Hilbert--Schmidt orthogonal projector and hence a
contraction.  Equation~\eqref{eq:local-connected-source-bound} therefore
gives
\begin{equation}
\|\mathfrak Y_{\ell,m}\|_2
\leq\|\boldsymbol\delta_\ell(z_{\ell-1}^{\otimes2})\|_2
\leq\|(I-\mathsf P_\ell)z_{\ell-1}\|_2^2.
\end{equation}
Because \(\mathsf P_\ell\) is itself an orthogonal projector,
\begin{equation}
\|(I-\mathsf P_\ell)z_{\ell-1}\|_2^2
=\|z_{\ell-1}\|_2^2-\|z_\ell\|_2^2.
\end{equation}
Summing proves Eqs.~\eqref{eq:connected-source-purity-budget} and
\eqref{eq:connected-source-tail-budget}; \(z_0\) is pure, so
\(\|z_0\|_2^2=1\).

For the marginal statement, the same-circuit and independent-copy local
twirls have identical one-copy marginals:
\begin{equation}
\operatorname{Tr}_{H_2}\mathsf T_\ell(X)
=\mathsf P_\ell(\operatorname{Tr}_{H_2}X)
=\operatorname{Tr}_{H_2}\mathsf S_\ell(X),
\end{equation}
and similarly for the other replica.  Subtraction proves
Eq.~\eqref{eq:local-connected-source-zero-marginals}.  The global statement
follows either by iteration or directly from the corresponding marginal
intertwining identities for \(\Phi_t^{(2)}\) and
\(\Phi_t^{(1)}\otimes\Phi_t^{(1)}\).
\end{proof}

\begin{lemma}[Positive gate-resolved variance decomposition and Pauli leakage]
\label{lem:positive-operator-spreading-decomposition}
For a Hermitian operator $B$ on $R\otimes H$, define the truncated
rank-weighted decoupling seminorm
\begin{equation}
\mathcal Q_{n,p,b}(B)
=
\sum_{\substack{E\subseteq[n]\\ |E|\leq b}}
\pi_p(E)2^{|E|-1}
\left\|
B_{RE}-\frac{\mathbb I_R}{2}\otimes B_E
\right\|_2^2,
\label{eq:task-weighted-one-copy-seminorm}
\end{equation}
where $B_{RE}=\operatorname{Tr}_{[n]\setminus E}B$ and
$B_E=\operatorname{Tr}_{R,[n]\setminus E}B$.  Let
$b=\lfloor\beta n\rfloor$.  At gate $\ell$, write
$\mathsf U_\ell(g)$ for conjugation by the realized local gate and define the
centered one-copy increment
\begin{equation}
D_\ell(g)
=
[\mathsf U_\ell(g)-\mathsf P_\ell]z_{\ell-1}.
\label{eq:one-copy-gate-increment}
\end{equation}
If $\mathbf g_{>\ell}=(g_{\ell+1},\ldots,g_m)$ denotes the independent future
gates, put
\begin{equation}
B_{\ell,m}(g,\mathbf g_{>\ell})
=
\mathsf U_m(g_m)\cdots\mathsf U_{\ell+1}(g_{\ell+1})D_\ell(g).
\label{eq:future-propagated-one-copy-increment}
\end{equation}
Then the centered covariance functional has the exact positive decomposition
\begin{equation}
\boxed{
\mathcal F_{n,p,\beta}(t)
=
\sum_{\ell=1}^{m}
\mathbb E_{g,\mathbf g_{>\ell}}
\mathcal Q_{n,p,b}(B_{\ell,m}).}
\label{eq:positive-gate-resolved-task-decomposition}
\end{equation}
Every summand is nonnegative, including the terms generated during the first
cycle.  Moreover,
\begin{equation}
\boxed{
\begin{aligned}
\sum_{\ell=1}^{m}\mathbb E\|B_{\ell,m}\|_2^2
&=\sum_{\ell=1}^{m}
\|(I-\mathsf P_\ell)z_{\ell-1}\|_2^2\\
&=\|z_0\|_2^2-\|z_m\|_2^2\leq1.
\end{aligned}}
\label{eq:one-copy-increment-energy-budget}
\end{equation}

The seminorm itself has an exact low-Pauli-weight representation.  Let
$\sigma_0=\mathbb I,\sigma_1,\sigma_2,\sigma_3$ be the one-qubit Pauli
operators, let $\mathcal P_n=\{\mathbb I,X,Y,Z\}^{\otimes n}$, and write
\begin{equation}
b_{\alpha,P}(B)
=
\operatorname{Tr}[(\sigma_\alpha\otimes P)B].
\end{equation}
For $0\leq s\leq n$, define
\begin{equation}
\vartheta_{n,p,b}(s)
=
\begin{cases}
 p^s\Pr\{\operatorname{Bin}(n-s,p)\leq b-s\},&s\leq b,\\
 0,&s>b.
\end{cases}
\label{eq:truncated-pauli-support-weight}
\end{equation}
Then
\begin{equation}
\boxed{
\mathcal Q_{n,p,b}(B)
=
\frac14
\sum_{\alpha=1}^{3}
\sum_{P\in\mathcal P_n}
\vartheta_{n,p,b}(\operatorname{wt}P)
|b_{\alpha,P}(B)|^2.}
\label{eq:exact-pauli-leakage-formula}
\end{equation}
In particular, Pauli strings of physical support larger than $b$ are exactly
invisible to the typical-set functional.  Thus the unresolved upper bound is
equivalently a low-support operator-leakage estimate for the propagated
one-copy gate increments, not an arbitrary norm-convergence problem on the
full two-copy Liouville space.
\end{lemma}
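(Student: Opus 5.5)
The proof splits into the variance decomposition, the energy budget, and the Pauli formula; each uses only linear algebra together with results already established above.

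\emph{Variance decomposition.} I would start from the random one-copy trajectory. For a realized circuit, the encoded state is $Z_m=\mathsf U_m\cdots\mathsf U_1 z_0$, and its average is $\mathbb E Z_m=z_m$, because the gates are independent and $\mathbb E\mathsf U_\ell=\mathsf P_\ell$. I would telescope the centered state as a Doob martingale:
\[
Z_m-z_m=\sum_{\ell=1}^m\bigl(\mathsf U_m\cdots\mathsf U_{\ell+1}\mathsf U_\ell\,z'_{\ell-1}-\mathsf U_m\cdots\mathsf U_{\ell+1}\mathsf P_\ell\,z'_{\ell-1}\bigr)\ \text{-type terms}.
\]
The form displayed above is not quite the right one. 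It is cleaner to write the martingale differences as
\[
\mathbb E[Z_m\mid g_{\le\ell}]-\mathbb E[Z_m\mid g_{<\ell}].
\]
These are not literally $B_{\ell,m}$, since $B_{\ell,m}$ starts from the averaged state $z_{\ell-1}$ rather than the random state. The decomposition I actually need is at the level of second moments. Using the same-circuit/independent-copy telescoping of Lemma~\ref{lem:connected-source-purity-budget}, the $\ell$-th term of Eq.~\eqref{eq:gate-resolved-connected-telescoping} is
\[
\mathsf T_m\cdots\mathsf T_{\ell+1}\boldsymbol\delta_\ell(z_{\ell-1}^{\otimes2}).
\]
By Eq.~\eqref{eq:single-gate-connected-covariance},
\[
\boldsymbol\delta_\ell(z_{\ell-1}^{\otimes2})=\mathbb E_g\bigl[D_\ell(g)^{\otimes2}\bigr],
\]
and each same-circuit twirl acts as
\[
\mathsf T_j=\mathbb E_{g_j}\bigl[\mathsf U_j(g_j)^{\otimes2}\bigr].
\]
So the $\ell$-th term equals $\mathbb E\,B_{\ell,m}^{\otimes2}$ exactly. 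I would then pair this with the two-copy observable of Eq.~\eqref{eq:connected-moment-covariance}, namely $\tfrac12(F_R-\tfrac12\mathbb I)\otimes\widetilde O^{\le}_{p,\beta}$. The swap identity from the proof of Lemma~\ref{lem:connected-moment-covariance} holds for any Hermitian $B$, not only states: with $X(B)=B_{RE}-\tfrac{\mathbb I_R}{2}\otimes B_E$, one has $\operatorname{Tr}[X(B)^2]$ equal to the contraction against $(F_R-\tfrac12\mathbb I)\otimes F_E$. Summing over $|E|\le b$ with weights $\pi_p(E)2^{|E|-1}$ turns each term into $\mathbb E\,\mathcal Q_{n,p,b}(B_{\ell,m})$. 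Together with Eq.~\eqref{eq:connected-moment-covariance}, this gives Eq.~\eqref{eq:positive-gate-resolved-task-decomposition}. Nonnegativity is immediate, since $\mathcal Q$ is a sum of squared Hilbert--Schmidt norms.

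\emph{Energy budget.} Each $\mathsf U_j$ is a unitary conjugation, so the propagation does not change the norm: $\|B_{\ell,m}\|_2=\|D_\ell(g)\|_2$. The one-copy twirl $\mathsf P_\ell$ is the orthogonal projector onto the invariant algebra and fixes its range, so $\mathsf U_\ell\mathsf P_\ell=\mathsf P_\ell$. Hence
\[
D_\ell(g)=\mathsf U_\ell(I-\mathsf P_\ell)z_{\ell-1},
\qquad
\mathbb E\|D_\ell\|_2^2=\|(I-\mathsf P_\ell)z_{\ell-1}\|_2^2.
\]
The telescoping to $\|z_0\|_2^2-\|z_m\|_2^2\le1$ is then exactly the calculation in the proof of Lemma~\ref{lem:connected-source-purity-budget}.

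\emph{Pauli formula.} I would expand $B=\tfrac14\sum_{\alpha,P}2^{-n}b_{\alpha,P}\,\sigma_\alpha\otimes P$, absorbing normalizations with $\operatorname{Tr}(\sigma_\alpha\sigma_\beta)=2\delta_{\alpha\beta}$. The partial trace over $[n]\setminus E$ kills every string whose support is not contained in $E$, and multiplies the surviving ones by $2^{n-|E|}$. Subtracting $\tfrac{\mathbb I_R}{2}\otimes B_E$ removes exactly the $\alpha=0$ terms. Orthogonality of Paulis on $R\otimes E$ then gives
\[
\Bigl\|B_{RE}-\tfrac{\mathbb I_R}{2}\otimes B_E\Bigr\|_2^2=2^{-|E|-1}\sum_{\alpha\ge1}\ \sum_{\operatorname{supp}P\subseteq E}|b_{\alpha,P}|^2,
\]
with the constant to be checked; multiplying by $2^{|E|-1}$ should leave a factor $\tfrac14$. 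Exchanging the sums over $E$ and $P$, the weight attached to a string of weight $s$ is
\[
\sum_{E\supseteq\operatorname{supp}P,\ |E|\le b}\pi_p(E)=p^s\Pr\{\operatorname{Bin}(n-s,p)\le b-s\}=\vartheta_{n,p,b}(s),
\]
which yields Eq.~\eqref{eq:exact-pauli-leakage-formula}.

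I expect the main obstacle to be bookkeeping rather than any conceptual step: matching the factor $\tfrac12$ and the reference contraction in Eq.~\eqref{eq:connected-moment-covariance} against the normalization of $\mathcal Q$, and confirming that the Pauli constant comes out to exactly $\tfrac14$. I would verify both on a single-site example before writing out the general case.
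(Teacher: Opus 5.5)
Your proposal is correct and follows the paper's proof essentially step for step. The chain is: $\boldsymbol\delta_\ell(z_{\ell-1}^{\otimes2})=\mathbb E_g[D_\ell(g)^{\otimes2}]$, together with $\mathsf T_j=\mathbb E[\mathsf U_j^{\otimes2}]$ and independence of the future gates, gives $\mathfrak Y_{\ell,m}=\mathbb E[B_{\ell,m}^{\otimes2}]$; the swap trick, valid for any Hermitian $B$, turns each pairing into $\mathcal Q_{n,p,b}(B_{\ell,m})$; $\mathsf U_\ell\mathsf P_\ell=\mathsf P_\ell$ plus unitarity gives the energy budget; and the Pauli expansion with the binomial support count gives Eq.~\eqref{eq:exact-pauli-leakage-formula}, so the opening Doob-martingale detour is unnecessary.

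The only slip is bookkeeping: the correct expansion is $B=2^{-(n+1)}\sum_{\alpha,P}b_{\alpha,P}\,\sigma_\alpha\otimes P$, not $\tfrac14\,2^{-n}\sum_{\alpha,P}b_{\alpha,P}\,\sigma_\alpha\otimes P$. Your displayed intermediate value $2^{-|E|-1}\sum_{\alpha\geq1}\sum_{\operatorname{supp}P\subseteq E}|b_{\alpha,P}|^2$ is nonetheless the right one, and it yields exactly the factor $\tfrac14$.
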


\begin{proof}
Equation~\eqref{eq:single-gate-connected-covariance} gives
\begin{equation}
\boldsymbol\delta_\ell(z_{\ell-1}^{\otimes2})
=
\mathbb E_g[D_\ell(g)^{\otimes2}].
\end{equation}
Averaging the same future circuit in the two replicas therefore gives
\begin{equation}
\mathfrak Y_{\ell,m}
=
\mathbb E_{g,\mathbf g_{>\ell}}[B_{\ell,m}^{\otimes2}].
\label{eq:source-as-one-copy-square}
\end{equation}
For every Hermitian $B$, the swap trick yields
\begin{align}
&\frac12\operatorname{Tr}\!\left[
\left(F_R-\frac12\mathbb I\right)
\otimes\widetilde O_{p,\beta}^{\leq}
B^{\otimes2}
\right]\nonumber\\
&=
\sum_{\substack{E\subseteq[n]\\ |E|\leq b}}
\pi_p(E)2^{|E|-1}
\left[
\operatorname{Tr}(B_{RE}^2)
-\frac12\operatorname{Tr}(B_E^2)
\right]\nonumber\\
&=\mathcal Q_{n,p,b}(B).
\label{eq:task-seminorm-swap-trick}
\end{align}
Substituting Eq.~\eqref{eq:source-as-one-copy-square} into the gate-resolved
telescoping identity
Eq.~\eqref{eq:gate-resolved-connected-telescoping} proves
Eq.~\eqref{eq:positive-gate-resolved-task-decomposition}.

The Haar conditional expectation satisfies
$\mathsf U_\ell(g)\mathsf P_\ell=\mathsf P_\ell$.  Hence
$D_\ell(g)=\mathsf U_\ell(g)(I-\mathsf P_\ell)z_{\ell-1}$, and both the
realized gate and every future gate preserve the Hilbert--Schmidt norm.
Therefore
\begin{equation}
\mathbb E\|B_{\ell,m}\|_2^2
=
\|(I-\mathsf P_\ell)z_{\ell-1}\|_2^2.
\end{equation}
The final equality in
Eq.~\eqref{eq:one-copy-increment-energy-budget} is the Pythagorean telescoping
identity already used in
Eq.~\eqref{eq:connected-source-purity-budget}.

It remains to prove Eq.~\eqref{eq:exact-pauli-leakage-formula}.  Expand
\begin{equation}
B=2^{-(n+1)}
\sum_{\alpha=0}^{3}\sum_{P\in\mathcal P_n}
 b_{\alpha,P}(B)\,\sigma_\alpha\otimes P.
\end{equation}
After tracing the physical complement of $E$ and subtracting the reference
identity component, only $\alpha=1,2,3$ and Pauli strings with
$\operatorname{supp}P\subseteq E$ remain.  Orthogonality of Pauli strings
gives
\begin{equation}
2^{|E|-1}
\left\|B_{RE}-\frac{\mathbb I_R}{2}\otimes B_E\right\|_2^2
=
\frac14
\sum_{\alpha=1}^{3}
\sum_{\operatorname{supp}P\subseteq E}
|b_{\alpha,P}(B)|^2.
\end{equation}
For a fixed support of size $s$, the Bernoulli probability that it is
contained in $E$ while $|E|\leq b$ is exactly
\begin{equation}
\sum_{j=0}^{b-s}\binom{n-s}{j}p^{s+j}(1-p)^{n-s-j}
=
\vartheta_{n,p,b}(s).
\end{equation}
Summing proves the formula.
\end{proof}

\begin{theorem}[Diffusive speed limit of the rank-weighted replica route]
\label{thm:replica-diffusive-speed-limit}
For the same canonical boundary code with $n\geq3$, let
$b=\lfloor\beta n\rfloor$ and
\begin{equation}
\eta_{n,p,\beta}
=p\,\Pr\{\operatorname{Bin}(n-1,p)\leq b-1\}.
\label{eq:truncated-inclusion-probability}
\end{equation}
Then for every $t\geq1$,
\begin{equation}
\boxed{
\mathfrak r_{n,p}^{\leq\beta}(t)
\geq\frac{\eta_{n,p,\beta}}4
\|(OE)^te_0\|_2^2
\geq\frac{\eta_{n,p,\beta}}
{64(\sqrt{t-1}+1)}.}
\label{eq:replica-diffusive-lower-bound}
\end{equation}
Without truncation, replace $\eta_{n,p,\beta}$ by $p$.  Hence any bound
$\mathfrak r_{n,p}^{\leq\beta}(t)\leq C/n$ requires
\begin{equation}
t\geq1+\left(\frac{\eta_{n,p,\beta}n}{64C}-1\right)_+^2.
\label{eq:replica-quadratic-time-necessary}
\end{equation}
For fixed $0<p<\beta<1/2$, one has
$\eta_{n,p,\beta}=p[1-o(1)]$, so reaching the charge-Haar replica scale
$O(n^{-1})$ requires $t=\Omega_{p,\beta}(n^2)$.

Moreover, before any gate is applied the untruncated scalar is exactly
\begin{equation}
\mathfrak r_{n,p}(0)=\frac{3p}{4}(1+p)^{n-1}.
\label{eq:rank-weighted-initial-exponential}
\end{equation}
Thus a dimension-free estimate of the form
$|\mathfrak r(t)-\mathfrak r(\infty)|\leq Ce^{-ct/n^2}$ cannot hold from
$t=0$.  A valid exponential-tail target must specify a burn-in, as in
Eqs.~\eqref{eq:direct-rank-weighted-target} and
\eqref{eq:direct-truncated-rank-target}, or be replaced by a direct
finite-time Nash-type bound.
\end{theorem}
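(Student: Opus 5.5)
The plan is to obtain Eq.~\eqref{eq:replica-diffusive-lower-bound} by keeping only weight-one Pauli terms in the exact Pauli-leakage formula of Lemma~\ref{lem:positive-operator-spreading-decomposition}. The swap-trick identity Eq.~\eqref{eq:task-seminorm-swap-trick} holds for any Hermitian $B$. Apply it to $B=\rho_{RH}(V)$, the realized encoded Choi operator, rather than to a centered increment. Averaging over $V$ and using the definition Eq.~\eqref{eq:truncated-rank-weighted-functional} then gives the exact identity $\mathfrak r_{n,p}^{\leq\beta}(t)=\mathbb E_V\mathcal Q_{n,p,b}(\rho_{RH}(V))$. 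Since every term of Eq.~\eqref{eq:exact-pauli-leakage-formula} is nonnegative, I will keep only $\alpha=3$ and $P=Z_i$. This leaves the weight $\vartheta_{n,p,b}(1)=\eta_{n,p,\beta}$, which is exactly Eq.~\eqref{eq:truncated-inclusion-probability}.

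Next I compute the retained coefficient. With $\rho_{RH}=\frac12\sum_{a,b}|a\rangle\langle b|\otimes|\psi_a\rangle\langle\psi_b|$ and $Z_i=\mathbb I-2n_i$, one finds
\[
b_{3,Z_i}=\tfrac12\bigl(\langle Z_i\rangle_0-\langle Z_i\rangle_1\bigr)=\delta_i(V).
\]
Hence
\[
\mathfrak r_{n,p}^{\leq\beta}(t)\geq\frac{\eta_{n,p,\beta}}4\sum_i\mathbb E_V\delta_i(V)^2\geq\frac{\eta_{n,p,\beta}}4\sum_i\bar\delta_i(t)^2,
\]
where the last step is Jensen. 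By Eq.~\eqref{eq:averaged-excess-charge-walk}, $\bar\delta(t)=(OE)^te_0$, which gives the first inequality. Without truncation the same argument uses $\vartheta=p$, so $\eta_{n,p,\beta}$ is replaced by $p$.

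For the second inequality I reuse the window estimate from the proof of Theorem~\ref{thm:true-trace-quadratic-speed-limit}. The prefix $A_t$ of size $m_{n,t}\leq2\sqrt{t-1}+3\leq3(\sqrt{t-1}+1)$ carries probability at least $1/2$ of the profile $(OE)^te_0$. Cauchy--Schwarz on this window then gives
\[
\|(OE)^te_0\|_2^2\geq\frac{1}{4m_{n,t}}\geq\frac{1}{12(\sqrt{t-1}+1)},
\]
which is stronger than the required $1/[16(\sqrt{t-1}+1)]$. Solving $\eta_{n,p,\beta}/[64(\sqrt{t-1}+1)]\leq C/n$ for $t$ gives Eq.~\eqref{eq:replica-quadratic-time-necessary}. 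For fixed $p<\beta$, the Chernoff bound gives $\Pr\{\operatorname{Bin}(n-1,p)>b-1\}=e^{-\Omega(n)}$, so $\eta_{n,p,\beta}=p[1-o(1)]$.

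For Eq.~\eqref{eq:rank-weighted-initial-exponential} I evaluate Eq.~\eqref{eq:exact-pauli-leakage-formula} with $\vartheta(s)=p^s$ (no truncation) at $t=0$. There $\rho_{RH}$ is a Bell pair on $R\otimes\{0\}$ tensored with the computational-basis product state $\chi$. The nonzero coefficients with $\alpha\neq0$ are $\sigma_\alpha\otimes\sigma_{\alpha'}$ on $R,0$, with exactly one $\alpha'$ per $\alpha$ and magnitude one. These are tensored with an arbitrary $\{\mathbb I,Z\}$ string on the other $n-1$ sites, whose coefficient is $\pm1$. Summing over the three values of $\alpha$ and over strings of weight $1+j$ gives
\[
\frac34\sum_j\binom{n-1}{j}p^{1+j}=\frac{3p}{4}(1+p)^{n-1}.
\]
The remark about burn-in follows because $\mathfrak r(\infty)=O(n^{-1})$ for $p<1/3$ while this initial value grows exponentially in $n$. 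The main obstacle is checking the Pauli normalization, namely that $b_{3,Z_i}$ equals $\delta_i$ exactly with no stray factor of two. I would verify this directly from the $2^{-(n+1)}$ expansion used in the proof of Lemma~\ref{lem:positive-operator-spreading-decomposition}.
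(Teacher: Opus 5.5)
Your proposal is correct and is essentially the paper's argument. Keeping only the $\sigma_3\otimes Z_i$ terms of the Pauli-leakage formula for $B=\rho_{RH}(V)$ is the same as the paper's Bessel inequality with the orthogonal witnesses $W_i=Z_R\otimes(n_i-\mathbb I/2)\otimes\mathbb I_{E\setminus\{i\}}=-\tfrac12\,Z_R\otimes Z_i\otimes\mathbb I_{E\setminus\{i\}}$, where $\vartheta_{n,p,b}(1)$ packages the inclusion probability $\Pr\{i\in E,\,K\leq b\}=\eta_{n,p,\beta}$. The normalization you were worried about checks out ($b_{3,Z_i}=\delta_i$ exactly); using the $m_{n,t}$ prefix window gives a slightly better constant than the paper's cruder $4(\sqrt{t-1}+1)$-site count; and your Pauli evaluation of $\mathfrak r_{n,p}(0)$ agrees with the paper's direct Bell-pair computation.
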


\begin{proof}
Fix a circuit realization and an erasure set $E$ of size $K$.  For each
$i\in E$ define the Hilbert--Schmidt witness
\begin{equation}
W_i=Z_R\otimes(n_i-\mathbb I/2)\otimes\mathbb I_{E\setminus\{i\}}.
\end{equation}
The $W_i$ are mutually orthogonal,
$\|W_i\|_2^2=2^{K-1}$, and direct reference contraction gives
\begin{equation}
\langle W_i,X_E\rangle=\frac12\delta_i(V).
\end{equation}
Bessel's inequality therefore gives the pointwise bound
\begin{equation}
2^{K-1}\operatorname{Tr}(X_E^2)
\geq\frac14\sum_{i\in E}\delta_i(V)^2.
\label{eq:orthogonal-charge-witness-bessel}
\end{equation}
For every fixed $i$,
$\Pr\{i\in E,K\leq b\}=\eta_{n,p,\beta}$.  Averaging
Eq.~\eqref{eq:orthogonal-charge-witness-bessel} first over the erasure set
and then over the circuit, and using Jensen's inequality, yields
\begin{equation}
\mathfrak r_{n,p}^{\leq\beta}(t)
\geq\frac{\eta_{n,p,\beta}}4
\sum_i\mathbb E_V[\delta_i(V)^2]
\geq\frac{\eta_{n,p,\beta}}4\|\bar\delta(t)\|_2^2.
\end{equation}

The cell-walk proof of Theorem~\ref{thm:diffusive-linear-task-lower} places
at least one half of the pre-final-layer mass in the first
$\lfloor\sqrt{t-1}\rfloor+1$ cells.  The last odd layer moves this mass by
at most one site, so it remains in at most
$4(\sqrt{t-1}+1)$ sites.  Cauchy--Schwarz gives
\begin{equation}
\|\bar\delta(t)\|_2^2
\geq\frac1{16(\sqrt{t-1}+1)},
\end{equation}
which proves Eqs.~\eqref{eq:replica-diffusive-lower-bound} and
\eqref{eq:replica-quadratic-time-necessary}.

At $t=0$, $X_E=0$ unless the boundary logical site is erased.  If it is
erased, the reference and that site form a Bell state and
$\operatorname{Tr}(X_E^2)=3/4$.  Writing $K=1+J$ with
$J\sim\operatorname{Bin}(n-1,p)$ proves
\begin{equation}
\mathfrak r_{n,p}(0)
=\frac{3p}{4}\mathbb E(2^J)
=\frac{3p}{4}(1+p)^{n-1}.
\end{equation}
\end{proof}

\begin{lemma}[Exact one-bond two-copy algebra and connected iid sector]
\label{lem:exact-two-copy-algebra}
On one physical bond let $\Pi_r$ project onto bond charge
$r\in\{0,1,2\}$, with $(d_0,d_1,d_2)=(1,2,1)$, and set
\begin{equation}
Z_{rs}=\Pi_r^{(1)}\otimes\Pi_s^{(2)},\qquad F_b=F_iF_{i+1}.
\end{equation}
The fixed algebra of the balanced two-copy local-Haar twirl is
\begin{equation}
\begin{aligned}
\operatorname{Fix}\mathcal T_b^{(2)}
\cong{}&M_2^{01}\oplus M_2^{02}\oplus M_2^{12}\\
&\oplus\mathbb C_{00,+}\oplus\mathbb C_{11,+}
\oplus\mathbb C_{11,-}\oplus\mathbb C_{22,+}.
\end{aligned}
\label{eq:local-two-copy-algebra}
\end{equation}
and hence has dimension $16$.  In particular,
\begin{align}
\mathcal T_b^{(2)}(F_i)
={}&Z_{00}+Z_{22}
+\frac12\sum_{(r,s)\in\{(0,1),(1,2)\}}
\bigl(Z_{rs}+Z_{sr}\bigr)
\nonumber\\
&+\frac12\sum_{(r,s)\in\{(0,1),(1,2)\}}
\bigl(Z_{rs}F_b+Z_{sr}F_b\bigr)
\nonumber\\
&+\frac13(Z_{11}+Z_{11}F_b).
\label{eq:exact-local-swap-twirl}
\end{align}
For the iid observables in Eq.~\eqref{eq:iid-replica-observables},
\begin{align}
\operatorname{Tr}_1O_p=\operatorname{Tr}_2O_p&=(2-p)^n\mathbb I,\\
\operatorname{Tr}_1\widetilde O_p
=\operatorname{Tr}_2\widetilde O_p&=2^n\mathbb I.
\label{eq:iid-zero-partial-trace}
\end{align}
Let $P_{\rm ch}$ denote the global charge-Haar two-copy projector.  By the
nonstationary components we mean $(I-P_{\rm ch})O_p$ and
$(I-P_{\rm ch})\widetilde O_p$.  They are orthogonal not only to
$A\otimes\mathbb I$ and $\mathbb I\otimes A$, but charge block by charge
block to every $A_r\otimes P_s$ and $P_s\otimes A_r$ with
$P_rA_rP_r=A_r$ and $\operatorname{Tr}A_r=0$.  This orthogonality is
preserved by every local two-copy twirl.  The same statement holds for the
stationary-subtracted part of every cardinality-symmetric swap polynomial
\begin{equation}
W_w=\sum_{E\subseteq[n]}w_{|E|}F_E,
\label{eq:cardinality-symmetric-swap-polynomial}
\end{equation}
with arbitrary scalar coefficients $w_0,\ldots,w_n$.  In particular it
holds for the typical-set-truncated observable
$\widetilde O_{p,\beta}^{\leq}$.
\end{lemma}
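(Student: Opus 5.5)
The plan is to treat the four claims in order: the local fixed algebra, the twirled swap, the iid partial traces, and the connected-sector orthogonality. All four rest on one fact, namely that the balanced two-copy twirl $\mathcal T_b^{(2)}$ is the Hilbert--Schmidt orthogonal projector onto the commutant of $\{G\otimes G\}$, where $G$ ranges over the gate group of Eq.~\eqref{eq:local-Haar-gate-measure}.

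\textbf{Step 1 (fixed algebra).} I will decompose the bond space as $V_0\oplus V_1\oplus V_2$, with dimensions $(1,2,1)$. The group $\{1\oplus U_1\oplus e^{i\alpha}\}$ acts on these three summands by the trivial character, the defining representation of $U(2)$, and the character $e^{i\alpha}$, respectively. Next I will split $V_r\otimes V_s$ into irreducibles and match isomorphic ones.
\begin{itemize}
\item $V_0\otimes V_1$ and $V_1\otimes V_0$ are two copies of the fundamental representation, which gives a block $M_2^{01}$.
\item $V_0\otimes V_2$ and $V_2\otimes V_0$ are two copies of the character $e^{i\alpha}$, which gives $M_2^{02}$.
\item $V_1\otimes V_2$ and $V_2\otimes V_1$ are two copies of the fundamental representation twisted by $e^{i\alpha}$, which gives $M_2^{12}$.
\item $V_1\otimes V_1$ splits as $\mathrm{Sym}^2\oplus\det$, contributing $\mathbb C_{11,+}\oplus\mathbb C_{11,-}$.
\item $V_0\otimes V_0$ and $V_2\otimes V_2$ are one-dimensional, contributing $\mathbb C_{00,+}$ and $\mathbb C_{22,+}$.
\end{itemize}
To confirm that no further identifications occur, I will compare $U(2)$ highest weights together with the $\alpha$-charge. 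For example, $\det U_1$ is independent of $\alpha$, so it is not isomorphic to $e^{i\alpha}$. Schur's lemma then gives Eq.~\eqref{eq:local-two-copy-algebra}, and counting $3\cdot4+4\cdot1$ gives dimension $16$.

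\textbf{Step 2 (twirled swap).} I will compute $\mathcal T_b^{(2)}(F_i)$ block by block. Because $F_i$ commutes with the charge-sector projectors only after twirling, I will use the fact that the projection onto each matrix-unit block is given by trace pairings. Within each $M_2^{rs}$ block I will pair $F_i$ against $Z_{rs}$, $Z_{sr}$, $Z_{rs}F_b$ and $Z_{sr}F_b$. In the $11$ sector I will pair against the symmetric projector $(Z_{11}+Z_{11}F_b)/2$ and the antisymmetric projector $(Z_{11}-Z_{11}F_b)/2$.

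These pairings are elementary trace computations on the computational basis, for instance $\operatorname{Tr}(Z_{rs}F_i)$ as in Eq.~\eqref{eq:two-copy-swap-witness}. I expect the antisymmetric $11$ component to vanish, and dividing by block dimensions should reproduce the coefficients $1$, $\tfrac12$, $\tfrac13$ of Eq.~\eqref{eq:exact-local-swap-twirl}. As a consistency check, I will verify that the result restricts correctly to the one-copy marginal, namely $\operatorname{Tr}_2\mathcal T_b^{(2)}(F_i)=\mathbb I$.

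\textbf{Step 3 (iid partial traces).} These follow from the single-site identity $\operatorname{Tr}_2F_i=\mathbb I_i$ together with the tensor factorization in Eq.~\eqref{eq:iid-replica-observables}. Each factor of $O_p$ gives $2(1-p)+p=2-p$, and each factor of $\widetilde O_p$ gives $2(1-p)+2p=2$.

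\textbf{Step 4 (connected-sector orthogonality).} This is the step I expect to be the main obstacle. The key identity is $\operatorname{Tr}_2[(\mathbb I\otimes Y)F_E]=(\operatorname{Tr}_{\bar E}Y)\otimes\mathbb I_{\bar E}$. Applying it with $Y=P_s$, the global charge-$s$ projector, gives
$$\operatorname{Tr}[(A_r\otimes P_s)^\dagger W_w]=\operatorname{Tr}\bigl[A_r^\dagger\,\textstyle\sum_E w_{|E|}(P_s)_E\otimes\mathbb I_{\bar E}\bigr].$$
The operator $\sum_E w_{|E|}(P_s)_E\otimes\mathbb I_{\bar E}$ is diagonal in the computational basis, because $P_s$ is. It is also invariant under site permutations, because the coefficients depend only on $|E|$. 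Within charge $r$ all basis strings lie in a single $S_n$ orbit, so its compression $P_r(\cdot)P_r$ is a scalar multiple of $P_r$. That scalar pairs to zero against a traceless $A_r$.

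For the subtracted term $P_{\rm ch}W_w$, I will use that $P_{\rm ch}$ is self-adjoint and that $A_r\otimes P_s\mapsto P_{\rm ch}(A_r\otimes P_s)$ equals $(\text{charge-Haar average of }A_r)\otimes P_s$. The charge-Haar average of a traceless $A_r$ vanishes, so this term also pairs to zero. The mirror case $P_s\otimes A_r$ follows by replica symmetry.

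For invariance under local twirls, I will use $U$-invariance of $P_s$ to write $\mathcal T_b^{(2)}(A_r\otimes P_s)=\mathcal T_{1}(A_r)\otimes P_s$, which is again of the same traceless form. Self-adjointness of the twirl then preserves the orthogonal complement. Finally, choosing $w_k=(2p)^k(1-p)^{n-k}\mathbf 1_{\{k\le b\}}$ specializes the argument to $\widetilde O_{p,\beta}^{\leq}$.
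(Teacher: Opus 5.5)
Your proposal is correct. Steps 1--3 coincide with the paper's proof: the isotypic decomposition of $V_0\oplus V_1\oplus V_2$ under $U(2)\times U(1)$, the matrix-unit projector formula evaluated on $F_i$, and the one-site partial traces.

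Step 4 is a dual version of the paper's argument.
\begin{itemize}
\item \emph{The paper} works with the observables. It shows that $\operatorname{Tr}_{\mathcal H_s}[P_{rs}W_wP_{rs}]$ is a scalar multiple of $P_r$: off-diagonal entries vanish, and diagonal entries are constant by transitivity of site permutations. It then argues that $P_{\rm ch}W_w$ gives the same scalar because it has the same block trace. Finally it carries the vanishing block partial trace of the subtracted component through the intertwining $\operatorname{Tr}_2\mathcal T_b^{(2)}=\mathcal T_b^{(1)}\operatorname{Tr}_2$.
\item \emph{You} work with the test operators instead. The swap identity $\operatorname{Tr}_2[(\mathbb I\otimes Y)F_E]=(\operatorname{Tr}_{\bar E}Y)\otimes\mathbb I_{\bar E}$ gives diagonality in one line. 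Self-adjointness of $P_{\rm ch}$ together with $P_{\rm ch}(A_r\otimes P_s)=0$ removes the subtracted term without any scalar matching. Invariance of the span of $\{A_r\otimes P_s,\,P_s\otimes A_r\}$ under each self-adjoint local twirl gives preservation.
\end{itemize}
Your version is shorter and shows slightly more: $W_w$ is already orthogonal to every traceless $A_r\otimes P_s$ before the stationary subtraction, which is needed only for the fixed directions $P_r\otimes P_s$. The paper's version gives the equivalent statement that the blockwise one-copy partial traces of the subtracted component vanish identically, with explicit scalars such as $A_{rs}(p)/d_r$.

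There are two minor points.
\begin{itemize}
\item You never explicitly close the unrestricted claim about $A\otimes\mathbb I$ and $\mathbb I\otimes A$, but it follows from Step 3 by your own move. Writing $\mathcal T^{(1)}_{\rm ch}$ for the one-copy charge-Haar twirl, $\langle A\otimes\mathbb I,P_{\rm ch}O_p\rangle=\langle\mathcal T^{(1)}_{\rm ch}(A)\otimes\mathbb I,O_p\rangle=(2-p)^n\operatorname{Tr}A^\dagger=\langle A\otimes\mathbb I,O_p\rangle$. The same holds for $\widetilde O_p$ with $2^n$ in place of $(2-p)^n$.
\item The consistency check in Step 2 should read $\operatorname{Tr}_2\mathcal T_b^{(2)}(F_i)=2\mathbb I$ on the bond. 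Tracing the second copy of site $i+1$ contributes a factor $2$, and this is indeed what Eq.~\eqref{eq:exact-local-swap-twirl} gives.
\end{itemize}
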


\begin{proof}
The bond representation is
$V_0\oplus V_1\oplus V_2$ with dimensions $(1,2,1)$.  In two replicas,
each unequal unordered charge pair $\{r,s\}$ occurs with multiplicity two,
giving one $M_2$ commutant block.  The equal-charge block $V_1^{\otimes2}$
splits into its symmetric and antisymmetric irreducibles, while the
$00$ and $22$ blocks are one dimensional.  The $U(2)$ phase and the
independent charge-two phase prevent additional equivalences.  This proves
Eq.~\eqref{eq:local-two-copy-algebra}.

For completeness, an orthogonal projector formula makes the claim
checkable without dimension counting.  For $r<s$, let
\begin{equation}
\begin{aligned}
E_{11}^{rs}&=Z_{rs},& E_{22}^{rs}&=Z_{sr},\\
E_{12}^{rs}&=Z_{rs}F_bZ_{sr},&
E_{21}^{rs}&=(E_{12}^{rs})^\dagger.
\end{aligned}
\end{equation}
and put
$Y_r^\pm=Z_{rr}(\mathbb I\pm F_b)/2$ and
$D_r^\pm=\operatorname{rank}Y_r^\pm$; the two antisymmetric edge blocks
have zero rank.  Then, for every $X$,
\begin{equation}
\begin{aligned}
\mathcal T_b^{(2)}(X)
={}&\sum_{r<s}\sum_{a,b=1}^2
\frac{\operatorname{Tr}[(E_{ab}^{rs})^\dagger X]}{d_rd_s}E_{ab}^{rs}\\
&+\sum_{r,\pm:D_r^\pm>0}
\frac{\operatorname{Tr}(Y_r^\pm X)}{D_r^\pm}Y_r^\pm.
\end{aligned}
\label{eq:local-two-copy-projector}
\end{equation}
Substitution of $F_i$ proves
Eq.~\eqref{eq:exact-local-swap-twirl}.

On one site,
$\operatorname{Tr}_2[(1-p)\mathbb I+pF]=(2-p)\mathbb I$ and
$\operatorname{Tr}_2[(1-p)\mathbb I+2pF]=2\mathbb I$.
Tensoring proves Eq.~\eqref{eq:iid-zero-partial-trace}.  Moreover,
\begin{equation}
\operatorname{Tr}_2\mathcal T_b^{(2)}(X)
=\mathcal T_b^{(1)}(\operatorname{Tr}_2X),
\label{eq:twirl-partial-trace-intertwining}
\end{equation}
and similarly for the other replica.  Hence the zero-partial-trace
subspace and its orthogonal complement are invariant.  Subtracting the
charge-Haar projection requires a blockwise check, because the displayed
partial traces themselves are nonzero constants.  Let $P_r$ project onto
the global charge-$r$ sector, $P_{rs}=P_r\otimes P_s$, and
$d_r=\binom nr$.  Directly in the computational basis,
\begin{equation}
\operatorname{Tr}_{\mathcal H_s}
\left[P_{rs}O_pP_{rs}\right]
=\frac{A_{rs}(p)}{d_r}P_r,
\label{eq:iid-charge-block-partial-trace}
\end{equation}
where
\begin{equation}
\begin{aligned}
A_{rs}(p)=\sum_\ell{}&
\frac{n!}{\ell!(r-\ell)!(s-\ell)!}\\[-2pt]
&\times\frac{(1-p)^{r+s-2\ell}}{(n-r-s+\ell)!}.
\end{aligned}
\label{eq:iid-charge-block-row-sum}
\end{equation}
with the sum restricted to indices for which all factorials are
nonnegative.
Indeed, after tracing the second replica all off-diagonal terms vanish,
and the remaining row sum depends only on the two charges and their
intersection size $\ell$, not on the chosen weight-$r$ string.  The
charge-Haar twirl preserves the trace of this block and makes the same
partial trace scalar, so Eq.~\eqref{eq:iid-charge-block-partial-trace}
also holds with $O_p$ replaced by $P_{\rm ch}O_p$.  The identical argument
for the local factor $(1-p)I+2pF$ proves the statement for
$\widetilde O_p$.

For the extension in Eq.~\eqref{eq:cardinality-symmetric-swap-polynomial},
write $S_k=\sum_{|E|=k}F_E$.  In the computational basis, the matrix element
\begin{equation}
\langle x|\operatorname{Tr}_{\mathcal H_s}
[P_{rs}S_kP_{rs}]|x'\rangle
\end{equation}
vanishes unless $x=x'$: after a swap on $E$, equality of the traced second
string before and after the swap forces equality of the first strings on
both $E$ and its complement.  The diagonal entry is independent of the
weight-$r$ string $x$, because simultaneous site permutations act
transitively on such strings and leave $S_k$ invariant.  Hence the partial
trace is a scalar multiple of $P_r$.  The charge-Haar projection has the
same block trace and, by charge-sector unitary invariance, also has scalar
partial trace; the two scalars are therefore equal.  Linearity proves the
claim for every $W_w$.  Thus every charge-block partial trace of each
stationary-subtracted component vanishes.  Pairing with
$A_r\otimes P_s$ (or the replica-reversed mode) proves the asserted
orthogonality, and Eq.~\eqref{eq:twirl-partial-trace-intertwining}
preserves it under all Krylov iterates.
\end{proof}

\begin{corollary}[Exact connected bond source and rank-one post-burn-in sector]
\label{cor:exact-local-connected-source}
On one bond, let \(P_b=\mathcal T_b^{(1)}\),
\(T_b=\mathcal T_b^{(2)}\),
\(S_b=P_b\otimes P_b\), and \(\delta_b=T_b-S_b\).  Then \(P_b\),
\(T_b\), and \(S_b\) are Hilbert--Schmidt orthogonal projectors,
\begin{equation}
\operatorname{Ran}S_b\subseteq\operatorname{Ran}T_b,
\qquad
\delta_b^2=\delta_b,
\qquad
S_b\delta_b=\delta_bS_b=0,
\label{eq:local-connected-projector-relations}
\end{equation}
and \(\delta_b\) is an orthogonal projector of rank seven.

With the matrix units of
Eq.~\eqref{eq:local-two-copy-projector}, define
\begin{equation}
C_{11}=Y_1^+-3Y_1^-
=Z_{11}(2F_b-I),
\qquad
\|C_{11}\|_2^2=12.
\label{eq:local-collision-mode}
\end{equation}
Then, for every bond two-copy operator \(X\),
\begin{align}
\delta_b(X)
={}&\sum_{r<s}\sum_{(a,c)=(1,2),(2,1)}
\frac{\operatorname{Tr}[(E_{ac}^{rs})^\dagger X]}{d_rd_s}
E_{ac}^{rs}
\nonumber\\
&+\frac{\operatorname{Tr}(C_{11}X)}{12}C_{11}.
\label{eq:exact-local-connected-projector}
\end{align}
Thus the connected local source contains six off-diagonal charge-exchange
matrix units and one diagonal collision mode.

One complete even--odd one-copy cycle maps every physical operator into the
computational-basis diagonal algebra, and all later one-copy twirls preserve
that algebra.  For a physical-basis diagonal operator \(Z\), write
\begin{equation}
Z=\sum_{a\in\{00,01,10,11\}}|a\rangle\!\langle a|_b\otimes Z_a,
\qquad
\nabla_bZ=Z_{01}-Z_{10}.
\label{eq:bond-diagonal-gradient}
\end{equation}
Under the canonical identification that places the two bond replicas before
the two complementary factors,
\begin{align}
\delta_b(Z^{\otimes2})
&=\frac1{12}C_{11,b}\otimes(\nabla_bZ)^{\otimes2},
\nonumber\\
\|\delta_b(Z^{\otimes2})\|_2
&=\frac1{\sqrt3}\|(I-P_b)Z\|_2^2.
\label{eq:rank-one-diagonal-connected-source}
\end{align}
Consequently all sources generated after the first complete cycle lie in
the single local collision sector \(\mathbb C C_{11}\); the six
charge-exchange source modes can be injected only during the first cycle.
For the notation of Lemma~\ref{lem:connected-source-purity-budget}, if
\(L\geq n-1\), then
\begin{equation}
\boxed{
\sum_{\ell=L+1}^m\|\mathfrak Y_{\ell,m}\|_2
\leq\frac1{\sqrt3}
\left(\|z_L\|_2^2-\|z_m\|_2^2\right).}
\label{eq:rank-one-post-burn-in-source-budget}
\end{equation}
This reduction is exact, but it does not by itself control the pairing with
the exponentially large rank-weighted erasure endpoint; the remaining task
is a task-weighted propagation estimate for the collision mode.
\end{corollary}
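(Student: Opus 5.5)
The plan is to work entirely at the level of a single bond, using the explicit projector formula Eq.~\eqref{eq:local-two-copy-projector} for \(T_b\). For \(P_b\) we use its one-copy analogue, the projector onto the commutant of \(1\oplus U(2)\oplus e^{i\alpha}\).

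\emph{Projector relations.} Each twirl is an average of unitary superoperators over a compact group, hence a Hilbert--Schmidt orthogonal projector onto its fixed space. The fixed space of \(S_b=P_b\otimes P_b\) is spanned by \(A\otimes B\) with \(A,B\) each invariant under independent gates. Such an operator is in particular invariant under the same gate applied in both replicas, which gives \(\operatorname{Ran}S_b\subseteq\operatorname{Ran}T_b\). For nested orthogonal projectors one has \(T_bS_b=S_bT_b=S_b\). Hence \(\delta_b=T_b-S_b\) is the orthogonal projector onto \(\operatorname{Ran}T_b\ominus\operatorname{Ran}S_b\), with \(\delta_b^2=\delta_b\) and \(S_b\delta_b=\delta_bS_b=0\).

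\emph{Rank and explicit form.} The one-copy fixed algebra is \(\mathbb C\Pi_0\oplus\mathbb C\Pi_1\oplus\mathbb C\Pi_2\), of dimension \(3\). So \(\operatorname{rank}S_b=9\), and with Eq.~\eqref{eq:local-two-copy-algebra} this gives \(\operatorname{rank}\delta_b=16-9=7\). To identify the complement, note that \(\operatorname{Ran}S_b\) is spanned by the \(Z_{rs}\), \(r,s\in\{0,1,2\}\). In the basis of Eq.~\eqref{eq:local-two-copy-projector}, these nine operators are the diagonal units \(E^{rs}_{11},E^{rs}_{22}\) for \(r<s\), the units \(Z_{00}=Y_0^+\) and \(Z_{22}=Y_2^+\), and \(Z_{11}=Y_1^++Y_1^-\). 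The orthogonal complement inside \(\operatorname{Ran}T_b\) is therefore spanned by:
\begin{itemize}
\item the six off-diagonal units \(E^{rs}_{12},E^{rs}_{21}\);
\item the combination of \(Y_1^\pm\) orthogonal to \(Z_{11}\).
\end{itemize}
Since \(D_1^+=3\) and \(D_1^-=1\), that combination is \(Y_1^+-3Y_1^-\). Using \(Y_1^\pm=Z_{11}(I\pm F_b)/2\), it equals \(Z_{11}(2F_b-I)\), with \(\|C_{11}\|_2^2=3+9=12\). Writing the orthogonal projector onto this seven-dimensional span, normalizing the \(E_{ac}^{rs}\) by \(d_rd_s\) as in Eq.~\eqref{eq:local-two-copy-projector}, gives Eq.~\eqref{eq:exact-local-connected-projector}.

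\emph{Diagonalization after one cycle.} A single bond twirl sends a diagonal operator to a diagonal one. More importantly, it kills every off-diagonal matrix unit on that bond in the computational basis:
\begin{itemize}
\item the independent phase \(e^{i\alpha}\) removes coherences between bond charges;
\item inside the charge-one block, the \(U(2)\) average projects onto \(\mathbb C\Pi_1\).
\end{itemize}
The even layer therefore dephases every even bond, and the odd layer every odd bond. One needs to check that the odd twirl does not recreate even-bond coherence. It does not, since \(P_b\) maps the diagonal algebra into itself, so after one complete cycle only the diagonal algebra survives. The same argument shows every later one-copy twirl preserves the diagonal algebra.

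For diagonal \(Z\), decompose on the bond as in Eq.~\eqref{eq:bond-diagonal-gradient}. Then \(Z^{\otimes2}\) has zero overlap with the off-diagonal units \(E^{rs}_{12}\), since these require unequal bond configurations across the replicas that a diagonal product cannot produce. In the \(Z_{11}\) block, the \(C_{11}\) pairing is a four-term sum over \(a,a'\in\{01,10\}\) with weight \(2\delta_{aa'}\) coming from \(\langle a,a'|F_b|a,a'\rangle\), minus \(1\). This gives \(\operatorname{Tr}_b[C_{11}Z^{\otimes2}]=(Z_{01}-Z_{10})^{\otimes2}=(\nabla_bZ)^{\otimes2}\) as an operator on the complement. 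That yields the first line of Eq.~\eqref{eq:rank-one-diagonal-connected-source}.

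For the norm, \((I-P_b)Z\) replaces \(Z_{01},Z_{10}\) by \(\pm\nabla_bZ/2\), so \(\|(I-P_b)Z\|_2^2=\tfrac12\|\nabla_bZ\|_2^2\). On the other side, \(\|\tfrac1{12}C_{11}\otimes(\nabla Z)^{\otimes2}\|_2=\tfrac{\sqrt{12}}{12}\|\nabla Z\|_2^2=\tfrac1{\sqrt3}\cdot\tfrac12\|\nabla Z\|_2^2\), which matches.

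\emph{Tail budget.} Every gate \(\ell>L\ge n-1\) acts after a complete cycle, so \(z_{\ell-1}\) is diagonal. The reference factor is carried along unchanged; the gradient is taken with \(R\) tensored in. Then \(\|\mathfrak Y_{\ell,m}\|_2\le\|\delta_\ell(z_{\ell-1}^{\otimes2})\|_2=\tfrac1{\sqrt3}\|(I-\mathsf P_\ell)z_{\ell-1}\|_2^2\), where the inequality uses that the later \(\mathsf T_j\) are contractions. Pythagorean telescoping as in Lemma~\ref{lem:connected-source-purity-budget} then gives Eq.~\eqref{eq:rank-one-post-burn-in-source-budget}.

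The main obstacle is the bookkeeping of the \(C_{11}\) overlap: getting the replica ordering and the factor \(2F_b-I\) exactly right, and confirming that no charge-exchange component survives for diagonal inputs. The rest is dimension counting.
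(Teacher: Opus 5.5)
Your proposal is correct and follows the paper's proof essentially step for step: nested projectors with the $16-9$ rank count, the complement spanned by the six off-diagonal units and $C_{11}=Y_1^+-3Y_1^-$, bond-local dephasing after one cycle, the direct charge-one contraction, and Pythagorean telescoping for the tail. The only imprecision is in the diagonalization step. The odd layer does not undo the even layer because a bond twirl acts as the identity off its own bond, not because it preserves the full diagonal algebra, since after the even layer the operator is not yet fully diagonal. For odd $n$, the unmatched right endpoint is then dephased by the last odd bond, as the paper notes.
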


\begin{proof}
A one-copy local Haar twirl is the orthogonal projection onto
\(\operatorname{span}\{\Pi_0,\Pi_1,\Pi_2\}\), so \(S_b\) has rank nine.
The same-circuit two-copy twirl has rank sixteen by
Lemma~\ref{lem:exact-two-copy-algebra}.  Invariance under independent gates
on the two replicas implies invariance under the diagonal action, and hence
\(\operatorname{Ran}S_b\subseteq\operatorname{Ran}T_b\).  Therefore
\(T_bS_b=S_bT_b=S_b\), which proves
Eq.~\eqref{eq:local-connected-projector-relations} and
\(\operatorname{rank}\delta_b=16-9=7\).

The independent-copy projector is
\begin{equation}
S_b(X)=\sum_{r,s}
\frac{\operatorname{Tr}(Z_{rs}X)}{d_rd_s}Z_{rs}.
\end{equation}
Subtracting this formula from
Eq.~\eqref{eq:local-two-copy-projector} leaves the two off-diagonal matrix
units in each of the three unequal-charge \(M_2\) blocks.  In the
\(r=s=1\) block, the line orthogonal to
\(Z_{11}=Y_1^++Y_1^-\) is generated by
\(C_{11}=Y_1^+-3Y_1^-\), because
\(\operatorname{Tr}Y_1^+=3\) and
\(\operatorname{Tr}Y_1^-=1\).  Its squared norm is \(3+9=12\), proving
Eq.~\eqref{eq:exact-local-connected-projector}.

For a bond matrix unit \(|u\rangle\!\langle v|\), the one-copy twirl
vanishes unless \(u=v\): different bond charges are removed by the relative
phases, and a traceless off-diagonal operator inside the charge-one block is
removed by the Haar \(U(2)\) average.  The even matching therefore makes the
ket and bra strings equal on every covered site.  For odd \(n\), only the
right endpoint can remain unmatched, and the subsequent odd matching
contains that endpoint and removes the last possible physical off-diagonal
matrix element.  This proves one-cycle diagonalization.

For diagonal \(Z\), the six charge-exchange coefficients in
Eq.~\eqref{eq:exact-local-connected-projector} vanish.  A direct contraction
in the four-dimensional charge-one--charge-one block gives
\begin{equation}
\operatorname{Tr}_{b^{\otimes2}}
[C_{11,b}Z^{\otimes2}]
=(Z_{01}-Z_{10})^{\otimes2}.
\end{equation}
This proves the first identity in
Eq.~\eqref{eq:rank-one-diagonal-connected-source}.  Moreover,
\begin{equation}
(I-P_b)Z
=\frac12\bigl(|01\rangle\!\langle01|-|10\rangle\!\langle10|\bigr)
\otimes\nabla_bZ,
\end{equation}
so
\(\|(I-P_b)Z\|_2^2=\frac12\|\nabla_bZ\|_2^2\).  Since
\(\|C_{11}\|_2=\sqrt{12}\), the second identity follows.  Applying it to
every source after the first cycle and repeating the telescoping argument of
Lemma~\ref{lem:connected-source-purity-budget} proves
Eq.~\eqref{eq:rank-one-post-burn-in-source-budget}.
\end{proof}

\begin{lemma}[Collision Casimir and exact erasure extraction]
\label{lem:collision-casimir-extraction}
On the charge-one subspace of a physical bond, define
\begin{align}
J_x&=|01\rangle\!\langle10|+|10\rangle\!\langle01|,\\
J_y&=-i|01\rangle\!\langle10|+i|10\rangle\!\langle01|,\\
J_z&=|01\rangle\!\langle01|-|10\rangle\!\langle10|.
\end{align}
They obey $\operatorname{Tr}(J_aJ_c)=2\delta_{ac}$ and
\begin{equation}
\boxed{
C_{11}=\sum_{a=x,y,z}J_a^{(1)}\otimes J_a^{(2)}.}
\label{eq:collision-casimir-decomposition}
\end{equation}
Consequently, for the diagonal source of
Eq.~\eqref{eq:rank-one-diagonal-connected-source}, with
$A=\nabla_bZ$,
\begin{equation}
\boxed{
\delta_b(Z^{\otimes2})
=
\frac1{12}\sum_{a=x,y,z}
(J_{a,b}\otimes A)^{\otimes2}.}
\label{eq:collision-source-one-copy-squares}
\end{equation}

More generally, let
\begin{equation}
W_w^{(n)}=\sum_{E\subseteq[n]}w_{|E|}F_E
\end{equation}
be any cardinality-symmetric swap polynomial and define the collision
extraction map on a bond $b$ by
\begin{equation}
\mathcal C_b(W)
=
\frac1{12}\operatorname{Tr}_{b^{\otimes2}}(C_{11,b}W).
\end{equation}
Then
\begin{equation}
\boxed{
\mathcal C_b(W_w^{(n)})
=
\sum_{A\subseteq[n]\setminus b}
\left(\frac13w_{|A|+1}+\frac12w_{|A|+2}\right)F_A.}
\label{eq:exact-cardinality-collision-extraction}
\end{equation}
With the convention that a cutoff below zero gives the zero operator, put
\begin{equation}
\widetilde O_{p,b_0}^{(m)}
=
\sum_{\substack{E\subseteq[m]\\ |E|\leq b_0}}
(2p)^{|E|}(1-p)^{m-|E|}F_E.
\end{equation}
Equation~\eqref{eq:exact-cardinality-collision-extraction} gives the exact
recursion
\begin{equation}
\boxed{
\mathcal C_b(\widetilde O_{p,b_0}^{(n)})
=
\frac{2p(1-p)}3\widetilde O_{p,b_0-1}^{(n-2)}
+2p^2\widetilde O_{p,b_0-2}^{(n-2)}.}
\label{eq:truncated-collision-extraction-recursion}
\end{equation}
In the untruncated case this reduces to
\begin{equation}
\mathcal C_b(\widetilde O_p^{(n)})
=
\frac{2p(1+2p)}3\widetilde O_p^{(n-2)}.
\label{eq:untruncated-collision-extraction-recursion}
\end{equation}
The coefficient is strictly smaller than $2/3$ for $p<1/2$.

Let $A$ now be a Hermitian operator on the reference and the $n-2$ sites
outside $b$, and set
$Y_b(A)=C_{11,b}\otimes A^{\otimes2}/12$.  The preceding recursion and the
swap trick imply
\begin{align}
&\frac12\operatorname{Tr}\!\left[
\left(F_R-\frac12\mathbb I\right)
\otimes\widetilde O_{p,b_0}^{(n)}Y_b(A)
\right]\nonumber\\
&\quad=
\frac{2p(1-p)}3\mathcal Q_{n-2,p,b_0-1}(A)
+2p^2\mathcal Q_{n-2,p,b_0-2}(A)\nonumber\\
&\quad\leq
\frac{2p(1+2p)}3\mathcal Q_{n-2,p,b_0-1}(A).
\label{eq:collision-zero-step-task-recursion}
\end{align}
Thus an unpropagated collision source has an exact lower-dimensional,
endpoint-adapted description.  The difficulty in the matching upper bound is
not its injection-time pairing, but the growth of additional channels under
overlapping future twirls.
\end{lemma}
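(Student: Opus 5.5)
The proof has three parts: an identity on the bond charge-one block, a bond-level extraction computation, and a substitution into the swap trick.

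\emph{Casimir decomposition.} By Eq.~\eqref{eq:local-collision-mode}, \(C_{11}=Z_{11}(2F_b-I)\), and \(Z_{11}\) projects both replicas onto the two-dimensional charge-one space \(V_1=\operatorname{span}\{|01\rangle,|10\rangle\}\). Identify \(V_1\) with a qubit. Then \(J_x,J_y,J_z\) are the Pauli matrices, which gives \(\operatorname{Tr}(J_aJ_c)=2\delta_{ac}\). The restriction of \(F_b\) to \(V_1\otimes V_1\) is the qubit swap. The swap identity \(F=\frac12(I+\sum_a\sigma_a\otimes\sigma_a)\) then gives \(2F_b-I=\sum_aJ_a\otimes J_a\) on that block, which is Eq.~\eqref{eq:collision-casimir-decomposition}. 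Next insert this into Eq.~\eqref{eq:rank-one-diagonal-connected-source}, with the bond replicas reordered next to their complementary factors. The tensor \(C_{11}\otimes A\otimes A\) becomes \(\sum_a(J_a\otimes A)\otimes(J_a\otimes A)\), which proves Eq.~\eqref{eq:collision-source-one-copy-squares}. This step is bookkeeping only; the one thing to get right is the replica reordering convention.

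\emph{Extraction formula.} Write \(b=\{i,i+1\}\) and split each \(E=(E\cap b)\cup A\) with \(A\subseteq[n]\setminus b\). Since \(F_E=F_{E\cap b}\otimes F_A\), the extraction factorizes: \(\mathcal C_b(F_E)=\frac1{12}\operatorname{Tr}(C_{11}F_{E\cap b})F_A\). It therefore suffices to compute four bond traces.
\begin{itemize}
\item \(\operatorname{Tr}C_{11}=\operatorname{Tr}Y_1^+-3\operatorname{Tr}Y_1^-=3-3=0\).
\item Because \(F_bY_1^\pm=\pm Y_1^\pm\), we get \(\operatorname{Tr}(C_{11}F_b)=3+3=6\).
\item For a single-site swap, \(\operatorname{Tr}(C_{11}F_i)=2\operatorname{Tr}(Z_{11}F_bF_i)-\operatorname{Tr}(Z_{11}F_i)=2\operatorname{Tr}(Z_{11}F_{i+1})-\operatorname{Tr}(Z_{11}F_i)\). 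The \((1,1)\) entry of Eq.~\eqref{eq:two-copy-swap-witness} gives \(\operatorname{Tr}(Z_{11}F_i)=\operatorname{Tr}(Z_{11}F_{i+1})=2\), so this trace equals \(4-2=2\). The same holds for \(F_{i+1}\).
\end{itemize}
The two one-site choices of \(E\cap b\) therefore contribute \(2\cdot 2/12=\frac13\) with weight \(w_{|A|+1}\). The choice \(E\cap b=b\) contributes \(6/12=\frac12\) with weight \(w_{|A|+2}\). This is Eq.~\eqref{eq:exact-cardinality-collision-extraction}.

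\emph{Recursions and task pairing.} Substitute \(w_k=(2p)^k(1-p)^{n-k}\mathbf 1_{\{k\leq b_0\}}\) and set \(j=|A|\) on the \(n-2\) remaining sites. The first term becomes \(\frac{2p(1-p)}3(2p)^j(1-p)^{n-2-j}\) with cutoff \(j\leq b_0-1\). The second becomes \(2p^2(2p)^j(1-p)^{n-2-j}\) with cutoff \(j\leq b_0-2\). This is Eq.~\eqref{eq:truncated-collision-extraction-recursion}. Without the cutoff the coefficients add to \(\frac{2p}3+\frac{4p^2}3=\frac{2p(1+2p)}3\), which is below \(2/3\) exactly when \(p(1+2p)<1\), and in particular for \(p<1/2\).

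For Eq.~\eqref{eq:collision-zero-step-task-recursion}, first take the partial trace over the two bond replicas of \(\widetilde O^{(n)}_{p,b_0}Y_b(A)\); this replaces the observable by \(\mathcal C_b(\widetilde O^{(n)}_{p,b_0})\otimes A^{\otimes2}\). Then apply the swap-trick identity Eq.~\eqref{eq:task-seminorm-swap-trick} on \(n-2\) sites to each of the two truncated observables. The normalization matches because \(\pi_p(E)2^{|E|-1}=\frac12(2p)^{|E|}(1-p)^{m-|E|}\). The final inequality follows from monotonicity of \(\mathcal Q\) in the cutoff. In the Pauli formula Eq.~\eqref{eq:exact-pauli-leakage-formula} the weight \(\vartheta\) is nondecreasing in \(b\) and every term is nonnegative, so \(\mathcal Q_{n-2,p,b_0-2}\leq\mathcal Q_{n-2,p,b_0-1}\).

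The main obstacle I expect is the normalization and ordering conventions in the single-site traces \(\operatorname{Tr}(C_{11}F_i)\) and in the reordering of replica factors. I would double-check both by a direct four-dimensional matrix computation on \(V_1\otimes V_1\).
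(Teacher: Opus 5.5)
Your proposal is correct and takes essentially the same approach as the paper. Both use the qubit swap identity on the charge-one block, the four bond traces $0,2,2,6$ that give the coefficients $\frac13 w_{|A|+1}+\frac12 w_{|A|+2}$, substitution of the truncated Bernoulli weights, and the swap trick together with cutoff monotonicity of $\mathcal Q$. Your derivation of $\operatorname{Tr}(C_{11}F_i)=2$ from $C_{11}=Z_{11}(2F_b-I)$ and the witness-matrix entry $\operatorname{Tr}(Z_{11}F_i)=2$ is a valid way to justify a value the paper simply states.
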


\begin{proof}
On a two-dimensional space the swap identity is
$2F-\mathbb I=\sum_{a=x,y,z}\sigma_a\otimes\sigma_a$.
Restricting it to the charge-one block proves
Eq.~\eqref{eq:collision-casimir-decomposition}, and
Eq.~\eqref{eq:rank-one-diagonal-connected-source} then gives
Eq.~\eqref{eq:collision-source-one-copy-squares}.

For the extraction formula, the four possible intersections of a swap set
with the two sites of $b$ give
\begin{equation}
\begin{aligned}
\operatorname{Tr}(C_{11})&=0,\\
\operatorname{Tr}(C_{11}F_i)
=\operatorname{Tr}(C_{11}F_{i+1})&=2,\\
\operatorname{Tr}(C_{11}F_iF_{i+1})&=6.
\end{aligned}
\end{equation}
For a fixed $A\subseteq[n]\setminus b$, the two one-site intersections
therefore contribute $4w_{|A|+1}/12$, while the two-site intersection
contributes $6w_{|A|+2}/12$.  This proves
Eq.~\eqref{eq:exact-cardinality-collision-extraction}.
Substitution of the truncated Bernoulli weights gives
Eq.~\eqref{eq:truncated-collision-extraction-recursion}; removing the cutoff
gives Eq.~\eqref{eq:untruncated-collision-extraction-recursion}.
Finally, pair the extraction identity with $A^{\otimes2}$ and use
Eq.~\eqref{eq:task-seminorm-swap-trick}.  Each
$\mathcal Q_{m,p,b'}(A)$ is a sum of nonnegative terms and is monotone in the
cutoff $b'$, which proves the inequality in
Eq.~\eqref{eq:collision-zero-step-task-recursion}.
\end{proof}

\begin{remark}[Exact obstruction to a collision-position-only closure]
\label{rem:collision-propagation-obstructions}
The injection-time reduction to $C_{11}$ is not an invariant propagation
sector.  On three consecutive sites $0,1,2$, use the two-replica one-site
basis $|ab\rangle_j=|a\rangle_{j,1}|b\rangle_{j,2}$.  A direct application of
Eq.~\eqref{eq:local-two-copy-projector} gives
\begin{equation}
\begin{aligned}
&\frac12\operatorname{Tr}_{(1,2)^{\otimes2}}\!\left[
(E_{12}^{01,(1,2)})^\dagger
\mathcal T_{(1,2)}^{(2)}\right.\\[-2pt]
&\hspace{25mm}\left.\times
\left(C_{11,(0,1)}\otimes|00\rangle\!\langle00|_2\right)
\right]\\
&\hspace{25mm}=|10\rangle\!\langle01|_0\neq0.
\end{aligned}
\label{eq:collision-regenerates-exchange-mode}
\end{equation}
Thus a neighboring twirl immediately regenerates a charge-exchange channel;
a Markov closure on the collision position alone is impossible.

Nor is the natural rank-weighted quadratic form locally monotone.  Let
\begin{equation}
\Xi=C_{11,(0,1)}\otimes
\sigma_{z,2}^{(1)}\otimes\sigma_{z,2}^{(2)},
\qquad
\widetilde O_p^{(3)}=
\bigotimes_{j=0}^{2}[(1-p)\mathbb I+2pF_j].
\end{equation}
Exact contraction with the local projector gives
\begin{align}
\operatorname{Tr}(\widetilde O_p^{(3)}\Xi)
&=32p^2(1+2p),\\
\operatorname{Tr}\!\left[
\widetilde O_p^{(3)}
\mathcal T_{(1,2)}^{(2)}(\Xi)
\right]
&=\frac{16}{3}p^2(11+7p).
\end{align}
Their ratio is
\begin{equation}
\frac{11+7p}{6(1+2p)}>1,
\qquad 0<p<1.
\label{eq:task-seminorm-local-expansion-counterexample}
\end{equation}
Hence neither a one-channel defect walk nor a gate-by-gate contraction of the
unmodified task seminorm can prove the desired upper bound.  The positive
representation in
Lemma~\ref{lem:positive-operator-spreading-decomposition} instead identifies
the correct remaining target: suppression of the low-support Pauli leakage
of the propagated one-copy gate increments.
\end{remark}

\begin{theorem}[Strong-Rayleigh burn-in and exponential suppression of late sources]
\label{thm:strong-rayleigh-late-source}
Let the canonical boundary code evolve under the averaged even--odd
brickwork dynamics, and put
\begin{equation}
L_t=t(n-1),\qquad q=\frac{n-1}{2}.
\end{equation}
For $r\in\{q,q+1\}$, let $\nu_t^{(r)}$ be the classical distribution on
$r$-particle configurations obtained from the corresponding deterministic
logical input after $t$ complete one-copy brickwork cycles.  Define
\begin{equation}
B_n=
\exp\!\left[-\frac{9q^2}{2048(q+1)}\right]
+2^{-3q/16}.
\label{eq:strong-rayleigh-purity-bound}
\end{equation}
Then, for every odd $n\geq3$ and every integer $t\geq n^2+1$,
\begin{equation}
\boxed{
\|\nu_t^{(r)}\|_{\ell^2}^2\leq B_n,
\qquad r=q,q+1.}
\label{eq:one-copy-exponential-purity}
\end{equation}
Consequently the one-copy reference--physical state in
Lemma~\ref{lem:connected-source-purity-budget} obeys
\begin{equation}
\boxed{
\|z_{L_t}\|_2^2\leq\frac{B_n}{2}=e^{-\Omega(n)}.}
\label{eq:reference-physical-exponential-purity}
\end{equation}
For every later gate index $m\geq L_t$, the total propagated mass of all
connected sources injected after this burn-in satisfies
\begin{equation}
\boxed{
\sum_{\ell=L_t+1}^{m}\|\mathfrak Y_{\ell,m}\|_2
\leq\frac{B_n}{2\sqrt3}=e^{-\Omega(n)}.}
\label{eq:late-connected-source-exponential-tail}
\end{equation}
Thus, in an early--late Duhamel decomposition at diffusive time, the
\emph{raw Hilbert--Schmidt generation energy} of late connected covariance
is exponentially small.  Equation~\eqref{eq:late-connected-source-exponential-tail}
does not by itself control the task-weighted pairing: the endpoint has an
exponentially large unrestricted Hilbert--Schmidt norm, and
Eq.~\eqref{eq:exact-pauli-leakage-formula} shows that the needed conversion
is an endpoint-adapted low-support leakage estimate.  The theorem therefore
removes late source proliferation but does not eliminate the late terms from
\(\mathcal F_{n,p,\beta}\) without an additional operator-spreading bound.
\end{theorem}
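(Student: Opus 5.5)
The argument has three steps: bound the classical collision probability $\|\nu_t^{(r)}\|_{\ell^2}^2$ after diffusive burn-in, convert it into the one-copy Hilbert--Schmidt norm of $z_{L_t}$, and feed that into the post-burn-in source budget of Corollary~\ref{cor:exact-local-connected-source}. The last two steps are bookkeeping. After one full cycle $z_{L_t}$ is diagonal in the computational basis (Corollary~\ref{cor:exact-local-connected-source}), and the averaged logical coherence has vanished. So $z_{L_t}=\frac12\sum_{i}|i\rangle\!\langle i|_R\otimes\operatorname{diag}(\nu_t^{(q+i)})$, and hence $\|z_{L_t}\|_2^2=\frac14(\|\nu_t^{(q)}\|^2+\|\nu_t^{(q+1)}\|^2)\le B_n/2$. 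Inserting this into Eq.~\eqref{eq:rank-one-post-burn-in-source-budget} with $L=L_t\ge n-1$ and $\|z_m\|_2^2\ge0$ gives Eq.~\eqref{eq:late-connected-source-exponential-tail}. The whole content is therefore Eq.~\eqref{eq:one-copy-exponential-purity}.

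For the purity bound I will use the random-interchange representation already employed in the proof of Theorem~\ref{thm:true-trace-quadratic-speed-limit}. Here $\nu_t^{(r)}$ is the law of the random set $\Pi_t(C)$, where $C$ is the deterministic initial occupied set, and $\|\nu\|^2=\Pr\{\Pi_t(C)=\Pi'_t(C)\}$ for two independent copies. The occupation vector is a product of independent random transpositions applied to a deterministic configuration, so its law is strong Rayleigh (Borcea--Brändén--Liggett: symmetric exclusion and interchange steps preserve the strong Rayleigh property, and a deterministic start is trivially such). I would then fix the left window $W=\{0,\dots,\lfloor n/4\rfloor\}$, or a comparable block, and split the collision event into two cases.

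\emph{Case (a): the window count $N_W$ is atypical.} Initially $W$ is fully occupied with mean count $\approx|W|$. After $t\ge n^2+1$ cycles the one-copy profile is within $O(e^{-ct/n^2})$ of the flat value $r/n\approx1/2$, by the cosine eigenbasis of Theorem~\ref{thm:diffusive-linear-task-lower} applied to each particle's marginal (the profile is a sum of single-particle walks). So $\mathbb EN_W\approx |W|/2$ up to a constant fraction, fixed by comparing the $n^2$ burn-in with the gap $\cos^2(\pi/n)$. Strong Rayleigh measures satisfy Hoeffding/Chernoff-type concentration for counts (Pemantle--Peres), which gives $\Pr\{|N_W-\mathbb EN_W|>\epsilon q\}\le\exp[-c\,\epsilon^2 q^2/(q+1)]$. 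This produces the first term of $B_n$; the constants $9/2048$ would come from $\epsilon=3/32$-type choices with the Pemantle--Peres constant.

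\emph{Case (b): $N_W$ is typical.} Conditionally on $N_W=k$ with $k$ in the bulk, the strong Rayleigh law restricted to $W$ should spread its mass over many patterns. I would bound the largest atom by a product/negative-dependence argument: pairwise negative correlation (Eq.~\eqref{eq:interchange-pairwise-negative-correlation}) upgraded to negative association of strong Rayleigh marginals gives $\Pr\{\eta_W=x\}\le\prod_{i\in W}\max(u_i,1-u_i)$. With every $u_i\le 3/4$ or so, this is at most $2^{-c|W|}$, which gives the $2^{-3q/16}$ term. Then $\|\nu\|^2\le\max_x\nu(x)\le\Pr\{\text{(a)}\}+\max\text{atom on (b)}$.

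The main obstacle is case (b). Negative association bounds probabilities of monotone events, not atoms of arbitrary patterns, so the product bound on $\Pr\{\eta_W=x\}$ needs care. I would first try to handle it through the chain rule, writing $\Pr\{\eta_W=x\}=\prod_i\Pr\{\eta_i=x_i\mid\eta_{<i}=x_{<i}\}$. Strong Rayleigh is closed under conditioning on site values, so each conditional marginal remains a strong Rayleigh marginal, and what I need is to show that each factor is bounded by a constant below one for a linear fraction of sites. That in turn requires controlling the conditional means, which can drift under conditioning; this is where the explicit exponent $3q/16$ must be earned. If that fails, I would fall back on a Fourier/two-copy coupling of the interchange walk on the window.
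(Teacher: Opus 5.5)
\textbf{Verdict: the proposal has a genuine gap.} Your two bookkeeping steps match the paper: the block-diagonal form of $z_{L_t}$, and its insertion into Eq.~\eqref{eq:rank-one-post-burn-in-source-budget}. The core estimate Eq.~\eqref{eq:one-copy-exponential-purity}, however, is not established.

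Case (b) rests on $\Pr\{\eta_W=x\}\le\prod_{i\in W}\max(u_i,1-u_i)$, and this is false for strong Rayleigh laws, not merely unproved. A single active half-swap produces the strong Rayleigh law $\tfrac12(\delta_{10}+\delta_{01})$, with $u_i=1/2$ and atoms $1/2>1/4$. A product of $q$ such bonds on $2q$ sites has atoms $2^{-q}$, against your claimed $4^{-q}$, so the failure is exponential rather than a matter of constants. The reason is structural. An atom is the intersection of an increasing event (its occupied sites are filled) with a decreasing one (its empty sites are empty), and negative association makes such a pair \emph{positively} correlated. It therefore cannot give a two-sided product bound.

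Your chain-rule fallback stops exactly where the conditional means drift, so it does not close the gap either. The case split is also idle: typicality of $N_W$ is never used in (b), and the constants of $B_n$ are read off rather than derived.

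\textbf{A repair within your framework.} Keep only one side. If $x$ has occupied set $S$ with $|S|=r$, then $\nu_t^{(r)}(x)\le\Pr\{\eta_i=1\ \forall i\in S\}\le\prod_{i\in S}u_i$, by iterating negative association on disjoint increasing indicators. In your chain-rule language: order the occupied sites first, and conditioning on further occupied sites can only lower each conditional mean.
\begin{itemize}
\item Negative association applies to $\nu_t^{(r)}$ itself, because partial symmetrization preserves the strong Rayleigh property.
\item For $t\ge n^2+1$, every one-site marginal on the whole chain lies within $1/32$ of $r/n\in[1/3,2/3]$. The final odd layer only averages pairs of the pre-layer profile, so it keeps this bound.
\item Hence $\|\nu_t^{(r)}\|_{\ell^2}^2\le\max_x\nu_t^{(r)}(x)\le(3/4)^q\le 2^{-3q/16}\le B_n$. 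No window, no case (a), and no concentration inequality are needed.
\end{itemize}

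\textbf{How the paper does it.} The paper instead uses the randomness of the last layer. Conditional on the configuration before the final odd layer, the output is uniform on $2^{D}$ configurations, where $D$ counts active odd bonds. Convexity then gives $\|\nu_t^{(r)}\|^2\le\mathbb{E}\,2^{-D}$. Pairwise negative correlation gives $\mathbb{E}D\ge 3q/8$. The Pemantle--Peres Gaussian concentration for homogeneous strong Rayleigh measures bounds $\Pr\{D\le 3q/16\}$, and this produces the two terms of $B_n$.

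The repaired version of your route is shorter and gives a better exponent. It needs only the orthant consequence of negative association, not a concentration theorem.
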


\begin{proof}
Let $E$ and $O$ be the even- and odd-matching averaging projections of
Eq.~\eqref{eq:one-copy-bond-averaging}.  Fix
$r\in\{q,q+1\}$ and let $u_0\in\{0,1\}^n$ be the occupation vector of the
corresponding deterministic input.  Immediately before the final odd layer
of cycle $t$, the one-site mean profile is
\begin{equation}
v_s=(EOE)^sEu_0,\qquad s=t-1.
\label{eq:pre-odd-one-point-profile}
\end{equation}
The vectors
\begin{equation}
g_k=E f_k,\qquad
f_k(j)=\cos\frac{k\pi(j+1/2)}n,
\qquad 0\leq k\leq q,
\end{equation}
form the orthogonal eigenbasis of $EOE$ used in the proof of
Theorem~\ref{thm:diffusive-linear-task-lower}, with
\begin{align}
EOE\,g_k&=\cos^2\frac{k\pi}{n}\,g_k,
\nonumber\\
\|g_k\|_2^2&=\frac n2\cos^2\frac{k\pi}{2n},
\qquad
\|g_k\|_\infty\leq\cos\frac{k\pi}{2n}.
\label{eq:strong-rayleigh-profile-spectrum}
\end{align}
for $k>0$.  Since $g_k\perp\mathbf1$ and $Eg_k=g_k$,
\begin{equation}
\frac{|\langle g_k,Eu_0\rangle|}{\|g_k\|_2^2}\,
\|g_k\|_\infty
=\frac{|\langle g_k,u_0\rangle|}{\|g_k\|_2^2}\,
\|g_k\|_\infty
\leq\frac{2r}{n}\leq\frac43.
\end{equation}
Using $\cos x\leq e^{-2x^2/\pi^2}$ on $[0,\pi/2]$ therefore gives
\begin{align}
\left\|v_s-\frac rn\mathbf1\right\|_\infty
&\leq\frac43\sum_{k=1}^{q}
\exp\!\left(-\frac{4sk^2}{n^2}\right).
\label{eq:uniform-one-point-burn-in}
\end{align}
If $s\geq n^2$, then $k^2\geq1+3(k-1)$ for $k\geq1$, and hence
\begin{equation}
\left\|v_s-\frac rn\mathbf1\right\|_\infty
\leq\frac43\frac{e^{-4}}{1-e^{-12}}<\frac1{32}.
\label{eq:uniform-profile-quarter-window}
\end{equation}
Because $r/n\in[1/3,2/3]$ for $n\geq3$, every one-site marginal before the
final odd layer lies in $[1/4,3/4]$.

It remains to convert this one-point burn-in into a purity bound.  A
deterministic fixed-particle configuration is strongly Rayleigh.  On a
bond, the diagonal one-copy Haar channel is the equal mixture of the
identity and the bond transposition, namely a partial symmetrization with
parameter $1/2$.  Partial symmetrization preserves the strong Rayleigh
property~\cite{Borcea2009StrongRayleigh}; therefore the distribution
$\widehat\nu$ immediately before the final odd layer is an
$r$-homogeneous strong Rayleigh measure.  Such measures are negatively
associated and satisfy the stochastic covering property
\cite{Borcea2009StrongRayleigh,PemantlePeres2014}.

Let $\mathcal M_{\rm o}$ be the final odd matching and let
\begin{equation}
D_{\mathcal M_{\rm o}}(\eta)
=\sum_{(i,j)\in\mathcal M_{\rm o}}
\mathbf1\{\eta_i\neq\eta_j\}
\end{equation}
be its number of active bonds.  Negative association and
Eq.~\eqref{eq:uniform-profile-quarter-window} imply, for every matching
bond,
\begin{align}
\mathbb P_{\widehat\nu}(\eta_i\neq\eta_j)
&=u_i+u_j-2\mathbb E_{\widehat\nu}(\eta_i\eta_j)
\nonumber\\
&\geq u_i+u_j-2u_iu_j\geq\frac38.
\end{align}
Since $|\mathcal M_{\rm o}|=q$,
\begin{equation}
\mathbb E_{\widehat\nu}D_{\mathcal M_{\rm o}}\geq\frac{3q}{8}.
\label{eq:active-bond-mean-lower}
\end{equation}
The function $D_{\mathcal M_{\rm o}}$ is Lipschitz one in Hamming distance.
The Gaussian concentration theorem for homogeneous strong Rayleigh measures
\cite{PemantlePeres2014}, applied to its lower tail, gives
\begin{align}
\mathbb P_{\widehat\nu}
\left\{D_{\mathcal M_{\rm o}}\leq\frac{3q}{16}\right\}
&\leq
\exp\!\left[-\frac{(3q/16)^2}{8r}\right]
\nonumber\\
&\leq
\exp\!\left[-\frac{9q^2}{2048(q+1)}\right].
\label{eq:active-bond-lower-tail}
\end{align}

Conditional on a configuration $\eta$, the final half-swap layer is uniform
on exactly $2^{D_{\mathcal M_{\rm o}}(\eta)}$ distinct output
configurations.  Convexity of the squared $\ell^2$ norm therefore yields
\begin{align}
\|\nu_t^{(r)}\|_{\ell^2}^2
&\leq
\mathbb E_{\widehat\nu}
2^{-D_{\mathcal M_{\rm o}}}
\nonumber\\
&\leq
\mathbb P_{\widehat\nu}
\left\{D_{\mathcal M_{\rm o}}\leq\frac{3q}{16}\right\}
+2^{-3q/16},
\end{align}
which proves Eq.~\eqref{eq:one-copy-exponential-purity}.

For the canonical code, the logical off-diagonal block is annihilated by the
first active even-bond Haar twirl.  At every complete-cycle boundary
$t\geq1$ one therefore has the orthogonal reference-block decomposition
\begin{equation}
z_{L_t}
=\frac12|0\rangle\!\langle0|_R\otimes\nu_t^{(q)}
+\frac12|1\rangle\!\langle1|_R\otimes\nu_t^{(q+1)},
\end{equation}
where the classical laws are identified with diagonal density matrices.
Thus
\begin{equation}
\|z_{L_t}\|_2^2
=\frac14\left(
\|\nu_t^{(q)}\|_{\ell^2}^2
+\|\nu_t^{(q+1)}\|_{\ell^2}^2
\right)
\leq\frac{B_n}{2}.
\end{equation}
Finally, $L_t\geq n-1$, so
Eq.~\eqref{eq:rank-one-post-burn-in-source-budget} gives
Eq.~\eqref{eq:late-connected-source-exponential-tail}.
\end{proof}

\begin{corollary}[Exact exclusion of the analytic one-copy branch]
\label{cor:exact-one-copy-branch-exclusion}
Let $n\geq3$, let $W$ be any of $O_p$, $\widetilde O_p$, or
$\widetilde O_{p,\beta}^{\leq}$, and put
$R_W=(I-P_{\rm ch})W$.  In every global charge block $R_W$ is orthogonal
to the invariant module generated by
\begin{equation}
A_r\otimes P_s,
\quad P_r\otimes A_s,
\quad
F^a(A_r\otimes P_s)F^b,
\quad
F^a(P_r\otimes A_s)F^b,
\label{eq:swap-dressed-one-copy-module}
\end{equation}
where $a,b\in\{0,1\}$ and
$\operatorname{Tr}A_r=\operatorname{Tr}A_s=0$.
This module contains the exact one-copy compression branch
\begin{equation}
\mu_1=\cos^2\frac{\pi}{n},
\label{eq:analytic-one-copy-compression-branch}
\end{equation}
Consequently, in the compressed convention of
Eq.~\eqref{eq:exact-physical-bilinear-transient}, the iid-dependent endpoint
$v=B_{\rm e}^{\dagger}P_{\rm o}W$ has zero projection onto this embedded
one-copy module.  The code endpoint $k$ need not vanish there.  This
statement concerns the analytic embedded module; it does not exclude an
accidentally degenerate genuine two-copy eigenvector at the same numerical
eigenvalue.
\end{corollary}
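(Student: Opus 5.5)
The plan is to reduce the corollary to the blockwise orthogonality already proved in Lemma~\ref{lem:exact-two-copy-algebra}, then add the swap dressings, then identify the one-copy branch.

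\emph{Step 1: the undressed generators.} By Lemma~\ref{lem:exact-two-copy-algebra}, in each global charge block $P_{rs}$ the stationary-subtracted component $R_W$ has vanishing partial traces onto either replica. This holds for all three choices of $W$, since each is a cardinality-symmetric swap polynomial $W_w$. So $\operatorname{Tr}[R_W^\dagger(A_r\otimes P_s)]=\operatorname{Tr}[\operatorname{Tr}_2(P_{rs}R_WP_{rs})A_r]=0$, and similarly for $P_r\otimes A_s$.

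\emph{Step 2: the swap dressings.} Use that $F$ commutes with every $F_E$, with $P_{\rm ch}$ (swap invariance of the charge-Haar twirl), and hence with $R_W$; this is the fact used in Lemma~\ref{lem:iid-minimal-invariant-module}. Then
\[
\langle R_W,F^a(A_r\otimes P_s)F^b\rangle=\langle F^aR_WF^b,A_r\otimes P_s\rangle=\langle F^{a+b}R_W,A_r\otimes P_s\rangle .
\]
For $a+b$ odd I still need $FR_W$ to have zero blockwise partial trace. Here $F$ maps the block $(r,s)$ to $(s,r)$, so I would rerun the computational-basis argument of that lemma for $FS_k=\sum_{|E|=k}F_{[n]\setminus E}$. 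This is again a cardinality-symmetric swap polynomial with reversed weights $w_{n-k}$, so the lemma applies verbatim. Invariance of the generated module then follows because each layer twirl commutes with left and right multiplication by $F$ and intertwines partial traces (Eq.~\eqref{eq:twirl-partial-trace-intertwining}); orthogonality of $R_W$ to the module is preserved because the twirls are self-adjoint projectors and $R_W$'s relevant complement is invariant. I expect this step to be the main point requiring care, specifically checking that the charge-Haar projection of $FW$ has the same scalar partial trace, so that the subtraction still cancels.

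\emph{Step 3: the branch and the endpoint.} On the module $A_r\otimes P_s$ the two-copy twirls act as the one-copy twirl on the first factor, because $P_s$ is locally invariant. The one-copy diagonal dynamics on traceless charge-one profiles then reduces to the averaging chain of Theorem~\ref{thm:diffusive-linear-task-lower}. That chain has the exact compressed eigenvalue $\cos^2(\pi/n)$, with eigenvector $g_1$ lifted to $\sum_j g_1(j)n_j$ restricted to the block. So the branch $\mu_1$ lies in the module.

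Finally, $v=B_{\rm e}^\dagger P_{\rm o}W$. Its component on the module equals that of $P_{\rm o}R_W$, since the $P_{\rm ch}W$ part is fixed and sits in $Q$. Orthogonality is preserved by $P_{\rm o}$ and by the isometric compression, so the projection of $v$ onto the module is zero. For the code endpoint $k$ I would only remark that $K_{\mathcal C}$ has nonzero one-copy partial traces, so nothing is claimed. The degeneracy caveat needs no proof.
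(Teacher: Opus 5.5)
Your proposal is correct and follows essentially the same route as the paper. The ingredients match: blockwise partial-trace orthogonality from Lemma~\ref{lem:exact-two-copy-algebra}; the identity $FS_k=S_{n-k}$, together with $F$ commuting with the local twirls and with $P_{\rm ch}$, to cover all four swap dressings; the one-copy bond-averaging chain with eigenvector $g_1$ for $\cos^2(\pi/n)$; and invariance of the module under the layer projectors to pass to $v$. The extra check you flag in Step~2 is already supplied by that lemma, because $FW_w$ is again a cardinality-symmetric swap polynomial, with weights $w_{n-k}$.
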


\begin{proof}
The cardinality-symmetric blockwise partial-trace identity in
Lemma~\ref{lem:exact-two-copy-algebra} gives the two undressed
orthogonalities.  With $F=\prod_iF_i$, one has $FS_k=S_{n-k}$; for the two
untruncated product observables, writing
$R_x=(I-P_{\rm ch})O_x$, this specializes to
\begin{equation}
R_pF=FR_p=R_{1-p},
\qquad
\widetilde O_pF=(1+p)^n
O_{(1-p)/(1+p)},
\end{equation}
The global swap commutes with every local twirl and with $P_{\rm ch}$; hence
the cardinality-symmetric result covers all four left/right swap dressings,
including for the truncated polynomial.  On one-copy
coefficient vectors, a bond twirl averages its two entries.  The overlap
matrix of the two alternating matching subspaces is the weighted incidence
matrix of the open path and has singular values
$\cos(k\pi/n)$.  Thus the first nonconstant eigenvalue of the self-adjoint
compression is Eq.~\eqref{eq:analytic-one-copy-compression-branch}.  The
module in Eq.~\eqref{eq:swap-dressed-one-copy-module} is invariant under
both layer projectors, so applying the endpoint $P_{\rm o}$ cannot restore
an overlap that the iid residual lacks.
\end{proof}

\begin{theorem}[Exact operational sector and a diffusive variational mode]
\label{thm:task-sector-diffusive-upper}
Let
\begin{equation}
H_n=\sum_{i=1}^{n-1}h_i,\qquad h_i=I-\mathcal T_{i,i+1}^{(2)},
\label{eq:canonical-moment-hamiltonian}
\end{equation}
and let $\lambda_{n,\mathrm{task}}$ be the largest nonunit singular value of
one complete brickwork cycle restricted to the balanced logical task sector.
The common zero space of $H_n$ is exactly the charge-Haar two-copy
commutant, of dimension
\begin{equation}
\dim\ker H_n=2n(n+2).
\label{eq:global-two-copy-fixed-dimension}
\end{equation}
For $q=(n-1)/2$ and $q_+=q+1$, the logical replica observable is supported
on exactly the six invariant Liouville charge blocks
\begin{equation}
\begin{split}
&\mathcal L_{qq;qq},\quad
\mathcal L_{q_+q_+;q_+q_+},\\
&\mathcal L_{q q_+;q q_+},\quad
\mathcal L_{q_+q;q_+q},\quad
\mathcal L_{q q_+;q_+q},\quad
\mathcal L_{q_+q;q q_+},
\end{split}
\label{eq:six-task-charge-blocks}
\end{equation}
where
$\mathcal L_{rs;tu}=\operatorname{Hom}
(\mathcal H_t\otimes\mathcal H_u,\mathcal H_r\otimes\mathcal H_s)$.
Its stationary subspace has dimension eight.  Within its $(+,+)$
left/right replica-swap parity sector,
\begin{equation}
\Delta(H_n|_{\rm task,+,+})
\leq
8\sin^2\frac{\pi}{2n}
+\frac{16}{n}\sin^4\frac{\pi}{2n}
\leq
\frac{2\pi^2}{n^2}+\frac{\pi^4}{n^5},
\label{eq:task-sector-gap-upper}
\end{equation}
and therefore
\begin{equation}
1-\lambda_{n,\mathrm{task}}=O(n^{-2}).
\label{eq:task-cycle-gap-upper}
\end{equation}
This is a rigorous obstruction to faster-than-diffusive uniform relaxation
of the full operational sector.  It is not the missing
$\Omega(n^{-2})$ lower bound.
\end{theorem}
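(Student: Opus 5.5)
The argument has four parts: identify $\ker H_n$ from the known second-moment theory, locate the task observable in charge blocks, count the stationary dimension, and bound the gap variationally with a diffusive Rayleigh trial vector.

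\emph{Kernel of $H_n$.} Each $h_i=I-\mathcal T^{(2)}_{i,i+1}$ is an orthogonal projector. Hence $H_n\geq0$, and $\ker H_n=\bigcap_i\operatorname{Ran}\mathcal T^{(2)}_{i,i+1}$. This is the second-moment commutant of the group generated by nearest-neighbour number-conserving gates. By the generated-group result already used in Theorem~\ref{thm:finite-depth-achievability}, for $n\geq3$ it equals the charge-Haar two-copy commutant, that is, the commutant of $\bigoplus_r U(\mathcal H_r)$ acting diagonally on two replicas.

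To count it, decompose $\mathcal H^{\otimes2}=\bigoplus_{r,s}\mathcal H_r\otimes\mathcal H_s$.
\begin{itemize}
\item For $r\neq s$, the pair of blocks $\mathcal H_r\otimes\mathcal H_s$ and $\mathcal H_s\otimes\mathcal H_r$ carries two equivalent irreducible copies. Each unordered pair $\{r,s\}$ therefore contributes an $M_2$, i.e.\ $4$ dimensions.
\item For $r=s$ with $d_r\geq2$, the symmetric and antisymmetric parts contribute $2$ dimensions.
\item For $r\in\{0,n\}$ the contribution is $1$.
\end{itemize}
Summing gives $4\binom{n+1}{2}+2(n-1)+2=2n(n+1)+2n=2n(n+2)$, which is Eq.~\eqref{eq:global-two-copy-fixed-dimension}.

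\emph{Task blocks.} The task observable is the code endpoint $K_{\mathcal C}$ of Eq.~\eqref{eq:physical-code-endpoint}. It consists of the terms $E_{ab}\otimes E_{ba}$ and $E_{aa}\otimes E_{cc}$ with $a,b,c\in\{q,q_+\}$. Reading off ket and bra charges gives exactly the six blocks of Eq.~\eqref{eq:six-task-charge-blocks}. Each block is invariant, since every twirl preserves all four Liouville charges (Lemma~\ref{lem:iid-minimal-invariant-module}).

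The stationary dimension comes from restricting the commutant to these blocks:
\begin{itemize}
\item $\mathcal L_{qq;qq}$ and $\mathcal L_{q_+q_+;q_+q_+}$ each contribute $2$ (symmetric and antisymmetric projectors).
\item The four mixed blocks together are the $M_2$ attached to $\{q,q_+\}$, contributing $4$.
\end{itemize}
The total is $8$.

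\emph{Gap bound.} I would use the variational characterisation of the gap. The trial vector should be orthogonal to the kernel, lie in the task sector, and be $(+,+)$ parity even. I would build it from the embedded one-copy diffusion mode: take $A=\sum_j f_1(j)\,(n_j-r/d\cdot\ldots)$ with $f_1(j)=\cos(\pi(j+\tfrac12)/n)$, compressed to charge $q$. Concretely, set
\[
X=\bigl(A_q\otimes P_q+P_q\otimes A_q\bigr)(I+F),
\]
symmetrised so that both the left and right swap parities are $+$. This trial vector is orthogonal to $\ker H_n$ because $\operatorname{Tr}A_q=0$ blockwise, as in Corollary~\ref{cor:exact-one-copy-branch-exclusion}.

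On the one-copy part, each local twirl averages the two entries of the coefficient vector on its bond, so
\[
\langle X,h_iX\rangle\propto\bigl(f_1(i)-f_1(i+1)\bigr)^2 .
\]
Summing over bonds gives a numerator of order $4\sin^2(\pi/2n)\|f_1\|^2$. The leading term $8\sin^2(\pi/2n)$ would come out after normalisation, where the cross terms between the two replicas contribute the factor $2$.

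The $\frac{16}{n}\sin^4$ correction should arise from the swap-dressed components. $F$ does not commute with the single-site structure, so $h_i$ has an extra overlap with the $(I+F)$ dressing, producing $O(1/n)$ relative corrections. The final inequality then follows from $\sin x\leq x$.

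Finally, the complete cycle $T=P_{\rm o}P_{\rm e}$ satisfies
\[
1-\lambda^2\leq\langle X,(I-P_{\rm e}P_{\rm o}P_{\rm e})X\rangle/\|X\|^2\lesssim\langle X,H_nX\rangle/\|X\|^2,
\]
since each layer projector is a product of commuting bond projectors. This gives Eq.~\eqref{eq:task-cycle-gap-upper}.

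\emph{Main obstacle.} The hardest step is the exact evaluation of $\langle X,h_iX\rangle$ with the swap dressing, so as to obtain the stated constants. I would do this with the local projector formula Eq.~\eqref{eq:local-two-copy-projector}, bond by bond in the charge-$(q,q)$ block. If the $F$-dressed cross terms prove messy, I would accept a weaker constant and still obtain the $O(n^{-2})$ gap bound.
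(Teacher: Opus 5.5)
Your plan is sound, and it proves a stronger bound than Eq.~\eqref{eq:task-sector-gap-upper}. However, your forecast of the ``hardest step'' is wrong.

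The global replica swap $F$ commutes with every two-copy twirl, and $P_q$ commutes with every number-conserving gate. Hence $\mathcal T^{(2)}_b[(A_q\otimes P_q+P_q\otimes A_q)(I+F)]=(\mathcal T^{(1)}_bA_q\otimes P_q+P_q\otimes\mathcal T^{(1)}_bA_q)(I+F)$ exactly. Your trial vector therefore lies in the invariant embedded one-copy module of Corollary~\ref{cor:exact-one-copy-branch-exclusion}. The $(I+F)$ dressing produces no extra overlap and no $\sin^4$ term.

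To evaluate the quotient, write $Y=A\otimes P_q+P_q\otimes A$ with $A$ traceless Hermitian on $\mathcal H_q$. Then $\|Y(I+F)\|_2^2=4(d+2)\|A\|_2^2$, and numerator and denominator have this same form, so the cross term $2A\otimes A$ cancels. The Rayleigh quotient is exactly the one-copy quotient. With $\delta_b=(f_1(i)-f_1(i+1))/2$ and $D_b=P_q(n_i-n_{i+1})P_q$, one has $\|D_b\|_2^2/\|A_q\|_2^2=4/n$ and $\sum_b\delta_b^2=\frac n2\sin^2\frac{\pi}{2n}$. The quotient is therefore $2\sin^2\frac{\pi}{2n}=1-\cos\frac{\pi}{n}$, a factor four below the paper's leading term. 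Your definition of $A$ is garbled, but any constant shift drops out because $\sum_jf_1(j)=0$.

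You can do even better. Replace $f_1$ by its even-layer average; this gives an exact eigenvector of $P_{\rm e}P_{\rm o}P_{\rm e}$ inside the task sector with eigenvalue $\cos^2(\pi/n)$. Then $1-\lambda_{n,\rm task}\le1-\cos(\pi/n)$ follows with no comparison between $I-P_{\rm e}P_{\rm o}P_{\rm e}$ and $H_n$. The comparison you leave as ``$\lesssim$'' does hold, with constant $3$ or exactly via principal angles as in the paper, but you would have to write it out.

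The paper uses a different trial vector: the two-diffuson product $A_q\otimes A_{q_+}$ in $\mathcal L_{qq_+;qq_+}$, $(+,+)$-symmetrized over the four mixed blocks. It splits $h_bX=h_b[\delta_b(D_b\otimes B+B\otimes D_b)+\delta_b^2D_b\otimes D_b]$ using $(a+b)^2\le2a^2+2b^2$, which is exactly where $8\sin^2\frac{\pi}{2n}$ and the $\frac{16}{n}\sin^4\frac{\pi}{2n}$ remainder come from. It also derives $\ker H_n$ directly, by Lie-algebraic generation of each $\mathfrak{sl}(\mathcal H_r)$ together with the central characters $N$ and $\binom N2$, rather than citing the generated-group result as you do.

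What each choice buys:
\begin{itemize}
\item Yours is simpler and sharper. But the same corollary shows the iid endpoint has zero projection onto the embedded one-copy module, so your slow mode is one the physical task functional never sees.
\item The paper's vector is orthogonal to that module, which is invariant under $H_n$ and both layers. It therefore certifies a genuinely two-copy slow direction. Its faint but nonzero overlap with the iid residual is then quantified in Lemma~\ref{lem:two-diffuson-faintness}.
\end{itemize}
For the theorem as stated, which concerns only the spectrum on the six-block sector, either vector suffices.
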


\begin{proof}
We first justify the generated group rather than assume charge-sector
controllability.  Its complexified local Lie generators contain $n_i$,
$n_in_j$, and the hoppings
$E_{ij}=\sigma_i^+\sigma_j^-$ and $E_{ji}$ for neighboring sites.
Nearest-neighbor SWAPs belong to the same gate support, so conjugation
supplies these operators for arbitrary pairs.  For distinct $i,j,k$,
\begin{equation}
[n_in_k,E_{ij}]=n_kE_{ij}.
\label{eq:controlled-hopping-commutator}
\end{equation}
Repeated commutators give
$\prod_{k\in S}n_kE_{ij}$, and linear combinations replace selected $n_k$
by $1-n_k$.  One can therefore isolate the matrix unit between any two
computational strings related by one particle move.  The Johnson graph of
weight-$r$ strings is connected, and commutators of these edge matrix units
generate $\mathfrak{sl}(\mathcal H_r)$ independently for every $r$.

It remains to distinguish the central characters.  On $\mathcal H_r$,
\begin{equation}
\operatorname{Tr}_{\mathcal H_r}n_i=d_r\frac rn,
\qquad
\operatorname{Tr}_{\mathcal H_r}n_in_j
=d_r\frac{r(r-1)}{n(n-1)}.
\label{eq:charge-block-trace-profiles}
\end{equation}
Every commutator has zero trace in every charge block, while the generated
algebra contains $N=\sum_i n_i$ and
$\binom N2=\sum_{i<j}n_in_j$.  Up to the irrelevant scalar identity, its
center is therefore exactly
$\operatorname{span}\{N,\binom N2\}$.  On a two-copy charge pair $(r,s)$
the corresponding weights are $r+s$ and
$\binom r2+\binom s2$.  Equality of both weights for $(r,s)$ and $(t,u)$
gives equal sums and equal sums of squares, and hence
$\{r,s\}=\{t,u\}$.  Schur's lemma therefore gives one $M_2$ block for
each $r<s$, symmetric and antisymmetric scalars for each nontrivial
$r=s$, and only the symmetric scalar at $r=0,n$.  The total dimension is
$4\binom{n+1}{2}+2(n-1)+2=2n(n+2)$.

Expanding $(\rho_{RH}^{(0)})^{\otimes2}$ in logical indices gives precisely
Eq.~\eqref{eq:six-task-charge-blocks}.  The first two diagonal blocks contain
two charge-Haar invariants each, while the four mixed blocks are the four
matrix units of the unordered pair $\{q,q_+\}$, giving eight stationary
directions.

For the variational estimate, set
\begin{equation}
f_j=\cos\frac{\pi(j-1/2)}n,\qquad
A_r=P_r\left[\sum_{j=1}^nf_j(n_j-\tfrac12\mathbb I)\right]P_r.
\end{equation}
Fixed-weight counting gives, for $r=q,q_+$ and
$d=\binom nq$,
\begin{equation}
\operatorname{Tr}A_r=0,\qquad
\|A_r\|_2^2=\frac{d(n+1)}8.
\label{eq:task-trial-norm}
\end{equation}
On bond $b=(i,i+1)$, write
$D_b=n_i-n_{i+1}$ and
$\delta_b=(f_i-f_{i+1})/2$.  Then
$A_r=B_{b,r}+\delta_bD_b$, where $B_{b,r}$ is fixed by every gate on
$b$, while the one-copy bond twirl annihilates $D_b$.  Also
\begin{equation}
\|D_b\|_2^2=\frac{d(n+1)}{2n}.
\label{eq:task-trial-bond-norm}
\end{equation}
For $X=A_q\otimes A_{q_+}$, orthogonality of $B_{b,r}$ and $D_b$ and
contractivity of the two-copy twirl give
\begin{equation}
\|h_bX\|_2^2
\leq
4\delta_b^2\|A_q\|_2^2\|D_b\|_2^2
+2\delta_b^4\|D_b\|_2^4.
\end{equation}
Using
\begin{equation}
\sum_jf_j^2=\frac n2,\qquad
\sum_{i=1}^{n-1}(f_i-f_{i+1})^2
=4\sin^2\frac{\pi}{2n}\sum_jf_j^2
\end{equation}
and dividing by $\|X\|_2^2$ proves
Eq.~\eqref{eq:task-sector-gap-upper}.  Projecting $X$ onto equal
left/right $(+,+)$ swap parity places it in the last four blocks of
Eq.~\eqref{eq:six-task-charge-blocks}; the four charge quadruples are
orthogonal and retain the same quotient.  Trace zero makes the vector
orthogonal to all four stationary matrix units.

Finally let $P_{\rm e},P_{\rm o}$ be the two layer projectors.  Since
\begin{equation}
(I-P_{\rm e})+(I-P_{\rm o})\preceq H_n,
\end{equation}
the smallest nonzero eigenvalue of the two-projection Hamiltonian is no
larger than Eq.~\eqref{eq:task-sector-gap-upper}.  Principal-angle
decomposition of two orthogonal projections identifies the corresponding
$(+,+)$-sector eigenvalue with
$1-\lambda_{n,\mathrm{task},++}$.  Since the $(+,+)$ sector is contained in
the full task sector,
$\lambda_{n,\mathrm{task}}\geq\lambda_{n,\mathrm{task},++}$, proving
Eq.~\eqref{eq:task-cycle-gap-upper}.
\end{proof}

\begin{remark}[Two-copy spectral gap: status]
\label{rem:two-copy-gap-status}
Theorem~\ref{thm:task-sector-diffusive-upper} proves the variational
\emph{upper} bound $1-\lambda_{n,\mathrm{task}}=O(n^{-2})$, i.e.\ the
task sector contains diffusive slow modes.  A matching \emph{lower}
bound $1-\lambda_{n,\mathrm{task}}=\Omega(n^{-2})$ would require
classifying all low-energy modes of the two-projection Hamiltonian
$(I-P_{\rm e})+(I-P_{\rm o})$ on the task module and verifying that
none is slower than the diffusive Goldstone mode---a model-specific
spectral analysis that is not completed here.  Consequently all
statements that depend on a proved two-copy spectral gap of order
$n^{-2}$ remain conditional; see
Theorem~\ref{thm:conditional-diffusion-achievability} and
Remark~\ref{rem:conditional-polynomial-depth}.
\end{remark}

\begin{lemma}[Exponential smallness of a normalized diagonal precursor]
\label{lem:two-diffuson-faintness}
Let $n=2N+1$ with $N\geq1$, $q=N$, $s=N+1$, and let
$X=A_q(f)\otimes A_s(f)$ be one diagonal precursor of the four-block
simultaneous-$(+,+)$ variational vector in
Theorem~\ref{thm:task-sector-diffusive-upper}.  In the diagonal
$(q,s;q,s)$ charge block put
\begin{align}
Z_n(c)&=\binom nN
\sum_{m=0}^{N}\binom Nm\binom{N+1}{m+1}c^{2m+1},\\
M_n(c)&=\frac{Z_n(c)}{\binom nN^2},
\end{align}
and set
\begin{equation}
\Sigma_n(p)=M_n\!\left(p^2+(1-p)^2\right)-M_n(1-p)^2.
\label{eq:two-diffuson-variance-factor}
\end{equation}
Let $\Delta_n(c)$ be the tilted expectation of $|x\cap y|-qs/n$
under the weights defining $Z_n(c)$.  For $R_p=(I-P_{\rm ch})O_p$,
the exact normalized full-space, blockwise amplitude is
\begin{equation}
\begin{aligned}
\frac{\langle R_p,X\rangle}
{\|P_{qs}R_pP_{qs}\|_2\|X\|_2}
={}&\frac{4n}{n^2-1}\Delta_n(1-p)\\
&\times\frac{M_n(1-p)}
{\sqrt{\Sigma_n(p)}}.
\end{aligned}
\label{eq:two-diffuson-normalized-overlap}
\end{equation}
It is nonzero, but for every fixed $0<p<1$ it obeys
\begin{equation}
\begin{aligned}
\frac{\langle R_p,X\rangle}
{\|P_{qs}R_pP_{qs}\|_2\|X\|_2}
&=\Theta_p(\rho_p^n),\\
\rho_p&=\frac{1-p/2}{\sqrt{1-p+p^2}}<1.
\end{aligned}
\label{eq:two-diffuson-exponential-faintness}
\end{equation}
The untruncated rank-weighted product observable satisfies the same
conclusion after the substitution $p\mapsto2p/(1+p)$.  The displayed
quantity is evaluated before $P_{\rm o}$, compression to
$\operatorname{Ran}P_{\rm e}$, and four-block $(+,+)$ symmetrization.  It is
therefore neither a normalized physical endpoint overlap nor a spectral
atom, and it gives no faintness conclusion for the truncated observable or
the complete slow two-copy spectral window.
\end{lemma}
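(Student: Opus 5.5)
The plan is to compute both sides of Eq.~\eqref{eq:two-diffuson-normalized-overlap} directly in the computational basis of the diagonal charge block $(q,s;q,s)$, where everything reduces to classical sums over pairs of strings, and then to extract $\rho_p^n$ by a one-dimensional Laplace argument.

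\emph{Removing the projection in the numerator.} Since $q\neq s$, the replica swap maps $\mathcal H_q\otimes\mathcal H_s$ into $\mathcal H_s\otimes\mathcal H_q$. Hence, by the commutant description in the proof of Theorem~\ref{thm:task-sector-diffusive-upper}, the only charge-Haar invariant supported in $\mathcal L_{qs;qs}$ is $P_q\otimes P_s$. Because $\operatorname{Tr}A_q=\operatorname{Tr}A_s=0$ (Eq.~\eqref{eq:task-trial-norm}), $X$ is orthogonal to that invariant, so $\langle R_p,X\rangle=\langle O_p,X\rangle$.

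\emph{Reducing the numerator to a tilted pair expectation.} Both $A_r$ are diagonal. For $F_E$ one has $\operatorname{Tr}[F_E(a\otimes b)]=\sum_{x_E=y_E}a(x)b(y)$, and summing the Bernoulli weights site by site gives
\[
\langle O_p,X\rangle=\sum_{|x|=q,\,|y|=s}A_q(x)A_s(y)\,c^{|x\triangle y|},\qquad c=1-p,
\]
with $|x\triangle y|=2N+1-2|x\cap y|$.

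\emph{Counting and the covariance structure.} Grouping the pairs by $m+1=|x\cap y|$ reproduces exactly the weights in $Z_n(c)$. Under the resulting tilted pair law, simultaneous site permutations make $\mathbb E[(x_i-\tfrac12)(y_j-\tfrac12)]$ take one value $\alpha$ for $i=j$ and another value $\beta$ for $i\neq j$. Since $\sum_jf_j=0$, the inner product collapses to
\[
(\alpha-\beta)\sum_jf_j^2=(\alpha-\beta)\,\frac n2 .
\]
Then I will use the two fixed-weight constraints $\sum_i x_i=q$ and $\sum_j y_j=s$, so that $\sum_{i,j}\mathbb E[x_iy_j]=qs$, together with $\sum_i\mathbb E[x_iy_i]=\mathbb E|x\cap y|$. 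These express $\alpha-\beta$ as $\frac{n}{n(n-1)}\bigl(\mathbb E|x\cap y|-qs/n\bigr)$ up to the stated normalization, i.e.\ as a multiple of $\Delta_n(c)$. Combined with $\|X\|_2=\|A_q\|_2\|A_s\|_2=d(n+1)/8$, this should produce the prefactor $\frac{4n}{n^2-1}\Delta_n(1-p)M_n(1-p)$.

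\emph{The denominator.} I will compute
\[
\|P_{qs}R_pP_{qs}\|_2^2=\|P_{qs}O_pP_{qs}\|_2^2-\frac{|\langle P_q\otimes P_s,O_p\rangle|^2}{d^2}.
\]
For the first term, the square of the one-site factor $(1-p)I+pF$ is $(p^2+(1-p)^2)I+2p(1-p)F$. The same pair counting then yields the weight $c'=p^2+(1-p)^2$ in place of $c$, which gives $M_n(p^2+(1-p)^2)$ after normalization. The subtracted charge-Haar piece is $M_n(1-p)^2$, which gives $\Sigma_n(p)$.

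\emph{Asymptotics.} Since $\sum_m\binom Nm\binom{N+1}{m+1}c^{2m+1}$ is a Vandermonde-type sum concentrated at a Gaussian saddle, I expect $Z_n(c)/\binom nN=\Theta(n^{-1/2})(1+c)^n$, and hence $M_n(c)=\Theta(\mathrm{poly})\,((1+c)/2)^n$. This gives $M_n(1-p)\asymp(1-p/2)^n$ and $M_n(p^2+(1-p)^2)\asymp(1-p+p^2)^n$. Because $(1-p/2)^2<1-p+p^2$, the subtraction in $\Sigma_n$ is exponentially negligible, so $M_n(1-p)/\sqrt{\Sigma_n}$ behaves like $\rho_p^n$ times a polynomial ratio. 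Under the tilt with $c<1$, the intersection is pushed up by $\Theta(n)$, so $\Delta_n=\Theta(n)$ and the prefactor $\frac{4n}{n^2-1}\Delta_n$ is $\Theta(1)$.

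The main obstacle is this step: showing that the polynomial corrections in the three saddles cancel, so that the bound is a genuine $\Theta_p(\rho_p^n)$ and not merely $\rho_p^n$ up to powers of $n$. I would handle it by writing each $Z_n$ as an explicit local-CLT lattice sum with the same $\sqrt n$ scaling, and tracking the constants.

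\emph{Rank-weighted observable.} Writing $(1-p)I+2pF=(1+p)\bigl[(1-p')I+p'F\bigr]$ with $p'=2p/(1+p)$, the global factor $(1+p)^n$ cancels in the normalized ratio, which gives the substitution claim.
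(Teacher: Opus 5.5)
Your proof follows the paper's route step for step, and most of it is sound:
\begin{itemize}
\item dropping $P_{\rm ch}$ from the numerator, because the only invariant in $\mathcal L_{qs;qs}$ is $P_q\otimes P_s$ and $\operatorname{Tr}A_q=\operatorname{Tr}A_s=0$;
\item the reduction $\langle O_p,X\rangle=\sum_{x,y}A_q(x)A_s(y)(1-p)^{|x\triangle y|}$ (only diagonal entries of $O_p$ enter, since $X$ is diagonal);
\item the exchangeability computation $\alpha-\beta=\Delta_n(1-p)/(n-1)$;
\item $\|X\|_2=\binom nN(n+1)/8$;
\item the identity $\widetilde O_p=(1+p)^nO_{2p/(1+p)}$.
\end{itemize}
Two small points. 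First, $\binom Nm\binom{N+1}{m+1}$ counts partners at Hamming distance $2m+1$, i.e.\ $|x\cap y|=N-m$, not $m+1$. Second, exact nonvanishing needs $\Delta_n(c)>0$ at every finite $n$. This holds because, for $c<1$, the tilt $c^{|x\triangle y|}$ is strictly increasing in $|x\cap y|$.

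The genuine gap is the denominator. Squaring the one-site factor computes $\operatorname{Tr}(P_{qs}O_p^2)=\|O_pP_{qs}\|_2^2=Z_n(p^2+(1-p)^2)$. The statement needs $\|P_{qs}O_pP_{qs}\|_2^2=\operatorname{Tr}(P_{qs}O_pP_{qs}O_p)$ instead. These differ because $O_p$ is not block diagonal in $(r,s)$: a swap $F_i$ at a site with $x_i\neq y_i$ sends $|x\rangle|y\rangle$ into the block $(q\mp1,s\pm1)$.

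In the two-sided block only balanced swap sets survive. A pair with $|x\triangle y|=2m+1$ therefore contributes $\sum_j\binom mj\binom{m+1}{j}p^{4j}(1-p)^{2(2m+1-2j)}=(p^2+(1-p)^2)^{2m+1}\Pr\{J_1=J_2\}$, not $(p^2+(1-p)^2)^{2m+1}$. Here $J_1\sim\mathrm{Bin}(m,\phi)$ and $J_2\sim\mathrm{Bin}(m+1,\phi)$ are independent, with $\phi=p^2/[p^2+(1-p)^2]$. Already at $n=3$, $p=1/2$ one finds $\|P_{qs}R_pP_{qs}\|_2^2=3/8$, whereas $\binom31^2\Sigma_3(1/2)=135/64$. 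The paper's ``fixed-weight counting'' for the denominator makes the same identification. So this is not peculiar to your write-up, but neither version establishes the displayed exact identity.

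What your computation (and the paper's) actually proves is the identity with $\|P_{qs}R_pP_{qs}\|_2$ replaced by $\bigl(\operatorname{Tr}(P_{qs}O_p^2)-Z_n(1-p)^2/\binom nN^2\bigr)^{1/2}$. That is the norm of the full column $O_pP_{qs}$ minus its $(qs;qs)$ charge-Haar component. With that normalization your saddle analysis does give $\Theta_p(\rho_p^n)$.

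For the block norm as written, the step you singled out as the main obstacle is where the claim fails, but not for the reason you anticipated. In the unconstrained sums the polynomial prefactors do cancel, since $M_n(c)=\frac{1+c}{2\sqrt c}\bigl(\frac{1+c}{2}\bigr)^n[1+O_c(n^{-1})]$. The balanced-swap constraint, however, inserts a local-CLT factor $\Pr\{J_1=J_2\}\asymp n^{-1/2}$ at the saddle $m=\Theta(n)$, and nothing cancels it. The subtraction $M_n(1-p)^2/M_n(p^2+(1-p)^2)=\Theta(\rho_p^{2n})$ is still exponentially negligible. Hence $\|P_{qs}R_pP_{qs}\|_2^2=\Theta_p(n^{-1/2})\binom nN^2M_n(p^2+(1-p)^2)$, and the normalized amplitude is $\Theta_p(n^{1/4}\rho_p^n)$. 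That is still nonzero and exponentially small, so the qualitative faintness survives, but it is not $\Theta_p(\rho_p^n)$.

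To repair the argument, do one of the following:
\begin{itemize}
\item change the normalization to the column norm above; or
\item replace $\Sigma_n(p)$ by the balanced-swap sum and weaken the rate by the factor $n^{1/4}$.
\end{itemize}
The $p\mapsto 2p/(1+p)$ substitution goes through unchanged in either version.
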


\begin{proof}
For a fixed weight-$N$ string $x$, the number of weight-$(N+1)$ strings $y$
at Hamming distance $2m+1$ is
$\binom Nm\binom{N+1}{m+1}$, which gives $Z_n$.  Exchangeability and
$\sum_jf_j=0$ give
\begin{equation}
\langle O_p,X\rangle
=\frac{(n/2)Z_n(1-p)}{n-1}\Delta_n(1-p).
\end{equation}
Fixed-weight counting also gives
\begin{equation}
\begin{aligned}
\|X\|_2&=\binom nN\frac{n+1}{8},\\
\|P_{qs}R_pP_{qs}\|_2^2
&=\binom nN^2
\Sigma_n(p).
\end{aligned}
\end{equation}
which proves Eq.~\eqref{eq:two-diffuson-normalized-overlap}.  A one-variable
Laplace estimate, uniform for fixed $0<c<1$, gives
\begin{equation}
\begin{aligned}
M_n(c)&=\frac{1+c}{2\sqrt c}
\left(\frac{1+c}{2}\right)^n[1+O_c(n^{-1})],\\
\frac{\Delta_n(c)}n&=\frac{1-c}{4(1+c)}+O_c(n^{-1}).
\end{aligned}
\end{equation}
Substitution proves Eq.~\eqref{eq:two-diffuson-exponential-faintness}.
Finally, for the untruncated product observable,
$\widetilde O_p=(1+p)^nO_{2p/(1+p)}$, and the common scalar cancels in the
normalized blockwise amplitude.
\end{proof}

\begin{remark}[Small-size physical bilinear diagnostic; not an asymptotic claim]
\label{rem:task-krylov-small-n}
A charge-filtered matrix-free calculation evaluates the two distinct
endpoints in Eq.~\eqref{eq:physical-bilinear-endpoints} for the canonical
boundary code and groups each signed weight across all six charge blocks.
For $p=1/4$, $\beta=0.45$, and the exact cutoff
$\lfloor\beta n\rfloor$, the largest \emph{resolved} nonzero physical atoms
are
\begin{center}
\begin{tabular}{c|c|c|c}
$n$ & $\mu_{\rm resolved}$ & $n(1-\mu_{\rm resolved})$
& $n^2(1-\mu_{\rm resolved})$\\
\hline
3 & 0.3692084062 & 1.8924 & 5.6771\\
4 & 0.5000000000 & 2.0000 & 8.0000\\
5 & 0.5685784859 & 2.1571 & 10.7855\\
6 & 0.6424537169 & 2.1453 & 12.8717\\
7 & 0.6960272093 & 2.1278 & 14.8947\\
8 & 0.7480576592 & 2.0155 & 16.1243\\
9 & 0.7884170397 & 1.9042 & 17.1382\\
10& 0.8229354810 & 1.7706 & 17.7065
\end{tabular}
\end{center}
The grouped weights at these atoms are nonzero in the calculation.  The
$n=3$ value is exactly $(4+\sqrt7)/18$; exact modular reconstruction gives
high algebraic degrees already at $n=5,6$.  Independent full-Liouville
endpoint checks through $n=5$, an untruncated cross-check through $n=8$, and
Krylov-dimension and tolerance variation give residuals and discrepancies at
the stated floating-point precision.

Corollary~\ref{cor:exact-one-copy-branch-exclusion}, rather than the
floating-point calculation, proves that the embedded branch
$\cos^2(\pi/n)$ has zero iid left overlap.  At $n=4$ a genuine two-copy mode
is accidentally degenerate at $1/2$, so the complete grouped atom there is
nonzero even though its one-copy component cancels.

The displayed gaps do not select an exponent.  On $n=7,\ldots,10$, neither
$n(1-\mu)$ nor $n^2(1-\mu)$ has stabilized, and only the odd subsequence is
the balanced physical sequence.  More importantly,
Theorem~\ref{thm:replica-diffusive-speed-limit} proves that these few
resolved atoms cannot by themselves describe the scalar down to its
$O(n^{-1})$ endpoint: many signed atoms, or a slower atom of small weight,
must supply the diffusive finite-time lower bound.  Thus the data neither
prove ballistic task mixing nor close a two-sided diffusive law.
\end{remark}

\begin{remark}[Audit of the CQA premise and strict one-dimensional compilation]
\label{rem:cqa-proof-audit}
Reference~\cite{Li2024Efficient} states the premise
Eq.~\eqref{eq:corrected-cqa-premise}, but its printed local comparison
contains a correctable algebraic error.  In its notation, direct
simultaneous diagonalization of the four commuting copies of a Hermitian
involution gives
\begin{equation}
\operatorname{JointSpec}(M_2^\tau,T_2^\tau)
=\{(1,1),(3/4,0),(0,0)\}.
\label{eq:cqa-correct-joint-spectrum}
\end{equation}
Thus the printed identity $(M_2^\tau)^2=T_2^\tau$ and the associated
factor-two inequality are false on the $(3/4,0)$ sectors.  The correct
relations are
\begin{equation}
0\preceq T_2^\tau\preceq M_2^\tau\preceq I,\qquad
I-M_2^\tau\preceq I-T_2^\tau\preceq4(I-M_2^\tau).
\label{eq:cqa-correct-comparison}
\end{equation}
They survive generator averaging and Wick sandwiching, so the true and
modified CQA gaps remain within a factor four if the remainder of the
modified-gap proof is valid.  Because that preprint also contains
normalization ambiguities in the subsequent Cayley comparison, we use
Eq.~\eqref{eq:corrected-cqa-premise} only as an explicit premise rather than
as an independently re-proved fact.

Conditioned on that corrected premise, the structured walk does admit a
strict one-dimensional implementation.  An odd--even SWAP network makes
every unordered site pair adjacent once in $O(n)$ layers; applying its
assigned $ZZ$ phase at the crossing and using a second network to restore
the site order implements one all-pairs Wick layer exactly.  Hence one CQA
step has $O(n)$ physical nearest-neighbor depth.  Multiplying this overhead
by Eq.~\eqref{eq:cqa-design-step-count} gives
\begin{equation}
O\!\left(n^5+n^4\log(1/\varepsilon)\right)
\end{equation}
physical depth.  This remains a structured construction and does not prove
anything about the canonical iid local-Haar brickwork gap.
\end{remark}

\begin{remark}[Scope of the achievability theorem]
\label{rem:finite-depth-scope}
Corollary~\ref{cor:finite-depth-fixed-k} is an unconditional finite-depth
existence theorem under the exact local-Haar measure
Eq.~\eqref{eq:local-Haar-gate-measure}, but its sufficient layer count is
\begin{equation}
2t_n=
\exp\!\left[\Theta(n16^n)\right].
\end{equation}
It therefore does not establish efficient scrambling, certify any fixed
shallow depth, or imply a finite-depth high-probability worst-pattern
statement.  The charge-Haar endpoint version is instead proved in
Theorem~\ref{thm:charge-haar-worst-pattern}.
Theorem~\ref{thm:charge-haar-extensive} and
Corollary~\ref{cor:charge-haar-erasure-threshold} close the extensive-loss
charge-Haar benchmark, but they do not supply a finite-depth local circuit.
A physically useful theorem requires a polynomial lower bound on the relevant
moment gap or, preferably, direct mixing of the task-specific erased-system
functional.  Full number-conserving design formation is diffusion limited:
after correction of the norm estimate, the rigorous lower bound is
$L=\Omega(n^2\log(1/\varepsilon))$~\cite{Hearth2025,Hearth2026Erratum}.
For the exact local-Haar gate measure in
Eq.~\eqref{eq:local-Haar-gate-measure}, the conjectured matching
Goldstone-mode gap would give a polynomial upper bound, but is not presently
a theorem.  Recent Aldous-type results for coordinate-subgroup walks on
$U(n)$ do not apply to the interacting many-body gate ensemble here, and
their nearest-neighbor path case is itself outside the proved classes
\cite{AlonPuder2026}.  Hydrodynamic purity tails in the canonical ensemble
are valuable task-specific evidence, but are not finite-size operator-gap
lower bounds~\cite{Rakovszky2019}.  This statement is ensemble-specific:
Li, Zheng, and Liu claim a
polynomial symmetric-$2$-design upper bound for a different, structured
convolutional quantum alternating (CQA) ensemble
\cite{Li2024Efficient}.  In its $U(1)$ version, one random-walk step sandwiches
one partial SWAP by commuting Wick layers containing independently randomized
$Z_kZ_l$ phases for all pairs $(k,l)$.  With nearest-neighbor SWAP generators,
their second-moment gap is $\Omega(n^{-3})$ and their stated convergence count
is
\begin{equation}
O\!\left(n^3[4n\log 2+\log(1/\varepsilon)]\right)
\label{eq:cqa-design-step-count}
\end{equation}
random-walk steps.  Although every Wick term is two-local, the all-pairs
diagonal layer is not a strict nearest-neighbor brickwork layer; compiling it
on a one-dimensional nearest-neighbor architecture introduces an additional
implementation overhead.  Consequently, subject to the proof qualification
in Remark~\ref{rem:cqa-proof-audit}, CQA is a genuine alternative polynomial
construction, but it neither proves
Eq.~\eqref{eq:conjectural-full-diffusive-gap} for the canonical
local-Haar ensemble.  Task-specific decoupling may converge faster than the
full design norm, so the design lower bound is not by itself a no-go theorem
for the present metrological objective.
\end{remark}

\paragraph*{Remaining finite-depth questions.}
Theorem~\ref{thm:finite-depth-achievability} closes the weak, fixed-$k$
pattern-average existence problem for the canonical local-Haar ensemble.  It
does not close its quantitatively useful extensive-loss version, the
finite-depth high-probability worst-pattern problem, a different local gate
measure, or a Hamiltonian ensemble.  The charge-Haar worst-pattern endpoint
is closed by Theorem~\ref{thm:charge-haar-worst-pattern}.  The quantitatively
useful circuit pattern-average target and the stronger circuit worst-pattern
target are
\begin{equation}
\mathbb{E}_{V\sim\mathcal{E}_{n,L}}
\left[
\sum_{\lvert E\rvert=k}\pi(E)D_E(V)
\right]
\leq\varepsilon
\label{eq:finite-depth-average-question}
\end{equation}
and the stronger high-probability worst-pattern statement
\begin{equation}
\Pr_{V\sim\mathcal{E}_{n,L}}
\left[
\max_{\lvert E\rvert=k}D_E(V)>\varepsilon
\right]
\leq\delta.
\label{eq:finite-depth-uniform-question}
\end{equation}
Here $\mathcal{E}_{n,L}$ is the specified size- and depth-dependent circuit
ensemble, while $\pi(E)\geq0$ and
$\sum_{\lvert E\rvert=k}\pi(E)=1$ define the physical pattern distribution.
Equations~\eqref{eq:finite-depth-average-question}
and~\eqref{eq:finite-depth-uniform-question} are written for circuits.  A
Hamiltonian analogue must specify finite-range, bounded-strength interactions
and retain the appropriate Lieb--Robinson tail rather than assume a strictly
compact causal cone.
Determining a physically meaningful polynomial depth or evolution time,
including its dependence on $n$, $k$, charge sector, geometry, and error
probabilities, remains open.  The standard information--disturbance framework converts a
complementary-channel decoupling estimate into existence of a recovery
channel~\cite{Knill2000,Hayden2007}; for fixed logical dimension, a
Choi-state bound $D_E$ leads generically to a logical diamond error of order
$O(\sqrt{d_LD_E})$.  The architecture-specific missing step is therefore a
polynomial-depth task-specific covariant bound, together with an efficient
compatible decoder and explicit constants.  Once its logical diamond error is known,
Large bipartite entropy, large operator support, or approximate unitary-design
diagnostics may support a stronger result, but they do not replace the decoupling
bound required for recovery.  A symmetry-preserving unitary 2-design acting
within the fixed-charge sectors is a compatible intermediate target:
second-moment design properties may suffice for standard decoupling estimates
while retaining a tractable description.  An unrestricted Clifford ensemble
would not respect the required \(U(1)\) covariance.

\begin{lemma}[Absolute spectral weight]
\label{lem:absolute-spectral-weight}
For the compressed physical cycle $C=P_{\rm e}P_{\rm o}P_{\rm e}$ restricted
to the physical task module of Lemma~\ref{lem:iid-minimal-invariant-module},
let $\{a_{n,\mu}\}$ be the signed spectral atoms of
Eq.~\eqref{eq:signed-task-spectral-measure}.  The total absolute weight
satisfies
\begin{equation}
\sum_{\mu<1}|a_{n,\mu}|
\leq
\frac{\sqrt3}{8}\,
[2\sqrt{1+3p^{2}}]^{n}.
\label{eq:total-absolute-spectral-weight}
\end{equation}
\end{lemma}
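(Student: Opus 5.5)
The plan is a Cauchy--Schwarz estimate over the grouped spectral projectors of the self-adjoint compression $C=P_{\rm e}P_{\rm o}P_{\rm e}$, as already sketched in the proof of Lemma~\ref{lem:absolute-slow-window-criterion}. Since $C$ is self-adjoint and positive semidefinite on $\operatorname{Ran}P_{\rm e}$, its grouped eigenprojectors $E_\mu$ are mutually orthogonal and sum, together with $Q$, to the identity. The atoms are $a_{n,\mu}=\frac12\langle (I-Q)k,E_\mu(I-Q)v\rangle$. Writing $E_\mu=E_\mu^2$ gives $|a_{n,\mu}|\leq\frac12\|E_\mu k'\|_2\|E_\mu v'\|_2$, with $k'=(I-Q)k$ and $v'=(I-Q)v$. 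Cauchy--Schwarz over $\mu$ and Pythagoras for the orthogonal decomposition then give
\[
\sum_{\mu<1}|a_{n,\mu}|\leq\tfrac12\|k'\|_2\|v'\|_2\leq\tfrac12\|k\|_2\|v\|_2 .
\]
Restricting to the physical task module of Lemma~\ref{lem:iid-minimal-invariant-module} changes nothing. That module is invariant under both layer projectors, so the spectral decomposition of $C$ restricted to it is just a subfamily of the global one, and the same inequality holds.

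The remaining work is to bound the two endpoint norms. For the code endpoint, $B_{\rm e}^\dagger$ is a coisometry, so $\|k\|_2\leq\|K_{\mathcal C}\|_2=\sqrt3/4$ by Eq.~\eqref{eq:physical-code-endpoint}. For the iid endpoint, $\|v\|_2\leq\|P_{\rm o}\widetilde O^{\leq}_{p,\beta}\|_2\leq\|\widetilde O^{\leq}_{p,\beta}\|_2$, because $P_{\rm o}$ is an orthogonal projector. Combining, the product is at most $\frac12\cdot\frac{\sqrt3}{4}\cdot\|\widetilde O^{\leq}_{p,\beta}\|_2=\frac{\sqrt3}{8}\|\widetilde O^{\leq}_{p,\beta}\|_2$.

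The only step that needs real care is the Hilbert--Schmidt norm of the truncated swap polynomial; I expect this to be the main obstacle. The untruncated value $\|\widetilde O_p\|_2=2^n(1+3p^2)^{n/2}=[2\sqrt{1+3p^2}]^n$ is Eq.~\eqref{eq:rank-weighted-observable-norms}. It follows by tensorizing the one-site computation: $\|(1-p)\mathbb I+2pF\|_2^2=4(1-p)^2+4(1-p)(2p)\cdot2+4p^2\cdot4=4(1+3p^2)$, using $\operatorname{Tr}\mathbb I=4$, $\operatorname{Tr}F=2$, $\operatorname{Tr}F^2=4$. To pass to the truncation, I would expand
\[
\|\widetilde O^{\leq}_{p,\beta}\|_2^2=\sum_{|E|,|E'|\leq b}c_Ec_{E'}\operatorname{Tr}(F_EF_{E'}),
\]
where $c_E=(2p)^{|E|}(1-p)^{n-|E|}$. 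Every coefficient $c_E$ is nonnegative, and every $\operatorname{Tr}(F_EF_{E'})=2^{2n-|E\triangle E'|}$ is positive. So deleting the terms with $|E|>b$ or $|E'|>b$ only removes nonnegative summands, and hence $\|\widetilde O^{\leq}_{p,\beta}\|_2\leq\|\widetilde O_p\|_2$. This is exactly the remark in Lemma~\ref{lem:absolute-slow-window-criterion} that truncating the positive-coefficient swap polynomial cannot increase the bound.

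Inserting this into the Cauchy--Schwarz estimate yields Eq.~\eqref{eq:total-absolute-spectral-weight}.
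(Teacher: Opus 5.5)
Your proposal is correct and follows the paper's proof: Cauchy--Schwarz over the mutually orthogonal grouped projectors, then $\|(I-Q)k\|_2\leq\|K_{\mathcal C}\|_2=\sqrt3/4$ and $\|v\|_2\leq\|\widetilde O_p\|_2=[2\sqrt{1+3p^2}]^n$. Your positivity argument $\operatorname{Tr}(F_EF_{E'})=2^{2n-|E\triangle E'|}>0$, used to pass from the truncated $\widetilde O^{\leq}_{p,\beta}$ to $\widetilde O_p$, supplies a step the paper only asserts; one small slip is that the cross term in your one-site computation should be $2(1-p)(2p)\cdot2=8p(1-p)$, not $4(1-p)(2p)\cdot2$ (which would sum to $4(1+p)^2$), and with that correction the total is indeed $4(1+3p^2)$.
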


\begin{proof}
By Cauchy--Schwarz and orthonormality of the spectral projectors,
$\sum_{\mu<1}|a_{n,\mu}|
=\frac12\sum_{\mu<1}|\langle(I-Q)k,E_\mu(I-Q)v\rangle|
\leq\frac12\|(I-Q)k\|_{2}\,\|(I-Q)v\|_{2}
\leq\frac12\|K_{\mathcal C}\|_{2}\,\|v\|_{2}$,
where $\|K_{\mathcal C}\|_{2}=\sqrt3/4$ by
Eq.~\eqref{eq:physical-code-endpoint} and
$\|v\|_{2}=\|B_{\rm e}^\dagger P_{\rm o}\widetilde O_{p}\|_{2}
\leq\|\widetilde O_{p}\|_{2}=2^{n}(1+3p^{2})^{n/2}$ by
Eq.~\eqref{eq:rank-weighted-observable-norms}.
\end{proof}

\begin{theorem}[Polynomial-depth replica convergence via the spectral gap]
\label{thm:spectral-radius-replica-convergence}
Assume that the compressed cycle $C=P_{\rm e}P_{\rm o}P_{\rm e}$ on the
physical task module has spectral gap $1-\mu\ge\gamma>0$ for all
eigenvalues $\mu\neq1$.  Then for the truncated rank-weighted replica
functional $\mathfrak r_{n,p}^{\leq\beta}(t)$,
\begin{equation}
\begin{aligned}
|\mathfrak r_{n,p}^{\leq\beta}(t)
-\mathfrak r_{n,p}^{\leq\beta}(\infty)|
&\leq\frac{\sqrt3}{8}\,e^{\kappa_{p}n-\gamma t},\\
\kappa_{p}&=\log[2\sqrt{1+3p^{2}}].
\end{aligned}
\label{eq:spectral-radius-convergence}
\end{equation}
If $\gamma\ge c/n^{2}$ (diffusive scaling), this gives conditional
$t=O(n^{3})$ convergence.  Under either the absolute slow-window
hypothesis Eq.~\eqref{eq:absolute-slow-window-hypothesis} or the weaker
signed hypothesis Eq.~\eqref{eq:signed-slow-window-hypothesis}, the depth
improves to $t=O(n^{2})$ via
Lemmas~\ref{lem:absolute-slow-window-criterion} and
\ref{lem:signed-slow-window-criterion}.
\end{theorem}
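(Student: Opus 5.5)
The plan is to start from the exact bilinear representation of Lemma~\ref{lem:exact-physical-bilinear-spectrum}. That lemma writes the transient $\mathfrak r_{n,p}^{\leq\beta}(t)-\mathfrak r_{n,p}^{\leq\beta}(\infty)$ as $\frac12\langle(I-Q)k,C^{t-1}(I-Q)v\rangle$, with $C=P_{\rm e}P_{\rm o}P_{\rm e}$ positive semidefinite and self-adjoint on the compressed physical task module. We then expand $C$ spectrally. Every nonunit eigenvalue $\mu$ lies in $[0,1-\gamma]$ by hypothesis, so the restriction of $C$ to $\operatorname{Ran}(I-Q)$ has operator norm at most $1-\gamma\leq e^{-\gamma}$. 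It follows that
\[
\bigl|\langle(I-Q)k,C^{t-1}(I-Q)v\rangle\bigr|\leq e^{-\gamma(t-1)}\|(I-Q)k\|_2\,\|(I-Q)v\|_2 .
\]
Equivalently, one can sum the signed atoms $a_{n,\mu}\mu^{t-1}$ and apply the absolute-weight bound of Lemma~\ref{lem:absolute-spectral-weight}. Both routes give the same constant.

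Next we bound the two endpoint norms. Equation~\eqref{eq:physical-code-endpoint} gives $\|(I-Q)k\|_2\leq\|K_{\mathcal C}\|_2=\sqrt3/4$, since $B_{\rm e}^\dagger$ is a coisometry and $I-Q$ is a projector. For the iid endpoint, $P_{\rm o}$ and $B_{\rm e}^\dagger$ are contractions, so $\|v\|_2\leq\|\widetilde O^{\leq}_{p,\beta}\|_2$. The swap strings $F_E$ are mutually Hilbert--Schmidt orthogonal with $\|F_E\|_2^2=4^n$. Truncating the sum to $|E|\leq\beta n$ therefore only removes nonnegative terms, and we get $\|\widetilde O^{\leq}_{p,\beta}\|_2\leq\|\widetilde O_p\|_2=2^n(1+3p^2)^{n/2}=e^{\kappa_p n}$ by Eq.~\eqref{eq:rank-weighted-observable-norms}. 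Multiplying the three factors gives $\frac12\cdot\frac{\sqrt3}{4}e^{\kappa_pn}e^{-\gamma(t-1)}=\frac{\sqrt3}{8}e^{\kappa_pn-\gamma(t-1)}$.

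There is an off-by-one between $t-1$ and $t$. We expect to absorb it either by using the cycle convention in which $t$ counts applications of $C$, or by noting that the extra factor $e^{\gamma}\leq e$ is harmless. The cleanest choice is to state the bound with the compressed-step index. This bookkeeping is the only delicate point, together with checking that $Q$ is exactly the unit eigenspace, so that no unit-modulus nonstationary eigenvalue survives. The latter holds because $C$ is positive and $Q$ coincides with the charge-Haar projector by Lemma~\ref{lem:exact-physical-bilinear-spectrum}.

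For the consequences: if $\gamma\geq c/n^2$, then the right-hand side falls below $n^{-1}$ once $t\geq c^{-1}n^2(\kappa_pn+\log n+O(1))=O(n^3)$. Combined with the stationary value in Eq.~\eqref{eq:truncated-rank-weighted-stationary}, this gives the conditional $O(n^3)$ convergence. The $O(n^2)$ improvement is immediate from Lemmas~\ref{lem:absolute-slow-window-criterion} and~\ref{lem:signed-slow-window-criterion}. Choosing $\tau\geq Bn^2$ with $B>\kappa_p/\eta$ there makes every term $O(n^{-1})$ plus exponentially small contributions. Step by step, this repeats the bulk/slow split: the gap hypothesis handles the bulk, and the window hypotheses handle the slow atoms.
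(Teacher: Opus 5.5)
Your argument is the paper's. The paper writes the transient as $\sum_{\mu<1}\mu^{t-1}a_{n,\mu}$, bounds $\mu^{t-1}\le(1-\gamma)^{t-1}$, and applies Lemma~\ref{lem:absolute-spectral-weight}. That lemma is exactly your Cauchy--Schwarz bound $\frac12\|(I-Q)k\|_2\|(I-Q)v\|_2\le\frac{\sqrt3}{8}e^{\kappa_pn}$. Your operator-norm version is equivalent, with one thing to state. The gap hypothesis holds only on the task module. You are entitled to it because $(I-Q)v=B_{\rm e}^\dagger P_{\rm e}P_{\rm o}R_w$ is the compression of $z_w$, so it lies in the $C$-invariant cyclic module $\mathcal K_w^{\rm e}$. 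Two points need correcting.

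First, the swap strings are not Hilbert--Schmidt orthogonal. One has $\langle F_E,F_{E'}\rangle=\operatorname{Tr}F_{E\triangle E'}=4^n2^{-|E\triangle E'|}>0$. Had they been orthogonal, you would have obtained $\|\widetilde O_p\|_2^2=4^n(1-2p+5p^2)^n$, not the $4^n(1+3p^2)^n$ you cite. The truncation monotonicity you need is still true, for a different reason. Write $\|\sum_{E\in\mathcal S}c_EF_E\|_2^2=\sum_{E,E'\in\mathcal S}c_Ec_{E'}\langle F_E,F_{E'}\rangle$. Every coefficient $c_E=(2p)^{|E|}(1-p)^{n-|E|}$ is nonnegative and every Gram entry is positive. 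Shrinking $\mathcal S$ to $|E|\le\beta n$ therefore only deletes nonnegative terms. The paper uses this fact without proof: Lemma~\ref{lem:absolute-spectral-weight} bounds $v$ via the untruncated $\widetilde O_p$, and Lemma~\ref{lem:absolute-slow-window-criterion} asserts that truncation cannot increase the bound.

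Second, the off-by-one you flag is real, and it cannot be absorbed without changing the statement. Your chain gives $\frac{\sqrt3}{8}e^{\kappa_pn-\gamma(t-1)}$, which is the displayed bound only up to a factor $e^{\gamma}\le e$. The paper's proof makes the same slip silently. It passes from $(1-\gamma)^{t-1}$ to $e^{-\gamma t}$ using $1-\gamma\le e^{-\gamma}$, which only yields $e^{-\gamma(t-1)}$. Stating the bound with $t-1$, or with constant $\frac{\sqrt3}{8}e^{\gamma}$, is the correct repair. Neither the conditional $O(n^3)$ conclusion nor the $O(n^2)$ conclusion is affected.

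A minor clarification on the slow-window improvement. There the bulk atoms are controlled by the split at $1-\mu=\eta/n$ together with $B>\kappa_p/\eta$, not by the gap $\gamma$ of the present theorem. No gap hypothesis is used in that part, contrary to your phrase ``the gap hypothesis handles the bulk.''
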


\begin{proof}
From Eq.~\eqref{eq:signed-task-spectral-measure},
$\mathfrak r(t)-\mathfrak r(\infty)=\sum_{\mu<1}\mu^{t-1}a_{n,\mu}$.
Hence
$|\mathfrak r(t)-\mathfrak r(\infty)|
\le\sum_{\mu<1}|a_{n,\mu}|\,\mu^{t-1}
\le(1-\gamma)^{t-1}\sum_{\mu<1}|a_{n,\mu}|
\le\frac{\sqrt3}{8}\,e^{\kappa_{p}n-\gamma t}$,
using Lemma~\ref{lem:absolute-spectral-weight} and $1-\gamma\le e^{-\gamma}$.
For $\gamma=c/n^{2}$, the exponent $\kappa_{p}n-ct/n^{2}$ becomes
negative when $t>(\kappa_{p}/c)n^{3}$, giving $O(n^{3})$ convergence.
The improvement to $O(n^{2})$ follows from
Lemma~\ref{lem:absolute-slow-window-criterion}; the weaker signed condition
of Lemma~\ref{lem:signed-slow-window-criterion} suffices as well.
\end{proof}

\begin{remark}[Status of the two-copy upper problem]
\label{rem:conditional-polynomial-depth}
The full SSEP one-copy spectral gap is rigorously \(\Theta(n^{-2})\)
(Lemma~\ref{lem:ssep-spectral-gap}), but the matching lower bound for the
balanced two-copy task-module gap remains unproved
(Remark~\ref{rem:two-copy-gap-status}).  Such a gap would give a conditional
\(O(n^3)\) upper bound because the unweighted Hilbert--Schmidt endpoint norm
is exponential.  By contrast, either slow-window condition
Eqs.~\eqref{eq:absolute-slow-window-hypothesis} or
\eqref{eq:signed-slow-window-hypothesis} directly gives \(O(n^2)\) and does
not require a separate global task-sector gap assumption.  Establishing one
of these task-weighted statements is the remaining content of Problem~O1.
\end{remark}

\subsection{Conditional milestones (premises open)}

\begin{theorem}[Conditional diffusion-gap achievability (premise unproved)]
\label{thm:conditional-diffusion-achievability}
Fix $0<p<1/2$.  Suppose the largest nontrivial singular value $\lambda_n$ of
one complete balanced two-copy brickwork moment cycle restricted to the task
module satisfies the unproved bound
\begin{equation}
1-\lambda_n\geq \frac{c}{n^2}
\label{eq:conjectural-full-diffusive-gap}
\end{equation}
for a constant $c>0$ (i.e.\ the task-sector gap matches the diffusive
$n^{-2}$ order; Theorem~\ref{thm:task-sector-diffusive-upper} proves the
variational upper bound, the matching lower bound is
Remark~\ref{rem:two-copy-gap-status}).  Then there is a constant $C_p$ such
that a depth $2t$ local-Haar circuit with
\begin{equation}
t\geq C_p n^3
\label{eq:conditional-cubic-depth}
\end{equation}
satisfies
\begin{equation}
\mathbb E_{V,E\sim\operatorname{Bernoulli}(p)}D_E(V)
=O_p(n^{-1/2}).
\label{eq:conditional-cubic-achievability}
\end{equation}
Under either the absolute slow-window hypothesis
Eq.~\eqref{eq:absolute-slow-window-hypothesis} or the weaker signed
hypothesis Eq.~\eqref{eq:signed-slow-window-hypothesis}, the depth improves
to $t=O_{p,\beta}(n^{2})$ without a separate task-sector gap assumption.
\end{theorem}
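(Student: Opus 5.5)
The plan is to chain three earlier results: the mean--fluctuation reduction of Lemma~\ref{lem:direct-trace-mean-fluctuation}, the fluctuation bound in Eq.~\eqref{eq:direct-trace-fluctuation-sufficient}, and the conditional spectral-gap convergence of Theorem~\ref{thm:spectral-radius-replica-convergence}. Throughout, fix $p<\beta<1/2$. Lemma~\ref{lem:truncated-rank-weighted-replica} gives
\[
\mathbb E_{V,E}D_E\leq\sqrt{\mathfrak r_{n,p}^{\leq\beta}(t)}+\Pr\{K>\beta n\},
\]
and the Chernoff bound makes the second term $e^{-nI(\beta\|p)}$. It is therefore enough to show $\mathfrak r_{n,p}^{\leq\beta}(t)=O_p(n^{-1})$ for $t\geq C_pn^3$. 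The stationary value satisfies $\mathfrak r_{n,p}^{\leq\beta}(\infty)=\frac{p}{4(1-p^2)^{3/2}}n^{-1}+o(n^{-1})$ by Eq.~\eqref{eq:truncated-rank-weighted-stationary}. So the task reduces to bounding the transient $|\mathfrak r^{\leq\beta}(t)-\mathfrak r^{\leq\beta}(\infty)|$ by $O(n^{-1})$.

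For the transient, I would use the exact bilinear representation of Lemma~\ref{lem:exact-physical-bilinear-spectrum}. It writes the difference as $\frac12\langle(I-Q)k,C^{t-1}(I-Q)v\rangle$, with $C=P_{\rm e}P_{\rm o}P_{\rm e}$ acting on the task module. Let $\lambda_n$ be the largest nontrivial singular value of $P_{\rm o}P_{\rm e}$; then the nonunit eigenvalues of $C$ are exactly $\lambda^2$ over the nontrivial singular values, so the hypothesis $1-\lambda_n\ge c/n^2$ gives $1-\mu\geq1-\lambda_n^2\geq1-\lambda_n\geq c/n^2=:\gamma$. This gap is what Theorem~\ref{thm:spectral-radius-replica-convergence} requires. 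Combined with the absolute weight bound of Lemma~\ref{lem:absolute-spectral-weight}, that theorem gives
\[
|\mathfrak r^{\leq\beta}(t)-\mathfrak r^{\leq\beta}(\infty)|\leq\frac{\sqrt3}{8}\,e^{\kappa_pn-ct/n^2}.
\]
Choosing $C_p=(\kappa_p+1)/c$ makes the right side at most $e^{-n}$ once $t\ge C_pn^3$. This yields $\mathfrak r^{\leq\beta}(t)=O_p(n^{-1})$ and hence Eq.~\eqref{eq:conditional-cubic-achievability}.

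\textbf{Main obstacle.} The delicate point is that Lemma~\ref{lem:absolute-spectral-weight} and Theorem~\ref{thm:spectral-radius-replica-convergence} are stated with the untruncated norm $\|\widetilde O_p\|_2$. I must check that the truncated endpoint $v=B_{\rm e}^\dagger P_{\rm o}\widetilde O^{\leq}_{p,\beta}$ has no larger Hilbert--Schmidt norm. It does: $\widetilde O^{\leq}_{p,\beta}$ is a coordinate truncation of a sum of the orthogonal strings $F_E$ with nonnegative coefficients, so its norm is at most $\|\widetilde O_p\|_2$. I must also check that the gap assumed on the task module covers the whole support of $(I-Q)v$ and $(I-Q)k$. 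Lemma~\ref{lem:iid-minimal-invariant-module} shows both lie in the invariant module generated from the seed, so this holds provided the hypothesis is read on that module.

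For the improved depth, I would replace the spectral-gap step with Lemma~\ref{lem:absolute-slow-window-criterion} or its signed version, Lemma~\ref{lem:signed-slow-window-criterion}. For $B>\kappa_p/\eta$ and $\tau=t-1\geq Bn^2$, Eq.~\eqref{eq:absolute-slow-window-consequence} bounds the transient by $O(\tau^{-1/2}+n^{-1}+e^{-an})=O(n^{-1})$ without any task-sector gap. The rest of the argument, through the truncated decoupling bound, is unchanged and gives $t=O_{p,\beta}(n^2)$.
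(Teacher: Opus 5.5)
Your argument is correct, but it follows a different route from the paper's proof of this theorem.

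\textbf{The paper's route.} The paper stays with the fixed-pattern Hilbert--Schmidt bound of Theorem~\ref{thm:finite-depth-achievability} and only changes the rate. The rational-gap exponential in Eq.~\eqref{eq:finite-depth-moment-convergence} is replaced by $e^{-ct/n^2}$. Then, for $an\le k\le bn$, the circuit term $2^{k-1}\sqrt3\,2^ne^{-ct/n^2}$ in $\beta_{n,k,t}$ is at most $n^{-1}$ once $t\ge (n^2/c)[(1+b)n\log2+\log n+\tfrac12\log3]$. The paper then uses the uniform extensive second moment Eq.~\eqref{eq:charge-haar-extensive-second-moment}, Jensen's inequality pattern by pattern, and Chernoff tails on both sides of $[an,bn]$.

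\textbf{Your route.} You work with the iid-averaged typical-set functional $\mathfrak r^{\le\beta}_{n,p}$ through Eq.~\eqref{eq:truncated-rank-weighted-decoupling}, its exact bilinear spectral form, and the absolute-weight bound. In effect you reprove Theorem~\ref{thm:spectral-radius-replica-convergence} and pair it with Eq.~\eqref{eq:truncated-rank-weighted-stationary}.

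\textbf{What each buys.} The paper's route is more elementary, since only the rate of a generic moment-convergence estimate changes. It also yields, as a by-product, uniform fixed-cardinality bounds for any pattern distribution in the typical band. Your route puts the cubic statement and the $O(n^2)$ slow-window improvement in a single framework. Lemmas~\ref{lem:absolute-slow-window-criterion} and~\ref{lem:signed-slow-window-criterion} concern exactly the functional you already use, whereas the paper must switch functionals for the improvement. Your conversion $1-\mu\ge1-\lambda_n^2\ge1-\lambda_n$ from singular values of $P_{\rm o}P_{\rm e}$ to eigenvalues of $C$ is right. One small point: $(1-\gamma)^{t-1}\le e^{\gamma}e^{-\gamma t}$, which costs only a harmless constant.

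\textbf{Two justifications to repair.} Neither affects the conclusion.
\begin{itemize}
\item The strings $F_E$ are not Hilbert--Schmidt orthogonal, since $\operatorname{Tr}(F_EF_{E'})=4^n2^{-|E\triangle E'|}>0$. Truncation cannot increase the norm for a different reason: all Gram entries and all coefficients are nonnegative, so $\|\sum_{E\in\mathcal S}c_EF_E\|_2^2$ is monotone in $\mathcal S$.
\item Lemma~\ref{lem:iid-minimal-invariant-module} says nothing about the code endpoint $k$, but you do not need it to. Because $C$ is self-adjoint, an atom $\langle(I-Q)k,E_\mu(I-Q)v\rangle$ vanishes unless $E_\mu(I-Q)v\neq0$. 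So the gap is required only on the $C$-cyclic module of $(I-Q)v$. Equivalently, it suffices on the invariant six-block task sector of Eq.~\eqref{eq:six-task-charge-blocks}, which contains $(I-Q)k$.
\end{itemize}
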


\begin{proof}
Under the hypothesis Eq.~\eqref{eq:conjectural-full-diffusive-gap},
Eq.~\eqref{eq:finite-depth-moment-convergence} holds with
$\exp(-ct/n^2)$ in place of the rational-gap exponential.  Choose
$a,b$ with $0<a<p<b<1/2$.  For every $an\leq k\leq bn$, the finite-depth
contribution in Eq.~\eqref{eq:finite-depth-beta} is at most $n^{-1}$ after
choosing
\begin{equation}
t\geq\frac{n^2}{c}
\left[(n+bn)\log2+\log n+\frac12\log3\right],
\label{eq:conditional-full-gap-depth-explicit}
\end{equation}
which is $O_p(n^3)$.  Equation
\eqref{eq:charge-haar-extensive-second-moment} and Jensen's inequality then
give $O_p(n^{-1/2})$ on the event $an\leq K\leq bn$.  The two complementary
events have exponentially small probability and $D_E\leq1$, proving
Eq.~\eqref{eq:conditional-cubic-achievability}.  The slow-window improvement follows from
Lemmas~\ref{lem:absolute-slow-window-criterion} and
\ref{lem:signed-slow-window-criterion}.
\end{proof}
\begin{theorem}[Conditional geometric comparison with a CQA walk (premise unproved)]
\label{thm:conditional-cqa-comparison}
Work in the balanced vectorized representation
$\mathcal R(U)=U^{\otimes2}\otimes\overline U^{\otimes2}$.
Let $K_{\rm CQA}=I-P_WAP_W$, where
\begin{equation}
A=\frac1{n-1}\sum_{i=1}^{n-1}S_i
\end{equation}
is the nearest-neighbor average of partial-SWAP angle twirls and
$P_W=\prod_{a<b}Q_{ab}$ is the commuting all-pairs $Z_aZ_b$ Wick
projector.  Suppose
\begin{equation}
\ker K_{\rm CQA}=\ker H_n,\qquad
\Delta(K_{\rm CQA})\geq c_0n^{-3}.
\label{eq:corrected-cqa-premise}
\end{equation}
Then there is a numerical $C>0$ such that
\begin{equation}
K_{\rm CQA}\preceq Cn^3H_n,\qquad
\Delta(H_n)\geq C^{-1}c_0n^{-6}.
\label{eq:cqa-canonical-comparison}
\end{equation}
The largest nontrivial singular value of one canonical brickwork cycle
therefore obeys
\begin{equation}
1-\lambda_n=\Omega(n^{-6}).
\label{eq:conditional-canonical-cycle-gap}
\end{equation}
Under the premise Eq.~\eqref{eq:corrected-cqa-premise}, the generic
Hilbert--Schmidt route in
Theorem~\ref{thm:finite-depth-achievability} consequently gives
charge-Haar-order extensive-loss achievability at depth $O_p(n^7)$.
This theorem is a conditional implication, not a proof of
Eq.~\eqref{eq:corrected-cqa-premise} and not the diffusive
$\Omega(n^{-2})$ bound.
\end{theorem}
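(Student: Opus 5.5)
The plan is a Dirichlet-form comparison in three steps. First we show \(K_{\rm CQA}\preceq Cn^3H_n\) as quadratic forms on the balanced Liouville space. Second, because the kernels agree by Eq.~\eqref{eq:corrected-cqa-premise}, the gap of \(K_{\rm CQA}\) transfers to \(H_n\). Third, we pass from the frustration-free Hamiltonian \(H_n\) to the brickwork cycle \(P_{\rm o}P_{\rm e}\) and substitute into the proof of Theorem~\ref{thm:finite-depth-achievability}.

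\textbf{Local comparison.} Write \(T_i=\mathcal T^{(2)}_{i,i+1}\), so \(h_i=I-T_i\).
\begin{itemize}
\item \emph{Partial-SWAP term.} Each partial-SWAP angle twirl \(S_i\) averages over a subgroup of the two-site number-conserving group. Hence \(\operatorname{Ran}T_i\subseteq\operatorname{Ran}S_i\) and \(I-S_i\preceq I-T_i=h_i\). This gives \(I-A\preceq(n-1)^{-1}\sum_ih_i\preceq H_n\).
\item \emph{One-gate inequality.} For any \(g\) in the bond-\(j\) gate group and any vector \(x\), the relation \(\mathcal R(g)T_j=T_j\) gives \(\|(\mathcal R(g)-I)x\|=\|(\mathcal R(g)-I)(I-T_j)x\|\leq2\|(I-T_j)x\|\).
\item \emph{One Wick pair.} \(e^{i\theta Z_aZ_b}\) is number conserving. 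Compile it exactly as \(W^\dagger e^{i\theta Z_{b-1}Z_b}W\), where \(W\) is a chain of \(|a-b|-1\) nearest-neighbour SWAPs; each SWAP lies in the bond gate group. The telescoping bound \(\|(\mathcal R(g_L\cdots g_1)-I)x\|\leq\sum_j\|(\mathcal R(g_j)-I)x\|\), followed by Cauchy--Schwarz, then gives
\[\langle x,(I-Q_{ab})x\rangle\leq C|a-b|\sum_{i\in[a,b)}\langle x,h_ix\rangle .\]
\item \emph{All pairs.} The \(Q_{ab}\) commute, so \(I-P_W\preceq\sum_{a<b}(I-Q_{ab})\). Bond \(i\) lies inside at most \(O(n^2)\) pair intervals, each of length \(O(n)\). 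Summing over pairs therefore gives \(I-P_W\preceq Cn^3H_n\).
\item \emph{Assembling \(K_{\rm CQA}\).} Put \(y=P_Wx\). Then
\[\langle x,K_{\rm CQA}x\rangle=\|x\|^2-\|y\|^2+\langle y,(I-A)y\rangle .\]
The form \(\langle\cdot,(I-A)\cdot\rangle\) is a positive seminorm bounded by the identity, so \(\langle y,(I-A)y\rangle\leq2\langle x,(I-A)x\rangle+2\|x-y\|^2\). Using \(\|x\|^2-\|y\|^2=\|(I-P_W)x\|^2\), we get \(K_{\rm CQA}\preceq3(I-P_W)+2(I-A)\preceq Cn^3H_n\).
\end{itemize}

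\textbf{Gap transfer.} With \(\ker K_{\rm CQA}=\ker H_n\), both operators preserve the common orthocomplement of the kernel. On that complement, \(H_n\succeq(Cn^3)^{-1}K_{\rm CQA}\succeq c_0C^{-1}n^{-6}\), which is Eq.~\eqref{eq:cqa-canonical-comparison}.

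\textbf{From \(H_n\) to the cycle.} This step is the main obstacle. The naive route would bound \(\sum_{i\,\rm even}h_i\) by a multiple of \(I-P_{\rm e}\). For commuting projectors, however, \(\sum_i(I-T_i)\) can exceed \(I-\prod_iT_i\) by a factor of order \(n\), which would cost a factor \(n\) and give only \(n^{-7}\).

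I would instead invoke the detectability lemma. \(H_n\) is frustration free: its zero space is the common fixed space of all \(T_i\), by Theorem~\ref{thm:task-sector-diffusive-upper}. Each \(h_i\) fails to commute with at most two other terms, so the interaction degree \(g\) is bounded. The detectability lemma then gives
\[
\|P_{\rm o}P_{\rm e}(I-P_{\rm ch})\|^2\leq\frac1{1+\Delta(H_n)/g^2}.
\]
Since \(P_{\rm ch}\) commutes with both layer projectors, this yields \(1-\lambda_n\geq c\,\Delta(H_n)=\Omega(n^{-6})\), which is Eq.~\eqref{eq:conditional-canonical-cycle-gap}. A fallback is the principal-angle (Friedrichs) analysis of two projections used in Theorem~\ref{thm:task-sector-diffusive-upper}, checking that it loses only a constant here.

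\textbf{Depth \(O_p(n^7)\).} Replace the rational-gap factor in Eq.~\eqref{eq:finite-depth-moment-convergence} by \(e^{-c\,t/n^6}\). Following the proof of Theorem~\ref{thm:conditional-diffusion-achievability}, choose \(t\geq Cn^6[(1+b)n\log2+\log n]\). Then the dimension-weighted error \(\sqrt3\,2^{n+k-1}e^{-ct/n^6}\) is at most \(n^{-1}\) on \(an\leq K\leq bn\). Eq.~\eqref{eq:charge-haar-extensive-second-moment} together with Jensen's inequality gives \(O_p(n^{-1/2})\) on that window. The binomial tails are exponentially small and \(D_E\leq1\), which completes the \(O_p(n^7)\) claim.
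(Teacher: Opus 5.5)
Your proposal is correct and follows the paper's proof essentially step for step. The shared steps are the subgroup comparison $I-S_i\preceq h_i$, the SWAP-word conjugation of each $Q_{ab}$ with telescoping and Cauchy--Schwarz, the $O(n^3)$ path count, the bound $K_{\rm CQA}\preceq 3(I-P_W)+2(I-A)$, the kernel-matched gap transfer, the two-layer detectability lemma, and the extra $\Theta(n)$ exponent that gives depth $O_p(n^7)$. The differences are minor. You place the physical SWAP directly in the bond gate group, whereas the paper obtains $R_i$-invariance from $e^{-i\pi\tau_i/2}=-i\tau_i$ and balanced phase cancellation. Your principal-angle fallback would only restate the cycle gap as a two-projection gap and would not avoid the factor-$n$ loss you correctly flagged, so the detectability lemma is the step that actually carries the conversion.
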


\begin{proof}
Both the partial-SWAP and adjacent $ZZ$ one-parameter groups are subgroups
of the full local gate group.  Hence
\begin{equation}
I-S_i\preceq h_i,\qquad I-Q_{i,i+1}\preceq h_i.
\label{eq:subgroup-projector-comparison}
\end{equation}
Let $R_i=\mathcal R(\tau_i)$ for physical SWAP $\tau_i$.  Since
$e^{-i\pi\tau_i/2}=-i\tau_i$ and balanced moments cancel the scalar phase,
every $S_i$-invariant vector is $R_i$-invariant.  Therefore
\begin{equation}
I-R_i\preceq2h_i,\qquad
\|(R_i-I)v\|^2\leq4\langle v,h_iv\rangle.
\label{eq:swap-energy-comparison}
\end{equation}

For $a<b$, use the adjacent-SWAP word
$W_{ab}=\tau_{b-1}\cdots\tau_{a+1}$, of length
$\ell=b-a-1$, which fixes $a$ and maps $a+1$ to $b$.  It conjugates
$Q_{a,a+1}$ to $Q_{ab}$.  The exact telescoping identity
\begin{equation}
R(W_{ab})-I
=\sum_{j=1}^{\ell}U_\ell\cdots U_{j+1}(U_j-I)
\end{equation}
and unitarity of the left factors imply
\begin{equation}
\|(R(W_{ab})-I)v\|^2
\leq4\ell\sum_{j=a+1}^{b-1}\langle v,h_jv\rangle.
\end{equation}
Adding and subtracting $v$ before applying $I-Q_{a,a+1}$ gives
\begin{equation}
\langle v,(I-Q_{ab})v\rangle
\leq
2\langle v,h_av\rangle
+8\ell\sum_{j=a+1}^{b-1}\langle v,h_jv\rangle.
\label{eq:remote-wick-path-bound}
\end{equation}
For any fixed bond, the sum of $\ell$ over all paths using that bond is
$O(n^3)$.  Since the $Q_{ab}$ commute,
\begin{equation}
I-P_W\preceq\sum_{a<b}(I-Q_{ab})\preceq C_Wn^3H_n
\label{eq:all-pairs-wick-comparison}
\end{equation}
for a numerical $C_W$.

Put $B=I-A$.  Equation~\eqref{eq:subgroup-projector-comparison} gives
$0\preceq B\preceq H_n/(n-1)$.  For an orthogonal projector $P$ and
$0\preceq B\preceq I$, the elementary quadratic-form inequality
\begin{equation}
PBP\preceq2B+2(I-P)
\end{equation}
follows from
$\|B^{1/2}Pv\|^2\leq
2\|B^{1/2}v\|^2+2\|(I-P)v\|^2$.
Consequently,
\begin{align}
K_{\rm CQA}
&=(I-P_W)+P_WBP_W\\
&\preceq3(I-P_W)+2B
\preceq\left(3C_Wn^3+\frac2{n-1}\right)H_n.
\end{align}
Together with the common-kernel and gap assumptions in
Eq.~\eqref{eq:corrected-cqa-premise}, this proves the Hamiltonian gap in
Eq.~\eqref{eq:cqa-canonical-comparison}.  The two-layer detectability lemma
for a one-dimensional projector Hamiltonian converts it, with a numerical
constant, to Eq.~\eqref{eq:conditional-canonical-cycle-gap}
\cite{Aharonov2009Detectability,Anshu2016Detectability}.  Finally,
suppressing the exponential observable prefactor in
Eq.~\eqref{eq:finite-depth-beta} requires an additional exponent
$\Theta(n)$; a cycle gap $\Omega(n^{-6})$ therefore gives the stated
$O(n^7)$ sufficient depth.
\end{proof}


\section{Classical mixing bound for the insertion tolerance}
\label{sec:classical-mixing}

This appendix proves an unconditional polynomial-depth bound for the
classical mean \(M_t\) in Theorem~\ref{thm:direct-reduction}.  The argument
uses the exact spectral gap of a tagged exclusion quotient and the
detectability lemma to pass from the continuous-time local Hamiltonian to
the deterministic even--odd brickwork cycle.

After \(t\) cycles the random interchange permutation \(\Pi_t\) produces
\begin{equation}
B_t=\Pi_t(\{1,\ldots,q\}),\qquad
S_t=\Pi_t(0),\qquad
C_t=B_t\cup\{S_t\},
\label{eq:cm-occupied-sets}
\end{equation}
where \(n=2q+1\).  The joint state space is
\begin{equation}
\begin{aligned}
\Omega&=\{(B,s):B\subset[n],\ |B|=q,\ s\notin B\},\\
|\Omega|&=\binom nq(q+1).
\end{aligned}
\label{eq:cm-state-space}
\end{equation}
and its stationary distribution \(\pi\) is uniform.
For the adjacent transposition \(\tau_i=(i,i+1)\), let
\(U_i f(B,s)=f(\tau_iB,\tau_is)\) and define
\begin{equation}
P_i=\frac{I+U_i}{2},\qquad h_i=I-P_i=\frac{I-U_i}{2}.
\label{eq:cm-local-projections}
\end{equation}
Each \(P_i\) and \(h_i\) is an orthogonal projection in
\(L^2(\pi)\).  Put
\begin{equation}
\begin{aligned}
H_\Omega&=\sum_{i=0}^{n-2}h_i,\\
P_{\rm e}&=\prod_{i\ {\rm even}}P_i,
& P_{\rm o}&=\prod_{i\ {\rm odd}}P_i,\\
A&=P_{\rm o}P_{\rm e},
& C_\Omega&=P_{\rm e}P_{\rm o}P_{\rm e}=A^\dagger A.
\end{aligned}
\label{eq:cm-H-and-cycle}
\end{equation}
The adjacent transpositions act transitively on \(\Omega\), so
\(\ker H_\Omega\) consists of constants.  Denote the projection onto this
space by \(Q_\Omega\).

\begin{lemma}[One-sided continuity of the insertion tolerance]
\label{lem:continuity-insertion-tv}
Let \(M_t=\frac12\operatorname{TV}(\widehat P_t^0,\widehat P_t^1)\)
and let \(M_{\rm eq}\) be the stationary value in
Theorem~\ref{thm:equilibrium-flagged-tv}.  Then
\begin{equation}
\boxed{0\leq M_t-M_{\rm eq}
\leq d_{\rm TV}(\mu_t,\pi),}
\label{eq:cm-continuity-bound}
\end{equation}
where \(\mu_t\) is the law of \((B_t,S_t)\).
\end{lemma}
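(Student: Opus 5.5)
Both inequalities follow from writing the flagged observation as a deterministic function of the tagged configuration \((B,s)\) and the independent erasure set \(E\). Define the map \(\Phi:\mu\mapsto(\widehat P^0[\mu],\widehat P^1[\mu])\) from laws on \(\Omega\) to pairs of flagged output laws. Here \(\widehat P^0[\mu](G,z)=\pi_p(G)\Pr_\mu\{B\cap G=z\}\) and \(\widehat P^1[\mu](G,z)=\pi_p(G)\Pr_\mu\{(B\cup\{s\})\cap G=z\}\). By the interchange coupling in the proof of Theorem~\ref{thm:true-trace-quadratic-speed-limit}, the pair at depth \(t\) is exactly \(\Phi(\mu_t)\).

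Next I identify the stationary value. Under the uniform law \(\pi\) on \(\Omega\), \(B\) is a uniform \(q\)-subset. Moreover \(B\cup\{s\}\) is a uniform \((q+1)\)-subset, because each \((q+1)\)-set arises from exactly \(q+1\) pairs \((B,s)\). Hence \(\Phi(\pi)\) is the pair of flagged laws of the fully mixed charge sectors. By Eq.~\eqref{eq:equilibrium-flagged-tv} and the convention \(M=\frac12\mathrm{TV}\), \(M_{\rm eq}=\frac12\mathrm{TV}(\Phi(\pi))\). I will check this normalisation against the factor \(\frac14\) in Theorem~\ref{thm:equilibrium-flagged-tv}. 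Both there and here \(\mathrm{TV}\) denotes the \(\ell^1\)-sum \(\sum|\cdot|\) (so \(\operatorname{TV}\le 2\)). The \(\frac14\) then equals \(\frac12\cdot\frac12\) with this \(\ell^1\) convention, and the identification is consistent.

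For the upper bound I use the triangle inequality,
\[
\mathrm{TV}(\Phi^0(\mu_t),\Phi^1(\mu_t))\le \mathrm{TV}(\Phi^0(\mu_t),\Phi^0(\pi))+\mathrm{TV}(\Phi^0(\pi),\Phi^1(\pi))+\mathrm{TV}(\Phi^1(\pi),\Phi^1(\mu_t)).
\]
Each \(\Phi^a\) is a Markov kernel applied to the law on \(\Omega\): tensor with the independent \(E\), then take the deterministic statistic. Data processing therefore bounds each outer term by \(\|\mu_t-\pi\|_1=2d_{\rm TV}(\mu_t,\pi)\). Halving gives \(M_t-M_{\rm eq}\le 2d_{\rm TV}(\mu_t,\pi)\), which is a factor two off the claim, so it must be sharpened. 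To do so I bound the two outer terms jointly rather than separately. Take a maximal coupling \((\omega,\omega')\) of \(\mu_t\) and \(\pi\), which disagrees with probability \(d_{\rm TV}\). Apply the same \(E\) to both components. Then \(\mathrm{TV}(\Phi^0(\mu_t),\Phi^1(\mu_t))\) can exceed \(\mathrm{TV}(\Phi^0(\pi),\Phi^1(\pi))\) only through the mass on which \(\omega\ne\omega'\). Using the signed-measure form \(\|\alpha-\beta\|_1\) with \(\alpha-\beta\) linear in the input law, the difference is at most \(\|(K^0-K^1)(\mu_t-\pi)\|_1\le 2\|\mu_t-\pi\|_{\rm TV}\cdot\sup\|K^0_\omega-K^1_\omega\|_{1}/2\). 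Here \(\|K^0_\omega-K^1_\omega\|_1\le 2p\le 2\), which gives the stated constant after halving.

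For the lower bound \(M_t\ge M_{\rm eq}\), I expect the main obstacle to be that total variation is not monotone under arbitrary processing of the input law. My approach is the universal floor of Theorem~\ref{thm:universal-charge-count-optimality}. Its proof shows that the erased-count statistic \((|E|,N_E)\) has the same law for every state supported in total charge \(r\), namely the hypergeometric law of Eq.~\eqref{eq:universal-hypergeometric-charge-law}, independent of the encoded amplitudes. That statistic depends on the configuration only through the charge, so it is a deterministic function of the flagged observation whose law is the same under \(\mu_t\) and under \(\pi\). Data processing then gives \(M_t\ge\frac12\mathrm{TV}\) of the count laws, which equals \(M_{\rm eq}\) by Eq.~\eqref{eq:equilibrium-flagged-identity}.
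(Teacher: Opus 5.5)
Your skeleton is the paper's proof. You write the two flagged laws as pushforwards of $\mu_t$ under $(B,s)\mapsto B\cap E$ and $(B,s)\mapsto(B\cup\{s\})\cap E$. You then use the triangle inequality plus data processing for the upper bound, and the hypergeometric erased-charge statistic of Theorem~\ref{thm:universal-charge-count-optimality} for the lower bound. Your check that $\pi$ pushes forward to uniform $q$- and $(q+1)$-subsets is correct; the paper uses it implicitly.

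The problem is the normalization. In the paper $\operatorname{TV}$ is the half-$\ell^1$ distance, the same as $d_{\rm TV}$. This is forced by $M=\frac12\operatorname{TV}$ together with the prefactor $\frac14\sum|\cdot|$ in Eq.~\eqref{eq:equilibrium-flagged-tv}. It is also what makes the coupling bound Eq.~\eqref{eq:full-flag-coupling-upper} read $\operatorname{TV}\le p$. Your claim that $\operatorname{TV}$ is the full $\ell^1$ sum, with $\frac14=\frac12\cdot\frac12$ under that convention, is self-contradictory. Under it, $\frac12\operatorname{TV}$ at equilibrium would equal $2M_{\rm eq}$. With $K=\Phi^0-\Phi^1$ one has $M_t=\frac14\|K\mu_t\|_1$ and $M_{\rm eq}=\frac14\|K\pi\|_1$. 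Your three-term triangle inequality then gives $\|K\mu_t\|_1-\|K\pi\|_1\le2\|\mu_t-\pi\|_1$, which is exactly $M_t-M_{\rm eq}\le d_{\rm TV}(\mu_t,\pi)$. There is no factor two to recover; your first step is already the whole upper bound.

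The ``sharpening'' does not hold up, under either reading of your notation.
\begin{itemize}
\item If $\|\mu_t-\pi\|_{\rm TV}$ means $d_{\rm TV}$, your inequality reads $\|K\nu\|_1\le\frac12\|\nu\|_1\sup_\omega\|K_\omega\|_1$, and this is false. The correct coupling (Dobrushin) bound for zero-mass $\nu$ is $\frac12\|\nu\|_1\sup_{\omega,\omega'}\|K_\omega-K_{\omega'}\|_1$. Here $K_\omega$ and $K_{\omega'}$ can have disjoint supports, so that supremum can be nearly $2\sup_\omega\|K_\omega\|_1$. Concretely, take $p=1$ and put a point mass $\delta_{(B_0,0)}$ in place of $\mu_t$. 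The $q$-subset and $(q+1)$-subset parts of $K(\mu-\pi)$ then live on disjoint alphabets, and $\|K(\mu-\pi)\|_1/\|\mu-\pi\|_1\to2$.
\item If instead it means $\|\cdot\|_1$, the inequality is correct. But after replacing $2p$ by $2$ it only reproduces the factor-two-off bound you were trying to improve.
\end{itemize}
Either way, the stated constant appears only because one factor-two error cancels another.

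What is worth keeping is the version with $p$ retained. The two branches coincide unless $s\in E$, so $\|K_\omega\|_1=2p$ and $\|K(\mu_t-\pi)\|_1\le2p\|\mu_t-\pi\|_1$. Under the correct normalization this gives $0\le M_t-M_{\rm eq}\le p\,d_{\rm TV}(\mu_t,\pi)$. That is a factor $p$ sharper than the paper, which bounds the two pushforward distances separately. Your lower-bound argument is fine once ``$\frac12\operatorname{TV}$ of the count laws'' is read with the half-$\ell^1$ convention.
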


\begin{proof}
The lower bound follows from the same erased-charge measurement as in
Theorem~\ref{thm:universal-charge-count-optimality}.  Conditional on the two
logical values, the averaged states remain supported in total charges \(q\)
and \(q+1\), so retaining \((K,N_E)\) already contributes
\(M_{\rm eq}\).  For the upper bound, for a fixed erasure pattern \(E\),
the output laws for logical values \(a=0,1\) are pushforwards of \(\mu_t\)
under
\begin{equation}
\Phi_E^{(0)}(B,s)=B\cap E,
\qquad
\Phi_E^{(1)}(B,s)=(B\cup\{s\})\cap E.
\end{equation}
Total variation contracts under a pushforward.  Since
\(M_t=\frac12\sum_E\pi_p(E)
 d_{\rm TV}(\Phi_E^{(0)}\#\mu_t,\Phi_E^{(1)}\#\mu_t)\), the reverse
triangle inequality for total variation gives
\begin{align}
|M_t-M_{\rm eq}|
&\leq\frac12\sum_E\pi_p(E)\sum_{a=0}^1
 d_{\rm TV}(\Phi_E^{(a)}\#\mu_t,\Phi_E^{(a)}\#\pi)\nonumber\\
&\leq d_{\rm TV}(\mu_t,\pi).
\end{align}
\end{proof}

\begin{lemma}[Exact tagged-exclusion Hamiltonian gap]
\label{lem:ssep-spectral-gap}
The frustration-free Hamiltonian \(H_\Omega\) has spectral gap
\begin{equation}
\gamma_\Omega
=1-\cos\!\left(\frac{\pi}{n}\right).
\label{eq:cm-exact-H-gap}
\end{equation}
\end{lemma}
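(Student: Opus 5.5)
The plan is to sandwich \(\gamma_\Omega\) between two quantities that coincide: the single-particle gap from below, via the Caputo--Liggett--Richthammer theorem~\cite{Caputo2010}, and an explicit one-particle eigenvector from above. The quantity to aim for on both sides is the spectral gap of the continuous-time simple random walk on the open path \([n]\) with rate \(1/2\) per edge. Its generator is \(L_{\rm rw}g(x)=\sum_{y\sim x}\tfrac12[g(x)-g(y)]\). It is diagonalized by the Neumann cosine modes \(g_k(x)=\cos[k\pi(x+1/2)/n]\), with eigenvalues \(1-\cos(k\pi/n)\) for \(0\leq k\leq n-1\). Its gap is therefore exactly \(1-\cos(\pi/n)\), the right-hand side of Eq.~\eqref{eq:cm-exact-H-gap}.

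\emph{Upper bound.} I would use the tagged-position observable \(f(B,s)=g_1(s)\). Because \(U_if(B,s)=g_1(\tau_is)\), the action of \(H_\Omega\) on such functions depends only on \(s\). It reproduces \(L_{\rm rw}\) exactly: \(h_if=0\) unless \(s\in\{i,i+1\}\), and then \(h_if(B,s)=\tfrac12[g_1(s)-g_1(\tau_is)]\). Hence \(H_\Omega f=(1-\cos\frac\pi n)f\). It remains to check that \(f\) is not constant, which holds for \(n\geq2\). Since \(H_\Omega\) is self-adjoint in \(L^2(\pi)\), \(f\) is then orthogonal to \(\ker H_\Omega\), which consists of the constants. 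This gives \(\gamma_\Omega\leq1-\cos(\pi/n)\).

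\emph{Lower bound.} I would lift to the full interchange process on the symmetric group \(S_n\). Consider the map \(\phi:S_n\to\Omega\), \(\phi(\sigma)=(\sigma(\{1,\ldots,q\}),\sigma(0))\). It is surjective, and every fibre has the same size \(q!\,q!\), so pullback \(f\mapsto f\circ\phi\) is an isometry from \(L^2(\pi)\) into \(L^2(\mathrm{Unif}(S_n))\), up to a constant factor. It also intertwines the dynamics: \(\phi(\tau_i\sigma)=\tau_i\cdot\phi(\sigma)\), so \((U_if)\circ\phi=\widetilde U_i(f\circ\phi)\), where \(\widetilde U_i\) is left multiplication by \(\tau_i\). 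Consequently \(H_\Omega\) is unitarily equivalent to the restriction of the interchange generator \(\widetilde H=\sum_i\tfrac12(I-\widetilde U_i)\) to an invariant subspace. That subspace contains the constants, and nonconstant functions on \(\Omega\) pull back to nonconstant functions. Every nonzero eigenvalue of \(H_\Omega\) is therefore a nonzero eigenvalue of \(\widetilde H\), and so \(\gamma_\Omega\geq\operatorname{gap}(\widetilde H)\). By Caputo--Liggett--Richthammer, with the same rate \(1/2\) on every edge, \(\operatorname{gap}(\widetilde H)\) equals the gap of \(L_{\rm rw}\), which is \(1-\cos(\pi/n)\).

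I expect the main obstacle to be the bookkeeping needed to apply the Aldous-conjecture theorem correctly: matching the rate normalization (\(h_i=(I-U_i)/2\) corresponds to rate \(1/2\)), the left versus right action convention, and the fact that the quotient kernel is exactly the constants. The last point is already guaranteed by transitivity on \(\Omega\). Once the intertwining \(\phi(\tau_i\sigma)=\tau_i\cdot\phi(\sigma)\) is written out explicitly, the two bounds coincide and prove Eq.~\eqref{eq:cm-exact-H-gap}.
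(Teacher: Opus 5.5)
Your proposal is correct and follows the paper's proof essentially step for step: the lower bound comes from lifting to the interchange process on \(S_n\) and invoking Caputo--Liggett--Richthammer at rate \(1/2\) per edge, and the matching upper bound comes from the tagged-position mode \(\cos[\pi(s+1/2)/n]\), on which \(H_\Omega\) acts as the reflected random walk. Your explicit fibre count \(q!\,q!\) and the intertwining \(\phi(\tau_i\sigma)=\tau_i\cdot\phi(\sigma)\) make rigorous the paper's one-line ``forget the labels'' lift.
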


\begin{proof}
Lift a function on \(\Omega\) to the full interchange process on \(n\)
labeled particles by forgetting the labels of the \(q\) background
particles and of the \(q\) holes while retaining the distinguished label.
On the full interchange space,
\(H_{\rm IP}=\frac12\sum_i(I-U_i)\).  The theorem of
Caputo--Liggett--Richthammer states that the interchange process and the
one-particle random walk on the same weighted graph have the same spectral
gap~\cite{Caputo2010}.  On the open path the graph-Laplacian gap is
\(2[1-\cos(\pi/n)]\); the factor \(1/2\) in \(H_{\rm IP}\) therefore gives
\(1-\cos(\pi/n)\).  Restricting to the quotient-invariant subspace cannot
lower the gap, so \(\gamma_\Omega\geq1-\cos(\pi/n)\).

For the reverse inequality, the tag marginal is uniform under \(\pi\), and
\begin{equation}
f(B,s)=\cos\!\left[\frac{\pi(s+1/2)}{n}\right]
\end{equation}
is mean zero.  A direct application of
\(H_\Omega=\frac12\sum_i(I-U_i)\) shows
\(H_\Omega f=[1-\cos(\pi/n)]f\), including the reflected endpoint
relations.  Hence this eigenvalue is present in the quotient, proving
Eq.~\eqref{eq:cm-exact-H-gap}.
\end{proof}

\begin{lemma}[Brickwork compression gap]
\label{lem:cm-brickwork-gap}
Every nonstationary eigenvalue \(\lambda\) of \(C_\Omega\) satisfies
\begin{equation}
\lambda
\leq\frac{1}{1+\gamma_\Omega/4},
\qquad
1-\lambda
\geq\frac{\gamma_\Omega}{4+\gamma_\Omega}
\geq\frac{2}{5n^2}.
\label{eq:cm-brickwork-gap}
\end{equation}
\end{lemma}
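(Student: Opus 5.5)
The plan is to apply the detectability lemma to the frustration-free projector Hamiltonian \(H_\Omega=\sum_i h_i\). Its gap \(\gamma_\Omega\) is known exactly from Lemma~\ref{lem:ssep-spectral-gap}. The only remaining work is to verify the hypotheses in the present \(L^2(\pi)\) setting and then do the elementary arithmetic.

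First, the setup. Because \(\pi\) is uniform and \(\tau_i\) acts bijectively on \(\Omega\), each \(U_i\) is unitary and self-adjoint on \(L^2(\pi)\). Hence \(P_i=(I+U_i)/2\) and \(h_i=I-P_i\) are orthogonal projections, as already noted after Eq.~\eqref{eq:cm-local-projections}. Transpositions on disjoint bonds commute, so the projections within the even layer commute, and likewise within the odd layer. Consequently \(P_{\rm e}\) and \(P_{\rm o}\) are orthogonal projections and \(C_\Omega=A^\dagger A\) is self-adjoint with spectrum in \([0,1]\).

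Second, we identify the stationary space. If \(C_\Omega v=v\), then \(\|Av\|=\|v\|\), which forces \(P_{\rm e}v=v\) and \(P_{\rm o}v=v\). This gives \(h_iv=0\) for every \(i\), so by transitivity \(v\) is constant. Any eigenvector with eigenvalue \(\lambda<1\) is orthogonal to this eigenspace by self-adjointness, and therefore lies in \(\operatorname{Ran}(I-Q_\Omega)\). For such a \(v\) we have \(\lambda\|v\|^2=\|P_{\rm o}P_{\rm e}v\|^2\).

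Third, we invoke the detectability lemma in the Anshu--Arad--Vidick form \cite{Anshu2016Detectability}. For a frustration-free Hamiltonian built from projectors \(h_i\), each of which fails to commute with at most \(g\) others, and for \(v\perp\ker H\),
\begin{equation*}
\|P_{\rm o}P_{\rm e}v\|^2\leq\frac{1}{1+\gamma/g^2}\|v\|^2 .
\end{equation*}
Here \(h_i\) fails to commute only with \(h_{i\pm1}\), so \(g=2\). Substituting \(\gamma=\gamma_\Omega\) gives \(\lambda\leq 1/(1+\gamma_\Omega/4)\), and hence \(1-\lambda\geq\gamma_\Omega/(4+\gamma_\Omega)\).

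The main obstacle is confirming that the lemma's constant really is \(g^2=4\) with our layer ordering, and that the bound is stated for the product \(P_{\rm o}P_{\rm e}\). The statement above is a recollection of the published result, not something derived in this paper. If the cited form does not match exactly, I would reprove it directly. The route is the standard Anshu--Arad--Vidick argument: write \(v=P_{\rm e}v+(I-P_{\rm e})v\), bound \(\langle v,H_\Omega v\rangle\) by \(\|(I-P_{\rm e})v\|^2+\|(I-P_{\rm o})P_{\rm e}v\|^2\), and use the fact that each \(h_i\) overlaps at most two projections of the opposite layer. This is a one-dimensional, two-layer specialization and should be short.

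Finally, the numerics. Taylor's bound \(1-\cos x\geq x^2/2-x^4/24\) gives \(\gamma_\Omega\geq\frac{\pi^2}{2n^2}\bigl(1-\frac{\pi^2}{12n^2}\bigr)\geq 4.4/n^2\) for \(n\geq3\). Also \(\gamma_\Omega\leq 1-\cos(\pi/3)=1/2\). Therefore
\begin{equation*}
\frac{\gamma_\Omega}{4+\gamma_\Omega}\geq\frac{4.4}{4.5\,n^2}\geq\frac{2}{5n^2}.
\end{equation*}
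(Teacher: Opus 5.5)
Your proposal is correct and is essentially the paper's proof. The paper also applies the Anshu--Arad--Vidick detectability lemma to \(A=P_{\rm o}P_{\rm e}\) with interaction degree \(g=2\), obtaining \(\|Av\|_2^2\leq(1+\gamma_\Omega/4)^{-1}\|v\|_2^2\) for \(v\perp\ker H_\Omega\). It differs only in the closing arithmetic, using \(1-\cos(\pi/n)=2\sin^2[\pi/(2n)]\geq2/n^2\) and \(\gamma_\Omega\leq1\) where you use a Taylor estimate and \(\gamma_\Omega\leq1/2\).

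The cited form does match, so your fallback is unnecessary. As sketched, it would not work. The correct argument bounds the energy of the output, \(\langle Av,H_\Omega Av\rangle\leq4(\|v\|_2^2-\|Av\|_2^2)\), via the sequential telescoping defects of the ordered product. By contrast, \(\langle v,H_\Omega v\rangle\) can exceed \(\|(I-P_{\rm e})v\|_2^2+\|(I-P_{\rm o})P_{\rm e}v\|_2^2\) by a factor of order \(n\). For example, take \(v\) antisymmetric under every even-bond transposition: then \(P_{\rm e}v=0\), yet \(\langle v,H_\Omega v\rangle\geq q\|v\|_2^2\).
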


\begin{proof}
The local terms \(h_i\) are projections, and each fails to commute with at
most its two neighboring bond terms.  The detectability lemma applied to the
ordered product of the complementary projections
\(A=P_{\rm o}P_{\rm e}\) therefore gives, for every
\(g\perp\ker H_\Omega\),
\begin{equation}
\|Ag\|_2^2
\leq\frac{1}{1+\gamma_\Omega/4}\|g\|_2^2
\end{equation}
with the interaction-degree convention of
Ref.~\cite{Anshu2016Detectability}.  Since
\(C_\Omega=A^\dagger A\), this is the first inequality in
Eq.~\eqref{eq:cm-brickwork-gap}.  Finally,
\(1-\cos(\pi/n)=2\sin^2[\pi/(2n)]\geq2/n^2\) and
\(\gamma_\Omega\leq1\), which give the stated explicit lower bound.
\end{proof}

\begin{corollary}[Unconditional \(L^2\)-to-TV mixing bound]
\label{cor:cm-l2-mixing}
For the point-mass initialization \(\mu_0=\delta_{(B_0,0)}\) and every
\(t\geq1\),
\begin{equation}
d_{\rm TV}(\mu_t,\pi)
\leq
\frac12\sqrt{|\Omega|-1}\,
\exp\!\left[-\frac{2(t-1)}{5n^2}\right].
\label{eq:cm-TV-l2}
\end{equation}
\end{corollary}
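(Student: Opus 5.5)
We will use the standard $L^2(\pi)$-to-TV comparison together with the brickwork compression gap of Lemma~\ref{lem:cm-brickwork-gap}. Work with the density $f_t=\mu_t/\pi$ on $\Omega$, where $\pi$ is uniform. The one-cycle diagonal dynamics act on measures by $\mu\mapsto\mu A$, with each half-swap bond map $P_i$ self-adjoint in $L^2(\pi)$ because $\pi$ is uniform. Hence the density evolves by $f_{t}=A^\dagger{}^{\,t}f_0$ up to the ordering convention; concretely, $f_t=(P_{\rm e}P_{\rm o})^tf_0$ or its adjoint, and the argument is symmetric.

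First, Cauchy--Schwarz gives
\[
d_{\rm TV}(\mu_t,\pi)=\tfrac12\|f_t-1\|_{L^1(\pi)}\leq\tfrac12\|f_t-1\|_{L^2(\pi)}.
\]
For the point mass one has $\|f_0-1\|_{L^2(\pi)}^2=|\Omega|-1$. The constant function is fixed by every $P_i$, so the orthogonal complement of the constants is invariant under each layer and $f_t-1$ stays mean zero.

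Second, we contract the centered part. Write $g=f_0-1\perp\ker H_\Omega$. Since $C_\Omega=A^\dagger A$, the bound $\|Ag\|^2\leq\lambda_*\|g\|^2$ holds on the complement, with $\lambda_*\leq 1/(1+\gamma_\Omega/4)$. Iterating this bound directly requires $Ag$ to remain orthogonal to constants, which it does. The per-step contraction $\|Ag\|\leq\sqrt{\lambda_*}\|g\|$ then gives, after $t$ cycles, a factor $\lambda_*^{t/2}$. The stated exponent uses only $t-1$, which tolerates a lossy first cycle. We therefore plan to split off the first cycle, bounding it trivially by the contraction $\|A\|\leq1$, and apply the gap to the remaining $t-1$ cycles. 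If the adjoint ordering $P_{\rm e}P_{\rm o}$ appears instead of $A$, the same gap holds, since $AA^\dagger$ and $A^\dagger A$ share nonzero spectrum.

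Third, we convert to the exponential. From Lemma~\ref{lem:cm-brickwork-gap}, $1-\lambda_*\geq 2/(5n^2)$, so that $\lambda_*^{(t-1)/2}\leq\exp[-(t-1)/(5n^2)]$.

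The main obstacle is this exponent bookkeeping. Using $\lambda_*$ per cycle yields only $\exp[-(t-1)/(5n^2)]$ in the norm, whereas the statement claims $2(t-1)/(5n^2)$. To gain the factor two, we would exploit that $\|A^k g\|^2=\langle g,(A^\dagger A)\cdots\rangle$ and bound it through the self-adjoint compression. Writing $A^{k}=P_{\rm o}(P_{\rm e}P_{\rm o}P_{\rm e})^{k-1}P_{\rm e}$, the centered norm is at most $\mu_*^{\,k-1}$, where $\mu_*=\lambda_*\leq1-2/(5n^2)$ is the top nonunit eigenvalue of $C_\Omega$. This contributes a linear rather than square-root power of $\lambda_*$ in the norm, so $\|f_t-1\|\leq\lambda_*^{t-1}\sqrt{|\Omega|-1}\leq e^{-2(t-1)/(5n^2)}\sqrt{|\Omega|-1}$, matching \eqref{eq:cm-TV-l2}. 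The step to verify carefully is that $C_\Omega$ restricted to the centered subspace has operator norm equal to its top nonunit eigenvalue, which holds because it is positive semidefinite, together with the identity $(P_{\rm o}P_{\rm e})^t=P_{\rm o}C_\Omega^{t-1}P_{\rm e}$ with both outer projections contractive.
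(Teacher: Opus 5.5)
Your final argument is the paper's proof. It uses the identity $(P_{\rm o}P_{\rm e})^t=P_{\rm o}C_\Omega^{\,t-1}P_{\rm e}$ with contractive outer projections, bounds the positive compression $C_\Omega$ on the centered subspace by its top nonunit eigenvalue via Lemma~\ref{lem:cm-brickwork-gap}, handles the reversed ordering through the shared nonunit spectrum of $A^\dagger A$ and $AA^\dagger$, and finishes with the point-mass norm $\sqrt{|\Omega|-1}$ and Cauchy--Schwarz. You can drop the preliminary per-cycle $\sqrt{\lambda_*}$ route with the discarded first cycle, since it only gives half the stated exponent.
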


\begin{proof}
Because \(P_{\rm e}\) and \(P_{\rm o}\) fix constants,
\begin{equation}
A^t-Q_\Omega
=P_{\rm o}(C_\Omega^{\,t-1}-Q_\Omega)P_{\rm e},
\qquad t\geq1.
\end{equation}
The same singular-value bound holds if the density convention evolves with
\(A^\dagger\).  Equation~\eqref{eq:cm-brickwork-gap} therefore implies
\begin{equation}
\|d\mu_t/d\pi-1\|_{2,\pi}
\leq e^{-2(t-1)/(5n^2)}
\|d\mu_0/d\pi-1\|_{2,\pi}.
\end{equation}
For a point mass under the uniform measure,
\(\|d\mu_0/d\pi-1\|_{2,\pi}^2=|\Omega|-1\), and
\(d_{\rm TV}(\mu_t,\pi)\leq
\frac12\|d\mu_t/d\pi-1\|_{2,\pi}\), proving the result.
\end{proof}

\begin{theorem}[Classical mixing of the insertion tolerance]
\label{thm:classical-mixing-main}
For the canonical boundary code under the local-Haar brickwork ensemble,
\begin{equation}
|M_t-M_{\rm eq}|
\leq
\frac12\sqrt{|\Omega|-1}\,
\exp\!\left[-\frac{2(t-1)}{5n^2}\right].
\label{eq:cm-main-bound}
\end{equation}
In particular, if
\begin{equation}
t\geq1+\frac54n^2\log\!\bigl(n|\Omega|\bigr),
\label{eq:cm-explicit-cubic-depth}
\end{equation}
then \(0\leq M_t-M_{\rm eq}\leq(2\sqrt n)^{-1}\).  Since
\(\log|\Omega|=n\log2+O(\log n)\), this is an unconditional
\(O(n^3)\) mixing bound and implies \(M_t=O_p(n^{-1/2})\).
\end{theorem}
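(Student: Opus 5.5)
The bound in Eq.~\eqref{eq:cm-main-bound} follows by chaining the three preceding results. First, Lemma~\ref{lem:continuity-insertion-tv} gives the one-sided sandwich \(0\leq M_t-M_{\rm eq}\leq d_{\rm TV}(\mu_t,\pi)\), where \(\mu_t\) is the law of the tagged exclusion pair \((B_t,S_t)\) generated by the averaged brickwork dynamics. This step uses that, for the canonical boundary code, the one-copy averaged channel after the first active layer is exactly the random-interchange law on \(\Omega\). That fact is already recorded in the proof of Lemma~\ref{lem:direct-trace-mean-fluctuation}: the logical coherence is killed and \(M_t=\frac12\operatorname{TV}(\widehat P_t^0,\widehat P_t^1)\) is a pushforward of \(\mu_t\) under the maps \(\Phi_E^{(a)}\). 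Second, Corollary~\ref{cor:cm-l2-mixing} bounds \(d_{\rm TV}(\mu_t,\pi)\) for the point-mass start \(\delta_{(B_0,0)}\). That corollary rests on the exact gap of Lemma~\ref{lem:ssep-spectral-gap} (via Caputo--Liggett--Richthammer), transferred to the compressed cycle \(C_\Omega\) by the detectability estimate of Lemma~\ref{lem:cm-brickwork-gap}. Composing the two inequalities gives Eq.~\eqref{eq:cm-main-bound} directly, with the absolute value justified by the lower bound \(M_t\geq M_{\rm eq}\).

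For the explicit depth, I will substitute \(t-1\geq\frac54n^2\log(n|\Omega|)\) into the exponent. This gives \(\exp[-2(t-1)/(5n^2)]\leq(n|\Omega|)^{-1/2}\). The right side of Eq.~\eqref{eq:cm-main-bound} is then at most \(\frac12\sqrt{|\Omega|}\,(n|\Omega|)^{-1/2}=(2\sqrt n)^{-1}\), as claimed. For the asymptotic statement, \(|\Omega|=\binom nq(q+1)\) with \(\binom nq\leq2^n\) gives \(\log|\Omega|=n\log2+O(\log n)\), so the required depth is \(\frac54 n^3\log2+O(n^2\log n)=O(n^3)\). Combining with Theorem~\ref{thm:equilibrium-flagged-tv}, which gives \(M_{\rm eq}=\sqrt{p/[2\pi(1-p)]}\,n^{-1/2}+o(n^{-1/2})\), yields \(M_t\leq M_{\rm eq}+(2\sqrt n)^{-1}=O_p(n^{-1/2})\).

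The only point needing care is the identification in the first step. One has to check that the one-copy averaged diagonal dynamics on the pair \((B,s)\) is exactly the Markov chain generated by the \(P_i=(I+U_i)/2\) on \(\Omega\). One also has to check that its even/odd cycle \(A=P_{\rm o}P_{\rm e}\), acting on densities as \(A^\dagger\), is the operator whose singular values Corollary~\ref{cor:cm-l2-mixing} controls. The half-swap description in the proof of Theorem~\ref{thm:true-trace-quadratic-speed-limit} gives this: each bond applies \(\tau_i\) with probability \(1/2\), simultaneously to the background set and to the tag. The reversed-order convention is harmless because \(\|A^t-Q_\Omega\|=\|(A^\dagger)^t-Q_\Omega\|\). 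I do not expect a genuine obstacle here; the theorem is a bookkeeping assembly of results already proved, valid for every \(t\geq1\).
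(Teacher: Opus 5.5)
Your proposal is correct and follows the paper's own proof. You chain Lemma~\ref{lem:continuity-insertion-tv} with Corollary~\ref{cor:cm-l2-mixing}, substitute the depth condition to obtain the factor \((n|\Omega|)^{-1/2}\), and read off \(M_t=O_p(n^{-1/2})\) from Eq.~\eqref{eq:equilibrium-flagged-asymptotic}.

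One small precision point: \(\binom nq\leq2^n\) alone gives only \(\log|\Omega|\leq n\log2+O(\log n)\). That is all the \(O(n^3)\) depth claim needs, but the two-sided statement also requires \(\binom nq\geq2^n/(n+1)\).
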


\begin{proof}
Combine Lemma~\ref{lem:continuity-insertion-tv} with
Corollary~\ref{cor:cm-l2-mixing}.  Under
Eq.~\eqref{eq:cm-explicit-cubic-depth}, the exponential in
Eq.~\eqref{eq:cm-main-bound} is at most
\([n|\Omega|]^{-1/2}\), giving the displayed finite-size bound.  The
stationary asymptotic follows from
Eq.~\eqref{eq:equilibrium-flagged-asymptotic}.
\end{proof}

\begin{remark}[What this theorem does and does not prove]
\label{rem:cm-consequences}
The theorem closes estimate~(i) in Theorem~\ref{thm:direct-reduction} at a
rigorous polynomial depth.  It does not control the centered two-copy
fluctuation \(\Gamma\), and therefore does not prove a full-channel upper
bound.  The factor \(\sqrt{|\Omega|}\) in the point-mass \(L^2\) estimate
is also why this argument gives \(O(n^3)\), rather than the conjectured
\(O(n^2)\), formation scale.
\end{remark}


\section{Extension to single-distinguished-particle product encodings}
\label{sec:universal-quadratic}

The quadratic speed limits of
Theorems~\ref{thm:diffusive-linear-task-lower}%
--\ref{thm:true-trace-quadratic-speed-limit} and the causal plateau of
Theorem~\ref{thm:full-flag-causal-plateau} are proved for the specific
canonical boundary code with contiguous half-filled background
$B_0=\{1,\dots,q\}$ and distinguished site $s_0=0$.  This appendix
shows that these results are not artifacts of that particular background
geometry: they extend to every computational-basis product encoding whose
two codewords share a deterministic background and differ by one additional
occupied site.  Multi-particle--multi-hole differences are not covered.

\subsection{Setup and the distinguished-particle coupling}

Let \(B_0\subset[n]\) be an arbitrary deterministic set of
\(q\) occupied sites and let \(s_0\notin B_0\).  The logical product
codewords are
\begin{equation}
|\psi_0\rangle=|B_0\rangle,
\qquad
|\psi_1\rangle=|B_0\cup\{s_0\}\rangle.
\label{eq:gen-cb-encoding}
\end{equation}
Under the common random interchange permutation,
\begin{equation}
B_t=\Pi_t(B_0),
\qquad
S_t=\Pi_t(s_0),
\qquad
C_t=B_t\cup\{S_t\}.
\label{eq:gen-occupied-sets}
\end{equation}
This exact coupling is the only encoding-specific input needed below.

\subsection{Universality of negative correlation}

The proof of the quadratic speed limit in
Theorem~\ref{thm:true-trace-quadratic-speed-limit} hinges on the
pairwise negative correlation of the background occupation indicators
$\eta_i=\mathbf 1_{\{i\in B_t\}}$, established in
Eq.~\eqref{eq:interchange-pairwise-negative-correlation} for the
canonical initialization.  This property is in fact universal.

\begin{lemma}[Universal negative correlation]
\label{lem:universal-negative-correlation}
Let $B_0\subseteq[n]$ be \emph{any} deterministic set of $q$ occupied
sites, and let $\eta_i^{(t)}=\mathbf 1_{\{i\in B_t\}}$ where $B_t$
evolves under the SSEP dynamics (alternating even and odd layers of
random transpositions with probability $1/2$).  Then for every
$t\geq0$ and every pair $i\neq j$,
\begin{equation}
\operatorname{Cov}(\eta_i^{(t)},\eta_j^{(t)})\leq0.
\label{eq:gen-negative-correlation}
\end{equation}
\end{lemma}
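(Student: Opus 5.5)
The plan is to run the same induction over individual half-swaps that gave Eq.~\eqref{eq:interchange-pairwise-negative-correlation}. That argument never used the shape of $B_0$, only that the initial occupations are deterministic. We view the $t$ complete cycles as a finite ordered sequence of independent random half-swaps. Each half-swap acts on one bond $(i,j)$ of the current matching: with probability $1/2$ it applies the transposition $\tau_{ij}$ to the current configuration, and otherwise it does nothing. The random choice is independent of the past. Within one layer the bonds are disjoint, so the layer can be processed bond by bond in any order without changing the law of $B_t$.

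The base case is $t=0$. Every $\eta_k^{(0)}\in\{0,1\}$ is deterministic, so all covariances vanish, and in particular $\operatorname{Cov}(\eta_i^{(0)},\eta_j^{(0)})\leq0$. This holds for arbitrary $B_0$.

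For the inductive step, assume all pairwise covariances are nonpositive before a half-swap on bond $(i,j)$. Write $u_k=\mathbb E\eta_k$ and $\eta'$ for the occupations afterwards. We check three kinds of pairs.

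The first pair is $\{i,j\}$ itself. Since $\eta'_i\eta'_j=\eta_i\eta_j$ deterministically and $\mathbb E\eta'_i=\mathbb E\eta'_j=(u_i+u_j)/2$, we get
\[
\operatorname{Cov}(\eta'_i,\eta'_j)=\mathbb E\eta_i\eta_j-\Bigl(\frac{u_i+u_j}{2}\Bigr)^2\leq u_iu_j-\Bigl(\frac{u_i+u_j}{2}\Bigr)^2\leq0 .
\]
The middle inequality uses the induction hypothesis $\mathbb E\eta_i\eta_j\leq u_iu_j$.

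The second kind is a pair $\{i,k\}$ with $k\notin\{i,j\}$. Independence of the coin gives $\eta'_i=\tfrac12(\eta_i+\eta_j)$ in conditional expectation, so
\[
\operatorname{Cov}(\eta'_i,\eta_k)=\tfrac12\bigl[\operatorname{Cov}(\eta_i,\eta_k)+\operatorname{Cov}(\eta_j,\eta_k)\bigr]\leq0 .
\]
The pair $\{j,k\}$ is handled the same way.

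The third kind is a pair disjoint from $\{i,j\}$. Neither variable changes, so its covariance is unchanged and stays nonpositive.

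Induction over all half-swaps in the $t$ cycles then gives Eq.~\eqref{eq:gen-negative-correlation}.

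The only step needing care is the identity for a pair $\{i,k\}$ with $k$ outside the bond. It requires that the half-swap coin is independent of the current configuration. This holds because, by Theorem~\ref{thm:true-trace-quadratic-speed-limit}, the diagonal local-Haar twirl is exactly the classical equal mixture of the identity and the transposition.

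An alternative route is to note that a deterministic configuration is strongly Rayleigh and that partial symmetrization preserves this property, as already used in Theorem~\ref{thm:strong-rayleigh-late-source}. Strongly Rayleigh measures are negatively associated, which gives the lemma directly. The elementary induction above avoids citing that theory.
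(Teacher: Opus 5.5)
Your proof is correct and is the same as the paper's. Both rerun the half-swap induction behind Eq.~\eqref{eq:interchange-pairwise-negative-correlation} (bond pair, bond-to-outside pairs, disjoint pairs) and observe that the deterministic base case holds for any $B_0$. Your strong-Rayleigh alternative is also valid: negative association implies pairwise negative correlation.
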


\begin{proof}
The proof is identical to that of
Eq.~\eqref{eq:interchange-pairwise-negative-correlation} in
Theorem~\ref{thm:true-trace-quadratic-speed-limit}, with the sole
observation that the induction base---the deterministic initial
condition---holds for any $B_0$.  At $t=0$ the $\eta_i^{(0)}$ are
deterministic, hence $\operatorname{Cov}(\eta_i^{(0)},\eta_j^{(0)})=0$
for all $i\neq j$.  The induction step
(Eqs.~\eqref{eq:distinguished-particle-window}--\eqref{eq:interchange-pairwise-negative-correlation})
uses only that pairwise covariances are nonpositive before a
half-swap and shows they remain nonpositive after it; this step is
independent of the specific occupation numbers.
\end{proof}

\begin{corollary}[Variance bound for arbitrary windows]
\label{cor:gen-variance-bound}
For any deterministic set $A\subseteq[n]$, the erased background count
$K=\sum_{i\in A}\xi_i\eta_i^{(t)}$ satisfies
$\operatorname{Var}K\leq p|A|$.
\end{corollary}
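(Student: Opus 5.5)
The plan is to expand the variance of the window count as a sum of single-site variances plus pairwise covariances. We then bound the two pieces separately, using Lemma~\ref{lem:universal-negative-correlation} for the cross terms. Write $Y_i=\xi_i\eta_i^{(t)}$ for $i\in A$, so that
\begin{equation}
\operatorname{Var}K=\sum_{i\in A}\operatorname{Var}Y_i+\sum_{\substack{i,j\in A\\ i\neq j}}\operatorname{Cov}(Y_i,Y_j).
\end{equation}
Each $Y_i$ is a Bernoulli variable with mean $p\,\mathbb E\eta_i^{(t)}\leq p$. Hence $\operatorname{Var}Y_i\leq\mathbb EY_i^2=\mathbb EY_i\leq p$, and the diagonal sum is at most $p|A|$.

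For the off-diagonal terms, use independence of the erasure flags $\xi$ from the interchange history and from each other. For $i\neq j$, $\mathbb E[Y_iY_j]=p^2\mathbb E[\eta_i\eta_j]$ and $\mathbb EY_i\,\mathbb EY_j=p^2\mathbb E\eta_i\,\mathbb E\eta_j$. This gives $\operatorname{Cov}(Y_i,Y_j)=p^2\operatorname{Cov}(\eta_i^{(t)},\eta_j^{(t)})$. By Lemma~\ref{lem:universal-negative-correlation}, which holds for an arbitrary deterministic $B_0$, every such covariance is nonpositive. The cross sum can therefore be dropped, giving $\operatorname{Var}K\leq p|A|$.

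No step is a genuine obstacle, because all the dynamical content is already in Lemma~\ref{lem:universal-negative-correlation}. The one point to check is that $A$ is deterministic, i.e.\ chosen independently of $\Pi_t$ and $E$, since otherwise the factorization of the flag expectations would fail. Also note that the window here, unlike $A_t$ in Theorem~\ref{thm:true-trace-quadratic-speed-limit}, need not be a prefix: the argument uses only the site-indexed covariance sign.
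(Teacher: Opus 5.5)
Your proposal is correct and matches the paper's proof essentially line for line. The paper likewise writes $\operatorname{Cov}(\xi_i\eta_i,\xi_j\eta_j)=p^2\operatorname{Cov}(\eta_i,\eta_j)\leq0$ via Lemma~\ref{lem:universal-negative-correlation}, and then bounds $\operatorname{Var}K$ by the sum of the Bernoulli variances, each at most $p$; your remark that the factorization needs flags independent of each other and of $\Pi_t$, with $A$ fixed in advance, is exactly the assumption used, and it holds here.
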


\begin{proof}
For $i\neq j$,
$\operatorname{Cov}(\xi_i\eta_i,\xi_j\eta_j)
=p^2\operatorname{Cov}(\eta_i,\eta_j)\leq0$ by
Lemma~\ref{lem:universal-negative-correlation}.  Since each
$\xi_i\eta_i$ is Bernoulli,
$\operatorname{Var}K\leq\sum_i\operatorname{Var}(\xi_i\eta_i)
\leq p|A|$.
\end{proof}

\subsection{Generalized diffusive window and speed limit}

The distinguished-particle profile generalizes directly.  Define
\begin{equation}
\bar\delta_i^{(s_0)}(t)
=[(OE)^te_{s_0}]_i
=\Pr\{S_t=i\mid S_0=s_0\},
\label{eq:gen-tagged-profile}
\end{equation}
the $t$-cycle position distribution of a single particle undergoing
the lazy reflected random walk on $\{0,\dots,n-1\}$ with initial
position $s_0$.  The cell-walk analysis of
Theorem~\ref{thm:diffusive-linear-task-lower} applies without change,
except that the mass is now centered at $s_0$ rather than at $0$.

Concretely, group consecutive even--odd site pairs into cells starting
from the natural parity of $s_0$.  After $s=t-1$ EOE compressions, at
least half the mass of $\bar\delta^{(s_0)}$ lies in the first
$\lfloor\sqrt s\rfloor+1$ cells in each direction from $s_0$'s cell,
or is reflected at the boundaries.  Consequently there exists an
interval $A_t(s_0)$ whose size may be chosen to obey
\begin{equation}
m_{n,t}(s_0)\leq
\min\{n,4\lfloor\sqrt{t-1}\rfloor+5\}
\label{eq:gen-diffusive-window-size}
\end{equation}
(accounting for two-sided spread and the final odd-layer half-swap)
such that $\Pr\{S_t\in A_t(s_0)\}\geq1/2$.

\begin{theorem}[Generalized quadratic speed limit for
single-distinguished-particle encodings]
\label{thm:gen-quadratic-speed-limit}
For any computational-basis product encoding satisfying
$C_0=B_0\cup\{s_0\}$ with distinguished site
$s_0$ and deterministic background $B_0$,
\begin{equation}
\boxed{
\begin{aligned}
\mathbb E_{V,E}D_E(V)
&\geq\frac{p}{48\sqrt{m_{n,t}(s_0)}}\\
&\geq\frac{p}{48\sqrt{4\sqrt{t-1}+5}}\\
&\geq\frac{p}{144\sqrt[4]{t}}.
\end{aligned}}
\label{eq:gen-quadratic-lower}
\end{equation}
Consequently the necessary depth to reach the $n^{-1/2}$ scale is
$\Omega(n^2)$ for every fixed $p\in(0,1/2)$, independent of the
initial encoding geometry.
\end{theorem}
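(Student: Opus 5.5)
The argument will be the proof of Theorem~\ref{thm:true-trace-quadratic-speed-limit} run again with the generalized window $A_t(s_0)$ in place of the prefix $A_t$. Nothing new is needed beyond replacing the two ingredients that used the canonical geometry.

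The first steps carry over verbatim because they never use $B_0$ or $s_0$. Convexity and dephasing the reference give Eq.~\eqref{eq:trace-lower-average-dephase}, hence Eq.~\eqref{eq:trace-lower-classical-TV}. Retaining the flag gives Eq.~\eqref{eq:flagged-TV-lower}. On diagonal inputs, the one-copy averaged channel is the half-swap random interchange process, so the two averaged logical inputs are coupled by Eq.~\eqref{eq:gen-occupied-sets}. Since $s_0\notin B_0$, we have $S_t\notin B_t$ deterministically, and $\Pr\{S_t=i\}=\bar\delta^{(s_0)}_i(t)$ by Eq.~\eqref{eq:gen-tagged-profile}.

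Next fix the interval $A_t(s_0)$ with $\Pr\{S_t\in A_t(s_0)\}\ge 1/2$ and $|A_t(s_0)|=m_{n,t}(s_0)$ obeying Eq.~\eqref{eq:gen-diffusive-window-size}. Let $K=\sum_{i\in A_t(s_0)}\xi_i\eta_i$ and $J=\xi_{S_t}\mathbf 1_{\{S_t\in A_t(s_0)\}}$. Corollary~\ref{cor:gen-variance-bound}, which rests on Lemma~\ref{lem:universal-negative-correlation}, gives $\operatorname{Var}K\le p\,m_{n,t}(s_0)$. Independence of $E$ from $\Pi_t$ gives $H=\mathbb EJ\ge p/2$. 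The integer-shift estimate Eq.~\eqref{eq:integer-count-shift-TV} holds for arbitrarily correlated $K$ and $J\in\{0,1\}$, with $M=\mathbb E[J(K-\mathbb EK)^2]\le\operatorname{Var}K$. It yields $\operatorname{TV}\ge H/[4(\sqrt{2M/H}+1)]\ge p/(24\sqrt{m})$, using $m\ge1$ and $\sqrt{4m}+1\le 3\sqrt m$. The window count is a deterministic statistic of $(E,z_E)$, so data processing and Eq.~\eqref{eq:flagged-TV-lower} give the first inequality in Eq.~\eqref{eq:gen-quadratic-lower}.

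The second inequality is Eq.~\eqref{eq:gen-diffusive-window-size}. For the third, set $u=\sqrt{t-1}$. Then $4u+5\le 9\sqrt t$ for $t\ge1$, since $4\sqrt{t-1}\le4\sqrt t$ and $5\le5\sqrt t$. Taking square roots gives $\sqrt{4u+5}\le 3t^{1/4}$, hence the constant $144$. Solving $p/(144t^{1/4})\le Cn^{-1/2}$ gives $t\ge (p/144C)^4n^2$, which is the $\Omega(n^2)$ depth statement.

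The main obstacle is justifying the window claim behind Eq.~\eqref{eq:gen-diffusive-window-size} for an interior start. In the boundary case the reflected cell walk gave $\mathbb EA_s^2\le s/2$ from the origin. For general $s_0$ I would track instead the displacement $A_s-a_0$ of the cell walk, where $a_0$ is the cell containing $s_0$ for the even matching. For interior cells $\mathbb E[(A_{s+1}-a_0)^2-(A_s-a_0)^2\mid A_s]=1/2$. At a reflecting end the suppressed outward move only holds the walk, and that move points away from $a_0$, so the increment is at most $1/2$ there as well. Hence $\mathbb E(A_s-a_0)^2\le s/2$.

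Chebyshev then puts mass at least $1/2$ within $\sqrt s$ cells of $a_0$ on both sides. That is at most $2\lfloor\sqrt s\rfloor+1$ cells, i.e.\ $4\lfloor\sqrt s\rfloor+2$ sites, plus one site for the final odd half-swap on each side, which gives $4\lfloor\sqrt{t-1}\rfloor+5$ after truncating to $[n]$. I would also check the case $t=1$ directly, and the singleton endpoint cell when $s_0$ sits at the unmatched end; both only enlarge the window by $O(1)$, and the constant $5$ absorbs this.
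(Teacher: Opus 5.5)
Your proposal is correct and follows the paper's proof. The paper reruns the argument of Theorem~\ref{thm:true-trace-quadratic-speed-limit} with $A_t(s_0)$ and $m_{n,t}(s_0)$ in place of the prefix window, takes the variance input from Corollary~\ref{cor:gen-variance-bound}, and uses only the upper bound Eq.~\eqref{eq:gen-diffusive-window-size}. Your second-moment bound $\mathbb E(A_s-a_0)^2\le s/2$ for the cell-walk displacement gives a cleaner justification of that window bound than the paper's brief sketch; the remaining check holds, since at the unmatched terminal singleton the increment is $\tfrac12(1-2d)\le\tfrac12$, where $d\ge0$ is its cell distance from $a_0$.
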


\begin{proof}
The coupling $C_t=B_t\cup\{S_t\}$ holds because $C_0\setminus
B_0=\{s_0\}$ for a computational-basis product encoding where the
two codewords differ by exactly one site.  The proof of
Theorem~\ref{thm:true-trace-quadratic-speed-limit} then goes through
verbatim with $A_t$ replaced by $A_t(s_0)$, $m_{n,t}$ replaced by
$m_{n,t}(s_0)$, and the negative-correlation input supplied by
Corollary~\ref{cor:gen-variance-bound}.  The final bound uses the uniform upper bound
Eq.~\eqref{eq:gen-diffusive-window-size}; no lower estimate on the window
size is needed.
\end{proof}

\begin{remark}[Multi-distinguished-site encodings (open)]
\label{rem:gen-multi-distinguished}
When the two codewords differ at $m>1$ sites, the coupling becomes
$C_t=(B_t\cup\Pi_t(C_0\setminus B_0))
\setminus\Pi_t(B_0\setminus C_0)$ with
$|C_0\setminus B_0|-|B_0\setminus C_0|=1$.
Extending the quadratic lower bound to this case requires controlling
the correlated dynamics of multiple distinguished particles and holes,
which is not proved here and remains an open problem.
\end{remark}

\subsection{Generalized causal plateau}

The causal plateau of
Theorem~\ref{thm:full-flag-causal-plateau} also generalizes: its
length is controlled by the \emph{protected radius}---the distance
from the distinguished site to the nearest hole in $C_0$.

\begin{theorem}[Generalized causal plateau]
\label{thm:gen-causal-plateau}
For a single-distinguished-particle product encoding
\(C_0=B_0\cup\{s_0\}\), define the protected
radius
\begin{equation}
r_0=\min\bigl\{d(s_0,i):i\notin C_0\bigr\},
\label{eq:gen-protected-radius}
\end{equation}
where $d(i,j)=|i-j|$ is the graph distance on the path.  Then for
every $t$ with $4t<r_0$ (equivalently, $4t\leq r_0-1$),
\begin{equation}
\boxed{
\operatorname{TV}(\widehat P_t^0,\widehat P_t^1)=p,
\qquad
\mathbb E_{V,E}D_E(V)\geq\frac p2.}
\label{eq:gen-causal-plateau}
\end{equation}
In particular, the plateau length is $\Theta(r_0)$, determined by
the initial encoding geometry.
\end{theorem}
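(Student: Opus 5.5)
The plan is to reproduce the proof of Theorem~\ref{thm:full-flag-causal-plateau} verbatim, replacing the contiguous light-cone inclusion by a protected interval centred at \(s_0\). First, the upper bound \(\operatorname{TV}(\widehat P_t^0,\widehat P_t^1)\leq p\) holds for every \(t\) and every single-distinguished-particle encoding. Couple both logical inputs through the same interchange permutation \(\Pi_t\) and the same erased set \(E\), so that \(Y_0=(E,B_t\cap E)\) and \(Y_1=(E,C_t\cap E)\). These observations differ only on the event \(S_t\in E\). Because \(E\) is independent of \(\Pi_t\), this event has probability \(p\), exactly as in Eq.~\eqref{eq:full-flag-coupling-upper}. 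The hole-overlap lower bound Eq.~\eqref{eq:full-flag-hole-overlap} also transfers unchanged. Its proof used only the pointwise identity \(B_t=C_t\setminus\{S_t\}\) and \(\Pr(\mathsf A_L^{(1)}\mid C_t)\leq pH_L\), neither of which depends on \(B_0\) or \(s_0\).

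The key step is a causal-cone inclusion. By the definition of \(r_0\), every site \(i\) with \(|i-s_0|<r_0\) lies in \(C_0\). One cycle has two matching layers, and each label moves by at most one site per layer, so after \(t\) cycles the tag satisfies \(|S_t-s_0|\leq2t\). Every hole starts at distance at least \(r_0\) from \(s_0\), so after \(t\) cycles each hole remains at distance at least \(r_0-2t\). Hence, deterministically,
\[
\{i\in[n]:|i-s_0|\leq r_0-2t-1\}\subseteq C_t .
\]
Set \(L_t=\{i\in[n]:|i-s_0|\leq2t\}\). Under \(4t<r_0\) we have \(2t\leq r_0-2t-1\), so \(L_t\subseteq C_t\) and \(S_t\in L_t\) hold for every interchange history. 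Therefore \(H_{L_t}=0\) and \(\Pr\{S_t\in L_t\}=1\), and Eq.~\eqref{eq:full-flag-hole-overlap} gives \(\operatorname{TV}\geq p\). Combined with the upper bound this yields equality. The trace-distance statement \(\mathbb E_{V,E}D_E\geq p/2\) then follows from Eq.~\eqref{eq:flagged-TV-lower}, whose proof holds for any computational-basis product code.

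The step needing the most care is this cone bookkeeping. Three points have to be checked: that the tag and the holes each move at most one site per layer, so that the bounds are \(2t\) rather than \(t\); that truncating intervals at the chain ends does not matter, which is true because sites outside \([n]\) are absent and \(r_0\) is measured only to existing holes; and the integer inequality \(2t\leq r_0-2t-1\Leftrightarrow4t\leq r_0-1\), which matches the stated hypothesis. If no hole exists, \(r_0=\infty\), but \(|C_0|=q+1<n\) guarantees a hole, so this case does not arise.

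For the remark that the plateau length is \(\Theta(r_0)\), the lower side is what was just proved. For the upper side I would argue that after \(O(r_0)\) cycles a hole can reach the tag's neighbourhood with positive probability, which already breaks the equality. The statement's ``in particular'' only needs the \(\Omega(r_0)\) direction, which is rigorous.
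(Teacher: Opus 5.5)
Your proof is correct and matches the paper's argument: the same common-permutation coupling for $\operatorname{TV}\leq p$, the radius-$2t$ interval $L_t$ around $s_0$, the deterministic inclusion $L_t\subseteq C_t$ under $4t<r_0$ from the two-layers-per-cycle light cone, and the hole-overlap inequality with $H_{L_t}=0$. As in the paper, only the $\Omega(r_0)$ side of the ``$\Theta(r_0)$'' remark is actually established (you are right that this is what the proof delivers). Your sketch of the reverse direction goes beyond what the paper proves, though it is not needed for the boxed claims.
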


\begin{proof}
Take $L_t$ to be the closed interval of radius $2t$ around $s_0$
(intersected with $[0,n-1]$).  Since every label moves by at most
two sites per cycle, $S_t\in L_t$ and $L_t$ contains no hole of
$C_0$ whenever $2t<r_0$; after $2t$ matching layers the holes of
$C_0$ have moved inward by at most $2t$, so under the condition
$4t<r_0$ the set $L_t$ remains deterministically fully occupied
in $C_t$.  Hence $H_{L_t}=|L_t\setminus C_t|=0$ deterministically
and $\Pr\{S_t\in L_t\}=1$.  The hole-overlap inequality
Eq.~\eqref{eq:full-flag-hole-overlap} then gives the lower bound
$p$, which combines with the universal coupling upper bound $p$ to
prove the TV equality.  The trace-distance consequence follows as in
Theorem~\ref{thm:full-flag-causal-plateau}.
\end{proof}

For the canonical boundary code, $r_0=q+1=(n+1)/2$, so the condition
$4t<r_0$ is exactly $4t\leq q$ and guarantees a plateau through depth
$\lfloor q/4\rfloor=\Theta(n)$.
For an encoding where the distinguished site
is adjacent to a hole ($r_0=1$), the plateau is short but the
$t^{-1/4}$ diffusive lower bound of
Theorem~\ref{thm:gen-quadratic-speed-limit} still governs the
subsequent decay, and the necessary depth remains $\Omega(n^2)$.

\begin{remark}[Structural mechanism]
\label{rem:gen-structural-universality}
The quadratic obstruction proved here follows from three
encoding-independent ingredients: (i)~diffusive charge transport
under the specified local-Haar brickwork ensemble,
(ii)~universal negative correlation of the SSEP
(Lemma~\ref{lem:universal-negative-correlation}), and
(iii)~the causal protection of the distinguished particle from
background holes until they diffuse into the particle cloud.  These
three ingredients are independent of the deterministic background geometry
within the stated single-distinguished-particle class.  Extending the result
beyond this class or beyond the specified unbiased diffusive ensemble is
Problem~O2.
\end{remark}

\bibliographystyle{apsrev4-2}
\bibliography{references}

\end{document}